\documentclass[11pt]{article}
\usepackage[left=1in,top=1in,right=1in,bottom=1in, nohead]{geometry}
\usepackage[colorlinks=true,backref=true,hyperindex,citecolor=blue,linkcolor=blue]{hyperref}
\usepackage{booktabs}
\usepackage{subcaption}
\usepackage{graphicx}

\usepackage{amssymb, bbm}

\usepackage{multirow}
\usepackage{multicol}
\usepackage{amsmath}
\usepackage{float}
\usepackage[title]{appendix}
\usepackage{color}
\usepackage{tikz}
\usepackage[justification=centering]{caption}
\usepackage{setspace}
\usepackage{colortbl}
\usepackage{varwidth}
\usepackage{authblk}
\usepackage[numbers]{natbib}
\usepackage{mathrsfs}
\usepackage{amsthm}
\usepackage{commath}
\usepackage{amsfonts}
\usepackage{bm}
\usepackage{enumerate}
\usepackage{enumitem}
\usepackage{upgreek}
\usepackage{nicefrac}
\usepackage[normalem]{ulem}
\allowdisplaybreaks
\usepackage{extpfeil}
\usepackage{threeparttable}
\usepackage{mathtools}
\usepackage[linesnumbered,ruled]{algorithm2e}
\usepackage{algpseudocode}

\bibpunct{(}{)}{;}{a}{}{,}
\newtheorem{thy}{Theorem}
\newtheorem{lemma}{Lemma}
\newtheorem{corollary}{Corollary}

\newtheorem{assumption}{Assumption}
\newtheorem{defin}{Definition}

\theoremstyle{remark}
\newtheorem{remark}{Remark}
\newtheorem*{example}{Example}

\makeatletter
\newcommand*{\rom}[1]{\expandafter\@slowromancap\romannumeral #1@}
\makeatother

\title{Community Detection on a Randomly Growing Network}

\author{Jianxiang Wang and Min Xu\footnote{Corresponding author: \texttt{mx76@stat.rutgers.edu}} \\ 
Statistics Department \\
Rutgers University \\
New Brunswick, NJ, USA}
\date{}

\begin{document}
\maketitle

\begin{abstract}
We study community detection on Markovian random networks outside of the Stochastic Block Model (SBM) framework. Specifically, we consider a random network growth process which generates $K$ separate preferential attachment trees and connects them with Erd\H{o}s--R\'enyi edges, so that each tree represents a community and each node inherits the label of the tree to which it belongs. This model is able to produce many features of real-world networks that are improbable under SBM, such as power law degree distribution and the existence of chains and hubs. Given only the final graph, without any knowledge of the growth process, we seek to recover the unobserved community membership of the nodes. We first prove that it is impossible for any algorithm to consistently recover the community label of all the nodes. However, we design algorithms that are provably able to recover the community labels of subsets of central nodes, for several different notions of node centrality, such as arrival time or degree. Our procedure consists of two stages where, in the first stage, we classify high degree nodes and then, in the second stage, extend the community assignments to the remaining vertices. Numerical experiments and a real data application on a coauthorship network demonstrate the effectiveness of our proposed approach.

% Community detection is a fundamental task in network analysis.
% Classical approaches based on the stochastic block model (SBM) assume a static network structure and often fail to capture growth mechanisms, such as preferential attachment, commonly observed in real-world networks.
% To address this limitation, we model the observed network as a superposition of a planted preferential attachment forest and an Erd\H{o}s--R\'enyi random graph.
% We propose a two-step algorithm that identifies core components via graph pruning, followed by a label propagation step that extends community assignments to all vertices.
% We show that exact recovery is impossible under constant order noise and instead establish guarantees for consistent recovery on informative vertex subsets.
% Our work introduces a growing network framework for community detection and characterizes regimes where recovery is possible.
% Numerical experiments and a real data application to a coauthorship network support the effectiveness of the proposed approach.
\end{abstract}

\noindent%
{\it Keywords:} community detection, Markovian random network, network data analysis, preferential attachment

% {\color{red}
% \begin{enumerate}
% \item Theory Add two remarks explaining: why the layer-2 result is limited to the $\alpha=0$ (LPA) case, and why the current theory is limited to the first 1–2 layers.
% \item Simulation Go through the numerical analysis section and rewrite parts of the expressions to make the interpretation clearer.
% \item Consider revising the tables/figures to improve readability, as there are currently too many of them. 
% \item Algorithm Revise the first section, since what we currently describe as the simple algorithm is mainly the graph pruning part, and we are not sure whether it should remain in its current section.
% \item Rewrite the Monte Carlo method section from a model-generation perspective. 
% \item Go through the paper and consistently update the algorithm name to SPAR, and check where “core node” should be replaced with “anchor node”. 
% \item Method Clarify the relationship between the model output and the root-node confidence interval output.
% \item In Section~\ref{secsec: The simple Two-Step Algorithm}, add a Proposition stating that $\hat{K} = K$ with high probability. 
% \end{enumerate}

% }

\section{Introduction}

Community detection is a central problem in network data analysis. It seeks to cluster network nodes based on connectivity patterns, with applications in various domains such as biology \citep{krzakala2013spectral,traag2019louvain,yu2007graph}, social science \citep{ji2022co,akbaritabar2020italian}, computer science \citep{velickovic2018deep,wang2016botnet}, and more \citep{de2020unveiling}. The predominant approach to community detection among statistical researchers has focused on the so-called Stochastic Block Model, which is a random graph model where edges are added independently, with probability dependent on whether the edge is between or within community. The order in which the edges are added is of no importance to the model. 

%Network data analysis has attracted large amount of attention across diverse fields ranging from biology \citep{zeng2018prediction}, social science \citep{hunter2008goodness,van2018social} to computer science \citep{grover2016node2vec}.
%Network data analysis has received significant attention from many different disciplines including biology~\citep{zeng2018prediction}, social science \citep{hunter2008goodness,van2018social}, computer science \citep{grover2016node2vec}, and statistics \citep{ji2022co,lei2021network}. 
%Within network data analysis, one of the most important tasks is that of community detection, where the aim is to cluster network nodes based on connectivity patterns.
%A wide range of methodologies for community detection, predominantly developed within the framework of the stochastic block model (SBM) \citep{holland1983stochastic} and its variants \citep{jin2015fast,aicher2015learning,matias2017statistical}, have been extensively reviewed in \cite{abbe2018community,jin2021survey,de2017community}.

%Community detection in networks is largely built upon the framework of the stochastic block model (SBM) \citep{holland1983stochastic} and its variants \citep{jin2015fast, aicher2015learning, matias2017statistical}. Within this framework, standard classification approaches include spectral clustering \citep{amini2024hierarchical,li2022hierarchical}, likelihood-based inference \citep{zhen2023community, cerqueira2023pseudo}, or a hybrid approach that combines spectral initialization with likelihood refinement \citep{xu2020optimal,deng2024distributed}.

In contrast, real-world networks are often formed from a growth process where vertices and edges are
added sequentially. This explains the prevalence of certain network features, such as a power-law degree distribution \citep{barabasi1999emergence} and the existence of long chains and pendants (a node that connects to many degree 1 nodes). Examples of graphs that exhibit such features include co-authorship networks \citep{ji2016coauthorship} and gene co-expression networks \citep{cline2007integration}; see Figure~\ref{fig: co-authorship graph info}.

\begin{figure}[h]
    \centering
    % First subfigure: Hard classification
    \begin{subfigure}[b]{0.26\textwidth}
        \centering
        \vspace{-0.4cm}
        \includegraphics[width=\linewidth]{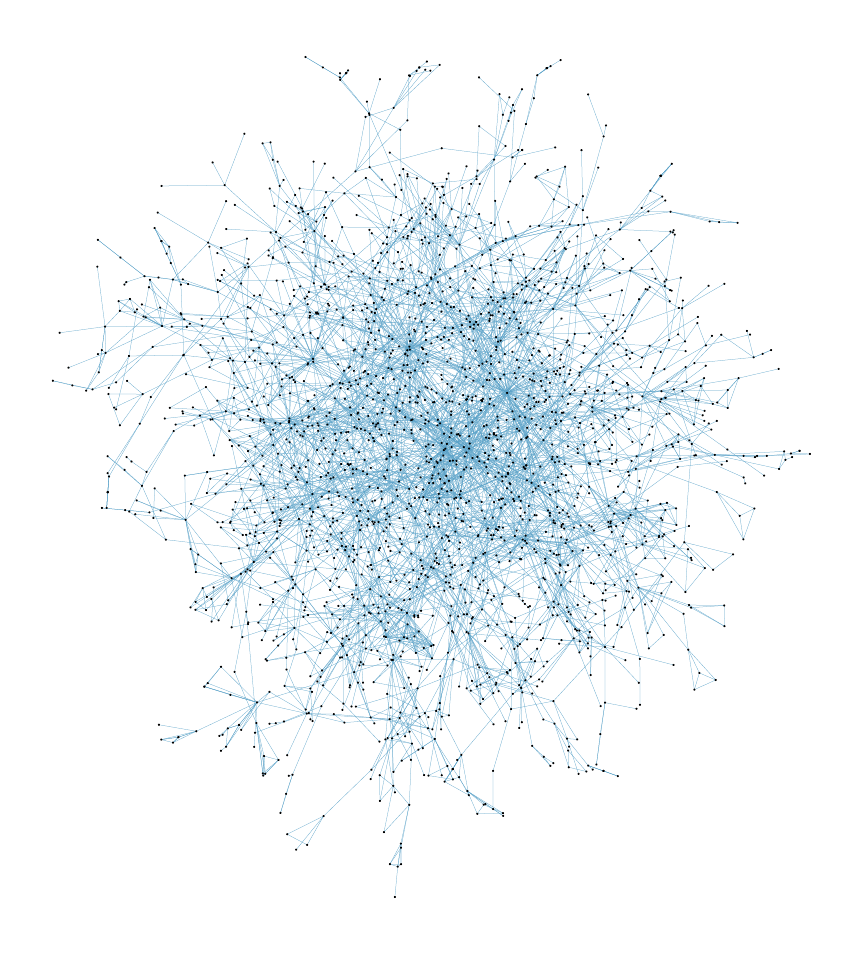}
        \caption{Observed graph}
        \label{fig: co-authorship graph}
    \end{subfigure}
    \hfill
    \begin{subfigure}[b]{0.45\textwidth}
        \centering
        \vspace{-0.4cm}
        \includegraphics[width=\linewidth]{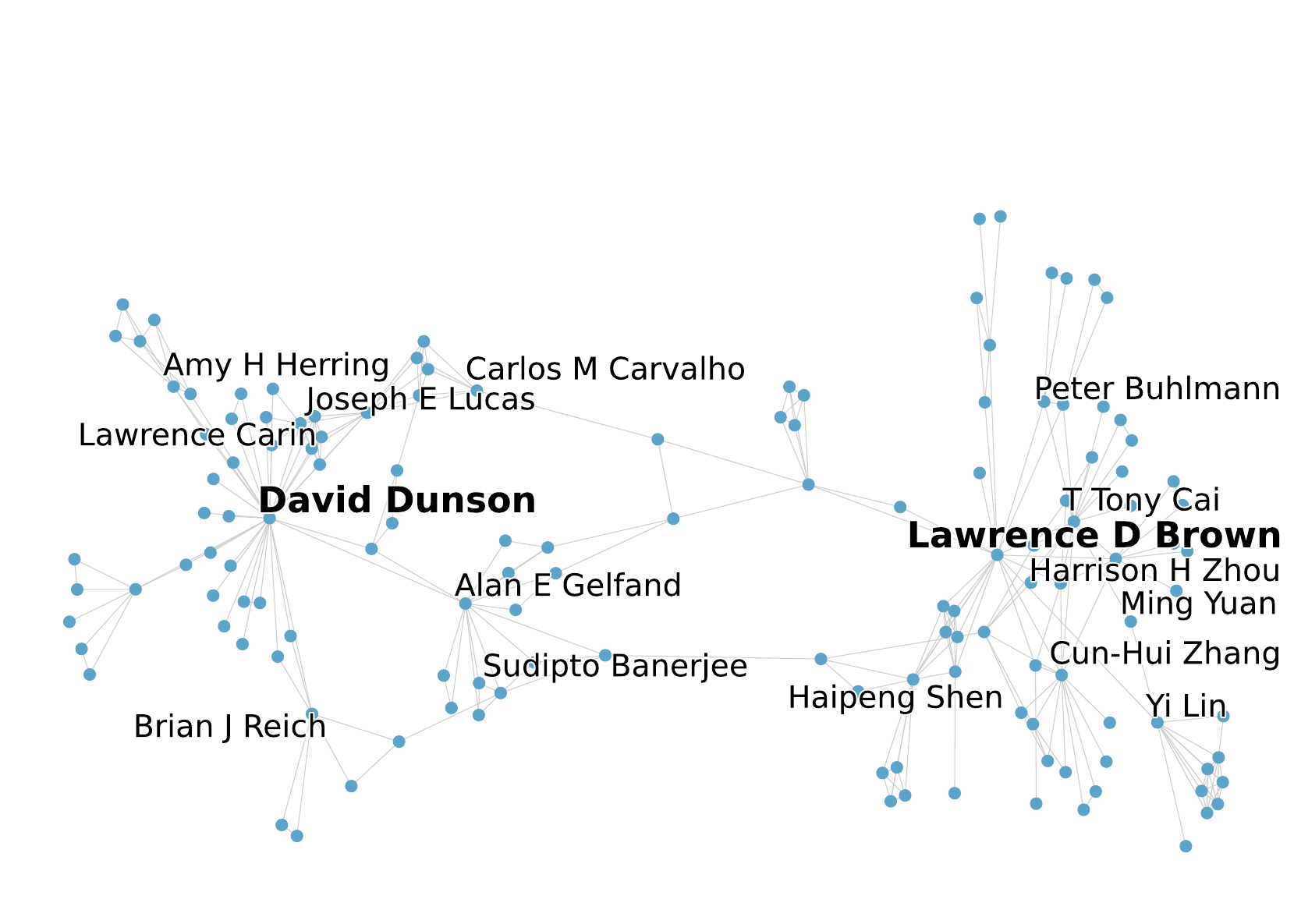}
        \caption{Sub network}
        \label{fig: co-authorship graph local}
    \end{subfigure}
    \hfill
    % Second subfigure: Soft classification
    \begin{subfigure}[b]{0.26\textwidth}
        \centering
        \vspace{-0.4cm}
        \includegraphics[width=\linewidth]{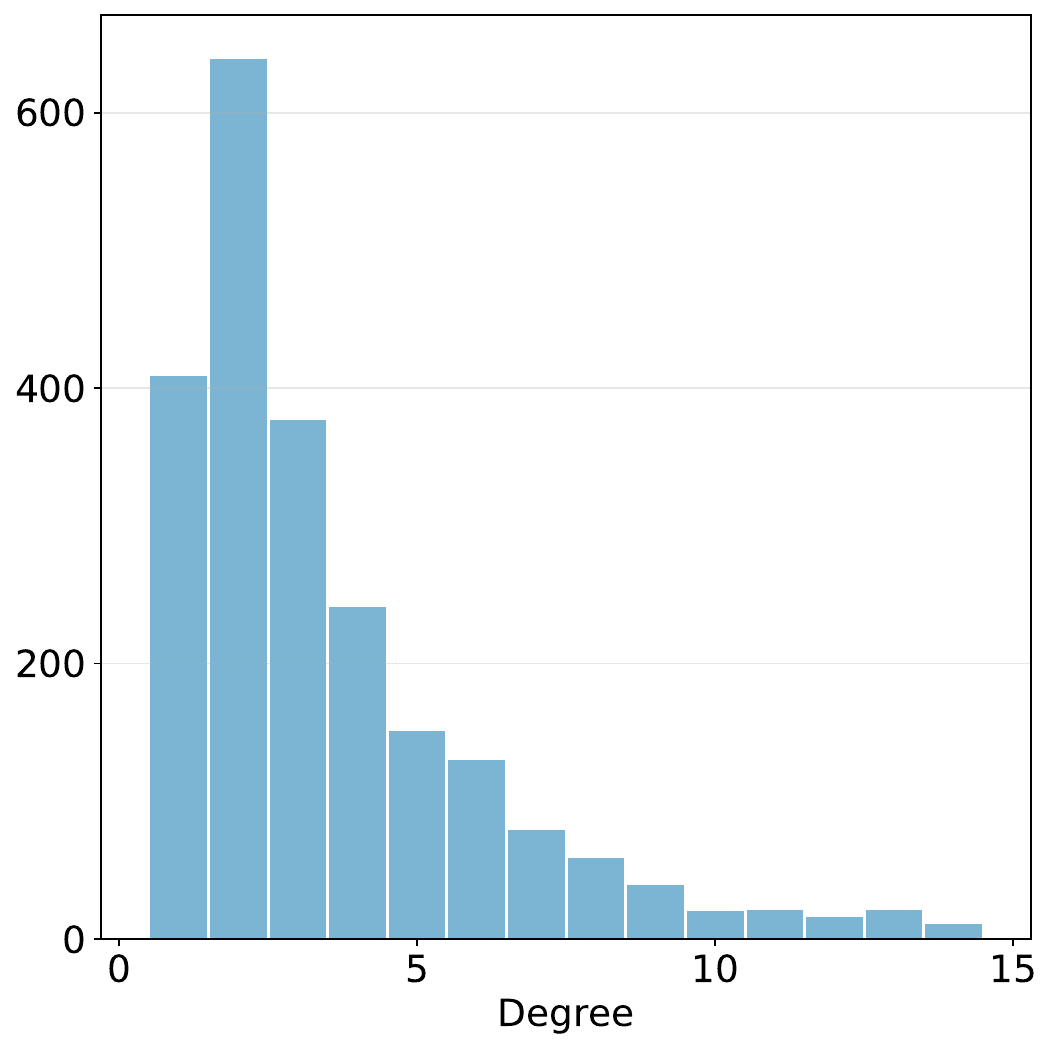}
        \caption{Degree distribution}
        \label{fig: co-authorship graph degree}
    \end{subfigure}
    
    \caption{
     Co-authorship network from \cite{ji2016coauthorship}. 
    \textbf{(a)} Observed graph topology, where vertices correspond to authors and edges represent co-authorship relations. \textbf{(b)} Sub-network consisting of all nodes within graph distance two of Lawrence D.~Brown or David Dunson.
    \textbf{(c)} Degree distribution of the network. For readability, authors with degrees exceeding $15$ are omitted.
    }
    \label{fig: co-authorship graph info}
\end{figure}

The growth process of a network leaves distinctive topological structures which contain information useful for estimating the communities. In the statistician co-authorship network from \cite{ji2016coauthorship} for example, a research community tends to coalesce around a small number of influential researchers, creating a central core of high-degree nodes in that community. This is illustrated in Figure~\ref{fig: co-authorship graph local}, where we show the subnetwork of all the nodes within distance two of either Larry Brown, the renowned theoretical statistician, or David Dunson, the famous Bayesian statistician. A clear community of theoretical researchers forms around the former, and a separate community of Bayesian modelers around the latter. Identifying hub nodes could substantially aid community detection, but existing SBM-based approaches may overlook such growth-induced features due to the static nature of the block model. Our work addresses this gap. 

%In contrast, our work aims to develop community detection methods that can exploit the information encoded in the underlying network formation process. 

To formalize the definition of our problem and to guide the development of our methodology, we consider a random graph model that captures both sequential growth and community structure in a simple way. Our random graph model, which we call the planted forest model, first generates $K$ independent tree graphs using the preferential attachment process, where each node arrives sequentially and attaches to an existing node with probability dependent on the degree of the existing node. We let each tree represent a separate community so that the nodes in that tree take on the community label of that tree. We refer to this collection of disjoint trees as a forest graph. We then take the superposition of this forest graph with an independent Erd\H{o}s--R\'enyi (ER) noise graph collapsing any multi-edges. Given an observation of the final graph, without any knowledge of the growth process or whether the edges belong to the forest graph or the noise graph, we seek to recover the unobserved community labels of the nodes.

Our first main contribution is an information-theoretic lower bound showing that no estimator can consistently recover the community labels of all the nodes even when the ER probability is at a relatively low level of $O(1/n)$. This is because the misclustering error is driven by the later-arriving nodes, which cannot be classified correctly due to their peripheral location in the graph. 

Despite the impossibility of global recovery, consistent recovery may still be possible \emph{locally}, on a specific \emph{subset} of the nodes. In our second main contribution, we propose a community recovery algorithm and prove that, asymptotically, it has either zero or vanishing misclustering errors on various subsets of central nodes, defined through three different but related notions of centrality: the arrival time of the node, the degree, and the distance of the node to the initial root node of the community. Our guarantees require only that community sizes be of comparable order and that the ER noise probability is not so high that the degrees of the nodes become uncorrelated with their arrival time. 

Our estimation algorithm consists of two stages. We first iteratively remove low-degree nodes to reveal $K$ disjoint connected components, which we refer to as anchor components. Intuitively, the anchor components capture the high-degree core of each of the communities and supply the initial seeds for our clustering. In the second stage, we propagate the community labels from the anchor components to the entire network using a range of recovery methods, the simplest of which is to label the nodes based on their shortest-path distance to the anchors. 

Although our algorithm is motivated by the planted forest model, it depends on the model only loosely, mainly through the assumption that each community has a high-degree core that can be isolated by iterative degree-based pruning. In particular, it does not rely on the assumption that the connectivity structure within a community is tree-shaped. We demonstrate the practical effectiveness of our approach on a co-authorship network of statisticians, where we recursively apply our community detection procedure and recover a hierarchical clustering of research communities that largely aligns with recognized sub-fields of statistics.

%Our theoretical analysis assumes only that the community sizes are somewhat balanced and that the sparsity level of the Erd\H{o}s--R\'enyi (ER) noise graph is not too high.

The remainder of this paper is organized as follows. In Section~\ref{sec: model}, we introduce the planted forest model and present an impossibility result that frames the objective of our work. 
Section~\ref{sec: algorithm} details our SPAR (Selective Pruning with Anchor-based Recovery) algorithm. The theoretical guarantees of the algorithm are established in Section~\ref{sec: theory}. Section~\ref{sec: simulation} validates the method through simulation experiments under varied preferential attachment models and noise levels. 
Finally, Section~\ref{sec: case study} applies our approach to a co-authorship network, illustrating its practical performance.

\subsection{Literature Review}
Community detection in networks has been predominantly studied within the framework of SBM. 
Within this paradigm, widely used methods include spectral clustering \citep{amini2024hierarchical,li2022hierarchical}, likelihood-based inference \citep{zhen2023community,cerqueira2023pseudo}, and hybrid approaches that combine spectral initialization with likelihood refinement \citep{xu2020optimal,deng2024distributed}.
By contrast, relatively little work has addressed community detection in networks formed through sequential growth processes. \cite{ben2025inference} and \cite{hajek2019community} extend classical preferential attachment models to multi-class settings in which incoming nodes connect to existing vertices according to both degree and latent class labels. Our setting differs in two respects: first, we consider an entirely different notion of community structure--they model communities through class-dependent attachment probabilities, whereas our communities arise from disjoint growth processes--and second, most of their method and theory assume that the node arrival times are known, whereas we take the arrival ordering to be unobserved. \cite{crane2024root} is most closely related to our work; they propose a Bayesian method
for community detection on a similar model without any theoretical analysis. Our work differs in three crucial ways: first, our method comes with rigorous theoretical guarantees; second, our estimator is provably computationally efficient (see Remark~\ref{rem:runtime}) whereas \cite{crane2024root} uses MCMC with no mixing time bounds; lastly, our approach depends only on the high-degree core structure within each community and may be less sensitive to deviation from the model. 

%While these models capture preferential-attachment features observed in real networks, their theoretical analyses typically assume that node arrival times are observed. This requirement substantially limits their applicability in practice, where inference is often based on a single, time-aggregated snapshot of the network.

Another related line of work in network data analysis considers inference on special vertices, which includes ``root finding" \citep{bubeck2017finding, banerjee2022root,banerjee2023degree,crane2021inference,crane2024root,addario2025leaf} and ``diffusion source identification" \citep{dawkins2021diffusion,li2021propagation,dong2022wavefront,shah2011rumors}.
Most work in the root finding literature assumes that the observed network is generated entirely by a sequential growth mechanism, whereas most work in diffusion source identification literature assumes a fixed underlying network over which a stochastic diffusion process evolves.
Despite these differing modeling assumptions, both lines of work focus on identifying or constructing confidence sets for a single distinguished vertex (the root node or the diffusion source) of a sequential growth process based on a single observed network snapshot.
Among these works, \cite{addario2025leaf} employs a leaf-stripping procedure to construct confidence sets for root nodes, which closely parallels the graph pruning step of our algorithm. 
Our objective, however, is fundamentally different: rather than identifying a single special vertex, we aim to recover and cluster multiple planted components generated by distinct growth processes.

% A key distinction between our work and the aforementioned studies is that we consider networks in which multiple sequential growth processes are intermixed, and aim to classify vertices according to the growth process that generated them using only a single observed noisy network snapshot.
% Our analysis provides both algorithmic procedures and theoretical guarantees that characterize how misclassification behavior depends on vertex-specific features, including arrival order and structural position, under perturbations induced by preferential attachment dynamics and additional random noise.

\section{Model and Problem Formulation}
\label{sec: model}
In this section, we first introduce the notation and definition for the Planted Forest model.
Then, we formalize the problem setup for community detection under the proposed model.

\subsection{Notation}
Throughout the paper, we will adopt the convention $ 0/0: = 0$. We define the natural numbers $\mathbb{N} = \{1,2,3,\ldots\}$.
For an integer $n$, we write $[n] := \{1, 2, \ldots, n\}$ and let $S_n$ denote the set of all permutations on $[n]$. 
For a discrete set $A$, we write $|A|$ as the cardinality of $A$.
For two sequences of positive scalars $\{a_n\}_{n=1}^\infty$ and $\{b_n\}_{n=1}^\infty$, we write
$a_n=o(b_n)$ and $a_n=O(b_n)$ if $a_n/b_n$ converges to zero and $a_n/b_n$ is bounded, respectively. 
Furthermore, we write $a_n \asymp b_n$ if $a_n/b_n=O(1)$ and $b_n/a_n=O(1)$ hold. 
For a sequence of random variables $\{X_n\}_{n=1}^\infty$,
we write $X_n=o_p(b_n)$ and $X_n=O_p(b_n)
$ if $X_n/b_n$ converges to zero and is bounded in probability, respectively. We write $X_n \asymp_p b_n$ if $X_n/b_n = O_p(1)$ and $b_n/X_n = O_p(1)$. 

We denote a graph by $\boldsymbol{g}= \left(V, E\right)$, where $V$ and $E \subset V \times V$ represent the set of nodes and the set of undirected edges, respectively.
For any two vertex sets $V_1, V_2 \subset V\left(\boldsymbol{g}\right)$, we define the edge between them as  $E_{\boldsymbol{g}}\left(V_1,V_2\right):=\left\{\left(u,v\right) \mid \left(u,v\right)\in E\left(\boldsymbol{g}\right), u\in V_1, v\in V_2\right\}$. 
When $V_1=V_2$, we abbreviate
$E_{\boldsymbol{g}}\left(V_1,V_1\right)$ as $E_{\boldsymbol{g}}\left(V_1\right)$.
We write $ \operatorname{deg}_{\boldsymbol{g}}\left(u\right)$ to denote $u$'s degree in graph $\boldsymbol{g}$, i.e., the number of nodes directly connected to $u$ via an edge.
For a vertex set $V_1 \subset V$, we write $\operatorname{deg}_{\boldsymbol{g}}(V_1):= \max_{u \in V_1} \operatorname{deg}_{\boldsymbol{g}}\left(u\right)$.

For two labeled graphs $\boldsymbol{g}$ and $\boldsymbol{g}^{\prime}$ defined on the same vertex set $V$, we write $\boldsymbol{g}+\boldsymbol{g}^{\prime}$ to denote the graph obtained by taking the union of their edge sets, i.e., $E\left(\boldsymbol{g}+\boldsymbol{g}^{\prime}\right)=E\left(\boldsymbol{g}\right) \cup E\left(\boldsymbol{g}^{\prime}\right)$, with multi-edges collapsed. We write $\boldsymbol{g} \subset \boldsymbol{g}^{\prime}$ if $\boldsymbol{g}$ is a subgraph of $\boldsymbol{g}^{\prime}$, meaning $E\left(\boldsymbol{g}\right) \subset E\left(\boldsymbol{g}^{\prime}\right) $.
When $\boldsymbol{g}$ and $\boldsymbol{g}^{\prime}$ are two disjoint graphs (no shared vertices or edges), we use $\boldsymbol{g} \oplus \boldsymbol{g}^{\prime}$ to denote their disjoint union. 
In addition, for $v \in V(\boldsymbol{g})$, we write $\boldsymbol{g} \setminus v$ for the subgraph obtained by removing the vertex $v$ together with all edges incident to $v$. Given a subset $\tilde{V} \subset V(\bm{g})$, we define $\bm{g} \cap \tilde{V}$ to be the subgraph induced by the nodes in $\tilde{V}$. 
Given a graph $\bm{g}$, we define $\mathcal{C}(\bm{g})$ to be the set of connected components of $\bm{g}$; if $\bm{g}$ is connected then $\mathcal{C}(\bm{g}) = \{ \bm{g}\}$.

%In contrast, we use the notation $\boldsymbol{g} \oplus \boldsymbol{g}^{\prime}$ to denote the disjoint union of two graphs $\boldsymbol{g}$ and $\boldsymbol{g}^{\prime}$. This operation assumes that the vertex sets $V(\boldsymbol{g})$ and $V(\boldsymbol{g}^{\prime})$ are disjoint; the resulting graph has vertex set $ V(\boldsymbol{g}) \cup V(\boldsymbol{g}^{\prime})$ and edge set $E\left(\boldsymbol{g}\right) \cup E\left(\boldsymbol{g}^{\prime}\right)$, with no edges between $\boldsymbol{g}$ and $\boldsymbol{g}^{\prime}$. 

Throughout the paper, we use capital font (e.g., $\boldsymbol{G}$) to denote random objects and lowercase font to denote fixed objects. Graphs are represented via bold font.

\subsection{Planted Forest Model}

We first define the affine preferential attachment process for a randomly growing tree~\citep{crane2021inference}, which models the growth process of a single community.

\begin{defin}
\label{defin: APA}
Let $V_n$ be a set of $n$ node labels and let $\pi \,:\, [n]  \rightarrow V_n$ be the arrival ordering so that $\pi_t$ denotes the $t$-th node to be added to the tree. Let $\alpha > -1$; we say that a random tree $\boldsymbol{T}$ with $n$ nodes has the affine preferential attachment distribution $\mathrm{APA}(\alpha, \pi)$ if it is generated according to the following process: start with a single node $\pi_1$ as the root node and at every time step $t = 2, \ldots, n$, add a new node $\pi_t$ and connect it to an existing node $w \in \{\pi_1, \ldots, \pi_{t-1}\}$ chosen with probability
\begin{equation}
\label{defin: attachment rule 1}  
w \mapsto \frac{ \mathrm{deg}_{\bm{T}_{t-1}}(w) + \alpha}{2(t-2)+ (t-1)\alpha},
\end{equation}
where $\mathrm{deg}_{\bm{T}_{t-1}}(w)$ is the degree of node $w$ at time $t-1$. 

\end{defin}

In affine preferential attachment, higher-degree nodes are more likely to attract new neighbors so that we get a rich-get-richer phenomenon. One important consequence of this is that the APA random graph tends to exhibit strong degree heterogeneity: asymptotically, the percentage of nodes with degree $k$ is approximately proportional to $\frac{1}{k^{3 + \alpha}}$ \citep[][Section 8.4]{van2024random}. The parameter $\alpha$ controls the strength of the degree preference: larger values of $\alpha$ diminish the degree preference in the attachment probabilities, with the limiting case $\alpha = \infty$ corresponding to uniform attachment, in which each new node attaches to an existing node chosen uniformly at random.

To model multiple communities, we associate each community with an independent tree generated by the APA model, producing a forest $\bm{F}_n$ of disjoint community-trees. To model between-community edges and to ensure that the overall graph is connected, we superpose $\bm{F}_n$ with an independent Erd\H{o}s--R\'enyi random graph to obtain the final observed graph $\bm{G}_n$. The tree structure within each community is a simplification made for model parsimony. Definition~\ref{defin: APA} can be extended to allow for multiple edges per arrival, but doing so requires additional modeling choices, such as deciding how many edges to add at each time step, that distract from our main focus on the interplay between the growth dynamics and community detection.

% \begin{defin}
% \label{defin: APA}
% The affine preferential attachment tree model, which we denote by $\mathrm{APA}\left(\alpha, n\right)$ for parameters $\alpha \in \mathbb{R}$, generates an increasing sequence $\boldsymbol{T}_1 \subset \boldsymbol{T}_2 \subset \cdots \subset \boldsymbol{T}_n$ of random trees where $\boldsymbol{T}_t$ is a tree with $t$ nodes and where nodes are labeled by their arrival time so that $V(\boldsymbol{T}_t) = [t]$. The first tree $\boldsymbol{T}_1 = \{1\}$ is a singleton and for $t > 2$, we define the transition kernel $\mathbb{P}(\boldsymbol{T}_t \mid \boldsymbol{T}_{t-1})$ in the following way: given $\boldsymbol{T}_{t-1}$, we add a node labeled $t$ and a random edge $(t, w_t)$ to obtain $\boldsymbol{T}_t$, where the existing node $w_t \in [t-1]$ is chosen with probability
% \begin{equation}
%     \label{eq: APA}
%     \frac{\beta \operatorname{deg}_{\boldsymbol{T}_{t-1}}(w_t) + \alpha}{\beta 2(t-2) + \alpha(t-1)}.
% \end{equation}
% For tree $\boldsymbol{T}_n$, we define a latent arriving order function ${\pi}$, where ${\pi}_t\left(\boldsymbol{T}_n\right)$ denotes the $t$th arriving node in graph $\boldsymbol{T}_n$ for all $1\le t\le n, t\in \mathbb{N}$.
% Furthermore, we denote by $\pi_{i:j}\left(\boldsymbol{T}_n\right)$ the set of vertices arriving from time $i$ to time $j$ in the tree $\boldsymbol{T}_n$. Formally,
% %
% \begin{equation}
% \label{defin: pi}
%     \pi_{i:j}\left(\boldsymbol{T}_n\right):=\left\{{\pi}_t\left(\boldsymbol{T}_n\right)\mid i\le t\le j, t\in \mathbb{N}\right\}.
% \end{equation}
% %
% \end{defin}

\begin{defin}
\label{defin: PF}
Let  $\alpha>-1$, $V_n$ be a set of $n$ node labels and let $\ell : V_n \rightarrow [K]$ be the community label function so that $\ell(v)$ is the community membership of node $v$. Let $\pi \,:\, V_n \rightarrow [n]$ be the arrival ordering of all $n$ nodes. For each community $k \in [K]$, define $V^k := \{ u \in V_n \,:\, \ell(u) = k \}$, $n_k := |V^k|$, and $\pi^k \,:\, [n_k] \rightarrow V^k$ as an ordering of $V^k$ induced by $\pi$. For each $k \in [K]$, let $\bm{T}^k$ be an independent random tree generated according to the $\mathrm{APA}(\alpha, \pi^k)$ distribution. Let $\boldsymbol{F}_n=\oplus_{i=1}^K \boldsymbol{T}^i$ be the disjoint union of the $K$ trees. 

Let $\theta \in [0, 1]$ and let $\boldsymbol{R}_n$ be an Erd\H{o}s--R\'enyi random graph with edge probability $\theta$ defined on $V_n$. Define
\begin{equation}
\label{decomposition}
  \boldsymbol{G}_n := \boldsymbol{F}_n + \bm{R}_n,\quad \text{ collapsing multi-edges.}
\end{equation}

We then say that $\boldsymbol{G}_n$ is distributed according to the planted forest distribution $\mathrm{PF}(\alpha, \theta, \ell, \pi)$. As a short hand, we also write $\pi(\bm{T}^k) \equiv \pi^k$. 
\end{defin}

The random tree $\bm{T}^k$ and the permutation $\pi^k$ capture the growth process of community $k$. Since the model depends on the global arrival ordering $\pi$ only through the community-specific orderings $\pi^1, \ldots, \pi^K$, the full $\pi$ is not identifiable. It is however still convenient to have the notion of a global arrival ordering. We see that the forest graph $\bm{F}_n$ contains all the information about the community membership and that $\bm{R}_n$ may be regarded as noise. 

\begin{remark}
The planted forest model is basically equivalent to the multiple-roots PAPER (Preferential Attachment Plus Er\"{o}s--R\'{e}nyi) model proposed by \cite{crane2024root}. The main difference is that we take the community membership $\ell(\cdot)$ to be a fixed parameter. We use the name ``planted forest" to emphasize the community structure induced by the disjoint collection of tree graphs. 
\end{remark}

\begin{remark}
Our formulation of the planted forest model as having disjoint subgraphs (defined through the forest $\bm{F}_n$) embedded in an Erd\H{o}s--R\'enyi graph $\bm{R}_n$ has strong parallels with the definition of SBM. To see this, let $\bm{G}_n$ be a SBM random graph with two blocks, with $p, q \in (0, 1)$ as the within-block and between-block edge probabilities, and with assortativity so that $p > q$. Then, we have that $\bm{G}_n = \bm{B}^1 + \bm{B}^2 + \bm{R}_n$ where $\bm{B}^1$ is an Erd\H{o}s--R\'enyi random graph on the first block with edge probability $p-q$, $\bm{B}^2$ is an ER random graph on the second block also with edge probability $p-q$, and $\bm{R}_n$ is an Erd\H{o}s--R\'enyi random graph on all $n$ nodes with edge probability $q$. We may analogously view $\bm{B}^1 + \bm{B}^2$ as signal and $\bm{R}_n$ as noise. 
\end{remark}

In our definition of the $\mathrm{PF}(\alpha, \theta, \ell, \pi)$ model, the community membership function $\ell(\cdot)$ is viewed as a fixed parameter; $\ell(u) = k$ means that the node $u$ is in community-tree $\bm{T}^k$. However, for our lower bound analysis in Theorem~\ref{thm: impossible} and algorithmic development in Section~\ref{alg: Model recovery}, it is useful to define an alternative model where $\ell(\cdot)$ is random. In this model, the nodes enter the network according the global ordering $\pi$ and the $t$-th node $\pi_t$ arrives, it chooses a parent-node at random from the set of all existing nodes $\{\pi_1, \ldots, \pi_{t-1}\}$. The new node $\pi_t$ then joins the same community-tree as its parent. We refer to this model as the random community planted forest model (RC-PF) and define it formally below.

\begin{defin}
\label{defin: RC-PF}
Let  $\alpha>-1$, $\theta \in [0, 1]$, and $K \in \mathbb{N}$. Let $\pi \,:\, V_n \rightarrow [n]$ be an arrival ordering. For each $k \in [K]$, we initialize tree $\bm{T}^k$ as a graph with two nodes $\pi_{2k-1}, \pi_{2k}$ and a single edge between them and assign $\ell(\pi_{2k-1})=\ell(\pi_{2k}) = k$. We initialize $\bm{F}$ as the disjoint collection of $\bm{T}^1, \ldots, \bm{T}^K$. 

Then, for each $t = 2K+1, \ldots, n$, we connect the new node $\pi_t$ to an existing node $w \in \{\pi_1,\ldots,\pi_{t-1}\}$ chosen with probability
\begin{equation}
\label{defin: attachment rule 2}
w \mapsto \frac{ \mathrm{deg}_{\bm{F}_{t-1}}(w) + \alpha}{2(t-K-1) + (t-1)\alpha},
\end{equation}
where $\mathrm{deg}_{\bm{F}_{t-1}}(w)$ is the degree of node $w$ at time $t-1$. We set $\ell(\pi_t) = \ell(w)$. 

Let $\bm{F}_n$ be the resulting random forest graph after all $n$ nodes are added. Let $\bm{R}_n \sim \text{Erd\H{o}s--R\'
enyi}(\theta)$ be an independent ER graph defined on the same set of vertices $V_n$ and, as before, define
\[
\bm{G}_n = \bm{F}_n + \bm{R}_n, \quad \text{collapsing multi-edges}
\]
We then say that $\bm{G}_n$ is distributed according to the $\mathrm{PF}(\alpha, \theta, \pi, K)$ model. 
\end{defin}

We note that the RC-PF model produces a random community membership function $\ell$ so that the size of each community is random. A P\'olya-urn argument (see e.g. Section 4.3.2 in~\cite{durrett2019probability}), where we view each ball as one node and two edge-endpoints, shows that the community size proportions $\bigl( \frac{n_1}{n}, \frac{n_2}{n}, \ldots, \frac{n_K}{n}\bigr)$ converge in distribution to $\operatorname{Dirichlet}(\frac{2+2\alpha}{2+\alpha}, \frac{2+2\alpha}{2+\alpha}, \ldots, \frac{2+2\alpha}{2+\alpha})$ supported on the $K-1$ dimensional simplex. 

We also note that the RC-PF model is almost identical to the fixed $K$ multiple roots PAPER model in~\cite{crane2024root}. The only difference is that, in the latter, each tree is initialized with a single root node with a self-loop to guarantee positive degree.

\subsection{Community Detection Problem}

We observe the final snapshot $\boldsymbol{G}_n$ generated by the $\mathrm{PF}(\alpha,\theta,\ell,\pi)$ model. The community detection task is to recover the community membership function $\ell$. 

To assess the performance of estimated community labels $\hat{\ell}$, we first define a discrepancy measure between two membership functions on the same graph. Since the community labels are arbitrary, the true membership function $\ell$ is identifiable only up to a permutation of the $K$ community labels. Therefore, we do not compare $\hat{\ell}$ with the true membership $\ell$ directly but rather with $\sigma \circ \ell$ where $\sigma \in {S}_K$ is a permutation of the $K$ labels. We choose $\sigma$ to minimize the Hamming distance between the two membership functions. 

\begin{defin}
\label{defin mismatch}
Let $K \in \mathbb{N}$, let $\bm{g}$ be a graph, and let $\ell_1, {\ell}_2 :V(\bm{g}) \rightarrow [K]$ be two labelings. We define the Hamming distance and the global misclustering error (viewing $\ell_1$ as the true clustering) respectively as
\begin{equation*}    
d^{\mathrm{Ham}}({\ell}_1,\ell_2):=\sum_{u\in V\left(g\right)}\mathbbm{1}\{{\ell}_1(u)\neq \ell_2(u)\},\quad d({\ell}_1,\ell_2) := \min_{\sigma \in S_K} d^{\mathrm{Ham}}({\ell}_1, \sigma \circ \ell_2). 
\end{equation*}

For a given subset of the nodes $V_0 \subseteq V(\bm{g})$, we also define the \emph{local} misclustering error with respect to $V_0$:
\begin{equation*}    d^{\mathrm{Ham}}_{V_0}({\ell}_1,\ell_2):=\sum_{u\in V_0}\mathbbm{1}\{{\ell}_1(u)\neq \ell_2(u)\}, \quad d_{V_0}({\ell}_1,\ell_2) := \min_{\sigma \in S_K} d^{\mathrm{Ham}}_{V_0}({\ell}_1, \sigma \circ \ell_2). 
\end{equation*}
\end{defin}

Let us first consider the goal of designing an estimator $\hat{\ell}$ of the community membership labels so that the expected misclustering error rate $\frac{1}{n} \mathbb{E}[ d(\hat{\ell}, \ell)]$ tends to 0 as $n$ increases, which is known as weak recovery or consistent recovery. The difficulty of this naturally depends on the noise level $\theta$ since weak recovery is clearly impossible if $\theta$ is close to 1. Somewhat surprisingly, we prove that even when $\theta$ is at a relatively small level of $\Omega(1/n)$, weak recovery is still unattainable for any estimator. 

\begin{thy}
\label{thm: impossible}
Let $\boldsymbol{G}_{n} \sim \mathrm{PF}(\alpha, \theta, \pi, K=2)$. Let $c > 0$ and suppose $\theta \geq \frac{c}{n}$. Then, for all sufficiently large $n$, there exists $\lambda := \lambda(c, \alpha) > 0$ such that 
\begin{equation*}
    \inf_{\hat{\ell}}  \, \mathbb{E}\Biggl[ \frac{d\bigl(\hat{\ell}(\mathbf{G}_n), \ell\bigr)}{n}\Biggr]\geq \lambda,
\end{equation*}
where the infimum is taken over all estimators of the community label, that is, over all function that on input $\boldsymbol{G}_{n}$ outputs a labeling function $\hat{\ell}(\mathbf{G}_n): V(\mathbf{G}_n) \rightarrow \{1, 2\}$, and where $d$ is the global misclustering error introduced in Definition~\ref{defin mismatch}.
\end{thy}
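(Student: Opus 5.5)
The plan is to exhibit a linear-size set of late-arriving nodes whose labels are essentially fair coin flips given $\bm{G}_n$. In RC-PF, for $t \geq n/2$, the probability that $\pi_t$ attaches to a leaf (degree one in $\bm{F}_{t-1}$) is at least $(1+\alpha)\cdot(\#\text{leaves})/(2(t-K-1)+(t-1)\alpha)$. Since a constant fraction of APA nodes are leaves, with high probability $\Omega(n)$ nodes end up as leaves of $\bm{F}_n$. Call a node $v$ \emph{ambiguous} if it satisfies three conditions: it is a leaf in $\bm{F}_n$; it has exactly two neighbors $a,b$ in $\bm{G}_n$, exactly one of which, say $a$, is its forest parent while the edge $(v,b)$ comes from $\bm{R}_n$; and $\ell(a)\neq\ell(b)$. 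When $\theta \geq c/n$, each leaf has $\mathrm{Binomial}(n-1,\theta)$ noise edges, so it has exactly one noise edge with probability bounded below by a constant $p(c)>0$ once we truncate to $\theta \le C/n$. Larger $\theta$ only makes matters worse, and can be handled by reducing to this case or by using nodes with two noise edges; alternatively one may restrict to $\theta \in [c/n, C/n]$ and use monotonicity. The noise endpoint lies in the other community with constant probability, because community proportions converge to a Dirichlet law and are therefore bounded away from $0$ with constant probability. Hence $\mathbb{E}[\#\text{ambiguous}] \geq \lambda_0 n$.

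The key step is a symmetry (swap) argument. Fix an ambiguous $v$ with neighbors $a,b$. Let $\bm{G}'$ be the configuration in which $b$ is the forest parent, $(v,a)$ is the noise edge, and $\ell(v)=\ell(b)$; everything else is unchanged. Both configurations produce the same observed graph $\bm{G}_n$, but $v$ has a different label in each. I will compare their likelihoods. The forest part differs only in the factor for $v$'s attachment, which is $(\deg_{t-1}(a)+\alpha)$ versus $(\deg_{t-1}(b)+\alpha)$, together with the subsequent degree-dependent factors for later arrivals attaching to $a$ or $b$. The noise part has the same number of edges, so its contribution is identical. The likelihood ratio is therefore governed by the degrees of $a$ and $b$. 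If I restrict further to ambiguous $v$ whose neighbors $a,b$ both have forest degree bounded by a constant $D$, which still leaves linearly many by the power-law degree distribution, then the ratio lies in $[\kappa,1/\kappa]$ for some $\kappa(D,\alpha)>0$. The posterior probability of either label of $v$ given $\bm{G}_n$ is then at least $\kappa/(1+\kappa)$. The swap is an injective map on configurations once $v$ is fixed, so summing over configurations turns this into a lower bound on the Bayes error for each such $v$.

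To conclude, note that any $\hat\ell$ errs at node $v$ with conditional probability at least $\kappa/(1+\kappa)$ on this event, so $\mathbb{E}[d^{\mathrm{Ham}}(\hat\ell,\ell)] \geq \lambda n$. The permutation minimum in $d$ must also be handled. With $K=2$, I would take $d \geq \min$ over the two permutations and argue that the errors on the ambiguous set remain a constant fraction for both. This works because the posterior lower bound is symmetric in the two labels, and since $\min(X,Y) \ge$ the count of ambiguous nodes where both are uncertain, a local-flip counting argument applies. Concretely, I would bound $\mathbb{E}\min(X_1,X_2)$ by conditioning on all but the ambiguous labels; these are conditionally nearly independent across distinct ambiguous nodes with disjoint neighborhoods, and a Chernoff-type bound shows each $X_\sigma$ is at least a constant times the ambiguous count.

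The main obstacle is making the likelihood-ratio comparison rigorous. Changing $v$'s parent alters the degree trajectory of $a$ and $b$ for all later times, and it changes the community sizes, which enter the normalization of the attachment rule in Definition~\ref{defin: RC-PF}. I would first try to handle this via the exchangeable Pólya-urn representation, in which the probability of a forest depends only on final degrees through products $\prod_w \prod_{j=1}^{\deg(w)-1}(j+\alpha)$ and a normalizing term depending on $n$ alone. With that representation, the ratio becomes $(\deg(b)+\alpha)/(\deg(a)-1+\alpha)$, which is explicitly bounded under the degree constraint.
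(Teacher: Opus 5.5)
Your core mechanism is the paper's. Re-attaching a late leaf $v$ from its forest parent $a$ to a noise neighbour $b$ in the other tree leaves $\bm{G}_n$ unchanged. $\mathbb{P}(\bm{G}_n\mid\bm{F}_n)$ is the same for every forest $\bm{F}_n\subseteq\bm{G}_n$, since each has $n-K$ edges. Your product formula correctly gives the forest likelihood ratio $(\deg(b)+\alpha)/(\deg(a)-1+\alpha)$. The paper's event $\mathcal{A}_t$ is the special case $\deg(a)=2$, $\deg(b)=1$, where the ratio is exactly $1$.

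As written, though, the swap is not always available. Under $\mathrm{PF}(\alpha,\theta,\pi,2)$, $b$ can be the parent of $v=\pi_t$ only if $b\in\pi_{1:(t-1)}$. If $b$ arrived later, the swapped configuration has probability zero. Then $a$ is the only earlier neighbour of $v$, so $\bm{G}_n$ determines $v$'s parent and no bound in $[\kappa,1/\kappa]$ holds. You must add ``$a,b$ arrived before $v$'' to the definition of ambiguity; for $t\ge n/2$ this costs only a constant factor.

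Second, ``exactly one noise edge'' has vanishing probability once $n\theta\to\infty$, and ``larger $\theta$ only makes matters worse'' needs an argument. It is true by garbling. Choose $\eta$ with $(1-c/n)(1-\eta)=1-\theta$. Then $(\bm{F}_n+\bm{R}_n^{c/n})+\bm{R}'$, with $\bm{R}'\sim\mathrm{ER}(\eta)$ independent, has the law of $\bm{G}_n$ at level $\theta$ jointly with $\ell$. So every estimator at level $\theta$ is a randomized estimator at level $c/n$. Alternatively, drop the two-neighbour condition, as the paper does: $\mathcal{A}_t$ only asks that the parent and one noise endpoint be the two candidates, and it tolerates any other noise edges.

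The more serious gap is the minimum over $S_2$. A per-node bound controls $\mathbb{E}\,d^{\mathrm{Ham}}(\hat\ell,\sigma\circ\ell)$ for each fixed $\sigma$, not $\mathbb{E}\min_\sigma$. Your proposed fix (ambiguous labels ``conditionally nearly independent'', then Chernoff) is not set up. You would need a linear-size family of ambiguous nodes that is invariant under all the swaps simultaneously, with pairwise disjoint triples $\{v,a_v,b_v\}$, so that the posterior factorizes. Your definition via forest degrees $\le D$ is not swap-invariant: a swap at $v'$ can raise the forest degree of a candidate it shares with $v$. So the conditioning you describe is not even well defined.

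The paper sidesteps this entirely:
\begin{itemize}
\item Take $\bar\sigma$ optimal on the first $\lceil n/2\rceil$ arrivals.
\item Lemma~\ref{lemma: split alignment} then gives $d(\hat\ell,\ell)\ge\min\{\sum_{t>\lceil n/2\rceil}\mathbbm{1}\{\hat\ell(\pi_t)\neq\bar\sigma\circ\ell(\pi_t)\},\,n/4\}$.
\item $\bar\sigma$ depends only on $\bm{G}_n$ and early labels, which re-attaching a late node does not change. So your single-node swap bound applies verbatim with $\bar\sigma\circ\ell$ in place of $\ell$.
\item Lemma~\ref{lemma: truncated lower bound} turns that per-node bound into the constant lower bound.
\end{itemize}

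Finally, ``linearly many such $v$ by the power-law degree distribution'' is unproved, since the parent of a late node is size-biased toward high degree. Requiring both candidates to be leaves at time $t-1$ that receive no later children, as in Lemma~\ref{lemma balance attachment set bound}, makes this probability estimate elementary.
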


We defer the proof of Theorem~\ref{thm: impossible} to Section~\ref{sec: appendix model} of the appendix, but outline the high-level idea in Remark~\ref{rem: impossible proof} below. 

We see from Theorem~\ref{thm: impossible} that even when $\theta = \frac{0.1}{n}$ so that the expected number of noise edges is at most $0.05 \cdot (n-1)$ and much smaller than $n-2$, the number of edges in $\bm{F}_n$, the misclustering error rate of any estimator is still bounded away from 0. More generally, Theorem~\ref{thm: impossible} implies that to achieve weak recovery, we would need the number of noise edges to be of smaller order than the number of tree edges, which is too restrictive for real-world data. 

Rather than working in the unrealistic setting necessary to obtain weak recovery over all the nodes, we make a simple but important observation: Theorem~\ref{thm: impossible} does not preclude the possibility that we can correctly cluster a specific (and possibly random) \emph{subset} of the nodes $V_0$, that is, we may have a $V_0 \subset V(\bm{G}_n)$ such that the local misclustering error rate $\mathbb{E}\bigl[\frac{1}{|V_0|}d_{V_0}(\hat{\ell}, \ell)\bigr]$ goes to 0.

Indeed, in the proof of Theorem~\ref{thm: impossible}, we see that the misclustering error is driven by nodes that are peripheral in the sense that they are late arriving and may have a long graph distance to the early arriving nodes. This conforms with our intuition that for real-world networks, nodes that are non-central and not well connected to the ``core" of the graph may be much more difficult to cluster than centrally located nodes. The concept of graph core is studied in~\cite{zhang2015identification,naik2021sparse,miao2023informative,yanchenko2025statistical}, although the precise notion of graph core that they use is quite different and not based on growth dynamics.

%{\color{purple}

In light of Theorem~\ref{thm: impossible} and our subsequent discussion, our recovery results focus on bounding the local misclustering error rate $\mathbb{E}\bigl[ \frac{1}{|V_0|} d_{V_0}(\hat{\ell}, \ell)\bigr]$ for various different subsets $V_0$. Loosely speaking, we consider subsets that are ``informative" in that the nodes of $V_0$ play important roles in the formation history of the network or if they carry strong structural signals that overcome the random perturbations introduced by the Erd\H{o}s--R\'enyi noise. More precisely, we consider three classes of $V_0$. The first is where $V_0$ consists of $L$ earliest arriving nodes for some fixed $L$. The second is where $V_0$ consists of nodes with sufficiently high degree. The third is where $V_0$ consists of nodes with a small graph distance from the initial root nodes of any of the communities. 

Before giving an overview of our recovery method and guarantees, we first give a brief summary of the proof of Theorem~\ref{thm: impossible}. 

\begin{remark}
\label{rem: impossible proof}
We give a high-level sketch of the proof of Theorem~\ref{thm: impossible} in this remark. Suppose we fix a particular node $u$ and know the community membership of all the nodes that arrived prior to $u$, then the task of identifying the community label of $u$ is equivalent to that of looking at all the edges that connect $u$ to the prior arriving nodes and picking out the single tree edge from the noise edges. 

We prove that if $u$ is a late-arriving node, then it is impossible to identify the tree edge. To be more specific, suppose $(u, v)$ is a tree edge; we show that when the arrival time of $u$ is greater than $n/2$, there is a high probability that there exists a noise edge $(u, v')$ such that $v$ and $v'$ have the same degree but belong to different communities. Swapping the roles $(u, v)$ and $(u, v')$ (so that $(u, v')$ becomes a tree edge and $(u, v)$ becomes a noise edge) produces an outcome with the same likelihood, making the two cases indistinguishable to any estimation method. 

\end{remark}

\subsection{Overview of Recovery Guarantees}

% Our method
Our community recovery method, described in detail in Section~\ref{sec: algorithm}, first prunes the graph to reveal a set of $K$ connected components (we call these anchor components) and then uses the anchor components to propagate the community labels to the remaining nodes. Although this method produces a global clustering of all the nodes, we know from Theorem~\ref{thm: impossible} that the overall global misclustering error rate of any method is bounded away from 0. We therefore study the behavior of the estimated clustering local to a specific subset $V_0$ of the nodes. In order for the local misclustering error $d_{V_0}(\hat{\ell}, \ell)$ to be small, the subset $V_0$ needs to consist of ``central" nodes. Our main recovery results, Theorem~\ref{thm: first L} and Theorem~\ref{thm: layer 1/2}, look at several different subsets. We summarize the conclusion here. 

Let $\bm{G}_n$ be a random graph distributed according to the planted forest model $\mathrm{PF}(\alpha, \theta, \ell, \pi)$ where $\alpha > -1$. We assume that the noise level is not too high, namely that there exists an arbitrarily small constant $\delta > 0$ and constant $C_0>0$ such that $\theta \leq C_0 n^{- \frac{1+\alpha}{2+\alpha} - \delta}$. Some control of the noise level is necessary since the community estimation problem is clearly infeasible if $\theta$ is too large. We also assume that the community sizes are roughly balanced in that there exists a constant $H \geq K$ such that each community has size $n_k \geq \frac{n}{H}$. This is to exclude the case where one community contains a vast majority of the nodes, so that one can obtain a low misclustering error by simply outputting the trivial membership function where every node is placed in the same community. We then have the following: 

\begin{itemize}
\item[1.] (Informal statement of Theorem~\ref{thm: first L}) Fix any $L \in \mathbb{N}$ and let $V^{L}_0$ be the set of $L$ earliest arriving nodes of each tree/community so that $|V_0| = KL$. Then, for sufficiently large $n$, 
\[
d_{V^{L}_0}(\hat{\ell}, \ell) = 0, \quad \text{ with high probability.}
\]
\item[2.] (Informal statement of the first part of Theorem~\ref{thm: layer 1/2}) Let $V^{(1)}_0$ be the set of nodes whose tree-distance to the root nodes is at most 1 (i.e direct children). Then, 
\[
|V^{(1)}_0| \asymp_p n^{\frac{1}{2+\alpha}} \,\, \text{ and } \,\, \frac{d_{V^{(1)}_0}(\hat{\ell}, \ell)}{|V^{(1)}_0|} \lesssim n^{ - \frac{1+\alpha}{2+\alpha} - \delta}, \quad \text{with high probability}.
\]
\item[3.] (Informal statement of the second part of Theorem~\ref{thm: layer 1/2}) Let $V^{(2)}_0$ be the set of nodes whose tree-distance to the root nodes is at most 2. When $\alpha = 0$, 
\[
|V^{(2)}_0| \asymp_p n^{\frac{1}{2}}\log n, \quad \frac{d_{V^{(2)}_0}(\hat{\ell},\ell)}{|V^{(2)}_0 |} \lesssim n^{-\delta}, \quad \text{ with high probability}.
\]
\end{itemize}

\begin{remark}
\label{rem: K estimation 1}
For the most part, we assume that the number of communities $K$ is known and does not increase with $n$. Although we do provide a method that can provably estimate $K$ on the planted forest model (see Remark~\ref{rem: K estimation 2} and  Algorithm~\ref{alg: theory align} in Section~\ref{secsec: The simple Two-Step Algorithm} of the appendix), the estimator is tailored to the specifics of the model and is in practice difficult to tune. Estimating the number of communities on large real-world network data is known to be a challenging problem \citep{le2022estimating,ma2021determining,jin2023optimal,wang2017likelihood}; estimators that work in specific theoretical settings may not work well in practice. We defer a comprehensive study of how to best estimate $K$ on growing networks to future work. 
\end{remark}

\section{A Two-Stage Algorithm: SPAR}
\label{sec: algorithm}

In this section, we propose a two–stage procedure for community detection in the network growth model, which we call Selective Pruning with Anchor-based Recovery (SPAR). In stage one, we iteratively prune the graph based on degree and extract $K$ connected components which we refer to as anchor components. In stage two, we propagate the community labels to the entire network, either based on distance to the anchor components or on more complex model-based criteria. 

The high-level rationale is as follows. If we knew the noise edges, we could remove them to immediately separate the communities, but since they are unknown, we aim to remove the noise edges indirectly by pruning the low-degree nodes. To ensure that we do not prune too many nodes and shatter the community structure, we proceed iteratively and at every step, try to preserve the core connected components formed by the earliest arriving nodes of each community, which we know will also contain the highest degree nodes of that community-tree \citep{senizergues2021geometry, contat2024eve}. In this way, we will eventually remove all the nodes that act as noise-induced bridges between the communities and reveal the $K$ core connected components of the communities. We can then take these $K$ connected components as anchor components to assign cluster memberships to all the nodes.

Our method requires knowledge of $\alpha$ and $K$. The $\alpha$ parameter can be estimated via the EM algorithm derived in \cite{crane2024root} (see Section S3.1 therein). Estimation of $K$ is much more difficult. Although we provide a theoretically valid method to estimate $K$ on the planted forest model based on removing low degree nodes and examining the number of connected components that remain (c.f. Remark~\ref{rem: K estimation 1} and~\ref{rem: K estimation 2}), it is difficult to use on real data. An alternative approach is to take $K=2$ and perform hierarchical clustering. We adopted this second approach in our case study and were able to recover meaningful communities without extensive tuning.

\subsection{Stage I: Selective Pruning}

The goal of the first stage is to separate the graph into at least $K$ connected components by pruning low-degree nodes. A simple way to do this is to set a threshold $T$ and remove all nodes whose degree is less than $T$. The problem with this approach is that it is very sensitive to the choice of $T$ and even if we have a good way of choosing $T$, we could end up with more than $K$ components. This latter issue can be resolved by keeping only those components that contain at least one high-degree node, since we know that the earliest arriving nodes of each community must contain high-degree nodes. 

To circumvent the need to choose $T$, which carries the risk that choosing too large a threshold may remove too many nodes and fragment the community structure, we propose an iterative graph pruning procedure in Algorithm~\ref{alg: step I}. The algorithm has three integer input parameters $K, \tau, Q$ and either outputs a set of $K$ anchor connected components, each of which has size at least $Q$ and maximum-degree at least $\tau$, or outputs failure if no qualified anchor components can be found. In the failure case, we can decrease the input parameter; see Remarks~\ref{rem:anchor-degree-threshold} and \ref{remark: shrinkage parameter}. 

The algorithm is iterative and performs three operations at each step: we first check if the current graph can already be separated into at least $K$ connected components, each of which is of size at least $Q$ and whose maximum degree is at least the anchor-degree threshold $\tau$. If not, then we construct a set of candidate nodes for removal, remove the lowest degree nodes among the candidate set to get the pruned graph, and then go on to the next iteration. We note that the degree is always computed with respect to the input graph $\bm{G}_n$ and not updated as we prune the graph; this is mainly to make theoretical analysis more tractable -- updating the degree at each iteration may also work well in practice. 

The removal candidate set is designed to capture all the nodes that might have noise edges which connect different community-trees. It consists of all the nodes on all the cycles of the graph as well as any node whose removal would split the graph into a larger number of qualified connected components. Our specific definition of the removal candidate set endows Algorithm~\ref{alg: step I} the useful property that it is more conservative than simple degree thresholding in the sense that when both algorithms output $K$ connected components, Algorithm~\ref{alg: step I} will never remove more nodes than simple degree thresholding; this property is deterministic and does not depend on the probabilistic model. We prove this property in Lemma~\ref{lemma improved} of Section~\ref{secsec: The simple Two-Step Algorithm} of the appendix, where we also formalize the simple degree thresholding algorithm.

\begin{algorithm}[H]
\caption{Step I Selective Pruning}
\label{alg: step I}
\KwIn{Graph $\boldsymbol{G}_n=(V,E)$; community count $K$; anchor-degree threshold $\tau\in\mathbb{N}$; anchor-size threshold $Q\in\mathbb{N}$}
\KwOut{$K$ community anchors $\{V_1,\dots,V_K\}$}

\BlankLine  
  \textbf{Initialize:} Let $\bar{\boldsymbol{G}} := \boldsymbol{G}_n$.
  
  %, and denote 
%$\tilde{V}$ as the candidate removal set from Algorithm~\ref{alg: candidate removal nodes},  when applied to $\boldsymbol{G}_n$.

  \While{TRUE}{

  Set $K_c := | \{ \bm{g} \in \mathcal{C}(\bar{\bm{G}}) \,:\, | V(\bm{g})| \geq Q,\; \operatorname{deg}_{\bm{G}_n}(V(\bm{g})) \geq \tau \}|$.

  \If{$K_c \geq K$}{
    \textbf{return} the $K$ vertex sets of $\mathcal{C}(\bar{\bm{G}})$ with the largest maximum degree.
  }
  
  %\If{$\boldsymbol{\bar G}$ admits a decomposition $\boldsymbol{\bar G}=\oplus_j g_j\ $, Compute $K_c := | \{ g_j \,:\, |V(g_j)| \ge Q \text{satisfied that}\ \deg_{\boldsymbol{G}_n}(V(g_j))\ge \tau \}|$, and $K_c\ge K$}{
  %    \textbf{return} $K$ such components with largest degree $\{V(g_{j_1}),\ldots,V(g_{j_K})\}$\;
  %}

  Set $\tilde{V}' := \{ v \in V(\bm{\bar{G}}) \,:\, v \text{ belongs to at least one cycle of $\bar{\bm{G}}$} \}$.

  Set $\tilde{V}'' := \biggl\{ v \in V(\bm{\bar{G}}) \,:\,  |\{ \bm{g} \in \mathcal{C}(\bar{\bm{G}} \backslash \{v\}) \,:\, |V(\bm{g})| \geq Q, \, \operatorname{deg}_{\bm{G}_n}(V(\bm{g})) \geq \tau \}| > K_c  \biggr\}$.

  Set $\tilde{V} = \tilde{V}' \cup \tilde{V}''$.

  \If{$\tilde{V} = \emptyset$}{
    \textbf{\Return FAIL}.
  }

    {   

  Set $D^* := \min \{ \operatorname{deg}_{\bm{G}_n}(v') \,:\, v' \in \tilde{V} \}$. 
    
  Remove $\bigl\{ v \in \tilde{V} \,:\, \operatorname{deg}_{\bm{G}_n}(v) = D^* \bigr\}$ and update $\boldsymbol{\bar G}$ accordingly.
    
  %Remove the nodes from $\tilde{V}$ with smallest degree with respect the initial graph $\bm{G}_n$ and update $\boldsymbol{\bar G}$ accordingly\;

  }
  %Update $\tilde V$ as the candidate removal set of $\boldsymbol{\bar G}$ using Algorithm~\ref{alg: candidate removal nodes}\;
  }
  
  %\textbf{return} \textbf{FAIL}\;
\end{algorithm}

\begin{remark}
\label{rem:anchor-degree-threshold}
The choice of the parameter $\tau$ in Algorithm~\ref{alg: step I} depends on the degree distribution of the input graph. 
If input graph is distributed according to the planted forest model, then one can determine the anchor-degree threshold $\tau$ via the asymptotic degree distribution described in Lemma~\ref{lemma: G}. In practice, however, we give a heuristic for selecting $\tau$ through numerical approximation, as detailed in Algorithm~\ref{alg: choosing the core degree threshold} in Section~\ref{secsec: Choosing the Core Degree Threshold} of the appendix. The intuitive idea is to set $\tau$ to be the degree of the $M$-th highest degree node in the graph, where we choose $M$ with guidance from Monte Carlo simulation. We note that output is relatively insensitive to the choice of $M$. 
\end{remark}

\begin{remark}
\label{remark: shrinkage parameter}
The success of Algorithm~\ref{alg: step I} does not depend critically on the choice of the anchor-size threshold $Q$. We can take $Q = 1$ in all of our theoretical analysis although we find it still helpful to have an appropriately large $Q$ to improve practical performance. If $Q$ is chosen to be too large, Algorithm~\ref{alg: step I} may fail to give an output, which occurs when $\bar{\bm{G}}$ becomes a disjoint collection of trees such that no tree can be split, by the removal of a single node, into two trees both of size at least $Q$ and with maximum degree at least $\tau$. In such cases, we relax the size threshold by setting $Q \leftarrow Q-1$ or $Q \leftarrow \lfloor \zeta Q \rfloor$ for some $\zeta \in (0,1)$, and repeat until the procedure successfully outputs a set of $K$ components. 
%If the algorithm fails for $Q = 1$, we may decrease the core degree threshold $\tau$, or, in step 13, remove only a single node with the smallest degree rather than a set. 
\end{remark}

\begin{figure}[H]
    \centering
    % First subfigure
    \begin{subfigure}[b]{0.32\linewidth}
        \centering
        \vspace{-0.4cm}
        \includegraphics[width=1\linewidth]{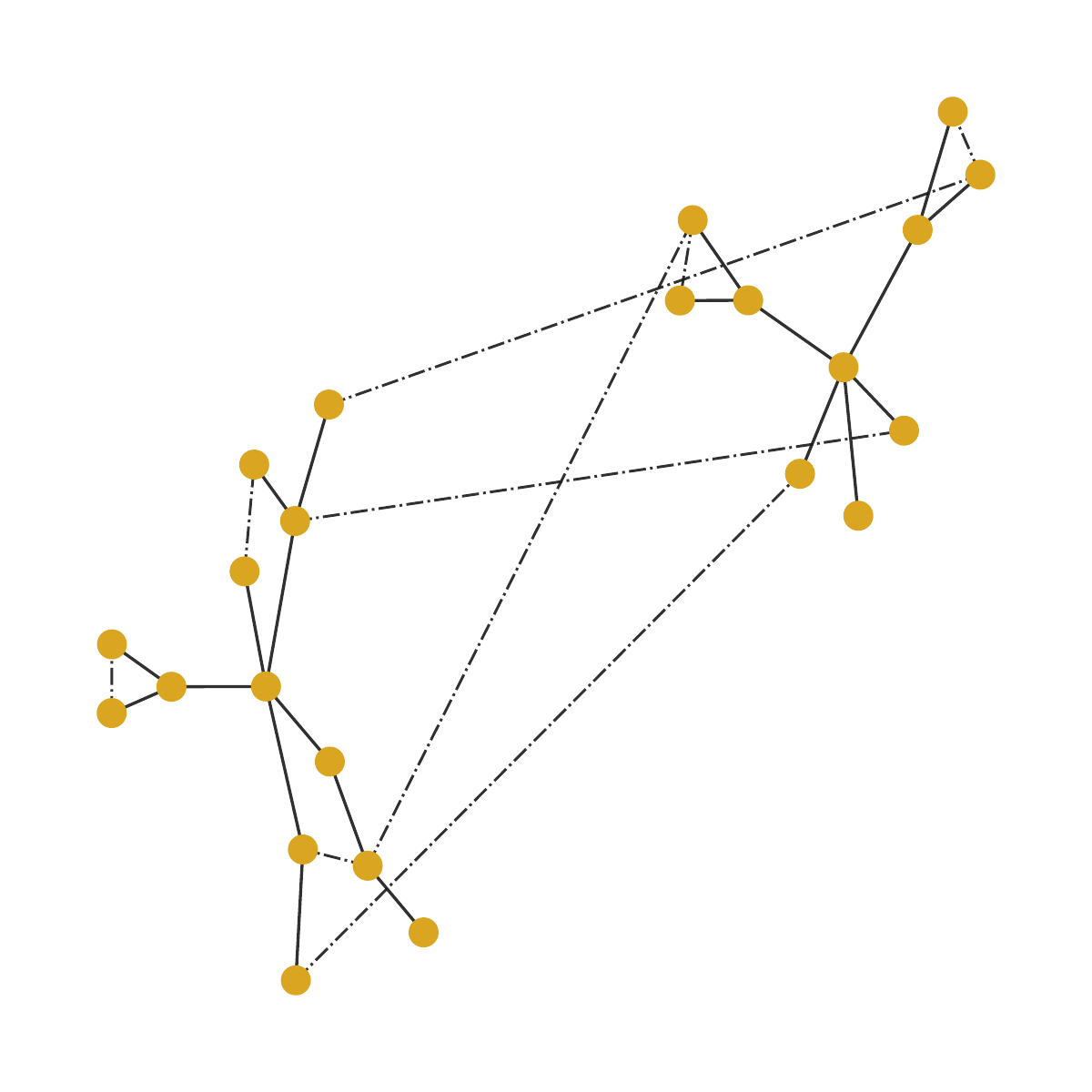}
        \subcaption{Origin Graph}
        \label{fig: improved algorithm a}
    \end{subfigure}
    \hfill
    % Second subfigure
    \begin{subfigure}[b]{0.32\linewidth}
        \centering
        \vspace{-0.4cm}
        \includegraphics[width=1\linewidth]{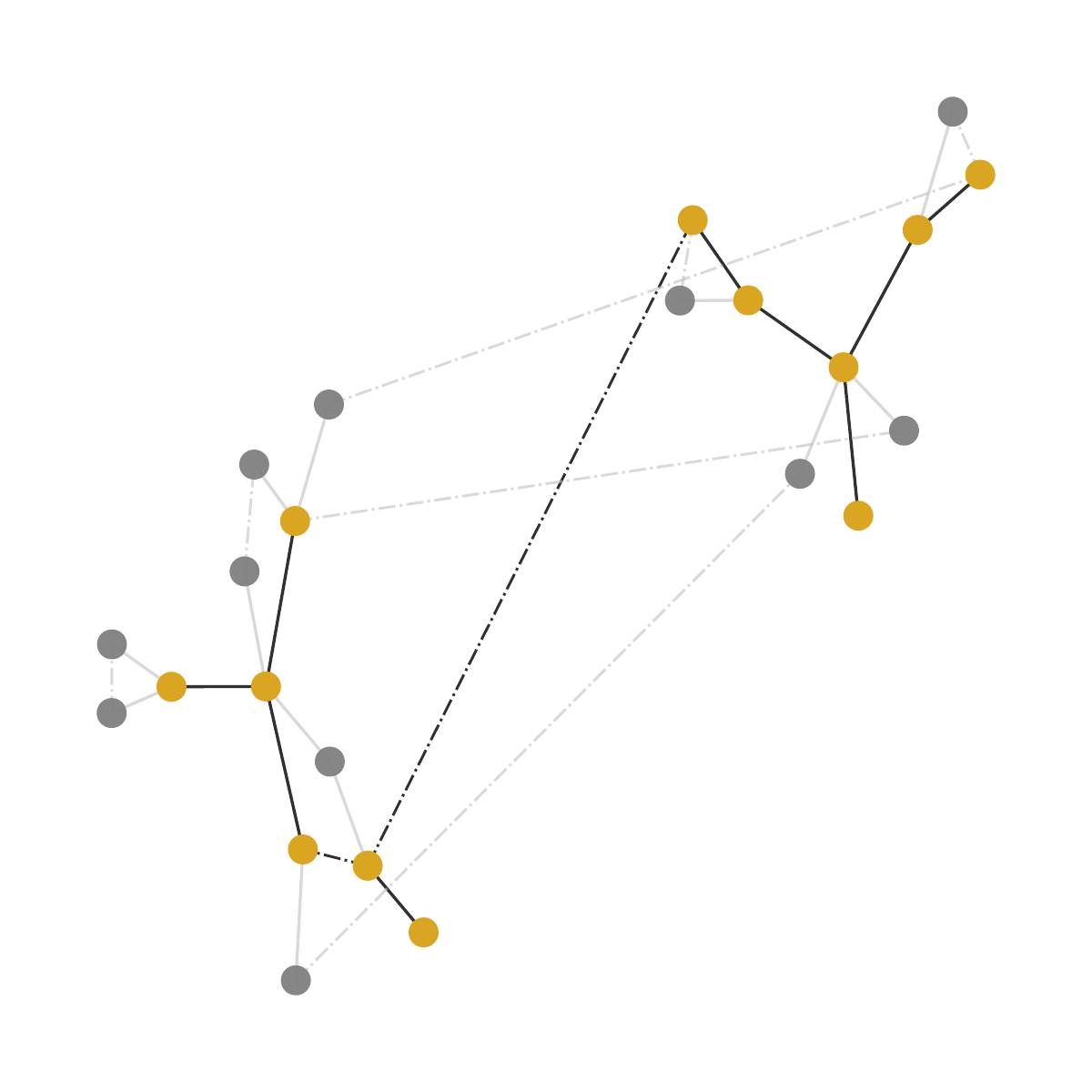} % Add your image here
        \subcaption{Graph after Iteration 1}
        \label{fig: improved algorithm b}
    \end{subfigure}
    % Third subfigure
    \begin{subfigure}[b]{0.32\linewidth}
        \centering
        \vspace{-0.4cm}
        \includegraphics[width=1\linewidth]{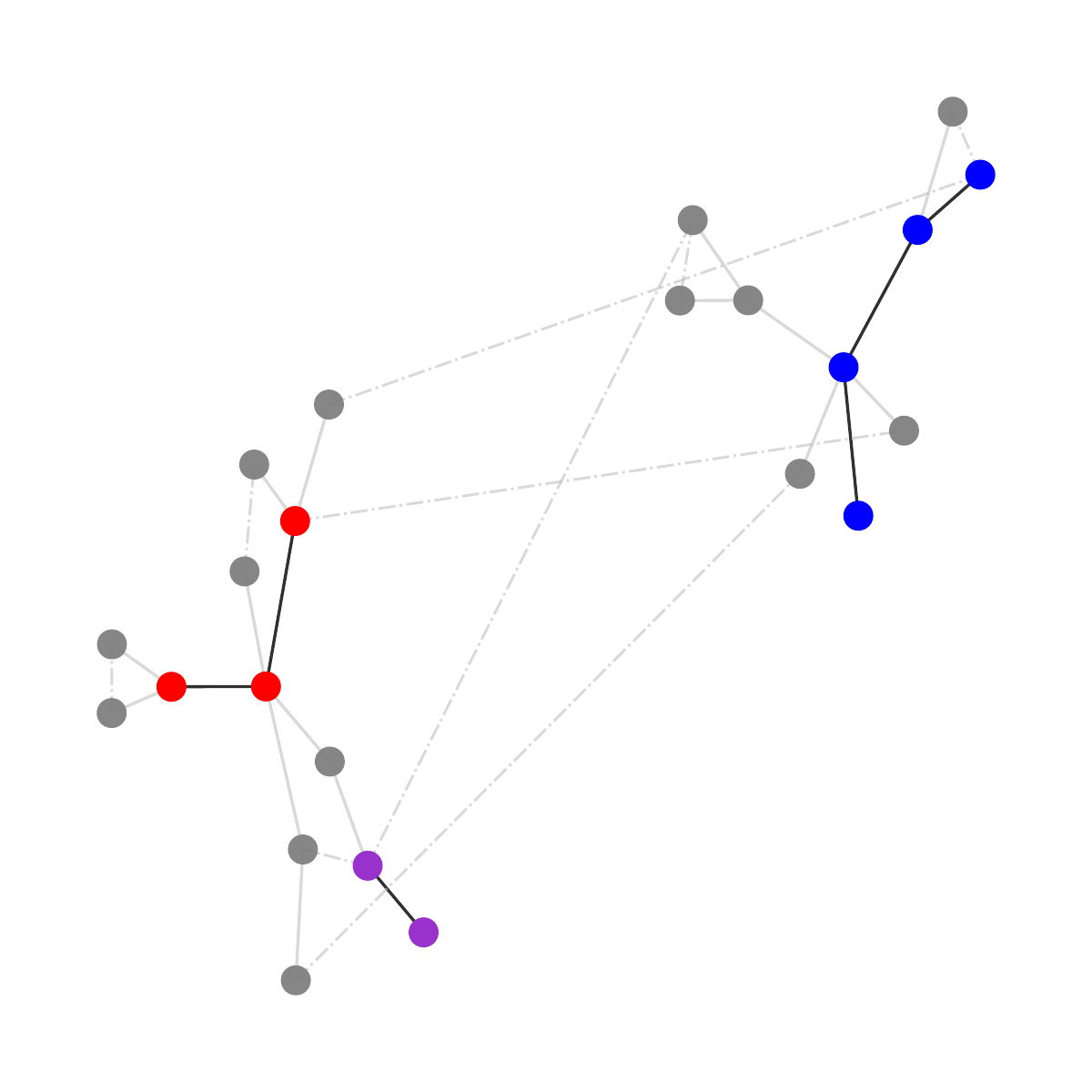} % Add your image here        
        \subcaption{Graph after Iteration 2}
        \label{fig: improved algorithm c}
    \end{subfigure}
    \caption{Forest community detection via Algorithm~\ref{alg: step I} 
        on a toy example with $\tau=5$, $K=2$, and $Q=3$.
        \textbf{(a)} Initial graph containing two planted tree structures.
        \textbf{(b)} Graph after the first pruning iteration.
        \textbf{(c)} Final graph after the second pruning iteration; red and blue anchors consist of nodes with degree $\geq 5$ and size $\geq 3$.}
\label{fig: improved algorithm}
\end{figure}

\begin{example}
To illustrate the procedure of Algorithm~\ref{alg: step I}, we consider a toy example shown in Figure~\ref{fig: improved algorithm a}. 
The graph contains two planted trees of sizes 13 and 10, respectively, with 4 between-tree noise edges and 5 within-tree noise edges (drawn as dotted lines).

In the first iteration, all degree-2 nodes are removed, as they belong to the cycle-based candidate removal set $\tilde{V}'$ identified by Algorithm~\ref{alg: step I}, yielding the graph in Figure~\ref{fig: improved algorithm b}. 
In contrast, the split-based candidate removal set $\tilde{V}''$ selected by Algorithm~\ref{alg: step I} is empty in this iteration.

In the second iteration, no cycles remain in the graph, so the cycle-based candidate removal set $\tilde{V}'$ is empty. 
The split-based candidate removal set $\tilde{V}''$ contains 4 nodes, including 3 nodes of degree 3. 
Removing these three degree-3 nodes produces the graph shown in Figure~\ref{fig: improved algorithm c}. 
The resulting red and blue anchors are the two components that satisfy both the minimum degree ($\tau = 5$) and minimum size ($Q = 3$) requirements.
\end{example}

%We note that Algorithm~\ref{alg: step I} is an improvement upon the simple algorithm described in the beginning of Section~\ref{secsec: The simple Two-Step Algorithm} and formally stated as Algorithm~\ref{alg: theory align} in the appendix. To be precise, if Algorithm~\ref{alg: step I} and Algorithm~\ref{alg: theory align} output the same number of components, e.g. if we obtain $\hat{K}$ from Algorithm~\ref{alg: theory align} and use it as the input $K$ in Algorithm~\ref{alg: step I}, then each core component produced by Algorithm~\ref{alg: step I} is a superset of a component outputted by the simple Algorithm~\ref{alg: theory align}. 
%When $\hat K = K$, the refined procedure produces cores that are at least as large as those obtained by Algorithm~\ref{alg: theory align}, as established in Lemma~\ref{lemma improved} in Section~\ref{secsec: The simple Two-Step Algorithm} of the supplementary material.

\begin{remark}
\label{rem:runtime}
Algorithm~\ref{alg: step I} can be implemented relatively efficiently. To compute $\tilde{V}'$ which consists of all the nodes that belong to at least one cycle in the graph $\bar{\bm{G}}$, we can first compute a spanning tree/forest of $\bar{\bm{G}}$ (say via Kruskal's algorithm \citep{kruskal1956shortest}). Each edge not in the spanning tree/forest induces a cycle in the graph. We can then identify all the nodes in a cycle by iterating over the edges not in the spanning tree/forest. The overall procedure has a runtime of $O(|V|\cdot |E|)$. To compute the set $\tilde{V}''$, we can run depth-first-search (DFS) on $\bar{\bm{G}}\backslash \{v\}$ for each node $v \notin \tilde{V}'$, giving us a runtime of $O(|V|^2)$. 
Moreover, by Lemma~\ref{lemma improved}, the quantity $D^*$ increases strictly at each iteration, and hence the number of iterations is bounded by $\operatorname{deg}(\bm{G}_n)$. Therefore, the overall runtime of Algorithm~\ref{alg: step I} is $O(|V|\cdot |E|\cdot \operatorname{deg}(\bm{G}_n))$.
\end{remark}

\begin{remark}
\label{remark: different pruning criterion}
Degree is not the only criterion one could use for pruning. Other centrality measures that correlate with node arrival order — eigenvector/betweenness centrality, or the posterior root probabilities of \cite{crane2024root} are also plausible. These alternatives are much harder to analyze however. Preferential attachment provides a rigorous way to quantify the relationship between the degree and the arrival order (see e.g. \cite{pekoz2017joint,senizergues2021geometry}), and this link can be shown to degrade gracefully under noise (c.f. Lemma~\ref{lemma R}). 
\end{remark}
%These alternatives are harder to justify, however. Preferential attachment gives a direct mechanistic explanation for why degree tracks arrival order, and this link can be quantified and shown to degrade gracefully under noise. The corresponding relationships for other centralities are far less tractable, which in turn makes theoretical guarantees for community detection substantially harder to establish.

% {
% \color{blue}
% Alternatives to degree-based pruning can also be considered. Other centrality measures that may correlate with node arrival order, such as eigenvector, betweenness, or closeness centrality, could be used instead. In addition, pruning may also be guided by posterior root probabilities derived in \cite{crane2024root}. However, compared with degree-based pruning, these alternatives are generally less interpretable, since the preferential attachment mechanism provides a direct explanation for the strong relationship between degree centrality and node arrival order. Moreover, unlike degree centrality, it is difficult to quantify the relationship between these alternative centrality measures and node arrival time, or to characterize how this relationship changes under different noise levels. As a result, establishing theoretical guarantees for their effectiveness in community detection becomes substantially more challenging.
% }
% 

\subsection{Stage II: Anchor-based Recovery}

The second stage of SPAR uses the anchor components $\{V_1, \ldots, V_K\}$ from the first stage to assign community labels to all the nodes. We present two different approaches for this. The first and simplest is Algorithm~\ref{alg: distance recovery} which uses graph distance. The second is Algorithm~\ref{alg: Model recovery} which uses posterior probabilities from a specified model, possibly computed through Monte Carlo approximations.

%In Stage II of the two–stage algorithm, we also take the {\color{blue} anchor components} $\{V_1,\ldots, V_K\}$ from Stage I as input.
%We present three recovery variants. 
%The Algorithm~\ref{alg: distance recovery} require recovery threshold $\tau^\prime$ as additional input, and Algorithm~\ref{alg: monte carlo sample recovery} requires Monte Carlo samples $\{\boldsymbol{F}_n^{(j)}\}$. 
%We also add an Algorithm~\ref{alg: RRR} to Section~\ref{secsec: RRR} of the supplementary material which requires the graph-pruning history $\mathcal{B}=(B_1,\ldots,B_T)$ as additional input.

\subsubsection{Distance-based Recovery}

Given two connected nodes $u$ and $v$ in a generic graph $\bm{g}$, we define the graph distance between $u$ and $v$ to be
\begin{align}
\mathrm{dist}_{\bm{g}}(u, v) := \text{length of the shortest-path connecting $u$ and $v$}.
\end{align}
If $u$ and $v$ are disconnected, we define $\mathrm{dist}_{\bm{g}}(u,v) = \infty$. For two subsets of nodes $U, V$, we define 
\begin{align*}
\mathrm{dist}_{\bm{g}}(U, V) = \min \{ \mathrm{dist}_{\bm{g}}(u, v) \,:\, u \in U, v \in V \}.
\end{align*}

% {
% \color{blue}
% Before introducing the specific algorithm, we first formally define the graph distance $\text{dist}(\cdot, \cdot)$:
% \begin{defin}
% \label{defin distance}
% For a graph $g=(V, E)$ and subsets $V_1, V_2\subseteq V$, the graph distance between $V_1$ and $V_2$ is given by the length of the shortest path between the sets $V_1$ and $V_2$ if they are connected.
% Otherwise, it is infinity.
% That is
% %
% \begin{equation*}
% \operatorname{dist}_g(V_1, V_2)= \begin{cases}\left|e\right|, & \text { if } e\subset E \text { is the shortest path connecting } V_1 \text { and } V_2 \\ \infty, & V_1 \text { and } V_2 \text { are not connected }\end{cases}
% \end{equation*}
% %
% \end{defin}
% }

One way to perform distance-based recovery is to assign each node $u$ the community label of the closest anchor component. However, large anchor components may sometimes still contain misclassified nodes so that, to further improve the reliability of the recovery method, we instead assign each node $u$ the community label of the anchor component whose high-degree nodes are nearest to $u$. To be specific, let $\tau' > 0$ be a tuning parameter and define the high-degree sub-component $\tilde{V}_k := \{ u \in V_k \,:\, \text{deg}_{\bm{G}_n}(u) \geq \tau' \}$. We then define a modified distance between a node $u$ and a component $V_k$ as
\[
D_k(u) = \text{dist}_{\bm{G}_n}(u, \tilde{V}_k),
\]
and assign $u$ the label $k'$ such that $D_{k'}(u)$ is minimized. 

In practice, we can choose $\tau'$ by setting $\tau^\prime=\tau/2$, where $\tau$ is the anchor-degree threshold in Algorithm~\ref{alg: step I}, or let it be the degree of the last batch of removed nodes from the first stage, i.e. the value $D^*$ at the end of Algorithm~\ref{alg: step I}, which is what we use in Theorem~\ref{thm: first L}. 

%The most direct method of assigning community label to a node $u$ based on anchor components $\{V_1, \ldots, V_K\}$ is to find the component closest to $u$, say $V_{k}$, and then assign $u$ to community $k$. In order to facilitate our theoretical analysis, we  

%The distance-based recovery follows the same labeling step of Algorithm~\ref{alg: theory align}, with one modification: we introduce a recovery  $\tau^\prime$ so that only vertices with degree at least $\tau^\prime$ participate in distance recovery.
%In practice, we set $\tau^\prime=0$ or, if we use Algorithm~\ref{alg: step I} to construct the community {\color{blue} anchors}, then we can set $\tau'$ as the degree of the last batch of removed nodes, i.e., the value of $D^*$ at the end of the algorithm. 
%This thresholding aligns the procedure with the theoretical guarantees and prevents the use of low-degree nodes during recovery.

\begin{algorithm}[H]
\caption{Distance Recovery}
\label{alg: distance recovery}
\KwIn{graph $\boldsymbol{G}_n=\left(V,E\right)$; community  anchors $V_1, \cdots, V_K$; recovery threshold $\tau^\prime \in\mathbb{N}$.}
\KwOut{$\hat{\ell}:V\to\{1,\dots,K\}$}

\For{$i=1$ \KwTo $K$}{
    $\tilde{V}_i:= \left\{u\in V_i: \deg_{\boldsymbol{G}_n}\left(u\right)\ge \tau^\prime\right\}$\;

  }
\textbf{Labeling:}\:
\For{$u\in V$}{
    $D_k\left(u\right)= \mathrm{dist}_{\boldsymbol{G}_n}\bigl(u, \tilde{V}_k \bigr)$ for all $k\in[K]$;\quad
    $\hat{\ell}(u):= \operatorname{argmin}_{k\in[K]} D_k\left(u\right)$ with uniform random tie–breaking if non-unique\;
}
\Return $\hat{\ell}$\;
\end{algorithm}

\subsubsection{Model-based Recovery}

By taking the random-community planted forest model specified in Definition~\ref{defin: RC-PF} with a random arrival ordering $\pi$ that is generated uniformly at random from the set of permutations of the $n$ nodes, we can write down the joint probability
\[
\mathbb{P}(\bm{G}_n, \bm{F}_n, \pi) = \mathbb{P}(\bm{G}_n \,|\, \bm{F}_n) \mathbb{P}(\bm{F}_n \,|\, \pi) \mathbb{P}(\pi)
\]
where, on the right-hand side, the first term is determined by the Erd\"{o}s--R\'{e}nyi noise, the second term is determined by the affine preferential attachment mechanism, and the third term is $\frac{1}{n!}$. 

For a fixed node $u$ and a set of anchor components $\{V_1, \ldots, V_K\}$, we may then define the conditional probability that node $u$ belongs to community $k$ conditional on the observed graph $\bm{G}_n$ and the event that for every $i \in \{1, 2, \ldots, K\}$, each anchor component $V_i$ contains the root node $\pi_i$ of community-tree $\bm{T}^i$:
\begin{align*}
p_k(u) &:= \mathbb{P}\bigl( \ell(u) = k \,\big|\, \pi_1 \in V^1, \ldots, \pi_K \in V^K, \, \bm{G}_n \bigr) \\
&= \frac{ \mathbb{P}\bigl( u \in V(\bm{T}^k), \, \pi_1 \in V^1, \ldots, \pi_K \in V^K  \,\big|\, \bm{G}_n \bigr) }{
\mathbb{P}\bigl( \pi_1 \in V^1, \ldots, \pi_K \in V^K \, \big|\, \bm{G}_n \bigr)}.
\end{align*}

The conditional probability $p_k(u)$ does not have a tractable analytic form but we can compute a Monte Carlo approximation using samples from the posterior distribution of the trees $\bm{T}^1, \ldots, \bm{T}^K$ and of the ordering $\pi$ given the overall graph $\bm{G}_n$. Concretely, suppose we have $M$ samples $\{( \bm{T}^{1, (j)}, \pi_1^{(j)}), \ldots, (\bm{T}^{K, (j)}, \pi_K^{(j)})\}_{j=1}^M$ from the posterior distribution given $\bm{G}_n$, where $\bm{T}^{1, (j)}$ is the first community-tree and $\pi_1^{(j)}$ is the root node of the first community-tree in sample $j$. Then, we compute
\begin{align*}
\hat{p}_k(u) = \frac{\sum_{j=1}^M \mathbf{1}\left\{\pi^{(j)}_i \in V_i \quad \text{for all $i\in \left[K\right]$}\right\} \mathbf{1}\left\{u \in \bm{T}^{k, (j)}\right\}}{\sum_{j=1}^M \mathbf{1}\left\{\pi^{(j)}_i \in V_i \quad \text{for all $i\in \left[K\right]$}\right\}}.
\end{align*}

We then assign node $u$ the community label $k$ which maximizes the posterior probability $\hat{p}_k(u)$. The overall procedure is summarized in Algorithm~\ref{alg: Model recovery}. It is worth noting that in addition to giving a hard community assignment $\hat{\ell}(u)$, we may also use the vector $(\hat p_1(u), \ldots, \hat p_K(u))$ to get a rough sense of community recovery uncertainty for node $u$. 

To generate the posterior samples, we use the Gibbs sampler proposed in \cite{crane2024root}, with a slight modification to accommodate the fact that the model in \cite{crane2024root} initializes each community-tree as a single node with a self-loop whereas we initialize each community-tree as two nodes connected by a single edge.

\begin{algorithm}[H]
\caption{Model Recovery}
\label{alg: Model recovery}
\KwIn{Anchor connected components} $V_1,\ldots,V_K$; $M$ posterior samples $\{( \bm{T}^{1, (j)}, \pi_1^{(j)}), \ldots, (\bm{T}^{K, (j)}, \pi_K^{(j)})\}_{j=1}^M$
\KwOut{Soft assignment $\tilde{\ell}:V\to\Delta^{K-1}$ with $\tilde{\ell}\left(u\right)=\left(\hat p_u\left(1\right),\ldots,\hat p_u\left(K\right)\right)$; hard labels $\hat{\ell}:V\to[K]$}

\For{$u\in V, k\in[K]$} 
{
Set
\begin{equation*}
\hat{p}_k(u) = \frac{\sum_{j=1}^M \mathbf{1}\left\{\pi^{(j)}_i \in V_i \quad \text{for all $i\in \left[K\right]$}\right\} \mathbf{1}\left\{u \in \bm{T}^{k, (j)}\right\}}{\sum_{j=1}^M \mathbf{1}\left\{\pi^{(j)}_i \in V_i \quad \text{for all $i\in \left[K\right]$}\right\}}.
\end{equation*}
Set $\tilde{\ell}(u):=(\hat p_1(u),\ldots,\hat p_K(u))$ and $\hat{\ell}(u):= \arg\max_{k\in[K]} \hat p_k(u)$
(break ties uniformly at random).
} 
\Return $\tilde{\ell},\ \hat{\ell}$\;
\end{algorithm}

\begin{figure}[H]
    \centering
    % First subfigure: Hard classification
    \begin{subfigure}[b]{0.45\textwidth}
        \centering
        \vspace{-0.4cm}
        \includegraphics[width=\linewidth]{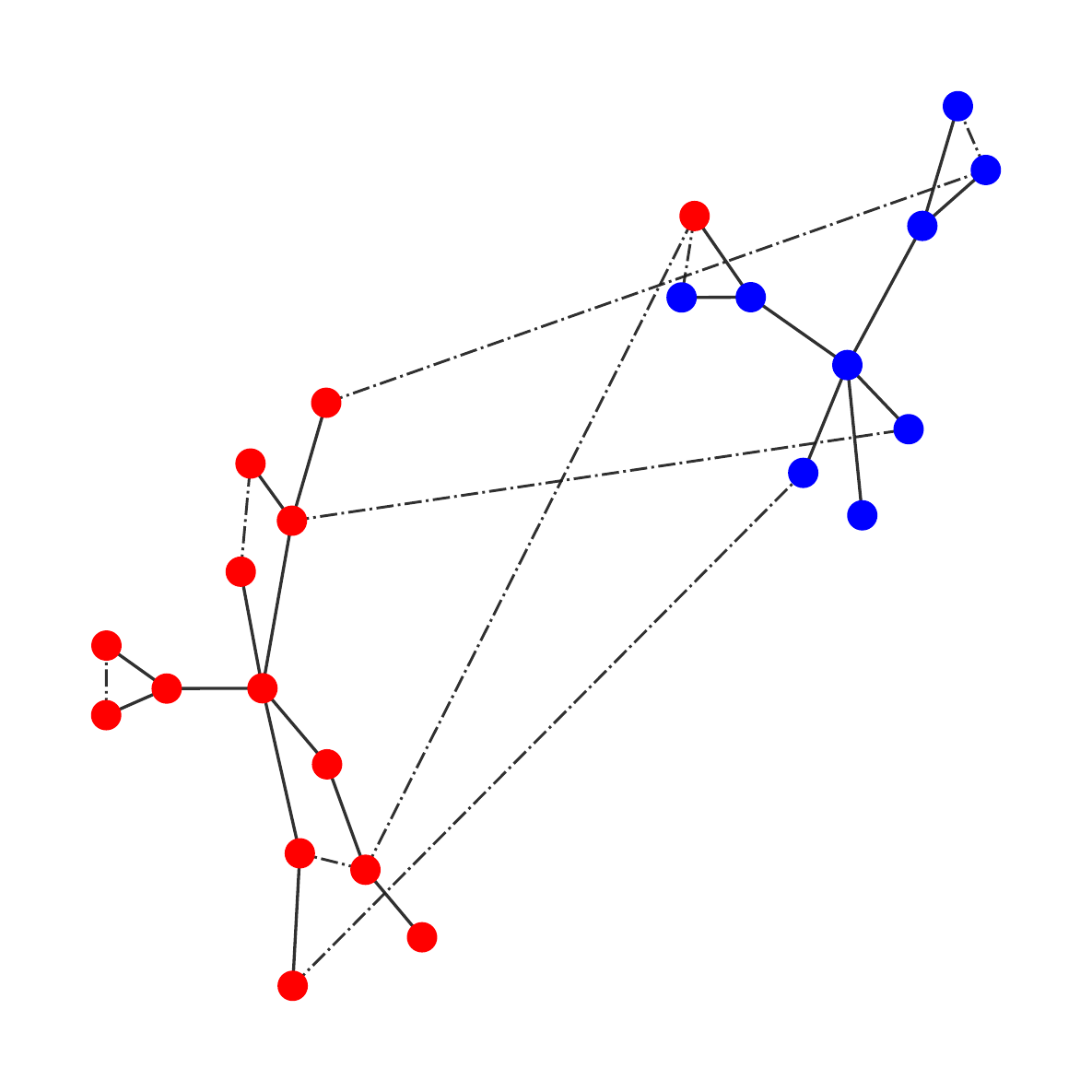}
        \caption{Hard classification results}
        \label{fig: hard_classification}
    \end{subfigure}
    \hfill
    % Second subfigure: Soft classification
    \begin{subfigure}[b]{0.5\textwidth}
        \centering
        \vspace{-0.4cm}
        \includegraphics[width=\linewidth]{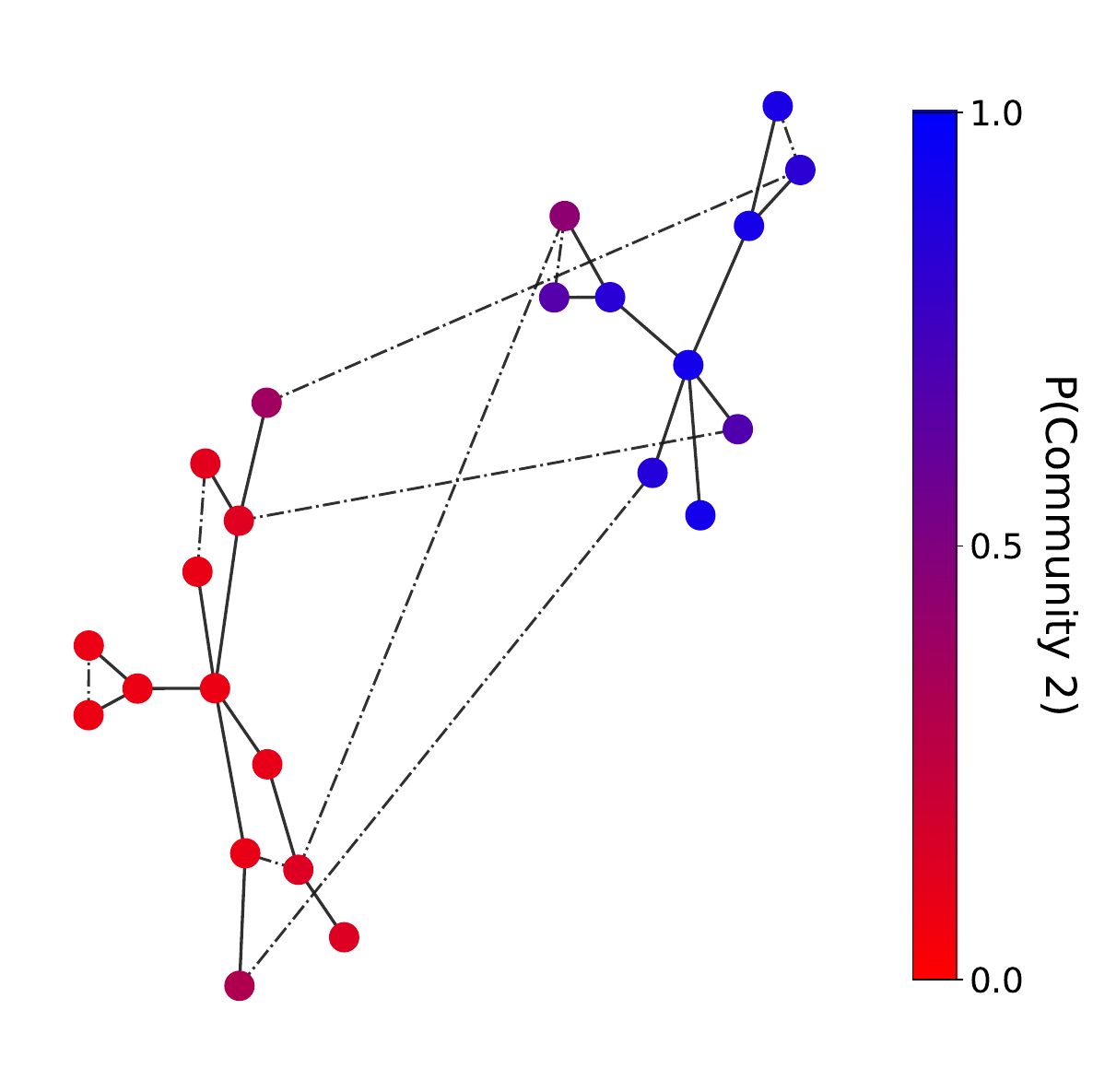}
        \caption{Soft classification results}
        \label{fig: soft_classification}
    \end{subfigure}
    
    \caption{
        Forest community detection results using Algorithm~\ref{alg: Model recovery} 
        on the toy example with input anchors from Figure~\ref{fig: improved algorithm c}. 
        \textbf{(a)} Hard classification outcomes, where each node is assigned to the community with the highest posterior probability: $\hat{\ell}(u) = \arg\max_{i\in[2]} \hat{p}_i(u)$. 
        \textbf{(b)} Soft classification results, visualized through node colors corresponding to posterior probability vectors $\tilde{\ell}(u) = (\hat{p}_1(u),\hat{p}_2(u))$, with the colorbar indicating probability values.
    }
    \label{fig: MCMC_results}
\end{figure}

\begin{example}
We evaluate the performance of Algorithm~\ref{alg: Model recovery} using the anchors obtained from the example in Figure~\ref{fig: improved algorithm c}. 
To generate posterior samples,
we draw $500$ posterior samples from the \textsc{RC-PF} model under the Gibbs sampling framework as described in \cite{crane2024root}, with parameters $\alpha=0$, and $K=2$.
Among these, $319$ samples have roots falling within the identified anchors, indicating a moderate alignment between the core extraction method and the sampling mechanism. 
This alignment supports the suitability of Algorithm~\ref{alg: Model recovery} for community recovery in this setting.

Figure~\ref{fig: hard_classification} presents hard classification results, where each node is assigned to the class with the higher posterior probability. 
The soft classification results in Figure~\ref{fig: soft_classification} encode the full posterior probability distribution into a color mapping, providing detailed insight into which nodes are confidently classified versus those with higher uncertainty. 
Boundary nodes connected by noisy edge to the opposite cluster have greater classification ambiguity, which is consistent with our intuition to prove Theorem~\ref{thm: impossible}.
\end{example}

\section{Theoretical Analysis}
\label{sec: theory}

In this section, we provide upper bounds on the misclustering error rate of our procedure local to specific subsets of nodes. We study three subsets of central nodes: earliest-arriving nodes, high-degree nodes, and nodes that are within tree-distance 1 or 2 of the root node, i.e. children or grandchildren of the root nodes. We give the proof of all the results in Section~\ref{sec: theory proofs} of the appendix. 

% {
% \color{blue}
% In this section, we establish subset-consistent community detection guarantees for SPAR. 
% Specifically, we study recovery on several informative subsets of the network, including the earliest-arriving nodes, high-degree nodes, and nodes in the first few layers of each preferential attachment tree. {\color{orange} [Be more specific]}

Before stating our results, we introduce some useful notation. For any pair of labeled graphs $\boldsymbol{g} \subset \boldsymbol{g}'$ and any $C>0$, we define the following high degree subsets:

\begin{equation}
\label{define VC}
V^C_{\bm{g}'}(\bm{g}) := \{ u \in V(\bm{g}) \,|\, \mathrm{deg}_{\bm{g}'}(u) \geq C \},\quad
    V^{C,\alpha}_{\boldsymbol{g}^\prime}(\boldsymbol{g}) := \bigl\{ u \in V(\boldsymbol{g}) \;\big|\; \operatorname{deg}_{\boldsymbol{g}^\prime}(u) \ge C {|V(\boldsymbol{g}^\prime)|}^{\frac{1}{2+\alpha}} \bigr\}.
\end{equation}
That is, $V^C_{\bm{g}'}(\bm{g})$ is the set of nodes in $\bm{g}$ whose degree, computed with respect to $\bm{g}'$, is at least $C$. When $\boldsymbol{g} = \boldsymbol{g}^\prime$, we write these more concisely as $V^C(\bm{g})$ and $V^{C,\alpha}(\boldsymbol{g})$ respectively. 
The scale factor of $|V(\boldsymbol{g}^{\prime})|^{\frac{1}{2+\alpha}}$ in the definition of $V^{C,\alpha}_{\boldsymbol{g}^\prime}(\boldsymbol{g})$ takes on that particular form because, under affine preferential attachment mechanism, the maximum degree of a graph with $n$ nodes grows at rate $n^{\frac{1}{2+\alpha}}$ (see Lemma~\ref{lemma deg bound} in Section~\ref{secsec: technical lemmas} of the appendix).

% {
% \color{blue}
% The normalization $|V(\boldsymbol{g}^\prime)|^{\frac{1}{2+\alpha}}$ in $V^{C,\alpha}_{\boldsymbol{g}^\prime}(\boldsymbol{g})$ reflects the characteristic degree scale under affine preferential attachment. 
% In particular, the highest-degree vertices in an size $n$ $\mathrm{APA}(\alpha,\pi)$ preferential attachment grow at rate $n^{\frac{1}{2+\alpha}}$ (see Lemma~\ref{lemma deg bound} in Section~\ref{secsec: technical lemmas} of the supplementary material).  
% {\color{orange} [It is very good to explain the $|V(\bm{g}')|^{\frac{1}{2+\alpha}}$ term in the definition. It might be easier to say that the highest degree nodes have degree of order $n^{\frac{1}{2+\alpha}}$]}
% }

We now introduce the assumptions required for our theoretical results.

\begin{assumption}
\label{assumption: bound}
Define $n_k:=\bigl| \left\{u \mid u\in V(\bm{G}_n),\, \ell(u)=k\right\}\bigr|$ as the size of community $k$ and let $H \in [K, \infty)$. We assume that $\frac{n_k}{n} \ge \frac{1}{H}$ for every $k \in [K]$. 
\end{assumption}

Assumption~\ref{assumption: bound} says that the $K$ communities should be roughly balanced in the sense that the ratio of sizes between the largest and the smallest community should be less than a constant factor. Our numerical experiments (c.f. Section~\ref{sec: additional simulation} in the appendix) show that the recovery performance of SPAR deteriorates with the severity of community-size imbalance. This is potentially because very small communities may get accidentally removed in the first pruning stage of our algorithm.

% {
% \color{blue}
% {\color{orange} [Avoid this justification if at all possible.] }
% In our setting, Assumption~\ref{assumption: bound} ensures that the degree-based pruning step in SPAR does not eliminate small communities before the planted trees are separated. 
% Our numerical experiments also shows that the recovery performance decreases with more severe community-size imbalance.
% When Assumption~\ref{assumption: bound} is substantially violated, the current misclustering error becomes less informative, as classifying vertices in communities of vastly different sizes can involve substantially different levels of difficulty.
% A community-size-weighted classification error may therefore be more appropriate in such settings, and we leave its development for future work.{\color{orange} [This explanation should be more intuitive for the reader. It makes sense if you know the algorithm and the theory very well but it is hard to digest for a new reader.]}
% }

\begin{assumption}
\label{assumption: sparsity}
We assume that there exist $\delta > 0$ and $C_0 > 0$ such that $\theta \leq C_0 n^{-\frac{1+\alpha}{2+\alpha}-\delta}$.
\end{assumption}

Assumption~\ref{assumption: sparsity} controls the magnitude of the Erd\H{o}s--R\'enyi perturbation. To give an intuitive interpretation of the tolerance bound $n^{ - \frac{1+\alpha}{2+\alpha}}$, we observe that if the noise probability $\theta = o(n^{ - \frac{1+\alpha}{2+\alpha}})$, then the expected degree contributed by the Erd\H{o}s--R\"{e}nyi random edges for a particular node is equal to $(n-1)\theta
=o(n^{\frac{1}{2+\alpha}})$. For early arriving nodes, this is negligible compared to the degree generated by the preferential attachment mechanism, which is of order $n^{\frac{1}{2+\alpha}}$. We can therefore still use the degree to recover the early-arriving node of each of the communities. Based on our numerical experiments, we conjecture that $n^{-\frac{1+\alpha}{2+\alpha}}$ is a tight threshold in that if $\theta = \Omega(n^{-\frac{1+\alpha}{2+\alpha} + \delta})$ for some $\delta > 0$, then the planted forest graph is indistinguishable from pure ER graph. It is also worth noting that, in the linear preferential attachment case $(\alpha=0)$, our noise tolerance bound for $\theta$ coincides with that appearing in Theorem~11 of \cite{crane2024root}.
In fact, our results suggest that the analysis in \cite{crane2024root} may potentially be extended from the linear preferential attachment setting to the general affine regime $\alpha>-1$.

Our first result bounds the misclustering error rate on the first $L$ earliest-arriving nodes in each of the communities, i.e., the subset $\cup_{i=1}^K \pi_{1:L}\left(\bm{T}^i\right)$. We define the constant $C_1(\varepsilon, H) := \gamma(\frac{\varepsilon}{3H}, 1)$ where $\gamma(\cdot, \, \cdot)$ is the function specified in Lemma~\ref{lemma order--degree} via equation~\eqref{order to degree}. 

% Let $\boldsymbol{G}_n \sim \mathrm{PF}(\alpha, \theta, \ell, \pi)$ and suppose Assumptions~\ref{assumption: bound} and \ref{assumption: sparsity} hold. 
% Fix any $\varepsilon > 0$ and fixed $L\in \mathbb{N}$. Let $\hat{\ell}(\cdot)$ be the estimated community membership function obtained from Algorithm~\ref{alg: step I} followed by Algorithm~\ref{alg: distance recovery} with the following parameter settings: we let the anchor-degree threshold 
% $\tau = C_1(\varepsilon, H) n^{\frac{1}{2+\alpha}}$, the anchor-size threshold $Q > 0$ be arbitrary, and the recovery threshold $\tau^\prime$ to be any non-negative number less than 
% $D^*$ outputted by Algorithm~\ref{alg: step I}. Then, with probability at least $1-\varepsilon + \eta_n$, where the $\eta_n = o(1)$ as $n \rightarrow \infty$ and depending only on $\varepsilon, L, Q, \alpha, H, \delta$, 

\begin{thy}
\label{thm: first L}
Let $\boldsymbol{G}_n \sim \mathrm{PF}(\alpha, \theta, \ell, \pi)$ and suppose Assumptions~\ref{assumption: bound} and \ref{assumption: sparsity} hold. 
Fix any $\varepsilon > 0$ and fixed $L\in \mathbb{N}$. Let $\hat{\ell}(\cdot)$ be the estimated community membership from Algorithm~\ref{alg: step I} followed by Algorithm~\ref{alg: distance recovery} with $\tau = C_1(\varepsilon, H) n^{\frac{1}{2+\alpha}}$, any fixed $Q\in\mathbb{N}$, and $\tau^\prime$ as any non-negative number less than 
$D^*$ outputted by Algorithm~\ref{alg: step I}. Then, with probability at least $1-\varepsilon + \eta_n$, where the $\eta_n = o(1)$ as $n \rightarrow \infty$ and depending only on $\varepsilon, L, Q, \alpha, H, \delta$, 
\begin{equation}
\label{thm: first L eqn 1}
d_{\cup_{i=1}^K \pi_{1:L}\left(\bm{T}^i\right)}(\hat{\ell}, \ell) = 0.
\end{equation}
\end{thy}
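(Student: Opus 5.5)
We will prove the theorem by comparing Algorithm~\ref{alg: step I} with the simple degree-thresholding procedure, using the deterministic comparison property of Lemma~\ref{lemma improved}, and by placing all probabilistic content in a good event $\mathcal{E}_n$. On $\mathcal{E}_n$ the output is forced to be correct. We then show $\mathbb{P}(\mathcal{E}_n)\ge 1-\varepsilon-o(1)$.

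The event $\mathcal{E}_n$ is the intersection of three sub-events. Write $\tau = C_1 n^{1/(2+\alpha)}$ and fix $c>0$ small.

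\textbf{(a) Early nodes have high degree.} For every $k$ and every $u\in\pi_{1:L}(\bm{T}^k)$, and more generally for every node in the subtree spanned by the first $L'$ nodes of $\bm{T}^k$ (with $L'$ chosen via the geometry results for PA trees), we require $\deg_{\bm{T}^k}(u)\ge \tau$. This should come from Lemma~\ref{lemma order--degree}, which links arrival order to degree. The constant $C_1(\varepsilon,H)=\gamma(\varepsilon/(3H),1)$ is chosen so that this event fails with probability at most $\varepsilon/3$ per community, after rescaling by $n_k\ge n/H$.

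\textbf{(b) Noise is negligible.} The noise degrees satisfy $\max_u \deg_{\bm{R}_n}(u)=O_p(n\theta+\log n)=o(n^{1/(2+\alpha)})$, using Assumption~\ref{assumption: sparsity}. Consequently, the set of nodes with $\deg_{\bm{G}_n}\ge c\,n^{1/(2+\alpha)}$ contains only $O_p(1)$ nodes, namely the high-degree nodes of the trees (Lemma~\ref{lemma deg bound}, Lemma~\ref{lemma G}/\ref{lemma R}).

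\textbf{(c) High-degree nodes are not bridged.} No noise edge joins two nodes of degree at least $c\,n^{1/(2+\alpha)}$. More importantly, no short noise-induced path connects the high-degree cores of two different trees once all nodes of degree below $c n^{1/(2+\alpha)}$ are removed. The number of pairs of such nodes is $O_p(1)$, so a union bound gives probability $O(\theta)=o(1)$.

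On $\mathcal{E}_n$ the argument is deterministic. First consider simple thresholding at a level $T$ between $c n^{1/(2+\alpha)}$ and $\tau$. The surviving graph contains, for each $k$, the tree-connected set $\pi_{1:L'}(\bm{T}^k)$, which is connected because the ancestors of early nodes are earlier still. It contains no cross-community edges. It therefore produces exactly $K$ qualified components, each containing one root core. By Lemma~\ref{lemma improved}, Algorithm~\ref{alg: step I} removes no more than this, and it stops with $K$ components. These components still separate the communities: the algorithm keeps the $K$ components of largest maximum degree, and by (b)--(c) each one contains the top-degree node of a distinct tree. Finally, Algorithm~\ref{alg: distance recovery} run with $\tau'<D^*$ labels each $u\in\pi_{1:L}(\bm{T}^k)$. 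Such a $u$ lies in anchor $V_k$ with degree at least $\tau>D^*>\tau'$, so $D_k(u)=0$ while $D_j(u)\ge 1$ for $j\ne k$. This yields \eqref{thm: first L eqn 1}.

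\textbf{Main obstacle.} The hardest part is the separation step: showing that the iterative pruning never terminates while a noise bridge through moderate-degree nodes still joins two cores. It is also necessary to show that the $K$ returned components are not two pieces of the same tree. Cases where $\tilde V=\emptyset$, or where the components are spurious high-degree fragments, also have to be excluded. I would handle this in two steps. First, I would prove that the nodes removed before termination all have degree below $\tau$, so early nodes are never removed. Second, I would show that any component of the pruned graph containing nodes of two trees must contain a cycle or a splittable cut vertex, so pruning continues. The remaining uncontrolled probability, after collecting (a)--(c), gives $\eta_n$.
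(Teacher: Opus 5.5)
\textbf{Where step (a) fails.} Your overall architecture matches the paper's proof. You build a good event from Lemma~\ref{lemma order--degree} and Lemma~\ref{lemma: noise}, then reduce deterministically to simple thresholding through Lemma~\ref{lemma improved} (the paper uses Corollary~\ref{cor: simple containment}). However, step (a) is false as stated, and your final labeling step depends on it.

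The constant $C_1(\varepsilon,H)=\gamma(\frac{\varepsilon}{3H},1)$ is calibrated only for the \emph{root}, and $\tau$ does not depend on $L$. Since $n_k^{-1/(2+\alpha)}\deg_{\bm{T}^k}(\pi_j(\bm{T}^k))\to Y_{j,\alpha}$, $n_k\le n$, and $\sum_j\mathbb{E}Y_{j,\alpha}^q<\infty$, you get
\[
\limsup_n\mathbb{P}\bigl(\deg_{\bm{T}^k}(\pi_j(\bm{T}^k))\ge\tau\bigr)\le\mathbb{P}(Y_{j,\alpha}\ge C_1)\le \mathbb{E}Y_{j,\alpha}^q/C_1^q\to0 \quad\text{as } j\to\infty.
\]
So for large $L$, the event that all of $\pi_{1:L}(\bm{T}^k)$ have degree $\ge\tau$ has small probability, not probability $\ge 1-\varepsilon/3$. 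The same holds for the large $L'$ you need so that every degree-$\ge\tau$ node lies in $\pi_{1:L'}(\bm{T}^k)$, which is what ``exactly $K$ qualified components'' requires.

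Hence you cannot assert ``$u$ has degree at least $\tau>D^*>\tau'$'' for every $u\in\pi_{1:L}(\bm{T}^k)$. Without $\deg_{\bm{G}_n}(u)\ge\tau'$ you do not get $D_k(u)=0$. Your plan to show that removed nodes have degree below $\tau$ has the same defect: it does not protect early nodes whose degree is below $\tau$.

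\textbf{The fix: a second, analysis-only threshold.} Decouple the algorithm's threshold from a threshold used only in the analysis, depending on $L$ and $Q$. Take $L'=L\vee Q\vee L(\frac{\varepsilon}{3H},C_1)$ and $T=\gamma(\frac{\varepsilon}{3H},L')\,n^{1/(2+\alpha)}\le\tau$. Require the following:
\begin{itemize}
\item the roots have degree $\ge\tau$;
\item all of $\pi_{1:L'}(\bm{T}^k)$ have degree $\ge T$;
\item every node of degree $\ge\tau$ in $\bm{T}^k$ lies in $\pi_{1:L'}(\bm{T}^k)$;
\item no noise edge joins two nodes of degree $\ge T$.
\end{itemize}
Thresholding at $\tau_2=T$ then gives exactly $K$ components of maximum degree $\ge\tau$, each of size $\ge L'\ge Q$. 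You never addressed $Q$, but Lemma~\ref{lemma improved} needs $Q\le Q_{\max}$. Lemma~\ref{lemma improved} then yields the containment $V^1_k\subseteq V^2_{\sigma'(k)}$ and the clause $D^*<\tau_2=T$. That clause, not $\tau>D^*$, is what gives $\tau'<D^*<T\le\deg_{\bm{G}_n}(u)$ for $u\in\pi_{1:L}(\bm{T}^k)$. This is exactly the paper's argument, with $T=\tilde\gamma(\varepsilon,L\vee Q,1)\,n^{1/(2+\alpha)}$.

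\textbf{Remaining points.} Your ``main obstacle'' is already resolved. Lemma~\ref{lemma improved} proves deterministically that the pruning never fails, never stops while two cores are joined, and never returns split or spurious components. Separately, in (c) the set of high-degree nodes is defined through $\deg_{\bm{G}_n}$ and so depends on $\bm{R}_n$. A union bound over its pairs therefore needs justification: contain that set in a deterministic set $\cup_k\pi_{1:L''}(\bm{T}^k)$, as Lemma~\ref{lemma: noise} does.
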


We note that Theorem~\ref{thm: first L} remains valid for any $\tau$ of order $n^{\frac{1}{2+\alpha}}$ that is bounded above by $C_1(\varepsilon, H) n^{\frac{1}{2+\alpha}}$.
We fix a particular choice of $\tau$ only to simplify the statement of the theorem and avoid uninteresting technical complications in the proof.

\begin{remark}
\label{remark: Ln to inf}
The conclusion of Theorem~\ref{thm: first L} still holds if we allow $L \equiv L_n$ to be a sequence that diverges with $n$. To see this, for each $L$, let $\eta(L,n)$ denote the $\eta_n$ term appearing in the probability bound of the theorem with the convention that $\eta(0,n):=0$, and define $L_n = \max \{ L \in \mathbb{N} \,:\, |\eta(L,n)|<\varepsilon \}$. We note that $L_n \rightarrow \infty$ as, for all fixed integer $M$, we have $\max_{L \in [M]} |\eta(L, n)| \rightarrow 0$ so that $\lim_{n \rightarrow \infty} L_n \geq M$. Then, by Theorem~\ref{thm: first L},  $d_{\cup_{i=1}^K \pi_{1:L_n}(\bm{T}^i)}(\hat{\ell}, \ell) \neq 0$ with probability at most $\varepsilon+\eta(L,n)<2\varepsilon$. Unfortunately, we cannot determine how fast $L_n$ is allowed to diverge because it requires the Berry--Esseen type result on the convergence of the empirical degree distribution among the high-degree nodes to the its asymptotic limit.

\begin{remark}
\label{rem: K estimation 2}
We state our results assuming that the number of communities $K$ is given. In our theoretical setting where $\bm{G}_n$ is generated by the planted forest model, we can estimate $K$ by carefully selecting two parameters $\tau_1 > \tau_2 > 0$, then removing all nodes of degree less than $\tau_2$, and finally letting $\hat{K}$ be the number of remaining connected components that contain a node of degree at least $\tau_1$; we describe this formally in Algorithm~\ref{alg: theory align}. By Lemma~\ref{lemma early hist containment} in Section~\ref{secsec: thm first L} of the appendix, if we choose $\tau_1 = \gamma(\frac{\varepsilon}{3H}, 1) n^{\frac{1}{2+\alpha}}$ and $\tau_2 = \tilde{\gamma}(\varepsilon, 1, 1) n^{\frac{1}{2+\alpha}}$ (where $\tilde{\gamma}$ is defined in the proof of Lemma~\ref{lemma early hist containment}), then $\hat{K} = K$ with probability at least $1 - \varepsilon + o(1)$. In practice however, the $\tau_2$ parameter is critical to the estimation result and not much easier to tune than $K$.
\end{remark}

% We then show that $L_n\rightarrow \infty$. 
% Fix any $M\in \mathbb{N}$, there exist a constant $N_M$ such that $|\eta(M,n)|< \varepsilon$ for $n\geq N_M$ since $\eta(M,n)=o(1)$.
% Therefore, for any $n\geq N_M$, $L_n:=\max_{L} \{|\eta(L,n)|<\varepsilon\}\geq M$.
% As $M$ is arbitrary, the divergence result follows as desired. 

% {\color{orange}
% Theorem holds for a sequence $L_n$ such that $L_n \rightarrow \infty$. But we cannot determine how fast $L_n$ diverges because it requires the Berry--Esseen type result on the convergence of the empirical degree distribution among the high-degree nodes to the its asymptotic limit. Could reference the specific lemma (Lemma 12??).

% Write the $o(1)$ term as $\eta(L, n)$. Need $\eta(L_n, n) \rightarrow 0$. $L_n:=\operatorname{max}_{L} \{|\eta(L,n)|<\varepsilon\}$
% }
\end{remark}

One somewhat restrictive aspect of Theorem~\ref{thm: first L} is that the target set $\cup_{i=1}^K \pi_{1:L}\left(\bm{T}^i\right)$ depends on the unobserved arrival ordering and therefore cannot be directly identified from the observed graph $\bm{G}_n$. However, we know that the early arriving nodes roughly coincide with the high-degree nodes of the graph (c.f. Lemma~\ref{lemma order--degree} in Section~\ref{secsec: thm first L} of the appendix) so that we may extend the recovery guarantee of Theorem~\ref{thm: first L} to the subset of high-degree vertex set $V^{\tilde C,\alpha}(\bm{G}_n)$, which is readily computable from the observed graph.

\begin{corollary}
\label{cor: first L}
Under the same setting as Theorem~\ref{thm: first L}, let $V^{\tilde{C}, \alpha}(\bm{G}_n)$ denote the set of high-degree nodes defined in \eqref{define VC} for any fixed $\tilde{C}>0$.
Then, the estimator $\hat{\ell}$ obtained using graph pruning through Algorithm~\ref{alg: step I} then with label recovery through Algorithm~\ref{alg: distance recovery}, with the same $\tau, \tau', Q$ as specified in Theorem~\ref{thm: first L}, satisfies, with probability at least $1-2\varepsilon + o(1)$ where the $o(1)$ term depends only on $\varepsilon, \tilde{C}, Q, \alpha, H, \delta$, 
\begin{equation}
\label{cor: first L eqn 1}
d_{V^{\tilde{C}, \alpha}\left(\bm{G}_n\right)}(\hat{\ell}, \ell) = 0.
\end{equation}
\end{corollary}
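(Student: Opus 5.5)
We will deduce the corollary from Theorem~\ref{thm: first L} by showing that, with high probability, every high-degree node lies among the $L$ earliest arrivals of its own community-tree for a suitably large fixed $L$. Concretely, we aim for the inclusion
\[
V^{\tilde C,\alpha}(\bm{G}_n) \subseteq \cup_{i=1}^K \pi_{1:L}(\bm{T}^i)
\]
on an event of probability at least $1-\varepsilon+o(1)$. On the intersection of this event with the event of Theorem~\ref{thm: first L}, we then get $d^{\mathrm{Ham}}_{V^{\tilde C,\alpha}(\bm{G}_n)}(\hat\ell,\sigma\circ\ell)\le d^{\mathrm{Ham}}_{\cup_i\pi_{1:L}(\bm{T}^i)}(\hat\ell,\sigma\circ\ell)=0$ for the same optimal permutation $\sigma$. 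A union bound then gives probability at least $1-2\varepsilon+o(1)$.

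The inclusion is proved in two steps.

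\textbf{Noise degrees are negligible.} Each node's $\bm{R}_n$-degree is Binomial$(n-1,\theta)$ with mean at most $C_0 n^{\frac{1}{2+\alpha}-\delta}$. By a Chernoff bound and a union bound over the $n$ nodes (for $\delta$ fixed, the deviation probability is superpolynomially small, or at least $o(1/n)$), we get
\[
\max_u \deg_{\bm{R}_n}(u)\le \tfrac{\tilde C}{2}n^{\frac{1}{2+\alpha}}
\]
with probability $1-o(1)$. This is the same content as Lemma~\ref{lemma R}, which I would cite directly. Consequently, any $u\in V^{\tilde C,\alpha}(\bm{G}_n)$ satisfies $\deg_{\bm{T}^{\ell(u)}}(u)\ge \frac{\tilde C}{2}n^{\frac{1}{2+\alpha}}\ge \frac{\tilde C}{2}n_{\ell(u)}^{\frac{1}{2+\alpha}}$.

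\textbf{Late arrivals have small tree degree.} Within a single APA tree of size $n_k\ge n/H$, we need that for any $c>0$ and $\varepsilon>0$ there is a fixed $L=L(c,\varepsilon,\alpha,H)$ such that, with probability at least $1-\varepsilon/K+o(1)$, no node arriving after time $L$ has degree at least $c\,n_k^{\frac{1}{2+\alpha}}$. This is the degree–order relation of Lemma~\ref{lemma order--degree} (via the function $\gamma$ in \eqref{order to degree}, used in the "late nodes have small degree" direction). If that lemma is only stated for fixed arrival indices, I would instead use the standard martingale bound $\mathbb{E}[\deg_{\bm{T}}(\pi_t)+\alpha]\asymp (n/t)^{\frac{1}{2+\alpha}}$ together with a maximal inequality over $t>L$. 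Summing the tail $\sum_{t>L}\mathbb{P}(\deg(\pi_t)\ge c n^{1/(2+\alpha)})$ using higher moments, where the $p$-th moment scales as $(n/t)^{p/(2+\alpha)}$ and $p>2+\alpha$ makes the sum over $t$ converge, gives a bound of order $L^{1-p/(2+\alpha)}c^{-p}$, which can be made smaller than $\varepsilon/K$. Applying this with $c=\tilde C/2$ and taking a union over the $K$ trees yields the inclusion.

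\textbf{Main obstacle.} The crux is this uniform-over-late-nodes degree bound. It requires controlling all $n_k-L$ late nodes simultaneously, not just a fixed one, which is why I would rely on moment bounds summable in $t$ rather than on distributional limits. The remaining dependencies are benign: $L$ depends only on $\tilde C,\varepsilon,\alpha,H$, so the $\eta_n$ from Theorem~\ref{thm: first L} with this fixed $L$ becomes the $o(1)$ term depending on $\varepsilon,\tilde C,Q,\alpha,H,\delta$ as claimed.
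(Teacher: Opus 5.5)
Your proposal is correct and follows the paper's proof. The paper intersects the event of Theorem~\ref{thm: first L} with the event $V^{\tilde C,\alpha}(\bm G_n)\subset\cup_{i=1}^K\pi_{1:L(K^{-1}\varepsilon,\tilde C)}(\bm T^i)$ and takes a union bound, exactly as you do. It gets that inclusion by citing the second claim \eqref{degree to order} of Lemma~\ref{lemma order--degree} (the $L(\varepsilon,C)$ direction, not the $\gamma$ one); that claim is already stated for $\bm G_n$-degrees via Lemma~\ref{lemma: G}, so it packages both your noise-degree step and your moment-summation tail bound, the latter there as Markov's inequality on $\sum_{j>L}\mathbb{E}Y_{j,\alpha}^q$.
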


By an argument identical to that of Remark~\ref{remark: Ln to inf}, we can show that Corollary~\ref{cor: first L} holds if we allow $\tilde{C} \equiv \tilde{C}_n$ to be a sequence that goes to 0 as $n \rightarrow \infty$. 

% {
% \color{blue}
% By an argument similar Remark~\ref{remark: Ln to inf}, one can construct a threshold sequence $\tilde C_n \rightarrow 0$ with unknown convergence rate that still satisfied $d_{V^{\tilde{C_n}, \alpha}\left(\bm{G}_n\right)}(\hat{\ell}, \ell) = 0$ with high probability.
% We leave the problem of finding growth rate of sequence $L_n$ or $\tilde{C}_n$ for future work.

The second part of our analysis considers vertex sets defined through distances from the root nodes. We define these by layers so that layer-1 nodes consist of all direct children of the $K$ root nodes, layer-2 nodes consist of all children and grandchildren of the root nodes, etc. We formalize these definitions below:

% These layer-based subsets have explicitly growing size and naturally correspond to the distance-based recovery mechanism in SPAR. 
% Before presenting the next theorem, we first define the $s$-th layer $\mathcal{L}_s(\cdot)$.

\begin{defin}
\label{define mismatch layer}
For $\boldsymbol{G}_n \sim \mathrm{PF}(\alpha, \theta, \ell, \pi)$ and $s \in \mathbb{N}$, we define $\mathcal{L}_s\left(\boldsymbol{F}_n\right)$ as the nodes in $s$ layer of forest $\boldsymbol{F}_n$, that is
\begin{align*}
   \mathcal{L}_s(\boldsymbol{T}^i) &:= \biggl\{ v \in V(\boldsymbol{T}^i) \,\bigg|\, \operatorname{dist}_{\boldsymbol{T}^i}\bigl(v, \pi_1(\boldsymbol{T}^i)\bigr)=s \biggr\} \\
   \mathcal{L}_s(\boldsymbol{F}_n) &:= \cup_{i=1}^K \mathcal{L}_s(\boldsymbol{T}^i).    %\mathcal{L}_s\left(\boldsymbol{F}_n\right):=\left\{v\;\middle|\; \operatorname{dist}_{\boldsymbol{T}^i}\left(v, \pi_1\left(\boldsymbol{T}^i\right)\right)=s, v\in V\left(\boldsymbol{T}^i\right),\quad  \text{for some $i\in \left[K\right]$}\right\}. 
\end{align*}

\end{defin}

These vertex sets have cardinalities that diverge with $n$. From Lemma~\ref{lemma layer size} in the appendix, we have that if $\bm{G}_n$ is a planted forest random graph with linear preferential attachment, i.e $\alpha = 0$, then, for each $s \in \mathbb{N}$,
\[
\mathbb{E}| \mathcal{L}_s(\bm{F}_n)| \asymp \sqrt{n} \log^{s-1} n. 
\]
For a general $\alpha > -1$, we have from Lemma~\ref{lemma deg bound} in the appendix that
\begin{align}
| \mathcal{L}_1(\bm{F}_n) | \asymp_p n^{\frac{1}{2+\alpha}}. \label{eq:layer-1 size bound}
\end{align}

Although the number of layer-$s$ nodes is increasing with $n$, their proximity to the root node allows us to still bound the misclustering error rate on the high probability event that the $K$ root nodes are correctly dispersed among the $K$ anchor components. 

% {
% \color{blue}
% As shown in Lemmas~\ref{lemma layer size} and \ref{lemma deg bound}, the first layer satisfies $
% |\mathcal{L}_1(\bm{G}_n)| \asymp n^{\frac{1}{2+\alpha}}$
% under affine preferential attachment. In the linear preferential attachment case, the second layer additionally satisfies $|\mathcal{L}_2(\bm{G}_n)| \asymp n^{\frac{1}{2}}\log n$.
% Therefore, these layers form growing subsets of the network while still remaining close to the root nodes in graph distance.
% This property is crucial for the distance-based recovery step in SPAR. Once the root nodes $\pi_1^1,\ldots,\pi_1^K$ are correctly matched to the anchor sets $V_{\sigma(1)},\ldots,V_{\sigma(K)}$ for some $\sigma\in S(K)$, Algorithm~\ref{alg: distance recovery} can correctly propagate labels to the majority nearby layer 1/2 nodes with high probability when the noise level satisfies Assumption~\ref{assumption: sparsity}.
% }

\begin{thy}
\label{thm: layer 1/2}
For $\boldsymbol{G}_n \sim \mathrm{PF}(\alpha, \theta, \ell, \pi)$, and suppose Assumptions~\ref{assumption: bound} and \ref{assumption: sparsity} hold. 
Fix any $\varepsilon > 0$ and $c_0 \in (0, 1]$. 
Let $\hat{\ell}(\cdot)$ be the estimated community membership from Algorithm~\ref{alg: step I} followed by Algorithm~\ref{alg: distance recovery} with $\tau  := C_1(\frac{\varepsilon}{6}, H)n^{\frac{1}{2+\alpha}}$, any fixed $Q>0$, and any $\tau^\prime \in [c_0 \tau, \tau]$.
Then, there exists a constant $C_2 \equiv C_2(\varepsilon, c_0) <\infty$, with probability at least $1-\varepsilon+o(1)$ (with the $o(1)$ term depending only on $\varepsilon, Q, \alpha, H, \delta, c_0$),
\begin{equation}    
\label{thm: layer 1/2 eqn 1}
\frac{d_{\mathcal{L}_1\left(\bm{F}_n\right)}(\hat{\ell},\ell)}{\left|\mathcal{L}_1\left(\boldsymbol{F}_n\right)\right|}\le  C_2 n^{-\frac{1+\alpha}{2+\alpha}-\delta}.
\end{equation}
Moreover, if $\alpha=0$, then there exists a constant $C_3 \equiv C_3(\varepsilon, c_0) <\infty$ such that with probability at least $1-\varepsilon+o(1)$ (with the $o(1)$ term depending only on $\varepsilon, Q, H, \delta, c_0$), 
\begin{equation}
\label{thm: layer 1/2 eqn 2}
\frac{d_{\mathcal{L}_2\left(\bm{F}_n\right)}(\hat{\ell},\ell)}{\left|\mathcal{L}_2\left(\boldsymbol{F}_n\right)\right|}\le  C_3 n^{-\delta}.
\end{equation}
\end{thy}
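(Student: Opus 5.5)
The plan is to reduce the theorem to a good event on which the anchor components behave well, and then count the layer-$s$ nodes whose tree path to their root can be "short-circuited" by noise.

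\textbf{Step 1: a good event from Theorem~\ref{thm: first L}.} I will apply Theorem~\ref{thm: first L} and its proof, with $\varepsilon/6$ in place of $\varepsilon$, to define an event $\mathcal{E}$ of probability at least $1-\varepsilon/2+o(1)$. On $\mathcal{E}$ the following hold.
\begin{enumerate}[label=(\roman*)]
\item Each root $\pi_1(\bm{T}^i)$ lies in a distinct anchor $V_{\sigma(i)}$ and has degree at least $\tau$. Hence $\pi_1(\bm{T}^i)\in\tilde V_{\sigma(i)}$ for every admissible $\tau'\le\tau$.
\item Every node of $\tilde V_{\sigma(i)}$ belongs to community $i$. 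To get this I will use Lemma~\ref{lemma order--degree} together with $\tau'\ge c_0\tau$. Nodes of degree at least $c_0C_1 n^{\frac{1}{2+\alpha}}$ are, with probability $1-\varepsilon/6$, among the first $L(\varepsilon,c_0)$ arrivals of their tree. Theorem~\ref{thm: first L}, applied with this $L$, then says they are correctly clustered. They also lie in the anchor of their own community, since anchors are disjoint connected components of $\bar{\bm G}$.
\item The number of noise edges incident to the bounded set $\cup_i\tilde V_i$ is at most $C n\theta$. This follows from Markov's inequality.
\end{enumerate}

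\textbf{Step 2: layer 1.} Fix $v\in\mathcal{L}_1(\bm{T}^i)$. Then $D_{\sigma(i)}(v)\le 1$. So $v$ can be misclassified only if $D_{\sigma(j)}(v)\le 1$ for some $j\ne i$, or through a tie. Either way $v$ must be adjacent in $\bm G_n$ to a node of $\tilde V_{\sigma(j)}$, which lies in community $j$. That adjacency must be a noise edge, since tree edges stay within a community. Hence the number of misclustered layer-1 nodes is at most the number of noise edges incident to $\cup_j \tilde V_j$. That set has $O(1)$ nodes on $\mathcal{E}$, so this count is $O_p(n\theta)=O_p(n^{\frac{1}{2+\alpha}-\delta})$.

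I will make this precise with Markov's inequality, conditioning on $\bm F_n$. The difficulty is that $\tilde V_j$ depends on $\bm R_n$. I will handle this by bounding by noise edges incident to the first $L$ nodes of each tree, which is a set determined by $\bm F_n$ alone.

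Dividing by $|\mathcal{L}_1(\bm F_n)|\asymp_p n^{\frac{1}{2+\alpha}}$ from \eqref{eq:layer-1 size bound} gives a ratio of order $n^{-\delta}$. I expect the stated rate $n^{-\frac{1+\alpha}{2+\alpha}-\delta}$ to correspond to the expected misclustered count being $O(n\theta\cdot n^{-\frac{1}{2+\alpha}}\cdot\ldots)$ after normalization. I will reconcile the exponent by computing more carefully: a fixed layer-1 node $v$ hits the $O(1)$ set $\cup_j\tilde V_j$ by noise with probability $O(\theta)$, so the expected misclustered fraction is $O(\theta)=O(n^{-\frac{1+\alpha}{2+\alpha}-\delta})$. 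This per-node argument is the one I will actually use. Finally, $C_2$ is chosen from Markov's inequality with failure probability $\varepsilon/2$.

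\textbf{Step 3: layer 2 with $\alpha=0$.} For $v\in\mathcal{L}_2(\bm{T}^i)$ with parent $p$ in layer 1, we have $D_{\sigma(i)}(v)\le 2$. Misclassification requires a path of length at most 2 from $v$ to $\tilde V_{\sigma(j)}$. Such a path must use a noise edge, in one of three ways:
\begin{enumerate}[label=(\alph*)]
\item $v$ is directly noise-adjacent to $\cup\tilde V$, with probability $O(\theta)$;
\item $v$ has a noise edge to some $w$, where $w$ is adjacent (by a tree or noise edge) to $\tilde V_{\sigma(j)}$;
\item $v$ has a tree neighbor $w$ (a child or the parent) that has a noise edge to $\tilde V_{\sigma(j)}$.
\end{enumerate}
For case (b), the number of such $w$ is at most the total degree of $\cup\tilde V$, which is $O_p(\sqrt n)$. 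This gives probability $O(\theta\sqrt n)=O(n^{-\delta})$ per node when $\alpha=0$. For case (c), a union bound over the tree neighbors of $v$ gives probability $\theta\deg_{\bm F}(v)$. Summing over $v\in\mathcal{L}_2$ gives $\theta\sum_{v\in\mathcal{L}_2}\deg(v)$, which is at most $\theta(2|\mathcal{L}_2|+|\mathcal{L}_3|)$.

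\textbf{Main obstacle.} The hardest part is controlling $|\mathcal{L}_3|/|\mathcal{L}_2|$, which should be about $\log n$, and the dependence between $\tilde V$ and $\bm R_n$. For the first, I will use Lemma~\ref{lemma layer size}: $\mathbb{E}|\mathcal{L}_3|\asymp\sqrt n\log^2 n$, together with a lower bound $|\mathcal{L}_2|\gtrsim_p\sqrt n\log n$. This yields an expected miscount of $O(\sqrt n\log n\cdot n^{-\delta}+\theta\sqrt n\log^2 n)$. After dividing, the ratio is $O(n^{-\delta})$, absorbing the $\log$ factors by shrinking $\delta$ slightly if needed. The concentration of $|\mathcal{L}_2|$, which may need a second-moment bound, is the step I expect to require the most work.
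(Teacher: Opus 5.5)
Your outline follows the paper's proof closely. The good event in Step 1 (roots in distinct anchors with degree at least $\tau\ge\tau'$, and every node of $\tilde V_j$ lying in the matching tree) is Steps A1--A2 of the paper. For layer 1, you replace the $\bm R_n$-dependent set $\cup_j\tilde V_j$ by the deterministic set $\cup_i\pi_{1:L}(\bm T^i)$ and bound the conditional expectation, given $\bm F_n$, of the per-node ratio by $KL\theta$; this is exactly Lemma~\ref{lemma layer noise general}. Your cases (a)--(c) are the paper's direct/RR/RF/FR split of length-two paths. One correction to Step 1(ii): the fact you actually need is the anchor containment $\pi_{1:L}(\bm T^i)\subset V^1_{\sigma(i)}\subset V^2_{\sigma''(i)}$ from the proof of Theorem~\ref{thm: first L}, together with disjointness of the anchors. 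It is not the statement that $\hat\ell$ labels these nodes correctly.

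Where you diverge, you make the argument harder than it needs to be. You never need the lower bound $|\mathcal L_2(\bm F_n)|\gtrsim_p\sqrt n\log n$ that you flag as the main obstacle. Nor should you shrink $\delta$ to absorb logs: that would not prove the stated rate, since the theorem uses the same $\delta$ as Assumption~\ref{assumption: sparsity}.

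\textbf{Case (c).} It requires a noise edge between $\cup_i\pi_{1:L}(\bm T^i)$ and $\mathcal L_1(\bm F_n)\cup\mathcal L_3(\bm F_n)$.
\begin{itemize}
\item Conditionally on $\bm F_n$, the expected number of such edges is at most $KL\theta\,(|\mathcal L_1|+|\mathcal L_3|)$.
\item Lemma~\ref{lemma layer size} with Markov gives $|\mathcal L_1|+|\mathcal L_3|=O_p(\sqrt n\log^2 n)$, so this expectation is $O_p(n^{-\delta}\log^2 n)\to 0$.
\item Hence with high probability there are no such edges, and case (c) contributes nothing.
\end{itemize}
This is the paper's event $\tilde\Omega_2$ (first part of Lemma~\ref{lemma layer noise}), and it uses only upper bounds on layer sizes.

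\textbf{Cases (a) and (b).} Bound the conditional expectation of the ratio node by node, as you did for layer 1. Split (b) into two parts:
\begin{itemize}
\item The RF part: the $\bm F_n$-neighbours of $\cup_i\pi_{1:L}(\bm T^i)$ form an $\bm F_n$-measurable set of size $O_p(\sqrt n)$ by Lemma~\ref{lemma deg bound}. This gives a per-node bound of $O(\sqrt n\,\theta)=O(n^{-\delta})$.
\item The RR part: the two edges of a path through a fixed intermediate node are independent. This gives $O(n\theta^2)=O(n^{-2\delta})$.
\end{itemize}
This also resolves the dependence between $\tilde V$ and $\bm R_n$. Because each bound is on the expected ratio itself, no lower bound or concentration for $|\mathcal L_2|$ is needed.
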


\begin{remark}
It is worth noting that in the layer-1 bound~\eqref{thm: layer 1/2 eqn 1}, when $\alpha \geq 0$, we have by~\eqref{eq:layer-1 size bound} that
\[
d_{\mathcal{L}_1\left(\bm{F}_n\right)}(\hat{\ell},\ell)\le  C_2 n^{-\frac{1+\alpha}{2+\alpha}-\delta} \left|\mathcal{L}_1\left(\boldsymbol{F}_n\right)\right| = O_p( n^{-\frac{\alpha}{2+\alpha} - \delta}) 
\]
so that, when $n$ is large enough, we have $d_{\mathcal{L}_1\left(\bm{F}_n\right)}(\hat{\ell},\ell) < 1$ so that $d_{\mathcal{L}_1\left(\bm{F}_n\right)}(\hat{\ell},\ell) = 0$ necessarily. In other words, when $\alpha \geq 0$, the SPAR algorithm can perfectly recover the community label of the layer-1 nodes with high probability. When $\alpha \in (-1, 0)$, the right-hand side of~\eqref{thm: layer 1/2 eqn 1} appears to decrease as $\alpha$ increases, suggesting that larger $\alpha$ leads to better recovery. However, this interpretation is incorrect as, for larger $\alpha$, the noise tolerance bound becomes smaller and the size $|\mathcal{L}_1(\bm{F}_n)|$ becomes smaller so that the recovery problem is substantially easier. 
\end{remark}

\begin{remark}
Our layer-2 recovery result~\eqref{thm: layer 1/2 eqn 2} currently pertains only to the linear preferential attachment case $(\alpha=0)$. The main obstacle to extending this to a general $\alpha$ is that we do not have bounds on the number of layer-2 and layer-3 nodes, i.e. $|\mathcal{L}_2(\bm{F}_n)|$ and $|\mathcal{L}_3(\bm{F}_n)|$, for a general value of $\alpha$. Our proof technique for analyzing the number of layer-$s$ nodes in the $\alpha = 0$ setting unfortunately does not directly generalize beyond linear preferential attachment. Nevertheless, we conjecture that $|\mathcal{L}_s(\bm{G}_n)|$ is of order $n^{\frac{1}{2+\alpha}} \log^{s-1} n$ for any $\alpha > -1$ in which case our bound~\eqref{thm: layer 1/2 eqn 2} can be generalized.

% The main obstacle to extending the result to general affine preferential attachment models is that the proof relies on a precise asymptotic growth rate for the number of layer-2/3 nodes in each tree. 
% Although we establish such results for linear preferential attachment (see Lemma~\ref{lemma layer size} in Section~\ref{secsec: technical lemmas} of the appendix), the underlying proof technique does not appear to extend to the more general affine preferential attachment setting. {\color{orange} [Make it clear we proved the LPA bound]}

% Nevertheless, one may expect a similar extension of \eqref{thm: layer 1/2 eqn 2} to hold for $\alpha>-1$ if an asymptotic estimate of the form $|\mathcal{L}_s(\bm{G}_n)| \asymp n^{\frac{1}{2+\alpha}}\log^{s-1} n$
% can be established.
\end{remark}

\begin{remark}
Bounding the misclustering error rate for layer-$s$ nodes where $s \geq 3$ becomes substantially more challenging, even in the linear preferential attachment case $(\alpha=0)$. The main difficulty is that the misclassification errors accumulate across layers. Indeed, if a parent node at layer $s$ is misclassified, then we cannot meaningfully bound the misclassification error of all of its children at layer $s+1$. Our strong recovery guarantee for the layer-1 nodes allows us to obtain weaker but still non-trivial bounds for the layer-2 nodes, but these are not strong enough to yield a similar guarantee for the layer-3 nodes and beyond. 

%We were able to obtain non-trivial recovery bounds for layer-2 mainly because we have a strong recovery guarantee for layer-1. 

% %because controlling the probability of misclassifying node $u$ requires even stronger control on the misclassification probability of its parent. 

% We were able to obtain meaningful recovery bounds for the layer-2 nodes because the recovery guarantee for 

% A node $u$ tends to be misclassified if its tree parent $v$ is already assigned an incorrect label.
% For layer-2 nodes, \eqref{thm: layer 1/2 eqn 1} implies that the incorrectly labeled nodes count converges to $0$, making such error propagation vanishing for layer-2 nodes. 
% In contrast, when recovering layer-3 nodes, errors inherited
% from misclassified layer-2 nodes, whose number may growth at the rate of $n^{\frac{1}{2}-\delta}$ can no longer be ignored.
\end{remark}

Our theoretical results are developed for the affine preferential–attachment regime, where the existing random graph theory is sufficiently well developed to support rigorous analysis.
We expect these results to remain valid under mild deviations from this setting, such as when late–arriving nodes attach with a random but bounded number of edges, or when the attachment mechanism strengthens the advantage of early arrivals (e.g., exponential attachment).
However, extending the theory beyond the affine case would require new technical tools and is left for future work.

\section{Simulation}
\label{sec: simulation}
We simulate networks
$\boldsymbol{G}_n \sim \mathrm{PF}(\alpha,\theta,\ell,\pi)$ over $B=200$ independent repetitions.
We consider balanced component sizes $n_1 = n_2 \in \{500,1000,1500,2000,2500\}$ and attachment parameters $\alpha \in \{2,0,-0.5\}$.
The noise level is set to $\theta = c n^{-1}$, where $c$ is chosen so that $\theta = 0.01$ when $n=1000$.
We repeat the same simulation design for noise level set as $\theta = c' n^{-3/4}$ and $\theta = c'' n^{-1/2}$ where $c'$ and $c''$ are again set so that $\theta = 0.01$ when $n = 1000$. 

To facilitate comparisons of misclassification rates across different values of $n$ and $\theta$ for a fixed $\alpha$, while reducing variability induced by preferential attachment dynamics, we adopt the following construction.
For each Monte Carlo repetition and each choice of component sizes $n_1=n_2 \in \{500,1000,1500,2000,2500\}$, we first generate a single large preferential attachment forest $\bm{f}_{5000}$ consisting of two trees $\bm{t}^{1}_{2500}$ and $\bm{t}^{2}_{2500}$.
We then extract the induced subgraph formed by the first $n_i$ vertices in each tree to obtain the target forest
$\bm{f}_{n_1+n_2} := \oplus_{i=1}^{2} \bm{t}^{i}_{n_i}$.
Finally, we superimpose an Erd\H{o}s--R\'enyi random graph $\bm{r}_{n_1+n_2} \sim \text{Erd\H{o}s--R\'enyi}(\theta)$ with the corresponding noise level, yielding the observed network
$\bm{g}_{n_1+n_2} := \bm{f}_{n_1+n_2} + \bm{r}_{n_1+n_2}$.

For each simulated network, we apply Algorithm~\ref{alg: step I} followed by the distance-based recovery procedure in Algorithm~\ref{alg: distance recovery}. The size threshold is fixed at $Q=n/10$
\footnote{Our theoretical results are established for a fixed threshold $Q$ and do not depend on how $Q$ scales with the network size.
In contrast, in numerical analysis, we find that initializing with a larger $Q$ and then gradually shrinking it can improve classification performance when the noise level $\theta$ is small.
This strategy reduces the likelihood of selecting clusters without roots during the graph-pruning step, as it encourages the algorithm to retain larger, more stable components in the early stages of pruning.
}
with a shrinkage factor $\zeta=0.8$ as suggested in Remark~\ref{remark: shrinkage parameter}.
For the core degree threshold $\tau$, we employ Algorithm~\ref{alg: choosing the core degree threshold}, which is based on a rank-calibrated simulation procedure.
Specifically, we compute the 0.95 quantile of the maximum rank of the highest-degree vertex across planted components, using $H=5$ and $200$ Monte Carlo simulations.
For $n=1000$ and $\theta=0.01$, this procedure yields thresholds of $5$ for $\alpha=-0.5$ and $7$ for both $\alpha=0$ and $\alpha=2$. As $n$ increases, the corresponding $0.95$ quantiles range from $5$ to $8$.
For simplicity and consistency across all experimental settings, we therefore fix $\tau$ to be the degree of the 10th highest-degree vertex in the observed graph. 
For the recovery step in Algorithm~\ref{alg: distance recovery}, the recovery threshold $\tau'$ is defined as the minimum of the last removed node degree and $\tau/2$.

We summarize the misclassification rates for the $10/50$ first-arriving nodes, 
the $10/50$ highest-degree nodes, and nodes in Layers~0--2, together with the 
overall graph, under different decay rate of $\theta$: 
$\theta \asymp n^{-1}$ in Table~\ref{tab:mean_error_scale_n1}, 
$\theta \asymp n^{-3/4}$ in Table~\ref{tab:mean_error_scale_n12}, and 
$\theta \asymp n^{-1/2}$ in Table~\ref{tab:mean_error_scale_n34}. 
To better illustrate the decreasing trend of the misclassification rate across 
the $200$ Monte Carlo replications, we further display the empirical 
distributions using boxplots under different $\theta$ decay rates and values 
of $\alpha$ in Figures~\ref{fig: first 50 3 times 3}--\ref{fig: layer 1 3 times 3}.

\begin{table}[H]
\centering
\caption{Mean misclassification rate with Algorithm~\ref{alg: step I} and distance recovery method for $\theta \asymp n^{-1}$ (with $\theta=0.01$ for $n=1000$).}
{
\begin{tabular}{c|c|cc|cc|ccc|c}
\hline
\hline
\multirow{3}{*}{$\alpha$} & \multirow{3}{*}{n}
& \multicolumn{8}{c}{Misclassification Rate for Different Subsets} \\
\cline{3-10}
& 
& \multicolumn{2}{c|}{First Arriving}
& \multicolumn{2}{c|}{Highest Degree}
& \multicolumn{3}{c|}{Structural Position}
& \multirow{2}{*}{Overall} \\
\cline{3-9}
& 
& $L=10$ & $L=50$
& $L=10$ & $L=50$
& Root (Layer 0) & Layer 1 & Layer 2
& \\
\hline
\multirow{5}{*}{2}
& 1000 & 0.27 & 0.37 & 0.29 & 0.39 & 0.19 & 0.25 & 0.34 & 0.46 \\
& 2000 & 0.22 & 0.33 & 0.25 & 0.37 & 0.17 & 0.21 & 0.31 & 0.46 \\
& 3000 & 0.20 & 0.31 & 0.24 & 0.35 & 0.14 & 0.19 & 0.29 & 0.46 \\
& 4000 & 0.20 & 0.30 & 0.22 & 0.34 & 0.13 & 0.20 & 0.30 & 0.46 \\
& 5000 & 0.18 & 0.28 & 0.20 & 0.32 & 0.13 & 0.18 & 0.28 & 0.46 \\
\hline
\multirow{5}{*}{0}
& 1000 & 0.16 & 0.25 & 0.14 & 0.29 & 0.08 & 0.16 & 0.26 & 0.39 \\
& 2000 & 0.12 & 0.20 & 0.09 & 0.23 & 0.06 & 0.13 & 0.23 & 0.39 \\
& 3000 & 0.09 & 0.17 & 0.09 & 0.19 & 0.04 & 0.11 & 0.20 & 0.39 \\
& 4000 & 0.08 & 0.16 & 0.07 & 0.17 & 0.04 & 0.09 & 0.20 & 0.39 \\
& 5000 & 0.06 & 0.13 & 0.05 & 0.15 & 0.03 & 0.07 & 0.18 & 0.38 \\
\hline
\multirow{5}{*}{-0.5}
& 1000 & 0.11 & 0.18 & 0.07 & 0.21 & 0.05 & 0.13 & 0.21 & 0.32 \\
& 2000 & 0.07 & 0.14 & 0.05 & 0.15 & 0.03 & 0.10 & 0.18 & 0.31 \\
& 3000 & 0.05 & 0.10 & 0.04 & 0.12 & 0.03 & 0.07 & 0.16 & 0.31 \\
& 4000 & 0.04 & 0.08 & 0.03 & 0.10 & 0.02 & 0.05 & 0.15 & 0.31 \\
& 5000 & 0.03 & 0.07 & 0.03 & 0.09 & 0.01 & 0.04 & 0.14 & 0.31 \\
\hline
\hline
\end{tabular}
}
\label{tab:mean_error_scale_n1}
\end{table}

\begin{table}[H]
\centering
\caption{Mean misclassification rate with Algorithm~\ref{alg: step I} and distance recovery method for $\theta \asymp n^{-3/4}$ (with $\theta=0.01$ for $n=1000$).}
{
\begin{tabular}{c|c|cc|cc|ccc|c}
\hline
\hline
\multirow{3}{*}{$\alpha$} & \multirow{3}{*}{n}
& \multicolumn{8}{c}{Misclassification Rate for Different Subsets} \\
\cline{3-10}
& 
& \multicolumn{2}{c|}{First Arriving}
& \multicolumn{2}{c|}{Highest Degree}
& \multicolumn{3}{c|}{Structural Position}
& \multirow{2}{*}{Overall} \\
\cline{3-9}
& 
& $L=10$ & $L=50$
& $L=10$ & $L=50$
& Root (Layer 0) & Layer 1 & Layer 2
& \\
\hline
\multirow{5}{*}{2}
& 1000 & 0.27 & 0.37 & 0.28 & 0.38 & 0.19 & 0.25 & 0.34 & 0.46 \\
& 2000 & 0.26 & 0.36 & 0.26 & 0.37 & 0.18 & 0.24 & 0.34 & 0.47 \\
& 3000 & 0.23 & 0.34 & 0.25 & 0.37 & 0.15 & 0.23 & 0.33 & 0.47 \\
& 4000 & 0.21 & 0.32 & 0.24 & 0.36 & 0.15 & 0.20 & 0.32 & 0.47 \\
& 5000 & 0.20 & 0.32 & 0.24 & 0.37 & 0.13 & 0.20 & 0.32 & 0.48 \\
\hline
\multirow{5}{*}{0}
& 1000 & 0.15 & 0.25 & 0.14 & 0.29 & 0.07 & 0.15 & 0.25 & 0.39 \\
& 2000 & 0.13 & 0.23 & 0.12 & 0.26 & 0.09 & 0.15 & 0.24 & 0.40 \\
& 3000 & 0.10 & 0.20 & 0.10 & 0.23 & 0.06 & 0.12 & 0.23 & 0.40 \\
& 4000 & 0.09 & 0.18 & 0.08 & 0.20 & 0.04 & 0.12 & 0.22 & 0.41 \\
& 5000 & 0.09 & 0.17 & 0.08 & 0.19 & 0.04 & 0.11 & 0.22 & 0.41 \\
\hline
\multirow{5}{*}{-0.5}
& 1000 & 0.10 & 0.18 & 0.08 & 0.22 & 0.03 & 0.12 & 0.21 & 0.31 \\
& 2000 & 0.09 & 0.16 & 0.06 & 0.18 & 0.04 & 0.12 & 0.20 & 0.32 \\
& 3000 & 0.07 & 0.12 & 0.04 & 0.14 & 0.02 & 0.09 & 0.18 & 0.32 \\
& 4000 & 0.05 & 0.11 & 0.05 & 0.13 & 0.02 & 0.08 & 0.18 & 0.33 \\
& 5000 & 0.05 & 0.10 & 0.04 & 0.11 & 0.02 & 0.06 & 0.17 & 0.33 \\
\hline
\hline
\end{tabular}
}
\label{tab:mean_error_scale_n34}
\end{table}

\begin{table}[H]
\centering
\caption{Mean misclassification rate with Algorithm~\ref{alg: step I} and distance recovery method for $\theta \asymp n^{-1/2}$ (with $\theta=0.01$ for $n=1000$).}
{
\begin{tabular}{c|c|cc|cc|ccc|c}
\hline
\hline
\multirow{3}{*}{$\alpha$} & \multirow{3}{*}{n}
& \multicolumn{8}{c}{Misclassification Rate for Different Subsets} \\
\cline{3-10}
& 
& \multicolumn{2}{c|}{First Arriving}
& \multicolumn{2}{c|}{Highest Degree}
& \multicolumn{3}{c|}{Structural Position}
& \multirow{2}{*}{Overall} \\
\cline{3-9}
& 
& $L=10$ & $L=50$
& $L=10$ & $L=50$
& Root (Layer 0) & Layer 1 & Layer 2
& \\
\hline
\multirow{5}{*}{2}
& 1000 & 0.26 & 0.37 & 0.29 & 0.38 & 0.17 & 0.24 & 0.34 & 0.46 \\
& 2000 & 0.27 & 0.37 & 0.27 & 0.39 & 0.16 & 0.25 & 0.36 & 0.47 \\
& 3000 & 0.27 & 0.37 & 0.30 & 0.39 & 0.20 & 0.26 & 0.35 & 0.48 \\
& 4000 & 0.26 & 0.36 & 0.27 & 0.39 & 0.19 & 0.24 & 0.35 & 0.48 \\
& 5000 & 0.26 & 0.37 & 0.30 & 0.40 & 0.18 & 0.25 & 0.37 & 0.48 \\
\hline
\multirow{5}{*}{0}
& 1000 & 0.16 & 0.25 & 0.14 & 0.29 & 0.08 & 0.16 & 0.26 & 0.39 \\
& 2000 & 0.12 & 0.23 & 0.11 & 0.26 & 0.07 & 0.15 & 0.25 & 0.41 \\
& 3000 & 0.12 & 0.22 & 0.11 & 0.25 & 0.05 & 0.14 & 0.25 & 0.42 \\
& 4000 & 0.12 & 0.21 & 0.10 & 0.24 & 0.06 & 0.15 & 0.26 & 0.42 \\
& 5000 & 0.11 & 0.20 & 0.08 & 0.23 & 0.05 & 0.14 & 0.25 & 0.43 \\
\hline
\multirow{5}{*}{-0.5}
& 1000 & 0.11 & 0.19 & 0.08 & 0.21 & 0.07 & 0.13 & 0.21 & 0.32 \\
& 2000 & 0.09 & 0.16 & 0.06 & 0.19 & 0.05 & 0.13 & 0.21 & 0.33 \\
& 3000 & 0.08 & 0.15 & 0.05 & 0.17 & 0.06 & 0.12 & 0.21 & 0.34 \\
& 4000 & 0.08 & 0.14 & 0.05 & 0.16 & 0.04 & 0.11 & 0.21 & 0.35 \\
& 5000 & 0.07 & 0.13 & 0.05 & 0.15 & 0.04 & 0.10 & 0.20 & 0.35 \\
\hline
\hline
\end{tabular}
}
\label{tab:mean_error_scale_n12}
\end{table}

\begin{figure}[H]
    \centering
        {\includegraphics[width=.77\linewidth]{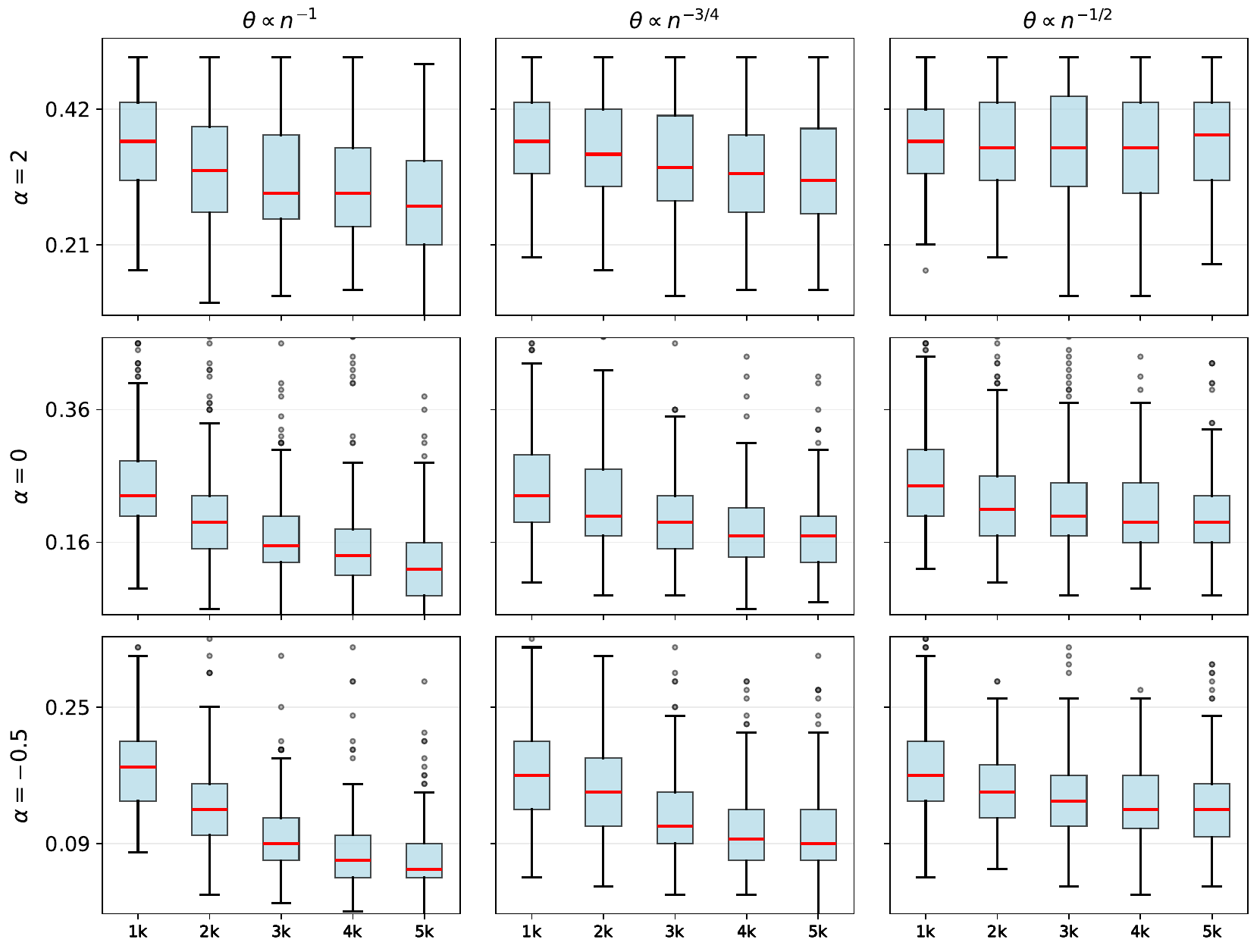}}
    \caption{Boxplots of misclassification rates for the first $50$ arriving vertices in each tree, computed over $200$ simulated networks. }
    \label{fig: first 50 3 times 3}
\end{figure}

%Each row corresponds to a different $\alpha$ ($-0.5$, $0$, and $2$ from top to bottom), and each column corresponds to a different noise scaling regime ($\theta \propto n^{-1}$, $n^{-3/4}$, and $n^{-1/2}$ from left to right). The $x$-axis represents the network size $n$, while the $y$-axis represents the misclassification rate.

\begin{figure}[H]
    \centering
        {\includegraphics[width=.77\linewidth]{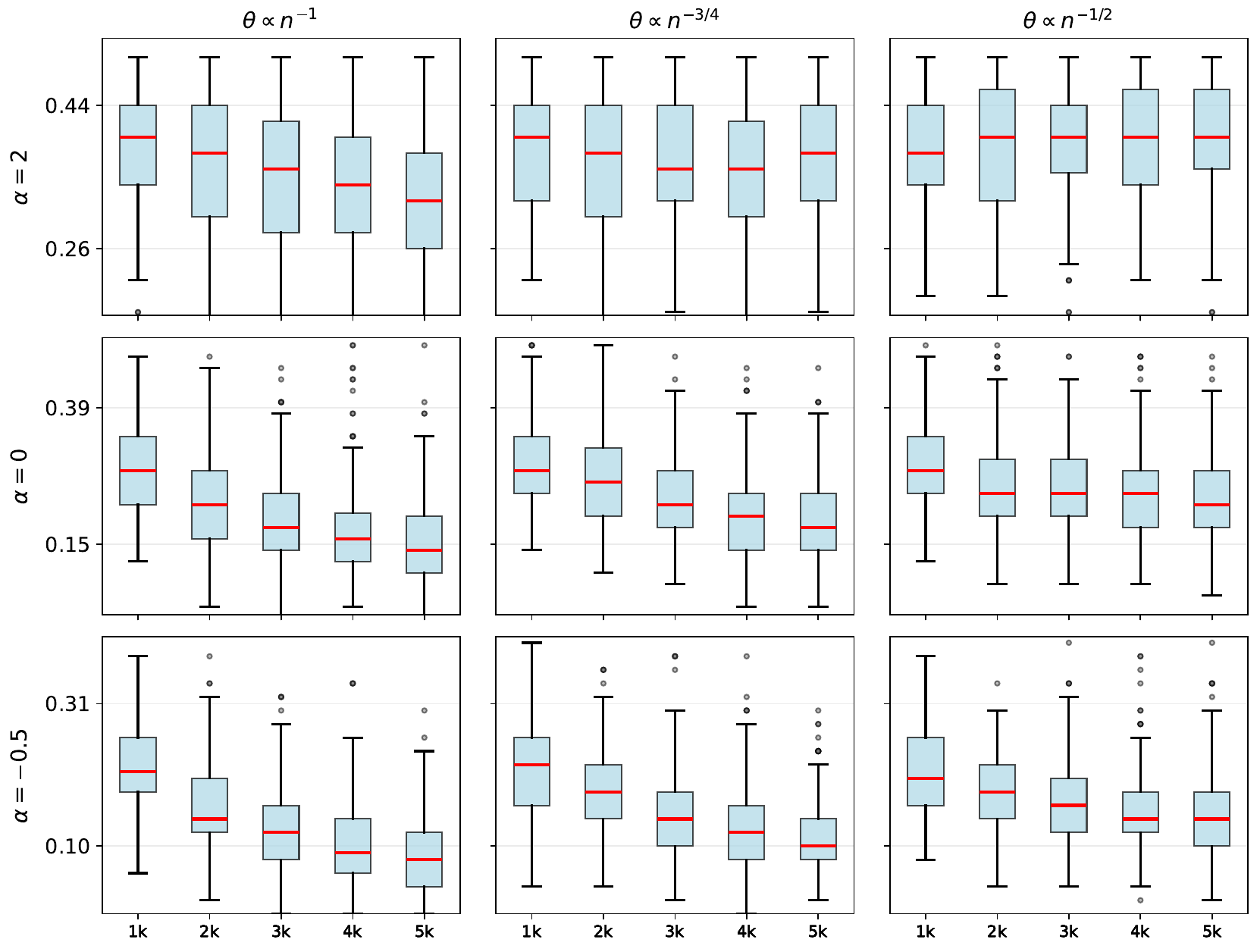}}
    \caption{Boxplots of misclassification rates for the $50$ highest degree vertices in the network, computed over $200$ simulated networks.}
    \label{fig: high degree 50 3 times 3}
\end{figure}

%Each row corresponds to a different $\alpha$ ($-0.5$, $0$, and $2$ from top to bottom), and each column corresponds to a different noise scaling regime ($\theta \propto n^{-1}$, $n^{-3/4}$, and $n^{-1/2}$ from left to right). The $x$-axis represents the network size $n$, while the $y$-axis represents the misclassification rate.

\begin{figure}[H]
    \centering
        {\includegraphics[width=.77\linewidth]{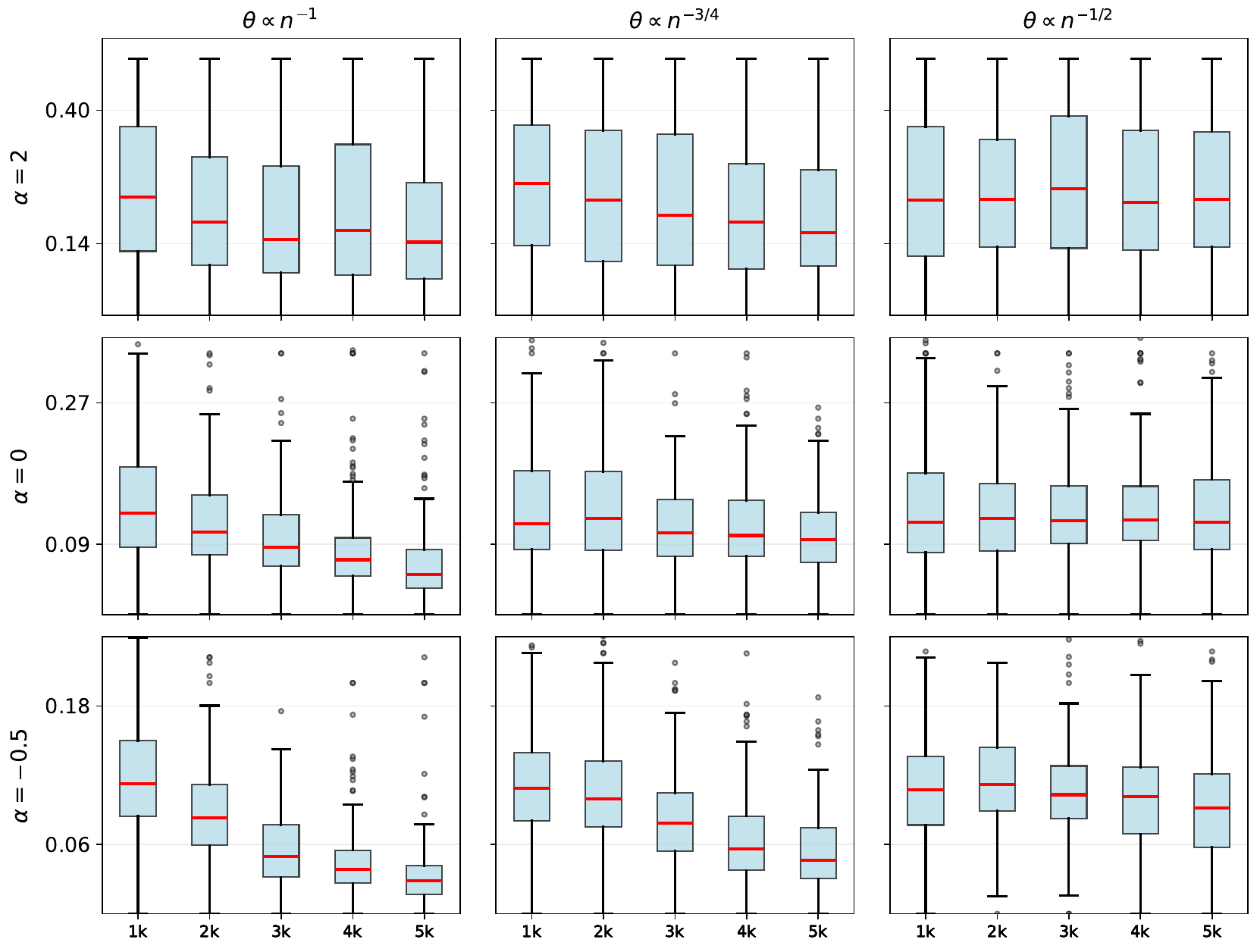}}
    \caption{Boxplots of misclassification rates for the layer-1 vertices in the network, computed over $200$ simulated networks.}
    \label{fig: layer 1 3 times 3}
\end{figure}

%Each row corresponds to a different $\alpha$ ($-0.5$, $0$, and $2$ from top to bottom), and each column corresponds to a different noise scaling regime ($\theta \propto n^{-1}$, $n^{-3/4}$, and $n^{-1/2}$ from left to right). The $x$-axis represents the network size $n$, while the $y$-axis represents the misclassification rate.

From Tables~\ref{tab:mean_error_scale_n1}--\ref{tab:mean_error_scale_n12} 
and Figures~\ref{fig: first 50 3 times 3}--\ref{fig: layer 1 3 times 3}, we observe that, for a fixed level of network size $n$ and noise level $\theta$, the misclassification rate in general improves as $\alpha$ decreases. 
These simulation results, which show better performance for smaller values of $\alpha$, are consistent with our theoretical findings that smaller $\alpha$ permits a slower decreasing rate of the noise level $\theta$.
They also align with the underlying intuition: when $\alpha$ is smaller, the attachment mechanism places relatively more weight on early high-degree vertices, increasing the likelihood that later-arriving nodes connect to them. 
As a result, the subsets become more informative, leading to improved classification accuracy.

\begin{table}[H]
\centering
\caption{Observed decreasing trends in misclassification rates for first-arriving nodes as n increases under different noise regimes.}
\begin{tabular}{c|c|c|c|c}
\hline
$\alpha$ 
& Critical rate $n^{-\frac{1+\alpha}{2+\alpha}}$
& $\theta \asymp n^{-1}$
& $\theta \asymp n^{-3/4}$
& $\theta \asymp n^{-1/2}$ \\
\hline
$2$
& $n^{-3/4}$
&  clearly decreasing
&  clearly decreasing
&  not decreasing \\
\hline
$0$
& $n^{-1/2}$
& clearly decreasing
& clearly decreasing
& clearly decreasing\\
\hline
$-0.5$
& $n^{-1/3}$
& clearly decreasing
& clearly decreasing
& clearly decreasing \\
\hline
\end{tabular}
\label{tab: trend}
\end{table}

We next investigate how the misclassification rate varies with $n$, while keeping $\alpha$ and the noise decay rate fixed, and interpret the results in light of the noise bound in Assumption~\ref{assumption: sparsity}.
Because our simulations are restricted to moderate sample sizes ($n \le 5000$), the asymptotic convergence of the $95\%$ quantile to zero predicted by theory is not directly observable in Figures~\ref{fig: first 50 3 times 3}--\ref{fig: layer 1 3 times 3}. We therefore focus on whether a decreasing trend with respect to $n$ can be detected.
The observed patterns of misclassification rate for first arriving nodes are summarized in Table~\ref{tab: trend}. Overall, the empirical behavior is broadly consistent with the theoretical predictions.
Interestingly, when $\theta$ lies at the critical threshold $n^{-\frac{1+\alpha}{2+\alpha}}$.
For example, $\theta \asymp n^{-1/2}$ with $\alpha=0$ or $\theta \asymp n^{-3/4}$ with $\alpha=2$, the misclassification rate sometimes still exhibits a decreasing tendency. 
This observation suggests that, in Assumption~\ref{assumption: sparsity}, the exponent gap parameter $\delta$ may potentially be taken to be zero.

One additional pattern emerge from Tables~\ref{tab:mean_error_scale_n1}--\ref{tab:mean_error_scale_n12}.
When $\theta \asymp n^{-1}$ (Table~\ref{tab:mean_error_scale_n1}), the overall mean misclassification rate remains stable or decreases slightly as $n$ increases. 
In contrast, under $\theta \asymp n^{-3/4}$ and $\theta \asymp n^{-1/2}$ (Tables~\ref{tab:mean_error_scale_n34} and \ref{tab:mean_error_scale_n12}), the overall misclassification rate exhibits a mildly increasing trend.
These observations suggest that the asymptotic behavior of the overall misclassification rate for our method may depend on the noise level $\theta$. A more detailed understanding of this dependence is beyond the scope of the current theoretical analysis and is left for future investigation.
Additional simulations for multiple unbalanced planted trees are reported in Section~\ref{sec: additional simulation}.

\section{Case Study}
\label{sec: case study}
We now apply our SPAR algorithm to perform an extensive analysis of a statistician co-authorship network constructed by \cite{ji2016coauthorship}.
In this network, each node corresponds to a statistician and two nodes $u$ and $v$ have an edge between them if they have co-authored 1 or more publications in either Journal of Royal Statistical Society: Series B, Journal of the American Statistical Association, Annals of Statistics, or Biometrika from 2002 to 2013. 
We consider only the largest connected component which has $n=2263$ nodes and $m=4388$ edges. 
As a preliminary analysis of the dataset, we first apply Algorithm~\ref{alg: step I} to obtain the community anchors where the number of communities is set to be $K = 2$. Then, we apply both the distance-based recovery method (Algorithm~\ref{alg: distance recovery}) and the Model-based recovery method (Algorithm~\ref{alg: Model recovery}) and compare their outputs. 
To select the core degree threshold $\tau$, we first estimate $\alpha = 0$ using the Expectation–Maximization method described in Section S3.1 of the supplementary material of \cite{crane2024root}.
Given this estimate, the core degree threshold is set to the degree of the 10th highest-degree node, following Algorithm~\ref{alg: choosing the core degree threshold}.
The size threshold is chosen as $\lfloor n/10 \rfloor = 226$, and the shrinkage parameter is set to $\zeta = 0.8$ (see Remark~\ref{remark: shrinkage parameter}).

We give the intermediate estimate of community anchors produced by the graph pruning procedure (Algorithm~\ref{alg: step I}) in Figure~\ref{fig: cores}. 
In each graph, the ten highest-degree nodes are labeled by their name. The procedure reveals two prominent anchor components: 
The first core consists primarily of researchers specializing in Bayesian statistics, while the second core corresponds to a large group of high-dimensional statistics researchers.

\begin{figure}[H]
    \centering
    % First subfigure
    \begin{subfigure}[b]{0.45\linewidth}
        \centering
        {\includegraphics[width=\linewidth]{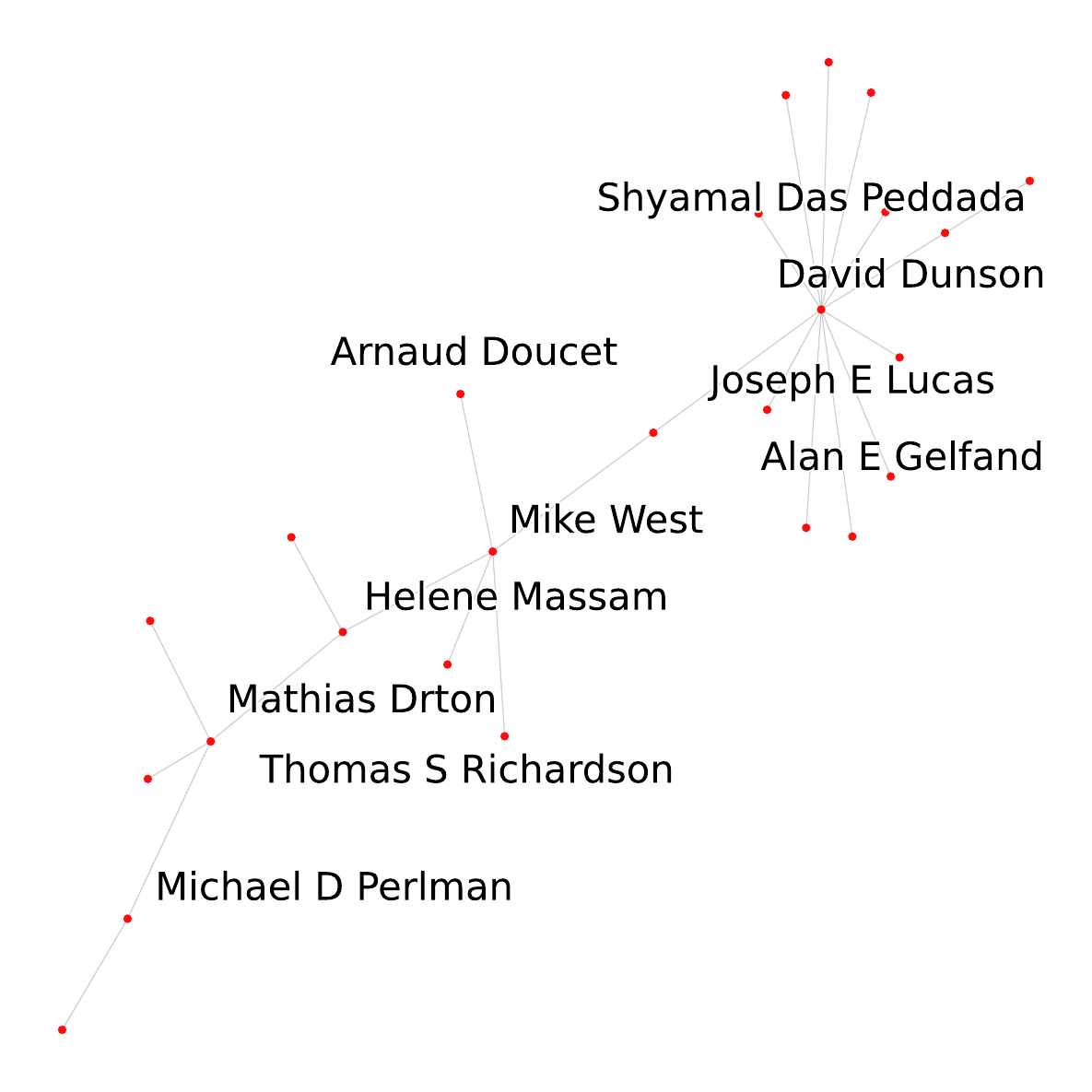}}
        \subcaption{Community Core 1}
    \end{subfigure}
    \hfill
    % Second subfigure
    \begin{subfigure}[b]{0.45\linewidth}
        \centering
        {\includegraphics[width=\linewidth]{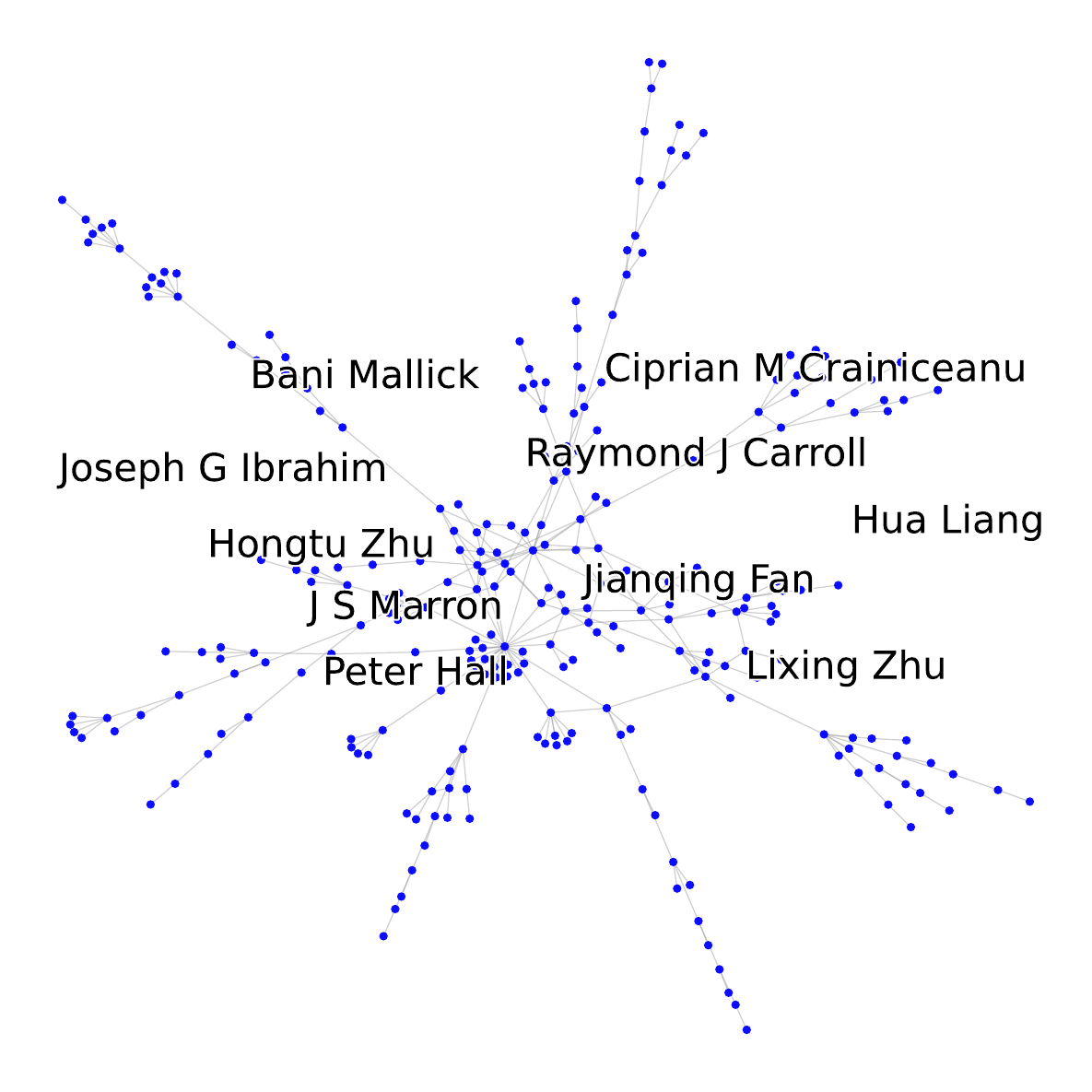}} % Add your image here
        \subcaption{Community Core 2}
    \end{subfigure}
    \caption{Community cores obtained from the co-authorship network of \cite{ji2016coauthorship} using the graph pruning procedure in Algorithm~\ref{alg: step I}. The parameters are set to $(K,\tau ,Q,\zeta)=(2,\text{10th highest degree},226,0.8)$.}
    \label{fig: cores}
\end{figure}

To compare the two different recovery methods, we first present the distance-based recovery results from Algorithm~\ref{alg: distance recovery} in Figure~\ref{fig: distance recovery}. We use the same recovery threshold as in Section~\ref{sec: simulation}, where $\tau'$ is defined as the minimum of two quantities: first is the degree of the last removed node of Algorithm~\ref{alg: step I}, that is, the value of $D^*$ at the conclusion of Algorithm~\ref{alg: step I} and the second is $\tau/2$ where $\tau$ is the core degree threshold from Algorithm~\ref{alg: step I}.
We also report the Monte Carlo recovery results from Algorithm~\ref{alg: Model recovery} in Figure~\ref{fig: mc recovery}, based on $1000$ Monte Carlo samples generated from the implementation of \cite{crane2024root}. 
Among these samples, $842$ root pairs are assigned to the two selected cores identified in Figure~\ref{fig: cores}. For both figures, only the largest connected components of the recovered clusters are shown. The communities obtained from the two methods are highly consistent, with a $98.1\%$ agreement in node classification. 
The main differences occur around the cluster containing James O. Berger and several of his coauthors. This may reflect a limitation of the method near the boundary between communities. From a research perspective, Berger’s work on theoretical Bayesian statistics lies between the two groups, making his position in the network naturally ambiguous.

%The cores produced by the graph pruning stage (Algorithm~\ref{alg: step I}) contain most of the high-degree vertices in the final estimated communities. 

%The selected cores and the final classified communities exhibit similar labels, both corresponding to high-degree nodes in the recovered subnetwork. 

%If high-degree nodes are viewed as informative vertices in the network, this suggests that the cores identified by the graph pruning procedure in Algorithm~\ref{alg: step I} capture the structurally important regions of the final detected communities.

\begin{figure}[H]
    \centering
    % First subfigure
    \begin{subfigure}[b]{0.45\linewidth}
        \centering
        {\includegraphics[width=\linewidth]{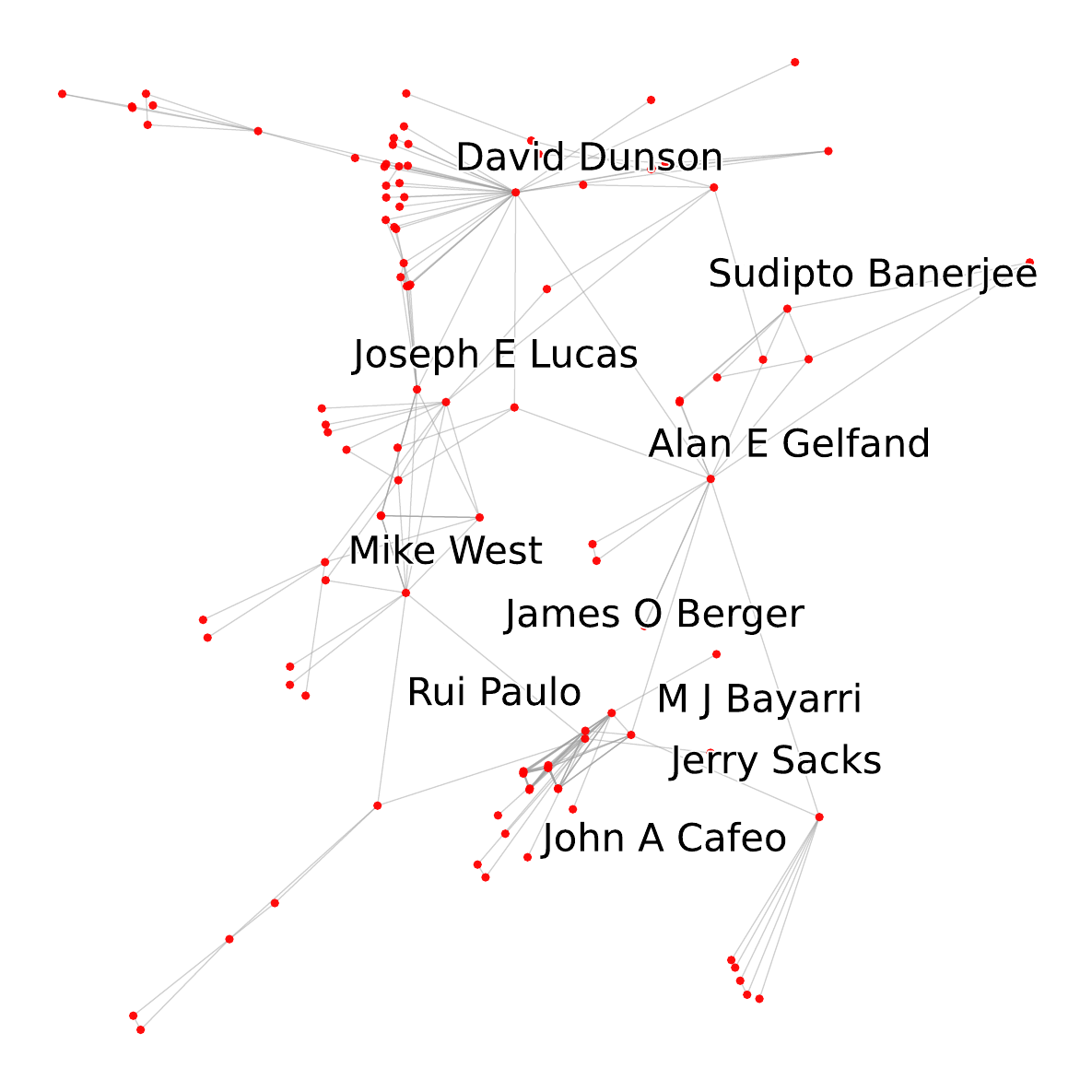}}
        \subcaption{Community 1}
    \end{subfigure}
    \hfill
    % Second subfigure
    \begin{subfigure}[b]{0.45\linewidth}
        \centering
        {\includegraphics[width=\linewidth]{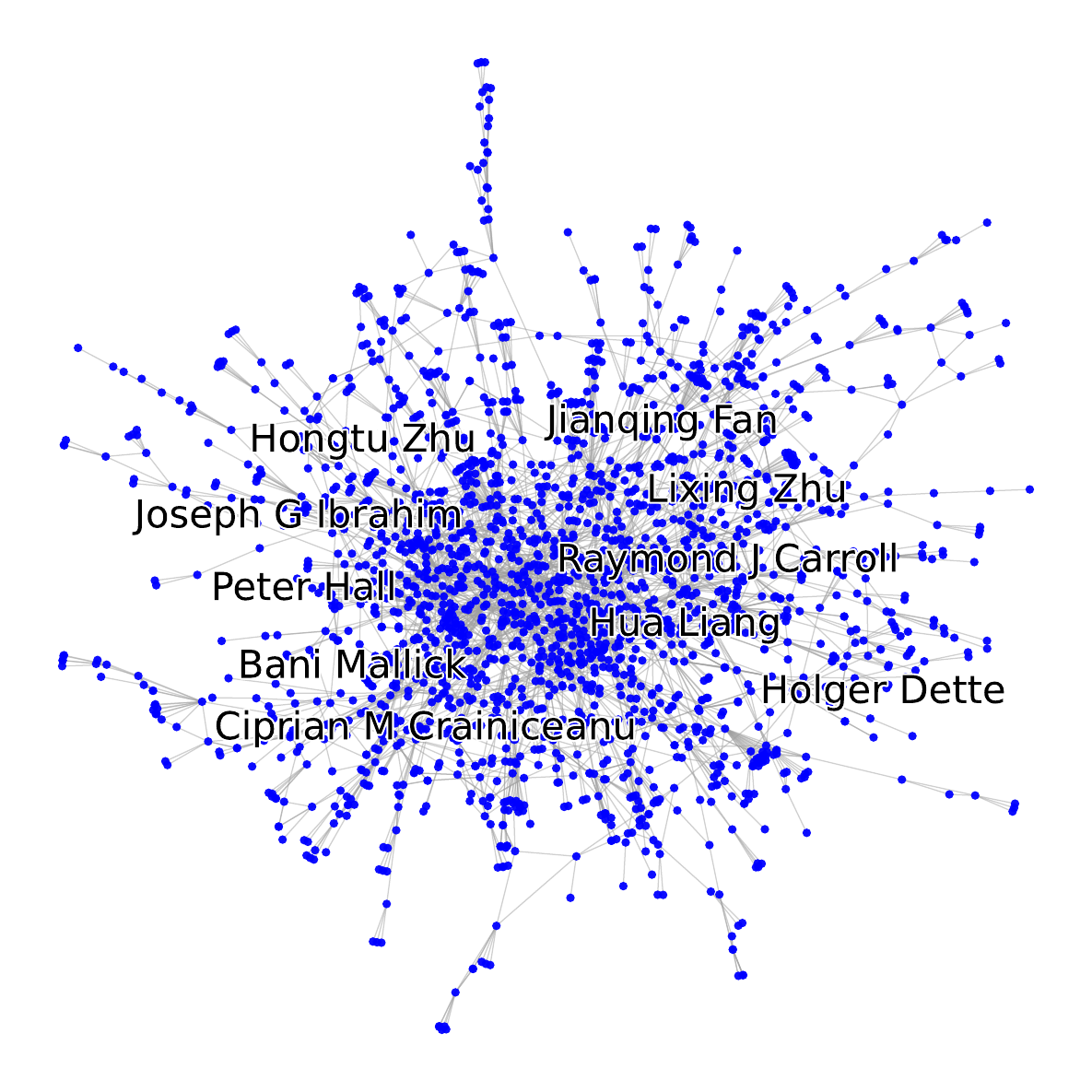}} % Add your image here
        \subcaption{Community 2}
    \end{subfigure}
    \caption{Communities obtained by the distance-based recovery procedure in Algorithm~\ref{alg: distance recovery} with threshold $\tau'=\tau/2$, using the anchor components identified in Figure~\ref{fig: cores}.}
    \label{fig: distance recovery}
\end{figure}

\begin{figure}[H]
    \centering
    % First subfigure
    \begin{subfigure}[b]{0.45\linewidth}
        \centering
        {\includegraphics[width=\linewidth]{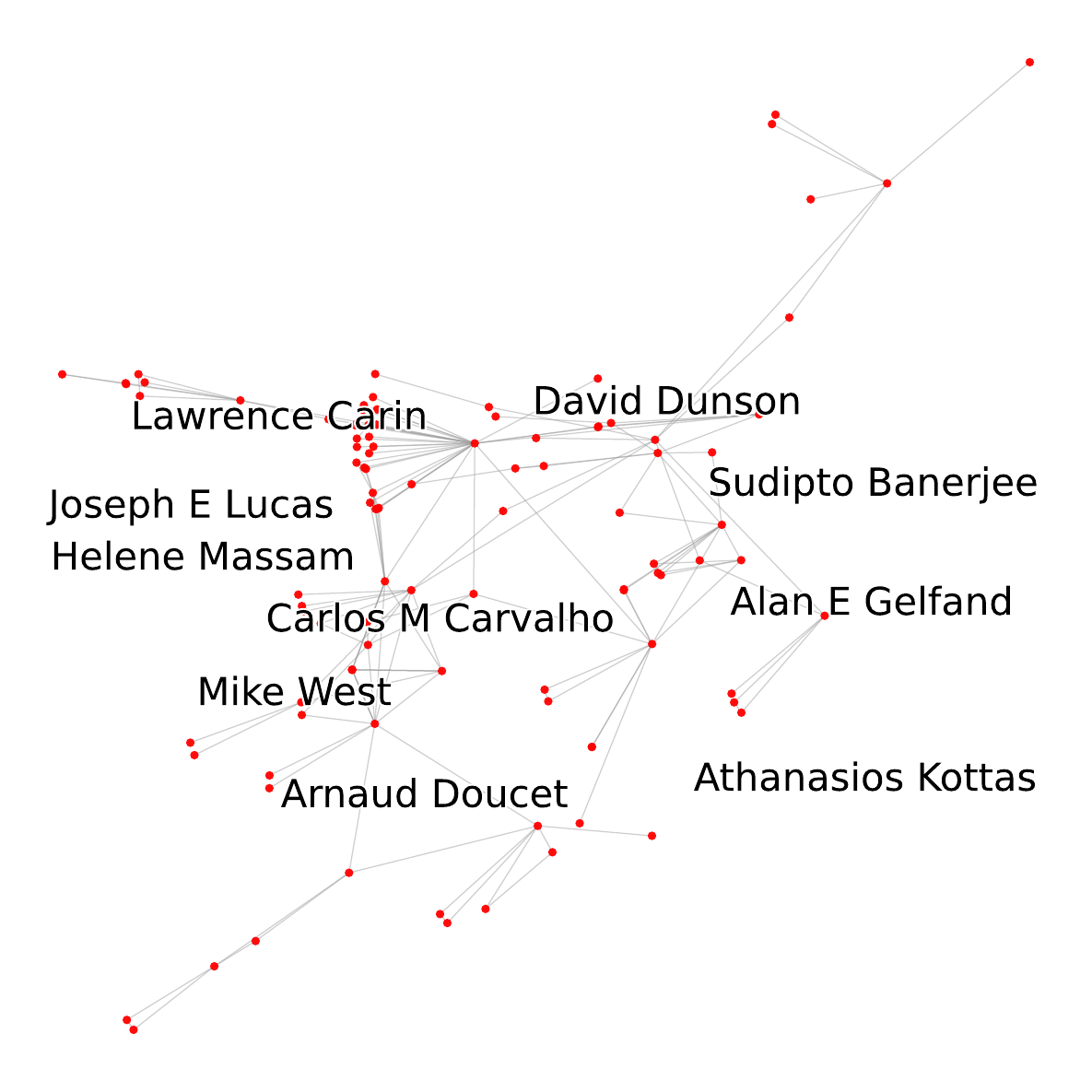}}
        \subcaption{Community 1}
    \end{subfigure}
    \hfill
    % Second subfigure
    \begin{subfigure}[b]{0.45\linewidth}
        \centering
        {\includegraphics[width=\linewidth]{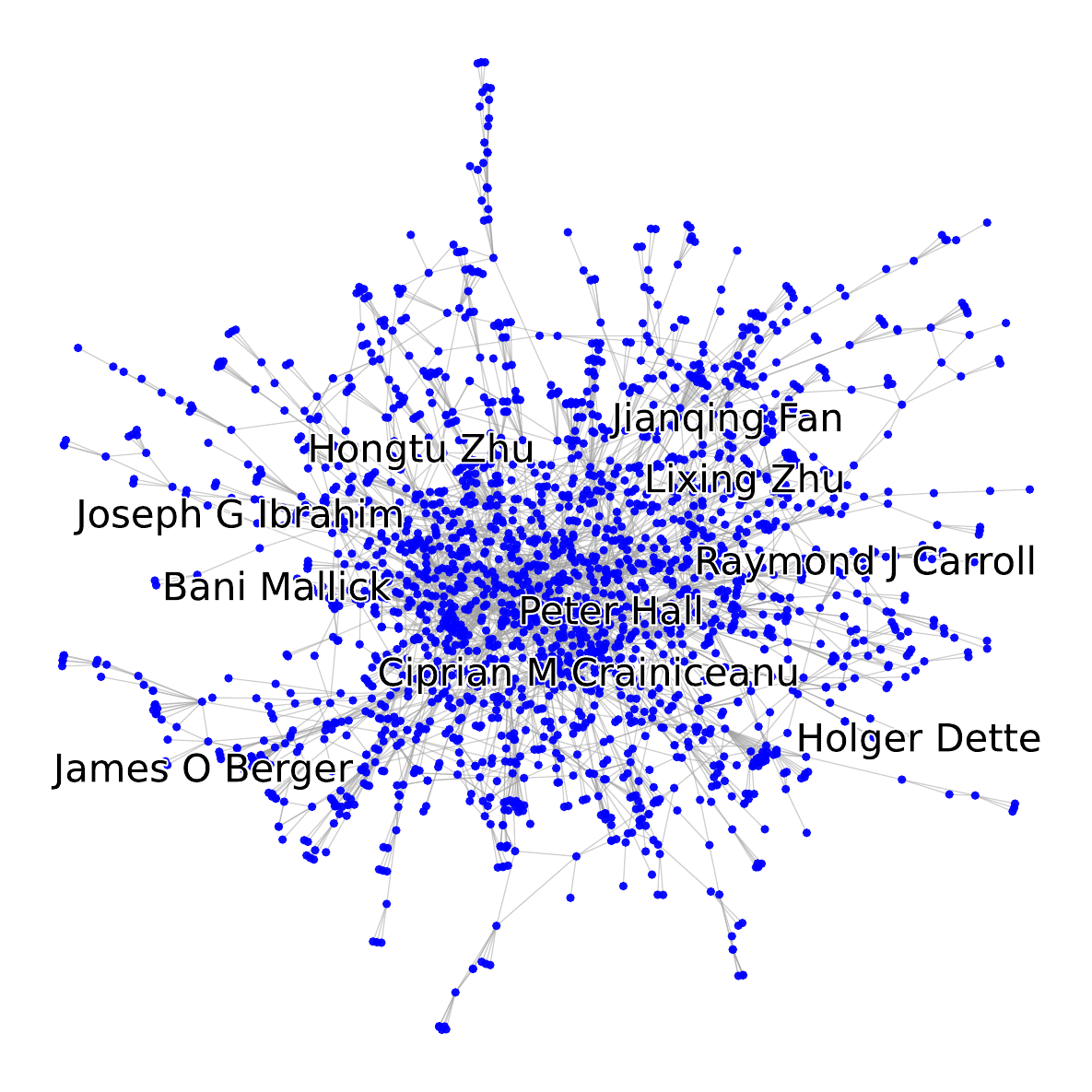}} % Add your image here
        \subcaption{Community 2}
    \end{subfigure}
    \caption{Communities obtained by the Model-based recovery procedure in Algorithm~\ref{alg: Model recovery}, based on $1000$ Monte Carlo samples and using the anchor components identified in Figure~\ref{fig: cores}.}
    \label{fig: mc recovery}
\end{figure}

While the classification results in Figures~\ref{fig: cores}--\ref{fig: mc recovery} suggest that our method produces clusters with coherent research themes, a direct observation is that some recovered cores or communities still contain multiple hidden clusters, particularly within Community~2 in Figure~\ref{fig: distance recovery}. 
This motivates further partitioning of the two detected communities. To avoid estimating the number of communities $K$ and to capture actual hierarchical structure in research communities, we apply a hierarchical clustering procedure with tuning parameters $(K,\tau,Q)=(2,\text{10th highest degree},10)$ for Algorithm~\ref{alg: step I} and recovery threshold $\tau' := \min\left\{\text{last removed node degree},\tau/2\right\}$ for Algorithm~\ref{alg: distance recovery}.

Specifically, we take the two communities produced by the distance recovery stage, extract the largest connected component in each community, and reapply the community detection procedure using the same tuning parameters. We do this recursively and continue as long as the component can be separated into two cores, each containing at least $10$ nodes. 
Once a cluster can no longer be further divided by the procedure, it becomes a terminal (``leaf'') cluster in the hierarchical tree. We rank the resulting clusters from $1$ to $20$ according to the order in which they appear as leaf clusters during the process. 
The hierarchical structure of the clusters is illustrated by the dendrogram in Figure~\ref{fig: dendrogram}.
The resulting clusters are shown in Figures~\ref{fig: h clustering I} and~\ref{fig: h clustering II}, where we also highlight the core within each cluster by coloring the edges of the core in yellow while all other edges are colored as gray. 

%It is worth mentioning that, as only the largest connected component is retained at each step, some isolated vertices may remain unclassified.

Within most of the recovered communities in Figures~\ref{fig: h clustering I} and~\ref{fig: h clustering II}, the researchers share similar research themes or institutional affiliations. 
For example, Community~5 can be interpreted as a “Bayesian computation” cluster, while Community~19 appears to represent a “biostatistics (government)” cluster. 
The results are not perfect, as we can see from Community~20, which emerges last in the hierarchical clustering procedure and appears to pick up all the leftover nodes.
Beyond similarities within individual communities, the dendrogram in Figure~\ref{fig: dendrogram} also reveals structural relationships among communities that are close in the hierarchical tree.
For example, Communities~1 and~2 originate from a larger “Bayesian” cluster and later split into “Bayesian modeling” and “Bayesian machine learning” groups, respectively. Similarly, Communities~13 and~14 arise from the same parent cluster and correspond to “theoretical statistics (European)” and “theoretical statistics (United States)”, respectively.

\begin{figure}
    \centering
    \includegraphics[width=0.5\linewidth, angle=180]{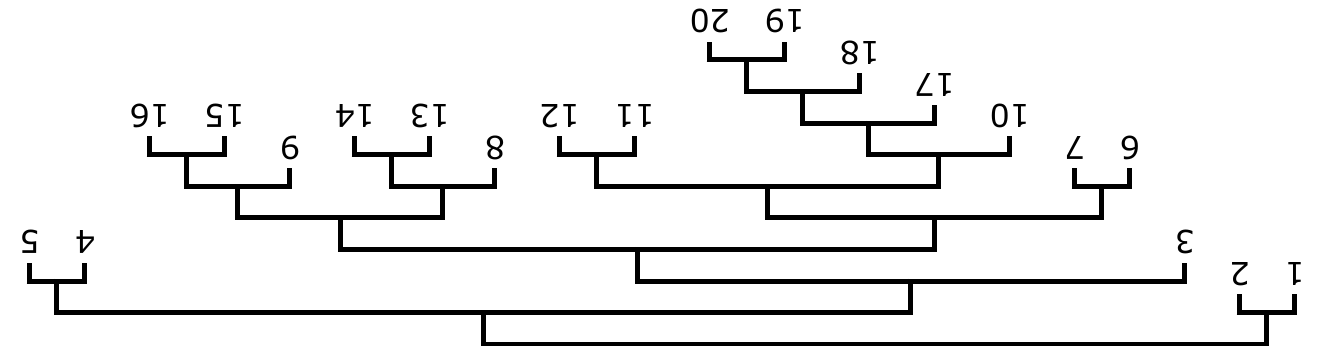}
    \caption{Hierarchical dendrogram from iterative core pruning}
    \label{fig: dendrogram}
\end{figure}

\begin{figure}[H]
    \centering
    % First subfigure
    \begin{subfigure}[b]{0.3\linewidth}
        \centering
        {\includegraphics[width=\linewidth]{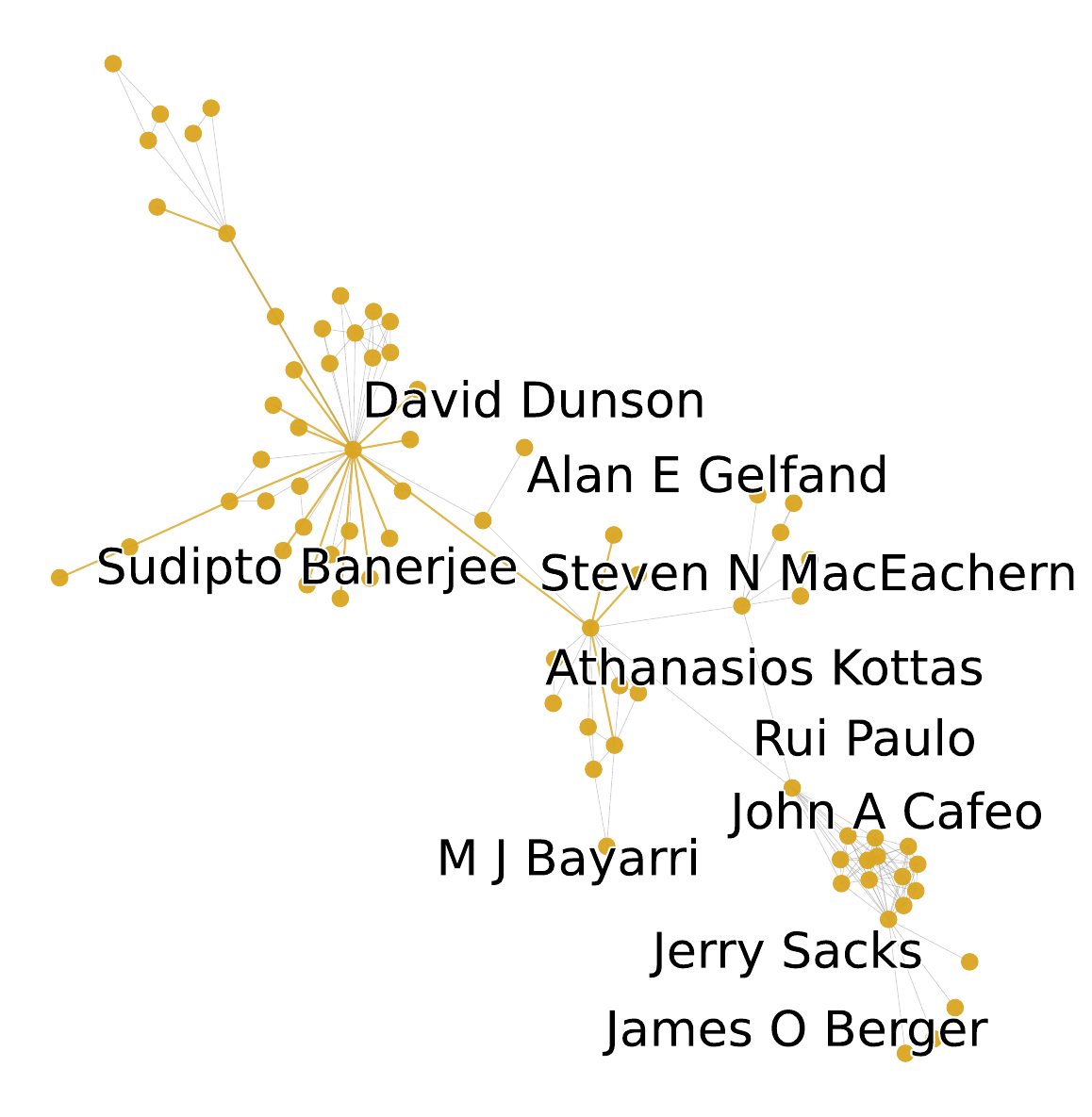}}
        \subcaption{Community 1 \\ (Bayesian Modeling)}
    \end{subfigure}
    \hfill
    % Second subfigure
    \begin{subfigure}[b]{0.3\linewidth}
        \centering
        {\includegraphics[width=\linewidth]{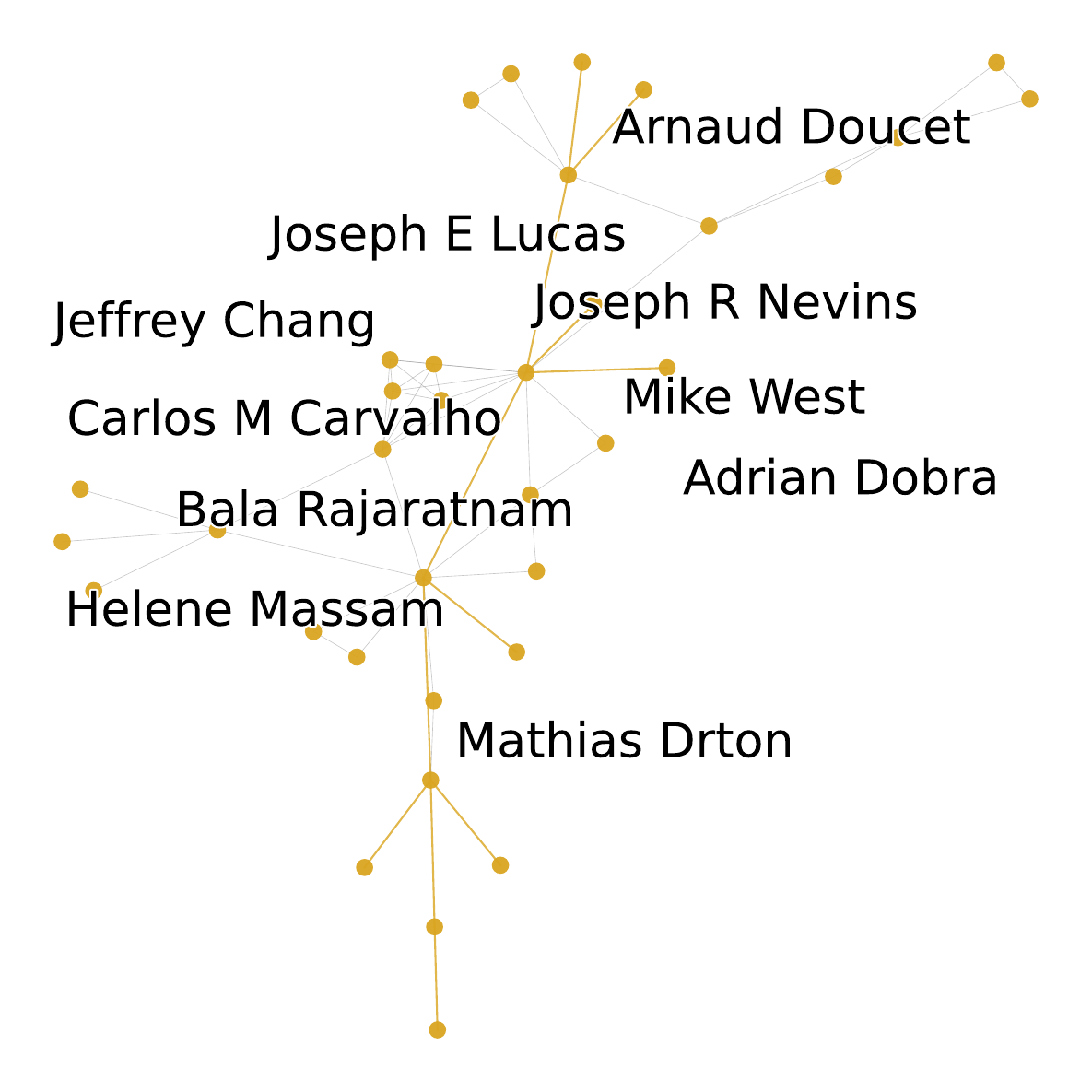}} % Add your image here
        \subcaption{Community 2 \\ (Bayesian ML)}
    \end{subfigure}
    \hfill
    % Second subfigure
    \begin{subfigure}[b]{0.3\linewidth}
        \centering
        {\includegraphics[width=\linewidth]{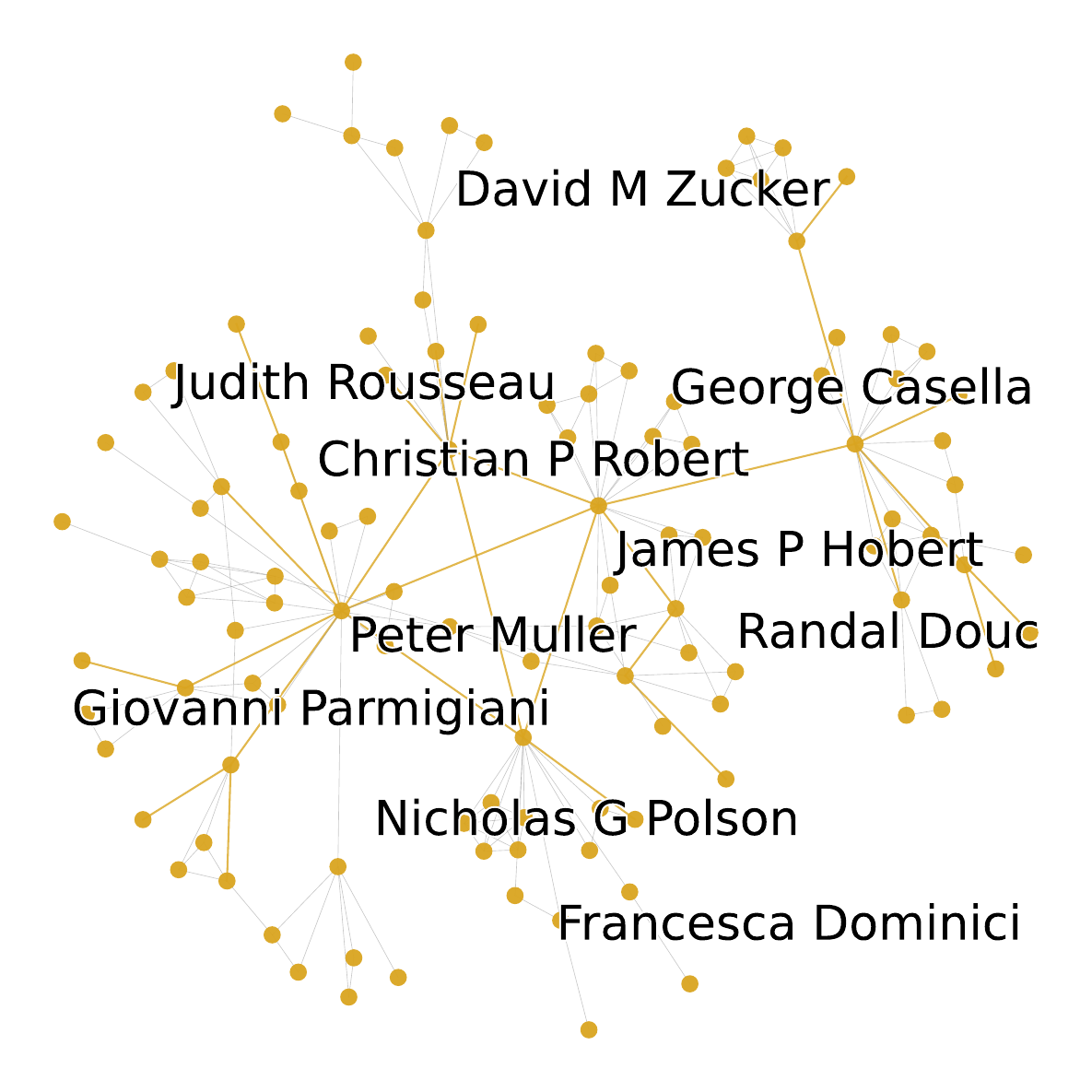}} % Add your image here
        \subcaption{Community 3 \\ (Bayesian Theory)}
    \end{subfigure}
    \hfill
    % First subfigure
    \begin{subfigure}[b]{0.3\linewidth}
        \centering
        {\includegraphics[width=\linewidth]{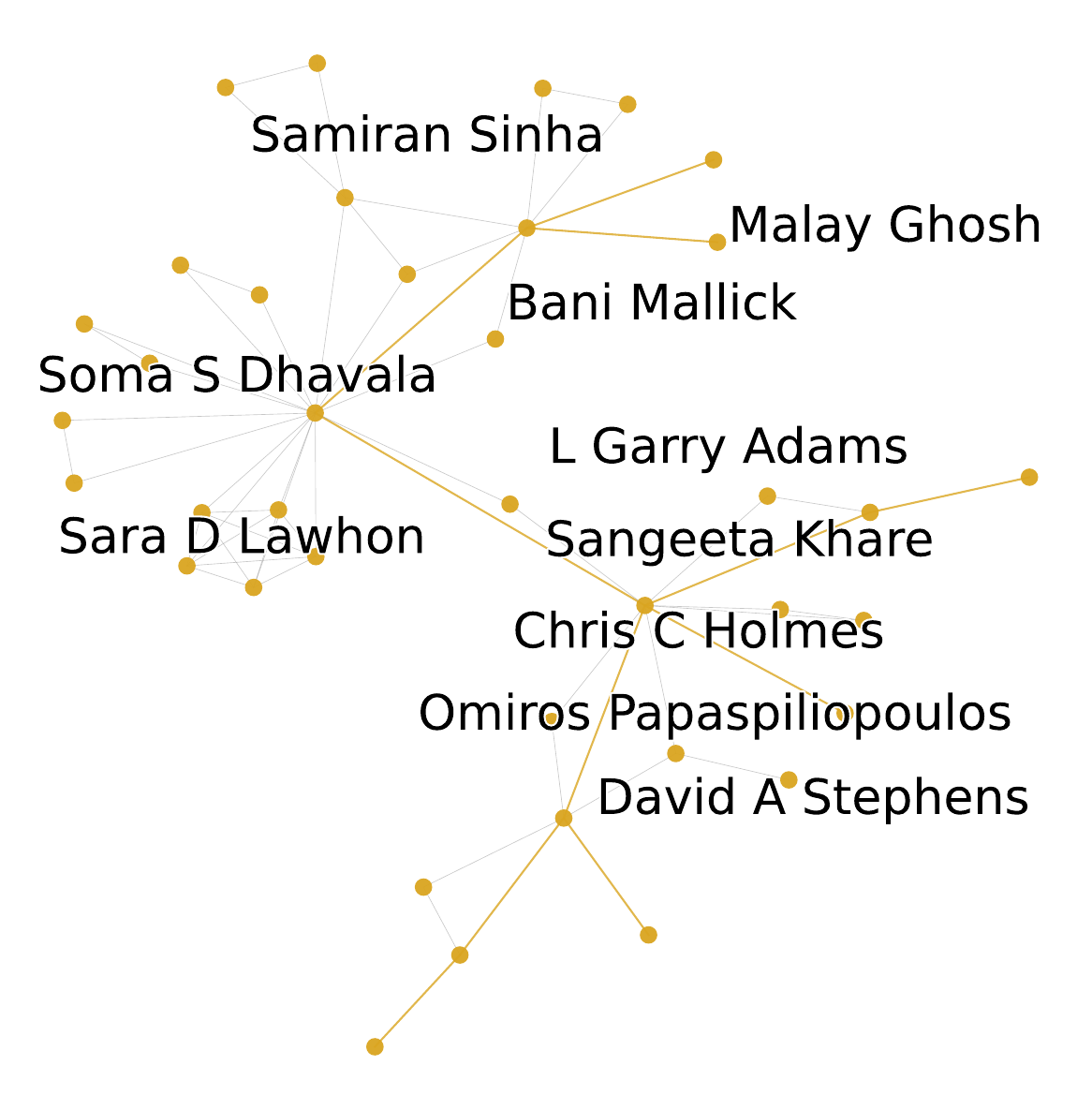}}
        \subcaption{Community 4 \\ (Bayesian Computation 1)}
    \end{subfigure}
    \hfill
    % Second subfigure
    \begin{subfigure}[b]{0.3\linewidth}
        \centering
        {\includegraphics[width=\linewidth]{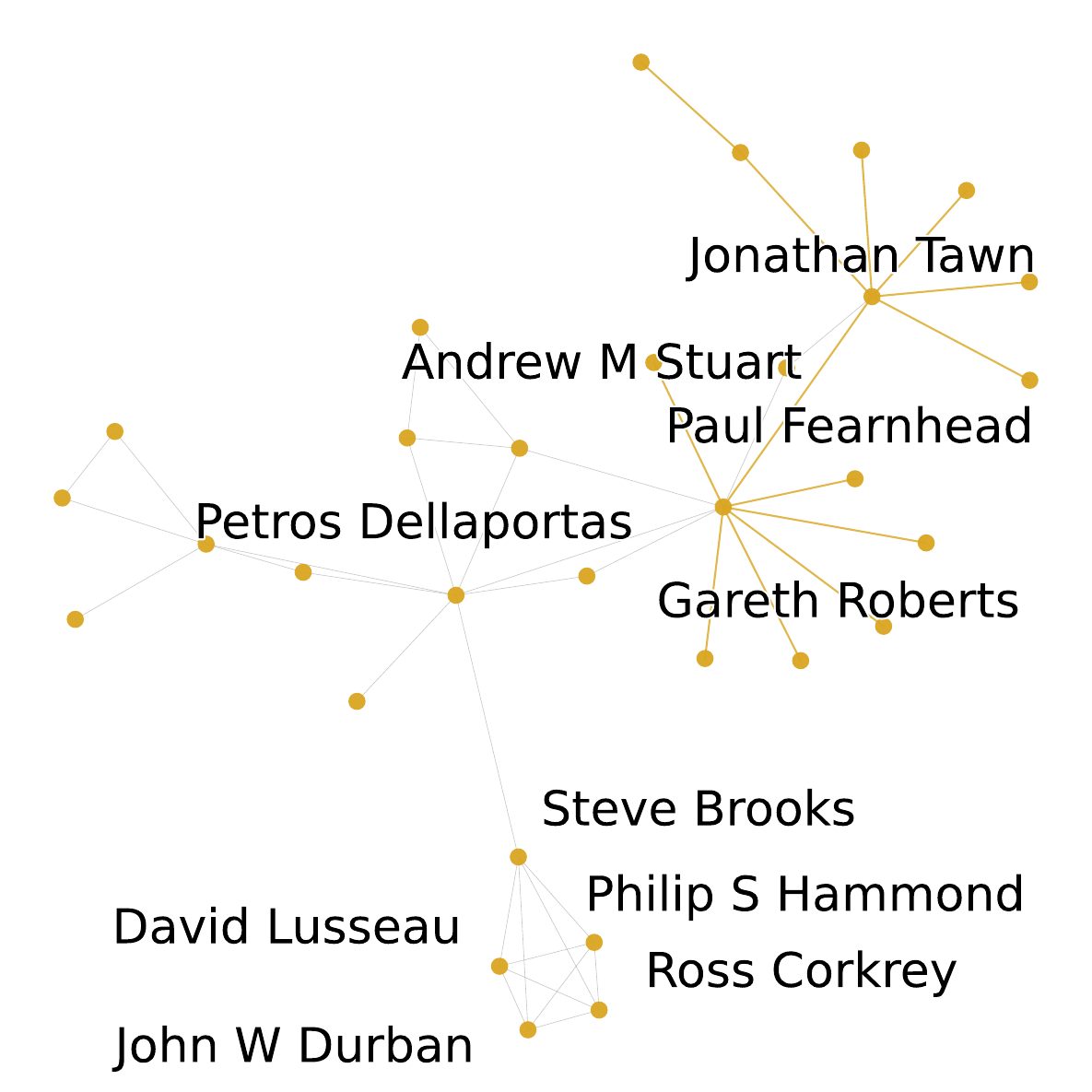}} % Add your image here
        \subcaption{Community 5 \\ (Bayesian Computation 2)}
    \end{subfigure}
    \hfill
    % Second subfigure
    \begin{subfigure}[b]{0.3\linewidth}
        \centering
        {\includegraphics[width=\linewidth]{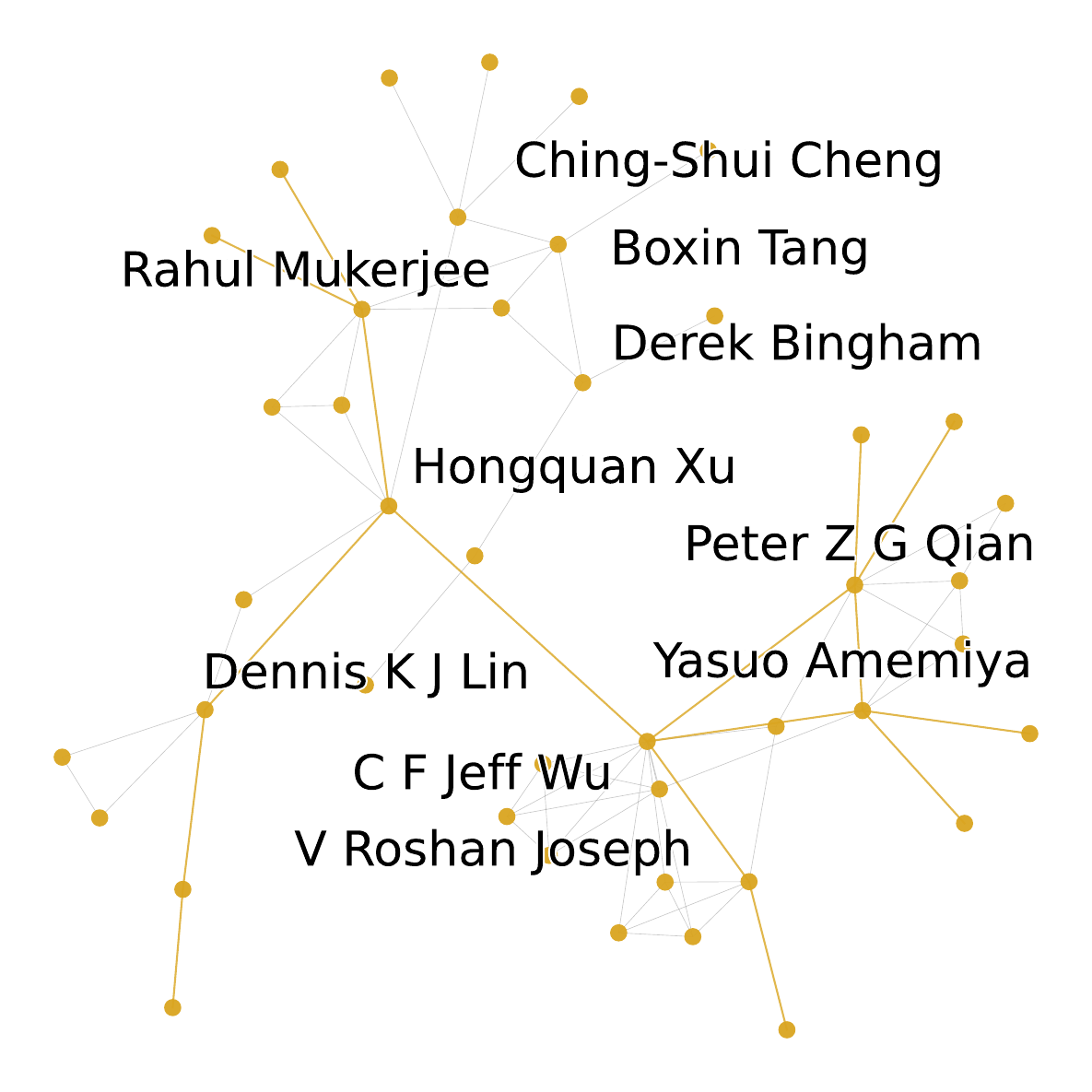}} % Add your image here
        \subcaption{Community 6 \\ (Experimental Design)}
    \end{subfigure}
    \hfill
    % Second subfigure
    \begin{subfigure}[b]{0.3\linewidth}
        \centering
        {\includegraphics[width=\linewidth]{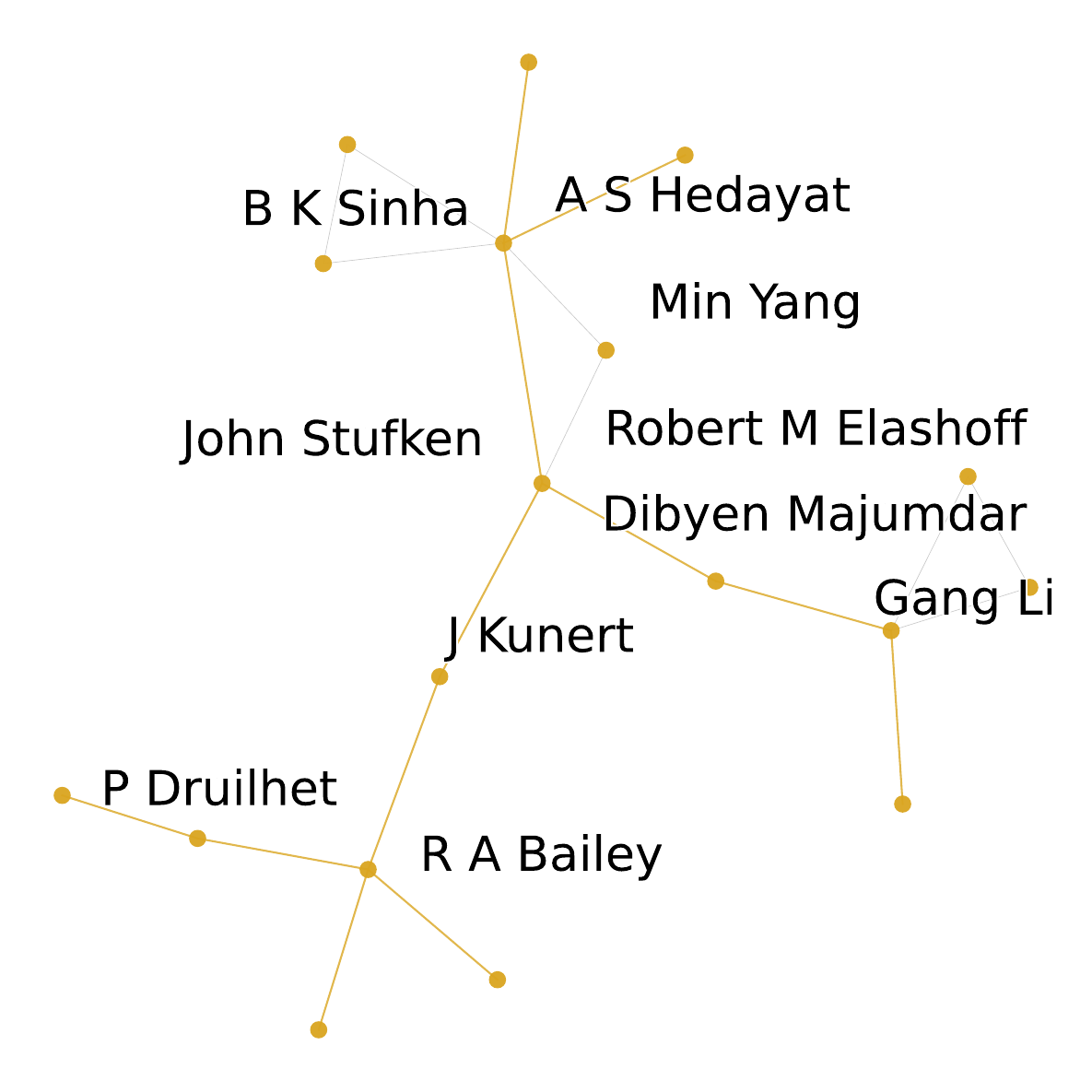}} % Add your image here
        \subcaption{Community 7}
    \end{subfigure}
    \hfill
        % First subfigure
    \begin{subfigure}[b]{0.3\linewidth}
        \centering
        {\includegraphics[width=\linewidth]{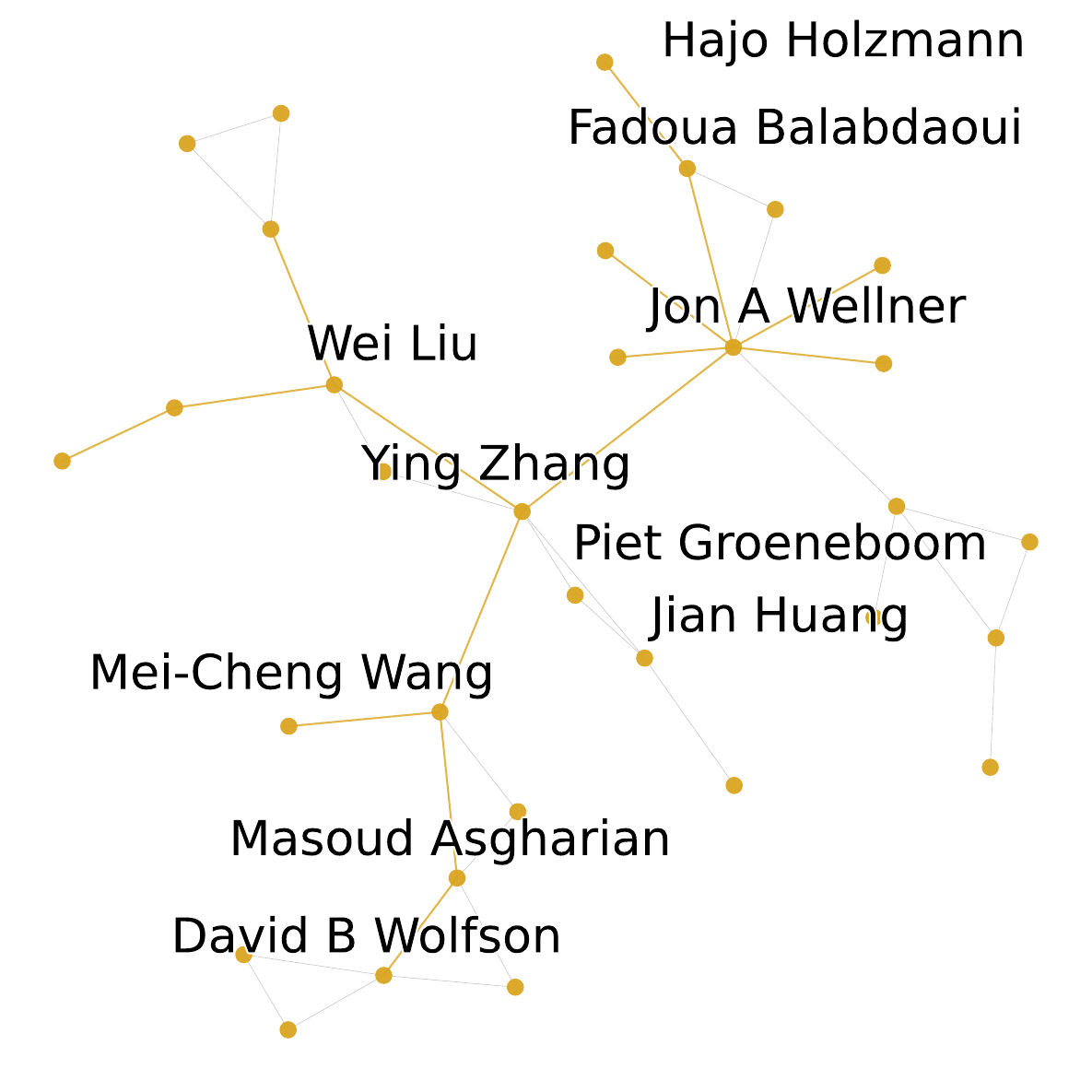}}
        \subcaption{Community 8}
    \end{subfigure}
    \hfill
    % Second subfigure
    \begin{subfigure}[b]{0.3\linewidth}
        \centering
        {\includegraphics[width=\linewidth]{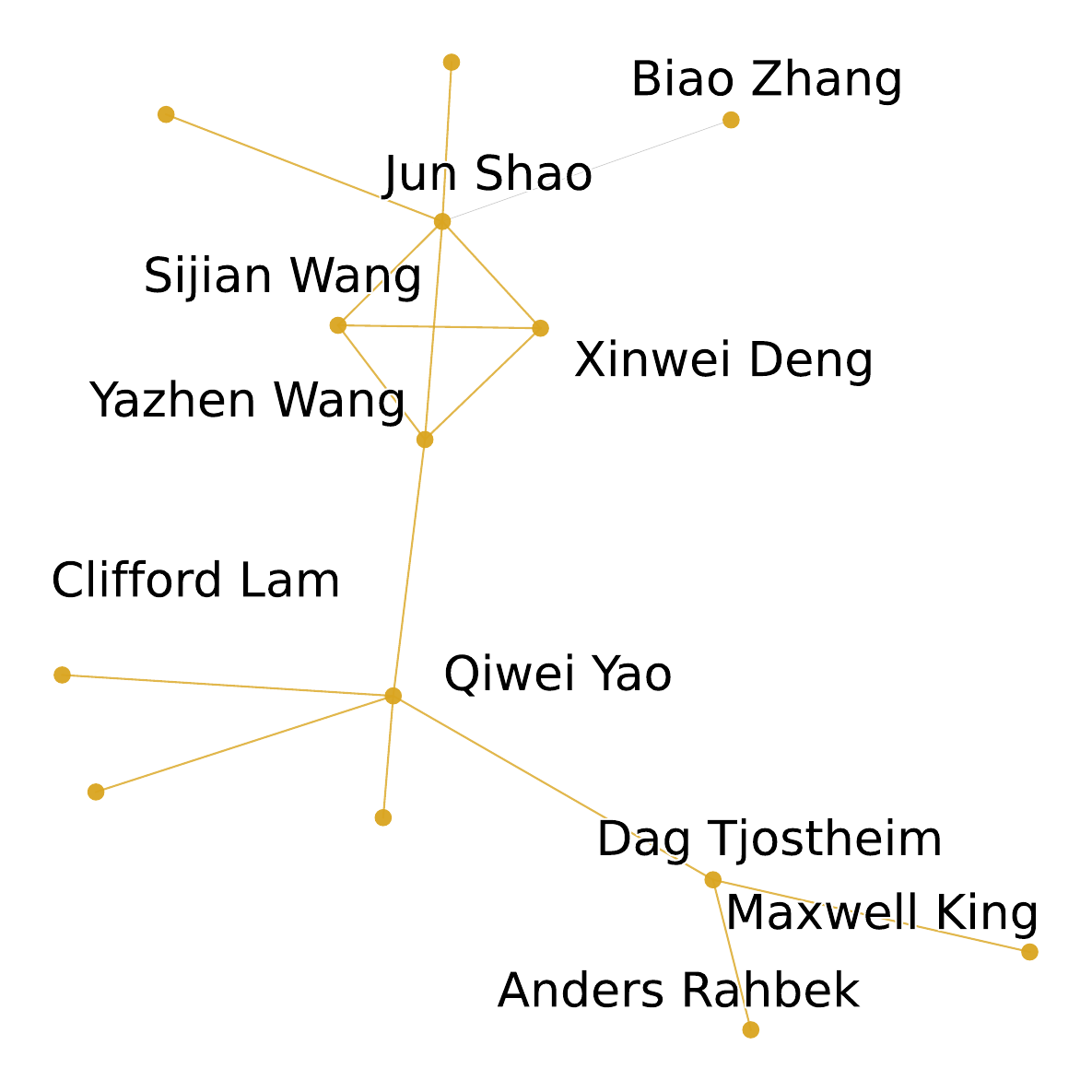}} % Add your image here
        \subcaption{Community 9}
    \end{subfigure}
    \hfill
    % Second subfigure
    \begin{subfigure}[b]{0.3\linewidth}
        \centering
        {\includegraphics[width=\linewidth]{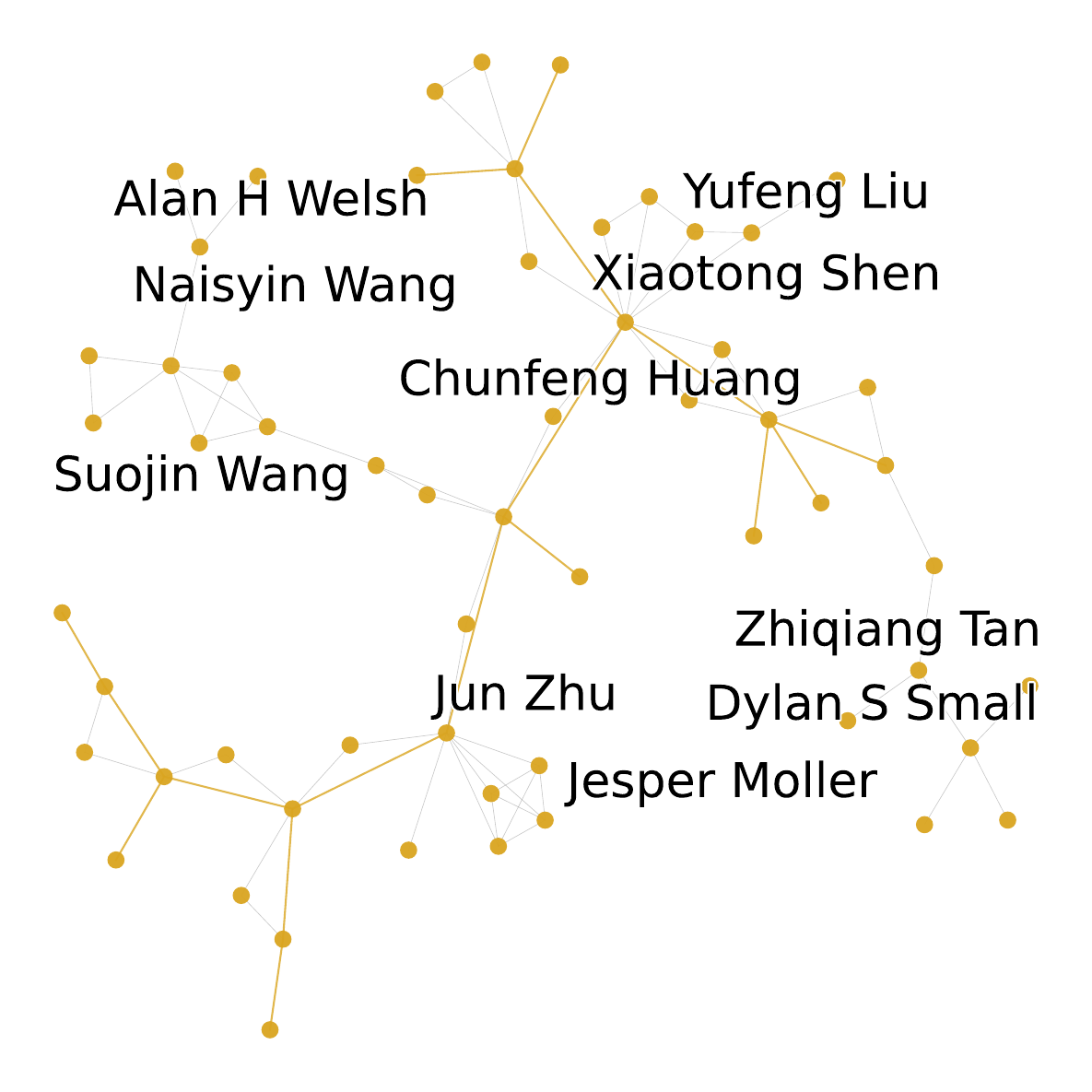}} % Add your image here
        \subcaption{Community 10}
    \end{subfigure}
    \hfill
    % First subfigure
    \begin{subfigure}[b]{0.3\linewidth}
        \centering
        {\includegraphics[width=\linewidth]{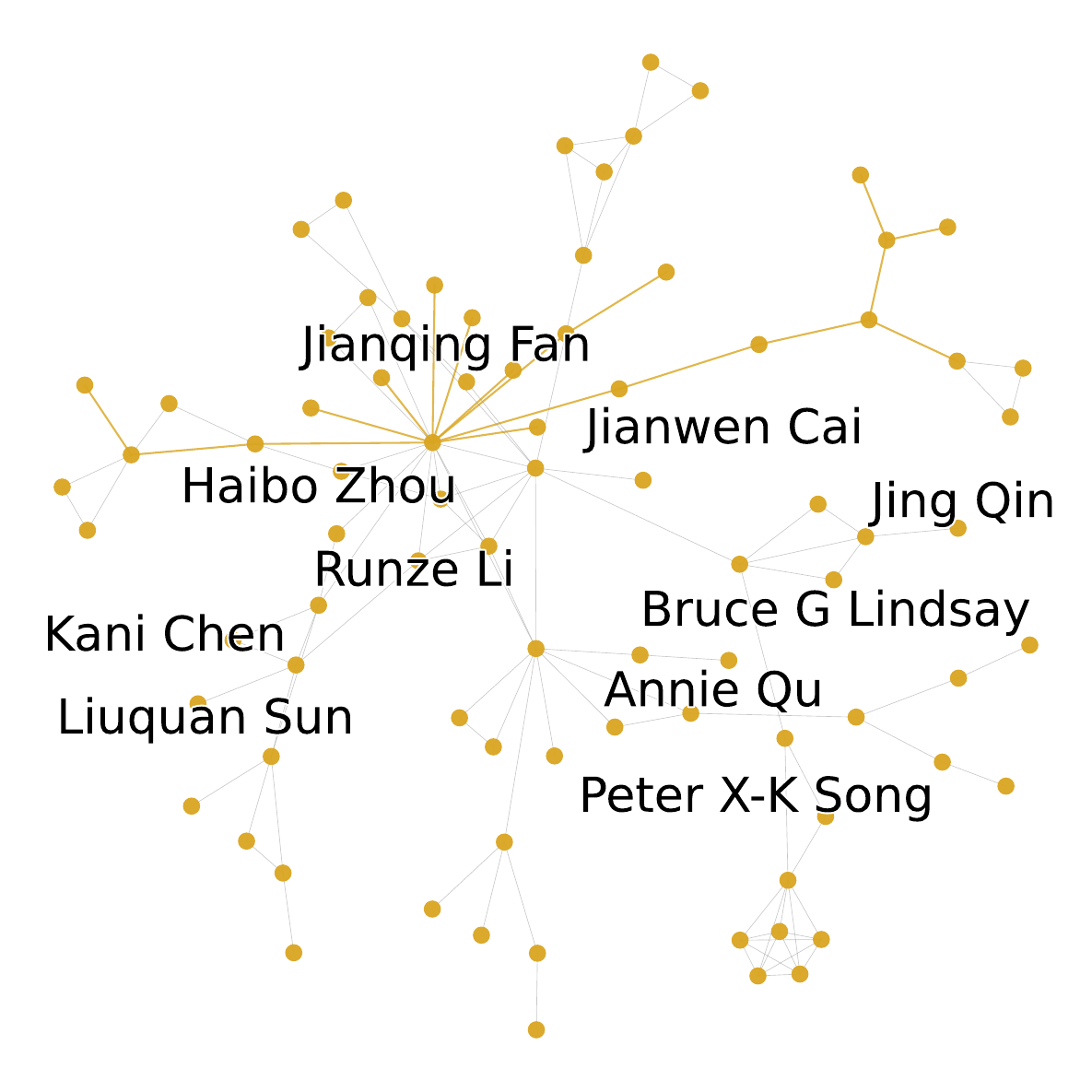}}
        \subcaption{Community 11}
    \end{subfigure}
    \hfill
    % Second subfigure
    \begin{subfigure}[b]{0.3\linewidth}
        \centering
        {\includegraphics[width=\linewidth]{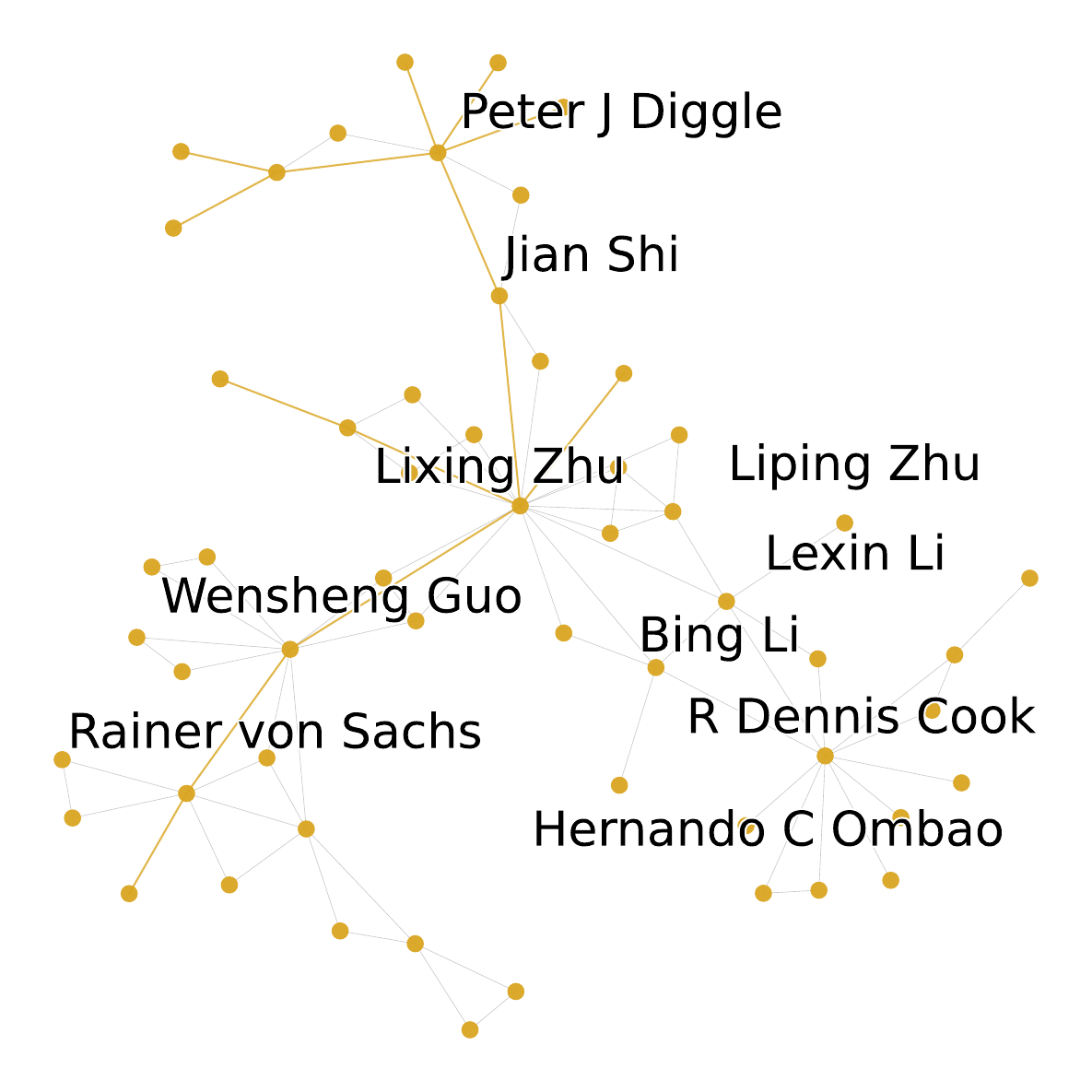}} % Add your image here
        \subcaption{Community 12}
    \end{subfigure}
        \caption{Communities obtained by distance recovery in the hierarchical procedure (Part I)}
        \label{fig: h clustering I}
\end{figure}
\begin{figure}[H]
    \hfill
    % Second subfigure
    \begin{subfigure}[b]{0.3\linewidth}
        \centering
        {\includegraphics[width=\linewidth]{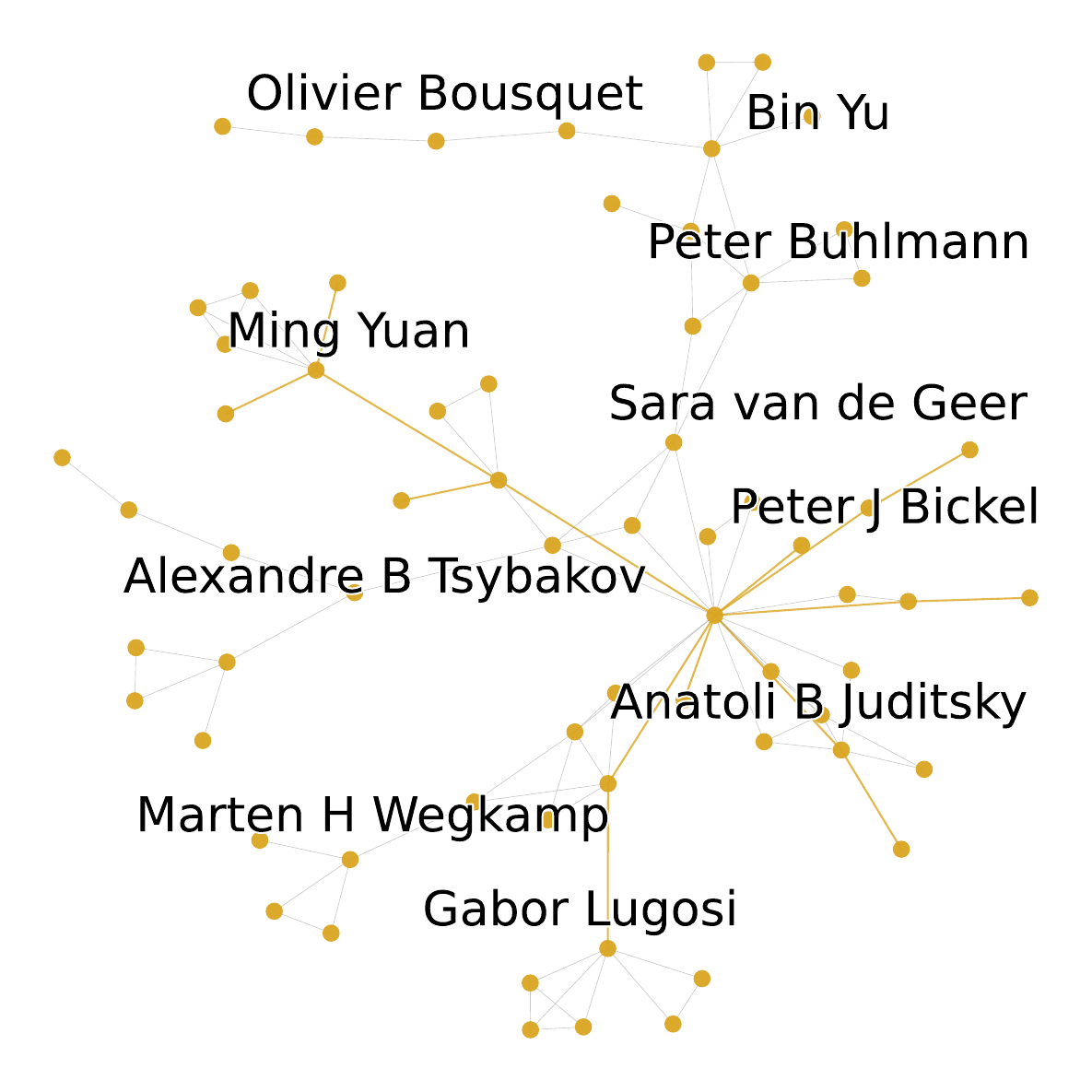}} % Add your image here
        \subcaption{Community 13 \\ (European Theoreticians)}
    \end{subfigure}
    \hfill
    % Second subfigure
    \begin{subfigure}[b]{0.3\linewidth}
        \centering
        {\includegraphics[width=\linewidth]{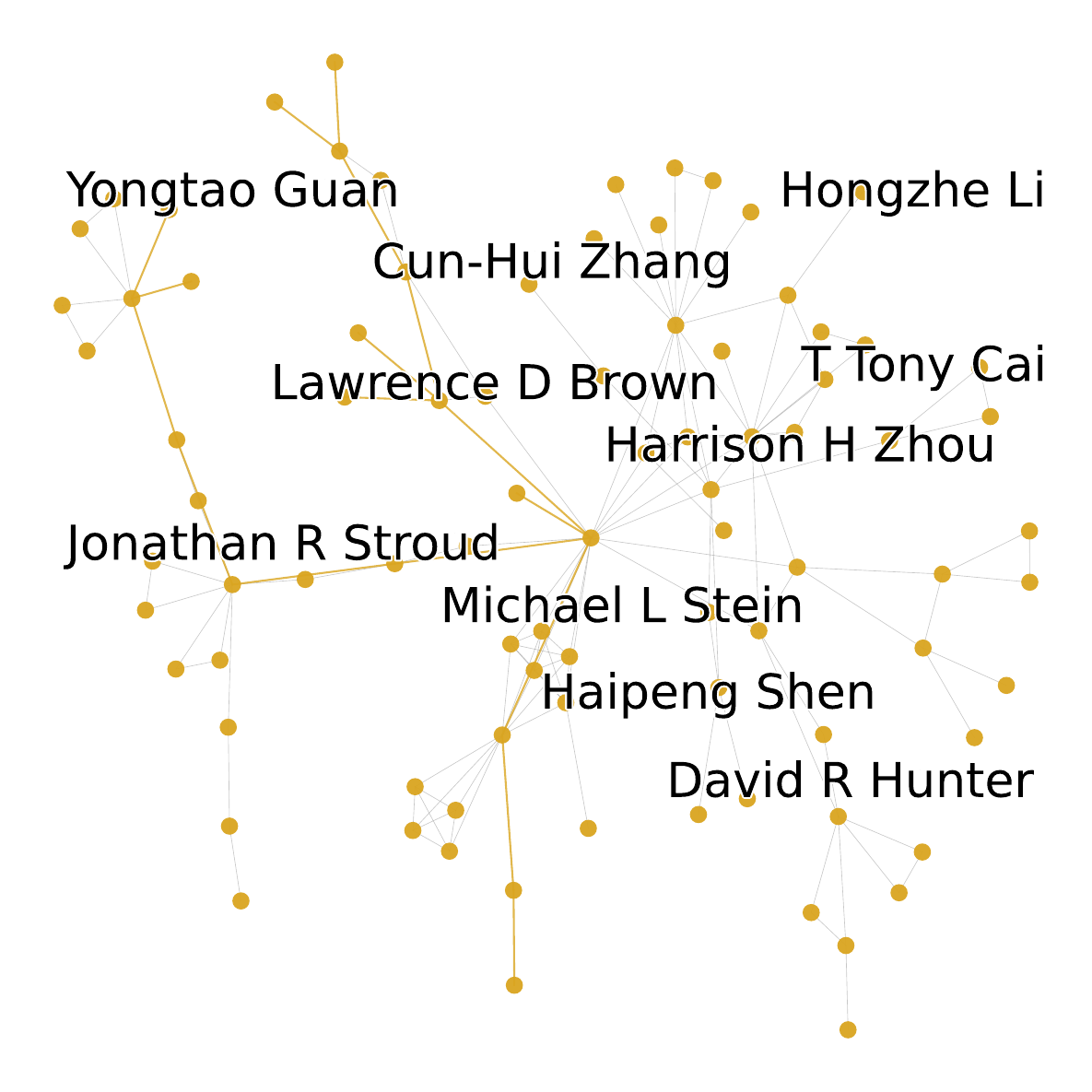}} % Add your image here
        \subcaption{Community 14 \\ (U.S. Theoreticians)}
    \end{subfigure}
    \hfill
    % Second subfigure
    \begin{subfigure}[b]{0.3\linewidth}
        \centering
        {\includegraphics[width=\linewidth]{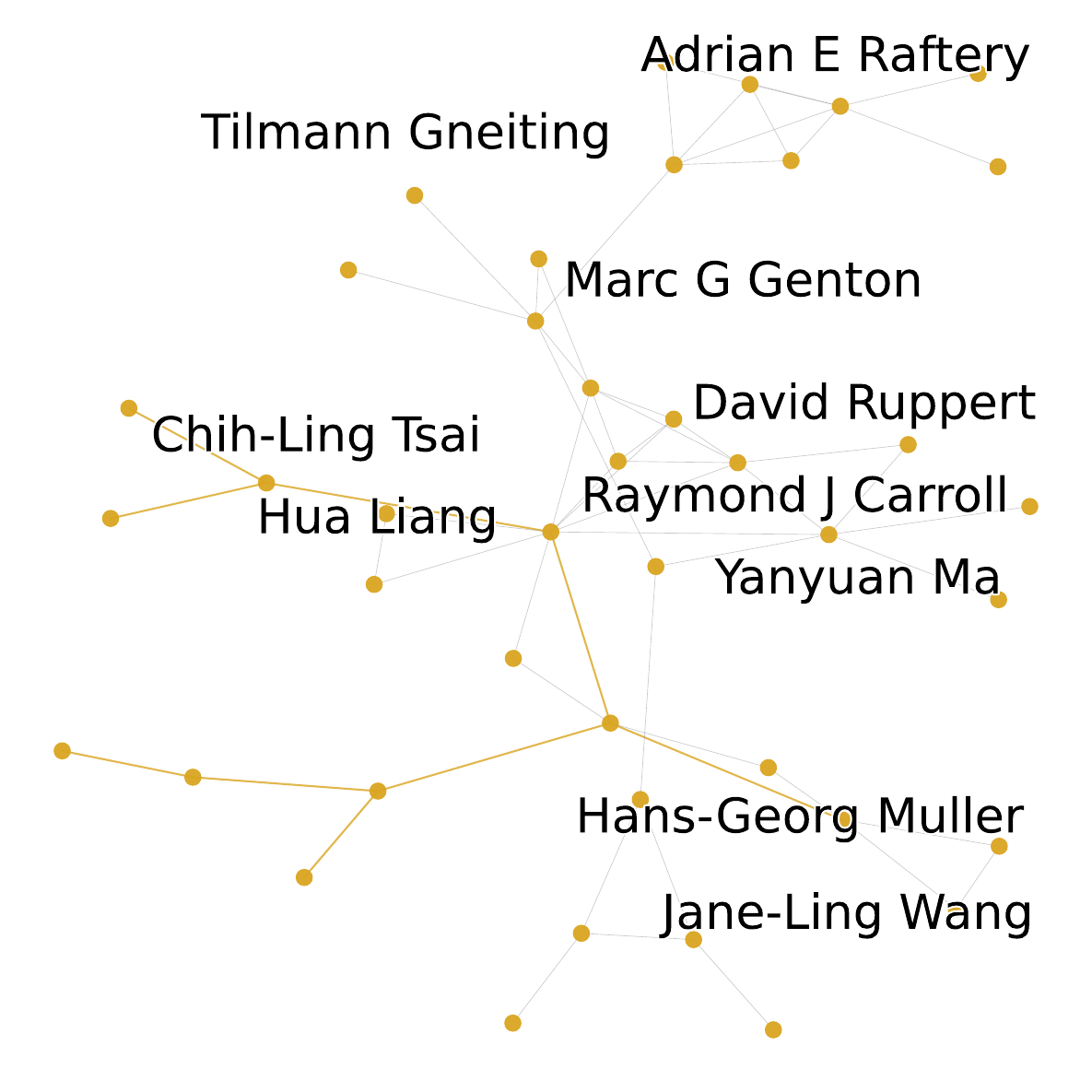}} % Add your image here
        \subcaption{Community 15}
    \end{subfigure}
    \hfill
    % Second subfigure
    \begin{subfigure}[b]{0.3\linewidth}
        \centering
        {\includegraphics[width=\linewidth]{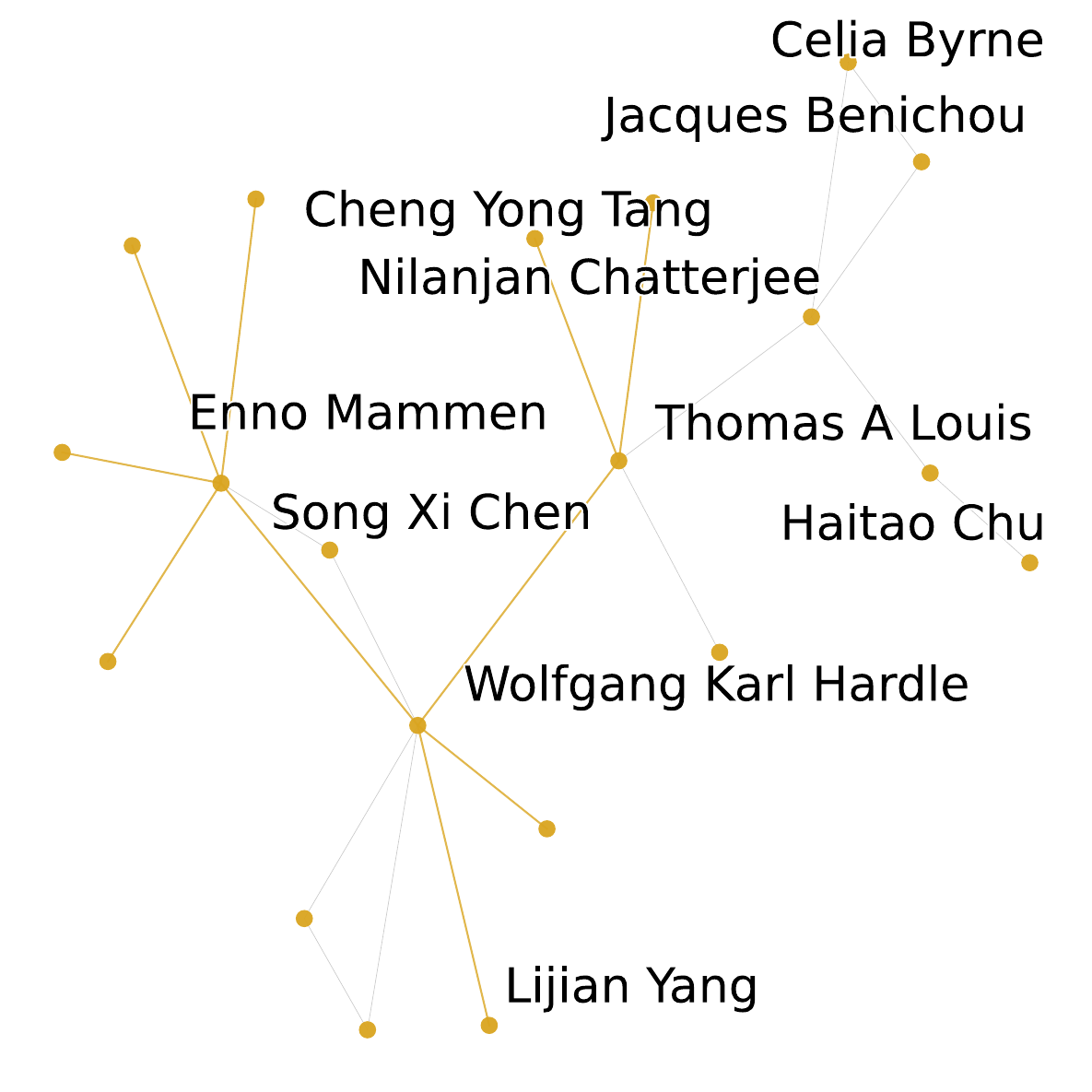}} % Add your image here
        \subcaption{Community 16}
    \end{subfigure}
    \hfill
    % First subfigure
    \begin{subfigure}[b]{0.3\linewidth}
        \centering
        {\includegraphics[width=\linewidth]{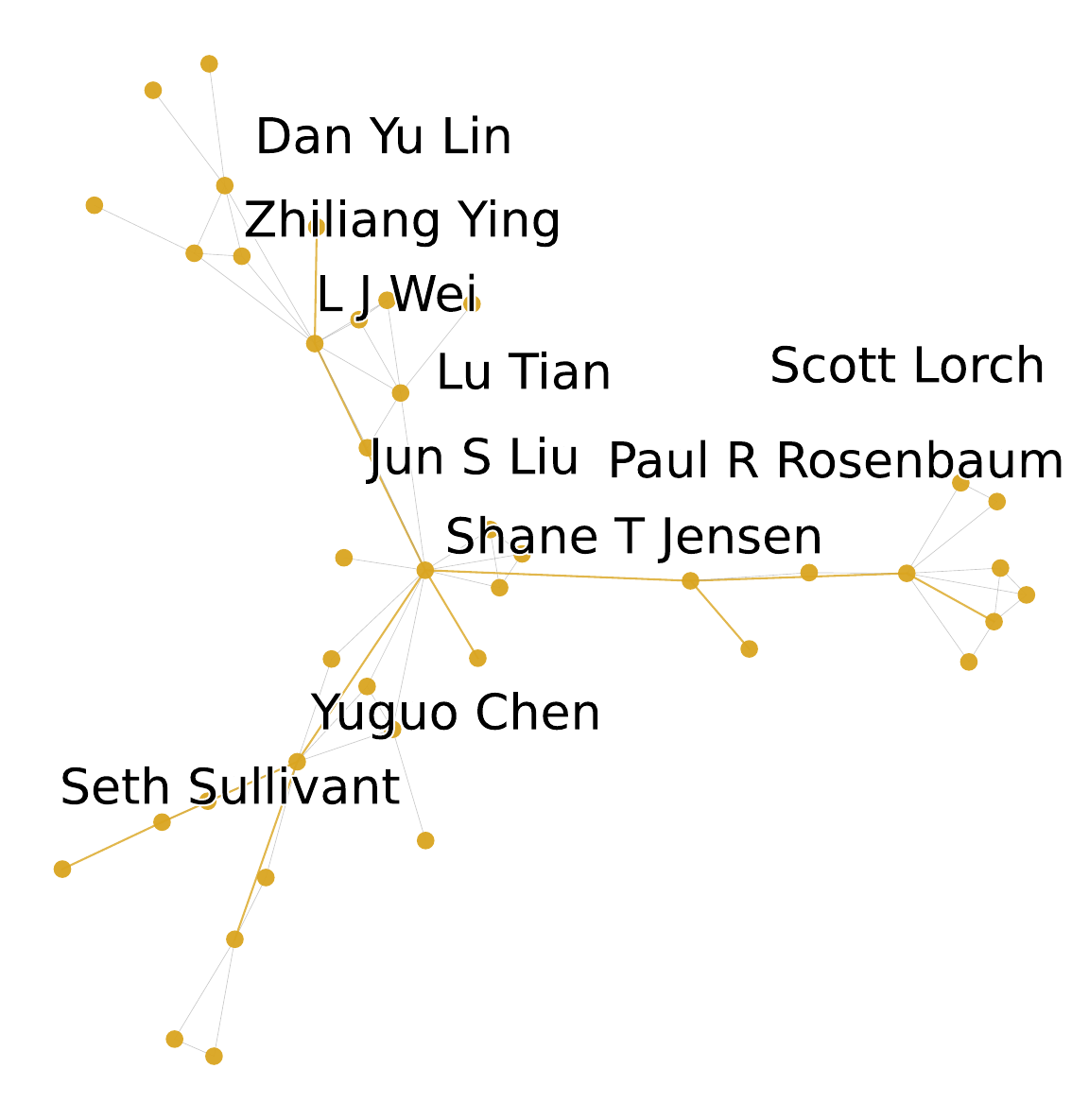}}
        \subcaption{Community 17 \\ (Biostatistics)}
    \end{subfigure}
    \hfill
    % Second subfigure
    \begin{subfigure}[b]{0.3\linewidth}
        \centering
        {\includegraphics[width=\linewidth]{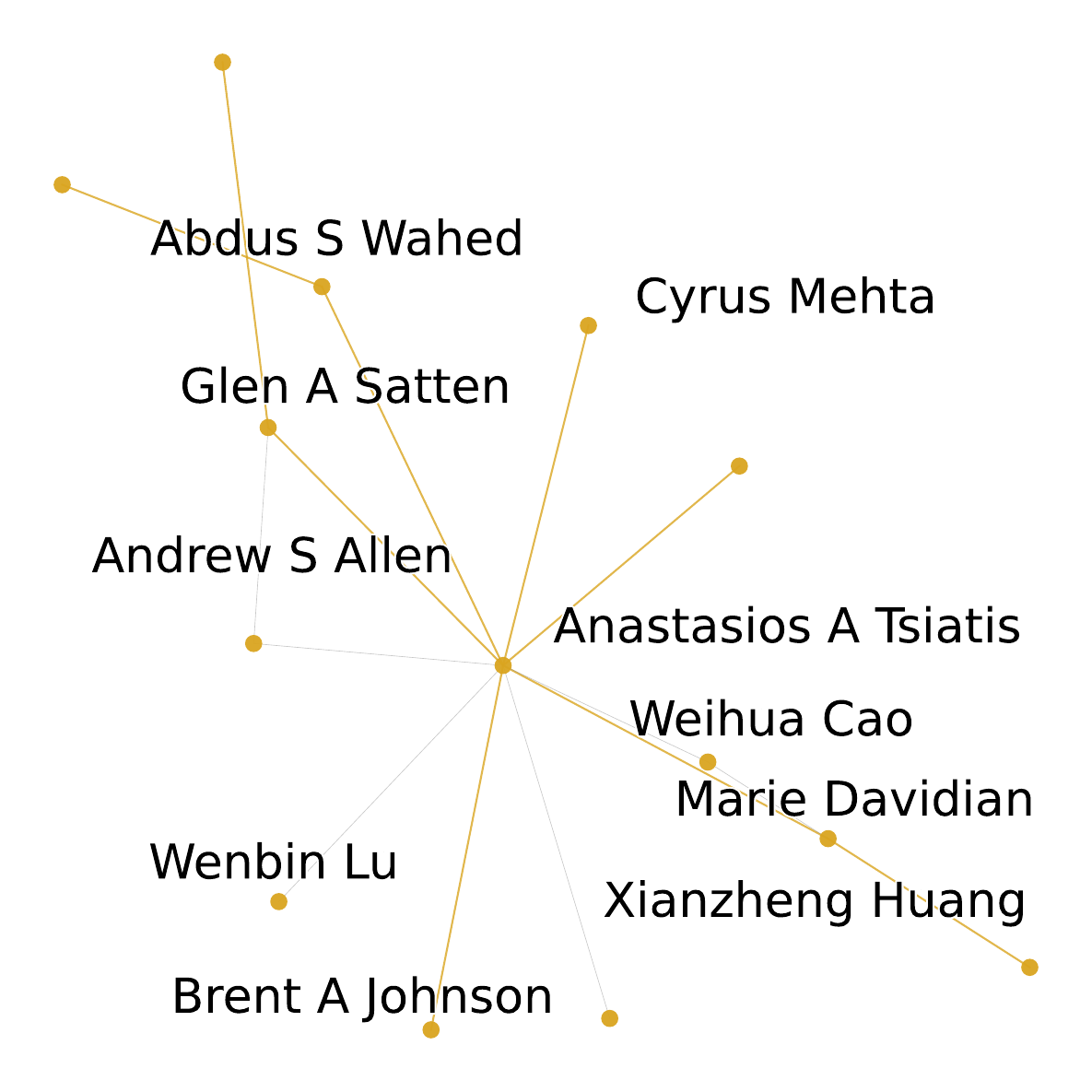}} % Add your image here
        \subcaption{Community 18 \\ (NCSU)}
    \end{subfigure}
    \hfill
    % Second subfigure
    \begin{subfigure}[b]{0.3\linewidth}
        \centering
        {\includegraphics[width=\linewidth]{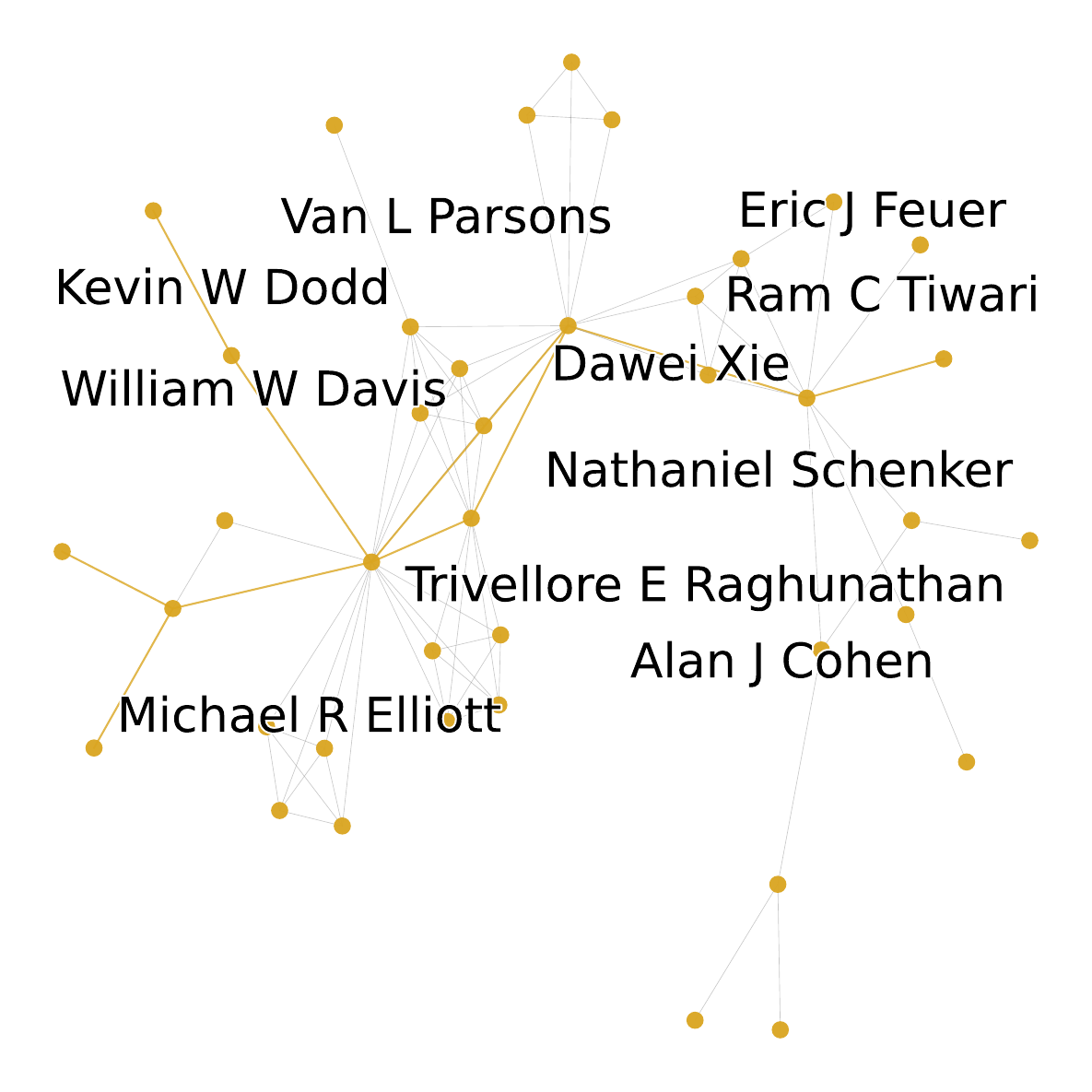}} % Add your image here
        \subcaption{Community 19 \\ (Biostatistics in government)}
    \end{subfigure}
    \hfill
    % Second subfigure
    \begin{subfigure}[b]{0.3\linewidth}
        \centering
        {\includegraphics[width=\linewidth]{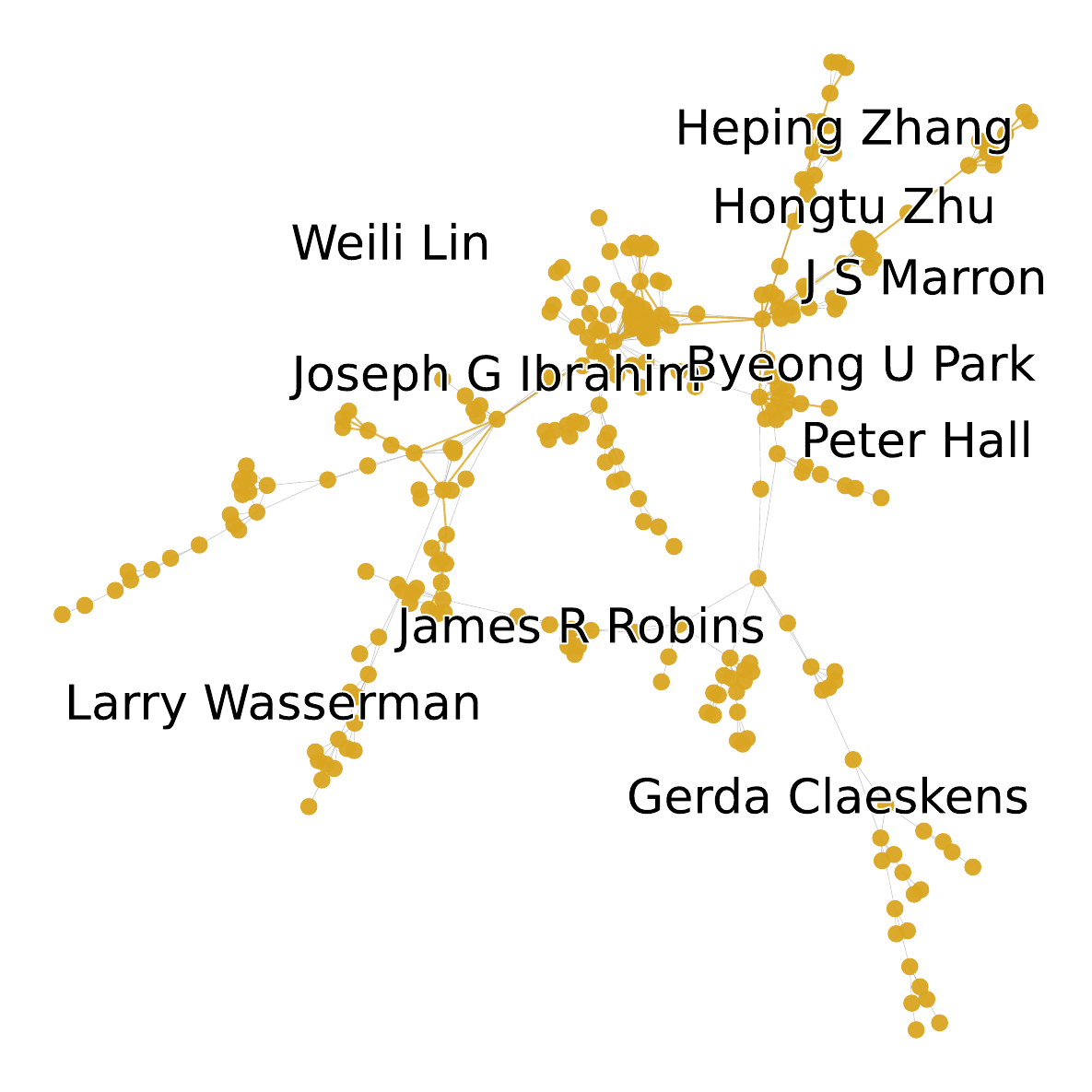}} % Add your image here
        \subcaption{Community 20}
    \end{subfigure}
    \caption{Communities obtained by distance recovery in the hierarchical procedure (Part II)}
    \label{fig: h clustering II}
\end{figure}

\newpage
\paragraph{Reproducibility.}
The code used to generate all simulation results and real-data analyses in this paper is publicly available at \texttt{https://github.com/JasonSmilingKnight/Forest-Community-Detection}. The repository also contains instructions and scripts for reproducing all numerical experiments and figures reported in the paper.

\section{Discussion}
This paper studies the planted forest model for networks formed by the growth of $K$ preferential attachment trees embedded in Erd\"{o}s--R\'{e}nyi noise and studies community detection on these randomly growing networks.
We propose a two-stage algorithm, SPAR, and establish local recovery guarantees with respect to several subsets of nodes, including early-arriving, high-degree, and layer-1 and layer-2 nodes, i.e. nodes within 1 or 2 distance away from the root nodes respectively. To the best of our knowledge, this is the first work on community detection with a rigorous theoretical guarantee for this class of randomly growing networks.

Several important questions remain open.
From a methodological perspective, an important challenge is estimating the community counts $K$, which is mostly assumed to be known in this paper. Although Remark~\ref{rem: K estimation 2} provides a consistent estimator, it requires careful tuning that is difficult to perform in practice. 
Developing a more stable estimator of $K$ would improve the practical applicability of the proposed method.
Another direction is to explore alternative pruning criteria in the first stage of SPAR, as mentioned in Remark~\ref{remark: different pruning criterion}.
Recent work on learning arrival-order-related statistics using graph neural networks \citep{xin2025toping,li2021propagation} may provide a promising avenue for improving the empirical performance of community detection within the framework of SPAR.  

From a theoretical perspective, an alternative to studying local community recovery is to replace the current definition of misclustering error with a criterion that assigns node-specific weights according to their structural importance in the network, such as arriving order. This may allow us to recover global consistency in community estimation, albeit with an alternative error metric. Another interesting problem is to allow for very small communities, e.g. communities whose size is $o(n)$. Recovering small communities may require modifying the error metric as well, possibly weighing the error on each community by the inverse of the community size. \\

\noindent \textbf{Acknowledgement:} This work is supported by the United States National Science Foundation grants DMS-2113671 and DMS-2311299, as well as the United States National Institute of Health grant 1R01GM157610-01.

\clearpage
\bibliographystyle{abbrvnat}
\bibliography{reference}      % Bibliography 

@article{wang2017likelihood,
  title={LIKELIHOOD-BASED MODEL SELECTION FOR STOCHASTIC BLOCK MODELS1},
  author={Wang, YX Rachel and Bickel, Peter J},
  journal={The Annals of Statistics},
  volume={45},
  number={2},
  pages={500--528},
  year={2017}
}

@article{jin2023optimal,
  title={Optimal estimation of the number of network communities},
  author={Jin, Jiashun and Ke, Zheng Tracy and Luo, Shengming and Wang, Minzhe},
  journal={Journal of the American Statistical Association},
  volume={118},
  number={543},
  pages={2101--2116},
  year={2023},
  publisher={Taylor \& Francis}
}

@article{ma2021determining,
  title={Determining the number of communities in degree-corrected stochastic block models},
  author={Ma, Shujie and Su, Liangjun and Zhang, Yichong},
  journal={Journal of machine learning research},
  volume={22},
  number={69},
  pages={1--63},
  year={2021}
}

@article{le2022estimating,
  title={Estimating the number of communities by spectral methods},
  author={Le, Can M and Levina, Elizaveta},
  journal={Electronic Journal of Statistics},
  volume={16},
  number={1},
  pages={3315--3342},
  year={2022},
  publisher={The Institute of Mathematical Statistics and the Bernoulli Society}
}

@article{kruskal1956shortest,
  title={On the shortest spanning subtree of a graph and the traveling salesman problem},
  author={Kruskal, Joseph B},
  journal={Proceedings of the American Mathematical society},
  volume={7},
  number={1},
  pages={48--50},
  year={1956},
  publisher={JSTOR}
}

@article{contat2024eve,
  title={Eve, Adam and the preferential attachment tree},
  author={Contat, Alice and Curien, Nicolas and Lacroix, Perrine and Lasalle, Etienne and Rivoirard, Vincent},
  journal={Probability Theory and Related Fields},
  volume={190},
  number={1},
  pages={321--336},
  year={2024},
  publisher={Springer}
}

@article{bubeck2017finding,
  title={Finding {A}dam in random growing trees},
  author={Bubeck, S{\'e}bastien and Devroye, Luc and Lugosi, G{\'a}bor},
  journal={Random Structures \& Algorithms},
  volume={50},
  number={2},
  pages={158--172},
  year={2017},
  publisher={Wiley Online Library}
}

@article{wang2016botnet,
  title={Botnet detection based on anomaly and community detection},
  author={Wang, Jing and Paschalidis, Ioannis Ch},
  journal={IEEE Transactions on Control of Network Systems},
  volume={4},
  number={2},
  pages={392--404},
  year={2016},
  publisher={IEEE}
}

@article{velickovic2018deep,
  title={Deep graph infomax},
  author={Velickovic, Petar and Fedus, William and Hamilton, William L and Li{\`o}, Pietro and Bengio, Yoshua and Hjelm, R Devon},
  journal={stat},
  volume={1050},
  pages={21},
  year={2018}
}

@article{de2020unveiling,
  title={Unveiling the hierarchical structure of music by multi-resolution community detection},
  author={de Berardinis, Jacopo and Vamvakaris, Michalis and Cangelosi, Angelo and Coutinho, Eduardo},
  journal={Transactions of the International Society for Music Information Retrieval},
  volume={3},
  number={1},
  pages={82--97},
  year={2020},
  publisher={Ubiquity Press, Ltd.}
}

@article{yu2007graph,
  title={Graph-based consensus clustering for class discovery from gene expression data},
  author={Yu, Zhiwen and Wong, Hau-San and Wang, Hongqiang},
  journal={Bioinformatics},
  volume={23},
  number={21},
  pages={2888--2896},
  year={2007},
  publisher={Oxford University Press}
}

@article{akbaritabar2020italian,
  title={Italian sociologists: A community of disconnected groups},
  author={Akbaritabar, Aliakbar and Traag, Vincent Antonio and Caimo, Alberto and Squazzoni, Flaminio},
  journal={Scientometrics},
  volume={124},
  number={3},
  pages={2361--2382},
  year={2020},
  publisher={Springer}
}

@article{traag2019louvain,
  title={From Louvain to Leiden: guaranteeing well-connected communities},
  author={Traag, Vincent A and Waltman, Ludo and Van Eck, Nees Jan},
  journal={Scientific reports},
  volume={9},
  number={1},
  pages={5233},
  year={2019},
  publisher={Nature Publishing Group UK London}
}

@article{krzakala2013spectral,
  title={Spectral redemption in clustering sparse networks},
  author={Krzakala, Florent and Moore, Cristopher and Mossel, Elchanan and Neeman, Joe and Sly, Allan and Zdeborov{\'a}, Lenka and Zhang, Pan},
  journal={Proceedings of the National Academy of Sciences},
  volume={110},
  number={52},
  pages={20935--20940},
  year={2013},
  publisher={National Academy of Sciences}
}

@article{shah2011rumors,
  title={Rumors in a network: Who's the culprit?},
  author={Shah, Devavrat and Zaman, Tauhid},
  journal={IEEE Transactions on information theory},
  volume={57},
  number={8},
  pages={5163--5181},
  year={2011},
  publisher={IEEE}
}

@book{durrett2019probability,
  title={Probability: theory and examples},
  author={Durrett, Rick},
  volume={49},
  year={2019},
  publisher={Cambridge university press}
}

@article{dong2022wavefront,
  title={Wavefront-based multiple rumor sources identification by multi-task learning},
  author={Dong, Ming and Zheng, Bolong and Li, Guohui and Li, Chenliang and Zheng, Kai and Zhou, Xiaofang},
  journal={IEEE Transactions on Emerging Topics in Computational Intelligence},
  volume={6},
  number={5},
  pages={1068--1078},
  year={2022},
  publisher={IEEE}
}

@article{li2021propagation,
  title={Propagation source identification of infectious diseases with graph convolutional networks},
  author={Li, Liang and Zhou, Jianye and Jiang, Yuewen and Huang, Biqing},
  journal={Journal of biomedical informatics},
  volume={116},
  pages={103720},
  year={2021},
  publisher={Elsevier}
}

@article{ji2022co,
  title={Co-citation and co-authorship networks of statisticians},
  author={Ji, Pengsheng and Jin, Jiashun and Ke, Zheng Tracy and Li, Wanshan},
  journal={Journal of Business \& Economic Statistics},
  volume={40},
  number={2},
  pages={469--485},
  year={2022},
  publisher={Taylor \& Francis}
}

@book{van2024random,
  title={Random graphs and complex networks},
  author={Van Der Hofstad, Remco},
  volume={2},
  year={2024},
  publisher={Cambridge university press}
}

@article{banerjee2023degree,
  title={Degree centrality and root finding in growing random networks},
  author={Banerjee, Sayan and Huang, Xiangying},
  journal={Electronic Journal of Probability},
  volume={28},
  pages={1--39},
  year={2023},
  publisher={The Institute of Mathematical Statistics and the Bernoulli Society}
}

@article{banerjee2022root,
  title={Root finding algorithms and persistence of Jordan centrality in growing random trees},
  author={Banerjee, Sayan and Bhamidi, Shankar},
  journal={The Annals of Applied Probability},
  volume={32},
  number={3},
  pages={2180--2210},
  year={2022},
  publisher={Institute of Mathematical Statistics}
}

@article{barabasi1999emergence,
  title={Emergence of scaling in random networks},
  author={Barab{\'a}si, Albert-L{\'a}szl{\'o} and Albert, R{\'e}ka},
  journal={science},
  volume={286},
  number={5439},
  pages={509--512},
  year={1999},
  publisher={American Association for the Advancement of Science}
}

@article{li2022hierarchical,
  title={Hierarchical community detection by recursive partitioning},
  author={Li, Tianxi and Lei, Lihua and Bhattacharyya, Sharmodeep and Van den Berge, Koen and Sarkar, Purnamrita and Bickel, Peter J and Levina, Elizaveta},
  journal={Journal of the American Statistical Association},
  volume={117},
  number={538},
  pages={951--968},
  year={2022},
  publisher={Taylor \& Francis}
}

@article{deng2024distributed,
  title={Distributed Pseudo-Likelihood Method for Community Detection in Large-Scale Networks},
  author={Deng, Jiayi and Huang, Danyang and Zhang, Bo},
  journal={ACM Transactions on Knowledge Discovery from Data},
  volume={18},
  number={7},
  pages={1--25},
  year={2024},
  publisher={ACM New York, NY}
}

@article{cline2007integration,
  title={Integration of biological networks and gene expression data using Cytoscape},
  author={Cline, Melissa S and Smoot, Michael and Cerami, Ethan and Kuchinsky, Allan and Landys, Nerius and Workman, Chris and Christmas, Rowan and Avila-Campilo, Iliana and Creech, Michael and Gross, Benjamin and others},
  journal={Nature protocols},
  volume={2},
  number={10},
  pages={2366--2382},
  year={2007},
  publisher={Nature Publishing Group UK London}
}

@article{ben2025inference,
  title={Inference in balanced community modulated recursive trees},
  author={Ben-Hamou, Anna and Velona, Vasiliki},
  journal={Bernoulli},
  volume={31},
  number={1},
  pages={457--483},
  year={2025},
  publisher={Bernoulli Society for Mathematical Statistics and Probability}
}

@article{hajek2019community,
  title={Community recovery in a preferential attachment graph},
  author={Hajek, Bruce and Sankagiri, Suryanarayana},
  journal={IEEE Transactions on Information Theory},
  volume={65},
  number={11},
  pages={6853--6874},
  year={2019},
  publisher={IEEE}
}

@article{cerqueira2023pseudo,
  title={A pseudo-likelihood approach to community detection in weighted networks},
  author={Cerqueira, Andressa and Levina, Elizaveta},
  journal={arXiv preprint arXiv:2303.05909},
  year={2023}
}

@article{amini2024hierarchical,
  title={Hierarchical stochastic block model for community detection in multiplex networks},
  author={Amini, Arash and Paez, Marina and Lin, Lizhen},
  journal={Bayesian Analysis},
  volume={19},
  number={1},
  pages={319--345},
  year={2024},
  publisher={International Society for Bayesian Analysis}
}

@article{zhen2023community,
  title={Community detection in general hypergraph via graph embedding},
  author={Zhen, Yaoming and Wang, Junhui},
  journal={Journal of the American Statistical Association},
  volume={118},
  number={543},
  pages={1620--1629},
  year={2023},
  publisher={Taylor \& Francis}
}

@article{xin2025toping,
  title={TopInG: Topologically Interpretable Graph Learning via Persistent Rationale Filtration},
  author={Xin, Cheng and Xu, Fan and Ding, Xin and Gao, Jie and Ding, Jiaxin},
  journal={arXiv preprint arXiv:2510.05102},
  year={2025}
}

@article{yanchenko2025statistical,
  title={Statistical inference for core-periphery structures},
  author={Yanchenko, Eric and Sengupta, Srijan and Mukherjee, Diganta},
  journal={arXiv preprint arXiv:2508.04730},
  year={2025}
}

@article{zhang2015identification,
  title={Identification of core-periphery structure in networks},
  author={Zhang, Xiao and Martin, Travis and Newman, Mark EJ},
  journal={Physical Review E},
  volume={91},
  number={3},
  pages={032803},
  year={2015},
  publisher={APS}
}

@article{naik2021sparse,
  title={Sparse networks with core-periphery structure},
  author={Naik, Cian and Caron, Fran{\c{c}}ois and Rousseau, Judith},
  journal={Electronic Journal of Statistics},
  volume={15},
  pages={1814--1868},
  year={2021}
}

@article{miao2023informative,
  title={Informative core identification in complex networks},
  author={Miao, Ruizhong and Li, Tianxi},
  journal={Journal of the Royal Statistical Society Series B: Statistical Methodology},
  volume={85},
  number={1},
  pages={108--126},
  year={2023},
  publisher={Oxford University Press US}
}

@article{senizergues2021geometry,
  title={Geometry of weighted recursive and affine preferential attachment trees},
  author={S{\'e}nizergues, Delphin},
  journal={Electronic Journal of Probability},
  volume={26},
  pages={1--56},
  year={2021},
  publisher={The Institute of Mathematical Statistics and the Bernoulli Society}
}

@article{rudas2007random,
  title={Random trees and general branching processes},
  author={Rudas, Anna and T{\'o}th, B{\'a}lint and Valk{\'o}, Benedek},
  journal={Random Structures \& Algorithms},
  volume={31},
  number={2},
  pages={186--202},
  year={2007},
  publisher={Wiley Online Library}
}

@article{crane2021inference,
  title={Inference on the history of a randomly growing tree},
  author={Crane, Harry and Xu, Min},
  journal={Journal of the Royal Statistical Society Series B: Statistical Methodology},
  volume={83},
  number={4},
  pages={639--668},
  year={2021},
  publisher={Oxford University Press}
}

@article{pekoz2017joint,
  title={Joint degree distributions of preferential attachment random graphs},
  author={Pek{\"o}z, Erol and R{\"o}llin, Adrian and Ross, Nathan},
  journal={Advances in Applied Probability},
  volume={49},
  number={2},
  pages={368--387},
  year={2017},
  publisher={Cambridge University Press}
}

@article{crane2024root,
  title={Root and community inference on the latent growth process of a network},
  author={Crane, Harry and Xu, Min},
  journal={Journal of the Royal Statistical Society Series B: Statistical Methodology},
  volume={86},
  number={4},
  pages={825--865},
  year={2024},
  publisher={Oxford University Press US}
}

@article{ji2016coauthorship,
  title={COAUTHORSHIP AND CITATION NETWORKS FOR STATISTICIANS},
  author={Ji, Pengsheng and Jin, Jiashun},
  journal={The Annals of Applied Statistics},
  pages={1779--1812},
  year={2016},
  publisher={JSTOR}
}

@article{xu2020optimal,
  title={Optimal rates for community estimation in the weighted stochastic block model},
  author={Xu, Min and Jog, Varun and Loh, Po-Ling},
  journal={The Annals of Statistics},
  volume={48},
  number={1},
  pages={183--204},
  year={2020},
  publisher={JSTOR}
}

@inproceedings{dawkins2021diffusion,
  title={Diffusion source identification on networks with statistical confidence},
  author={Dawkins, Quinlan E and Li, Tianxi and Xu, Haifeng},
  booktitle={International Conference on Machine Learning},
  pages={2500--2509},
  year={2021},
  organization={PMLR}
}

@article{addario2025leaf,
  title={Leaf stripping on uniform attachment trees},
  author={Addario-Berry, Louigi and Brandenberger, Anna and Briend, Simon and Broutin, Nicolas and Lugosi, G{\'a}bor},
  journal={Random Structures \& Algorithms},
  volume={67},
  number={1},
  pages={e70023},
  year={2025},
  publisher={Wiley Online Library}
}

\newpage

\clearpage

\setcounter{section}{0}
\setcounter{equation}{0}
\def\theequation{S\arabic{section}.\arabic{equation}}
\def\thesection{S\arabic{section}}
\def\thetheorem{S\arabic{theorem}}

\renewcommand\thefigure{S\arabic{figure}} % Changing the figure numbering
\setcounter{figure}{0} % Resetting the figure counter

\renewcommand\thetable{S\arabic{table}} % Changing the table numbering
\setcounter{table}{0} % Resetting the table counter

\begingroup
\centering
{\LARGE\bf Supplementary Material for Community Detection on a Randomly Growing Network\par}
\par
\endgroup

\vspace{1em}

\label{sec: appendix for algorithm}

\section{Proofs for Section~\ref{sec: model}}
\label{sec: appendix model}
Before proving Theorem~\ref{thm: impossible}, we introduce some notation to simplify the proof process. 

Let $\boldsymbol{G}_{n}:=\bm{F}_n+\bm{R}_n$ generated from $\mathrm{PF}(\alpha, \theta, \pi, 2)$.
Define the parent function $\operatorname{pa}: V(\boldsymbol{F}_{n}) \to V(\boldsymbol{F}_{n})$ such that $\operatorname{pa}(v) = u$ indicates that node $v$ is attached to node $u$ during the generation process of $\boldsymbol{F}_{n}$, and we denote $\operatorname{pa}\left(\pi_1\right):=\pi_1$ as a convention.

For $t \in [n]$, we let $(e_1, \ldots, e_{h_t})$ be the set of noise edges incident between the new node $\pi_t$ and previous nodes $\pi_{1:(t-1)}$; note that the number of such edges $h_t$ is also random. Conditional on $h_t$, we define a random node $q(\pi_t)$ where we choose an edge $e'$ in $(e_1, \ldots, e_{h_t})$ uniformly at random and let $q(\pi_t)$ be the endpoint of $e'$ that is in $\pi_{1:(t-1)}$. If $h_t = 0$, then we let $q(\pi_t) = \emptyset$. 

%We further assume that noise edges arrive sequentially with each new node. 
%Specifically, when node $\pi_t$ arrives, we first generate the total number of noise edges between $\pi_t$ and the existing node set $\pi_{1:\left(t-1\right)}$. 
%These noise edges are then attached one by one between $\pi_t$ and nodes in $\pi_{1:\left(t-1\right)}$.

%Let the $s$-th noise-edge parent of node $\pi_t$ be denoted by $\operatorname{pa-N-s}(\pi_t)$.
%If the total number of noise edges incident to $\pi_t$ is $h_t$, we denote the full noise-parent sequence by
%$\operatorname{pa-N}(\pi_t):=\left(\operatorname{pa-N-1}(\pi_t),\ldots,\operatorname{pa-N-h_t}(\pi_t)\right)$.

For an integer $t \in [n]$, denote by $\boldsymbol{T}^1_t$ and $\boldsymbol{T}^2_t$ the two tree components of the forest $\boldsymbol{F}_{n}$ at time $t$.
With $v_1, v_2 \in \pi_{1:(t-1)}$, 
we define the ``balance attachment" event $\mathcal{A}_t$ as follows:
\begin{align}
\nonumber
\mathcal{A}_t\bigl(\left\{v_1, v_2\right\}\bigr)& := \bigl\{ v_1 \in V(\bm{T}_{t-1}^1), v_2 \in V(\bm{T}_{t-1}^2),  \left\{\operatorname{pa}(\pi_t), q(\pi_t)\right\}=\{v_1, v_2\}, \\
\nonumber
&\quad \quad\operatorname{pa}(\pi_s) \notin \{v_1, v_2\},\,\, \forall s \neq t \bigr\} \\
\label{eqn: balance attachment}
\mathcal{A}_t &:= \cup_{\left\{v_1, v_2\right\} \in \pi_{1:(t-1)}} \mathcal{A}_t\bigl(\left\{v_1, v_2\right\}\bigr). 
\end{align}
The event $\mathcal{A}_t$ collects all growth histories for which the $t$-th node encounters an ambiguous attachment: the parent and a noise endpoint of $\pi_t$ are two candidate nodes $v_1$ and $v_2$ from the two trees, respectively. Under this event, either node could serve as the parent of $\pi_t$, and neither $v_1$ nor $v_2$ receives any additional child nodes in the growth process. We note that if $\{v_1, v_2\} \neq \{v'_1, v'_2\}$, then $\mathcal{A}_t(\{v_1, v_2\})$ and $\mathcal{A}_t(\{v_1', v_2'\})$ are disjoint. 

\begin{proof}[Proof of Theorem~\ref{thm: impossible}]

Let $\bar\sigma\in \operatorname{argmin}_{\sigma \in S_2} n^{-1}\sum_{t=1}^{\lceil n/2\rceil}\mathbbm{1}\left\{\hat{\ell}\left(\pi_t\right)\neq \sigma\circ \ell\left(\pi_t\right)\right\}$.
By Definition~\ref{defin mismatch}, we have
\begin{equation}
\label{thm impossible: eqn1}
\begin{aligned}
\mathbb{E}\biggl[\frac{d\bigl(\hat{\ell},\ell\bigr)}{n}\biggr]
&=
\mathbb{E}\biggl[
\min_{\sigma \in S_2}
n^{-1}\sum_{u\in V\bigl(\bm{G}_n\bigr)}
\mathbbm{1}\Bigl\{\hat{\ell}\bigl(u\bigr)\neq \sigma\circ \ell\bigl(u\bigr)\Bigr\}
\biggr] \\
&=
\mathbb{E}\biggl[
\min_{\sigma \in S_2}
n^{-1}\sum_{u\in \pi_{1:n}}
\mathbbm{1}\Bigl\{\hat{\ell}\bigl(u\bigr)\neq \sigma\circ \ell\bigl(u\bigr)\Bigr\}
\biggr] \\
% &\stackrel{(a)}{=}
% \mathbb{E}\biggl[
% \mathbb{E}\biggl[
% \min_{\sigma \in S_2}
% n^{-1}\sum_{t=1}^n
% \mathbbm{1}\Bigl\{\hat{\ell}\bigl(\pi_t\bigr)\neq \sigma\circ\ell\bigl(\pi_t\bigr)\Bigr\}
% \,\bigg|\,
% \ell\bigl(\pi_{1:\lceil n/2\rceil}\bigr)
% \biggr]
% \biggr] \\
% &\stackrel{(b)}{\geq}
% \mathbb{E}\biggl[
% \mathbb{E}\biggl[
% \min\Biggl\{
% n^{-1}\sum_{t=\lceil n/2\rceil+1}^n
% \mathbbm{1}\Bigl\{\hat{\ell}\bigl(\pi_t\bigr)\neq \bar{\sigma}\circ\ell\bigl(\pi_t\bigr)\Bigr\},
% \frac{1}{4}
% \Biggr\}
% \,\bigg|\,
% \ell\bigl(\pi_{1:\lceil n/2\rceil}\bigr)
% \biggr]
% \biggr] \\
&\stackrel{(a)}{\geq}
\mathbb{E}\biggl[
\min\Biggl\{
n^{-1}\sum_{t=\lceil n/2\rceil+1}^n
\mathbbm{1}\Bigl\{\hat{\ell}\bigl(\pi_t\bigr)\neq \bar{\sigma}\circ\ell\bigl(\pi_t\bigr)\Bigr\},
\frac{1}{4}
\Biggr\}
\biggr] \\
&\stackrel{(b)}{\geq}
\frac{1}{4}
\mathbb{E}\biggl[
n^{-1}\sum_{t=\lceil n/2\rceil+1}^n
\mathbbm{1}\Bigl\{\hat{\ell}\bigl(\pi_t\bigr)\neq \bar{\sigma}\circ\ell\bigl(\pi_t\bigr)\Bigr\}
\biggr]
\min\Biggl\{
\mathbb{E}\biggl[
n^{-1}\sum_{t=\lceil n/2\rceil+1}^n
\mathbbm{1}\Bigl\{\hat{\ell}\bigl(\pi_t\bigr)\neq \bar{\sigma}\circ\ell\bigl(\pi_t\bigr)\Bigr\}
\biggr],
\frac{1}{2}
\Biggr\}\\
&\stackrel{(c)}{\geq}
\frac{1}{12}
\min_{t > \lceil n/2 \rceil}\mathbb{P}\left(\hat{\ell}\left(\pi_{t}\right)\neq \bar{\sigma}\circ \ell\left(\pi_{t}\right)\right)
\min\Biggl\{\frac{1}{3}
\min_{t > \lceil n/2 \rceil}\mathbb{P}\left(\hat{\ell}\left(\pi_{t}\right)\neq \bar{\sigma}\circ \ell\left(\pi_{t}\right)\right),
\frac{1}{2}
\Biggr\}\\
&\stackrel{(d)}{\geq} \frac{1}{36}
\Bigl(\min_{t > \lceil n/2 \rceil}\mathbb{P}\left(\hat{\ell}\left(\pi_{t}\right)\neq \bar{\sigma}\circ \ell\left(\pi_{t}\right)\right)\Bigr)^2.
\end{aligned}
\end{equation}
where inequality (a) follows from Lemma~\ref{lemma: split alignment}, inequality (b) follows from Lemma~\ref{lemma: truncated lower bound}, inequality (c) follows from the fact that the integer set
$[\lceil n/2\rceil+1,\, n]$ contains more than $n/3$ integers when $n \geq 4$, and
inequality (d) follows from the fact that $\frac{1}{3} \mathbb{P}\left(\hat{\ell}\left(\pi_{t}\right)\neq \bar{\sigma}\circ \ell\left(\pi_{t}\right)\right)\le \frac{1}{3}<\frac{1}{2}$.

% where the identities (a) and (c) follow from the tower property of conditional expectation.
% Step (b) and Step (d) follow from Lemma~\ref{lemma: split alignment} and Lemma~\ref{lemma: truncated lower bound}, respectively.
% Step (e) follows from the fact that the integer set
% $[\lceil n/2\rceil+1,\, n]$ contains more than $n/3$ integers when $n \geq 4$.
% Step (f) follows from the fact that $\frac{1}{3} \mathbb{P}\left(\hat{\ell}\left(\pi_{t}\right)\neq \bar{\sigma}\circ \ell\left(\pi_{t}\right)\right)\le \frac{1}{3}<\frac{1}{2}$.
% }

For a fixed integer $t$ satisfying $t \ge \lceil n/2\rceil$,
let the $\lambda$ be chosen as Lemma~\ref{lemma balance attachment set bound}. 
Recall the definition of a ``balance attachment" event $\mathcal{A}_t$ in \eqref{eqn: balance attachment}. Now, we lower bound $\mathbb{P}\left(\hat{\ell}\left(\pi_{t}\right)\neq 
\bar{\sigma}\circ l\left(\pi_{t}\right)\right)$ by restricting ourselves to the event $\mathcal{A}_t$:
\begin{equation}
\label{thm: impossible eqn2}
\begin{aligned}
&\mathbb{P}(\hat{\ell}(\pi_t) \neq \bar{\sigma} \circ\ell(\pi_t)) \geq  \mathbb{P}\left(\left\{\hat{\ell}(\pi_t) \neq \bar{\sigma} \circ \ell(\pi_t) \right\}\cap \mathcal{A}_t\right) \\
& \stackrel{(a)}{=} 
\sum_{\{v_1, v_2\} \in \pi_{1:(t-1)}} \mathbb{P}\left(\left\{\hat{\ell}(\pi_t) \neq \bar{\sigma} \circ \ell(\pi_t) \right\}\cap \mathcal{A}_t\left(\{v_1, v_2\}\right)\right) \\
&\stackrel{(b)}{=} \sum_{\{v_1, v_2\} \in \pi_{1:(t-1)}} \mathbb{E}\biggl[ \mathbb{E}\biggl(  \mathbbm{1}\{\hat{\ell}(\pi_t) \neq \bar{\sigma} \circ\ell(\pi_t) \} \mathbbm{1}_{\mathcal{A}_t\left( \{v_1, v_2\}\right)} \,\bigg|  \left\{\operatorname{pa}(\pi_t), q(\pi_t)\right\}=\{v_1, v_2\},\, \{\text{pa}(\pi_s)\}_{s \in [n]\backslash{t}}, \bm{G}_n \biggr)  \biggr]\\
& \stackrel{(c)}{=} \sum_{\{v_1, v_2\} \in \pi_{1:(t-1)}} \mathbb{E}\biggl[ \mathbb{P}\biggl(  \hat{\ell}(\pi_t) \neq \bar{\sigma} \circ\ell(\pi_t) \,\bigg|  \left\{\operatorname{pa}(\pi_t), q(\pi_t)\right\}=\{v_1, v_2\},\, \{\text{pa}(\pi_s)\}_{s \in [n]\backslash{t}}, \bm{G}_n \biggr) \mathbbm{1}_{\mathcal{A}_t\left( \{v_1, v_2\}\right)} \biggr] \\
& \stackrel{(d)}{\geq } \sum_{\{v_1, v_2\} \in \pi_{1:(t-1)}} \mathbb{E}\biggl[ \min_{i \in \{1, 2\}} \mathbb{P}\biggl(\text{pa}(\pi_t) = v_i \,\bigg| \left\{\operatorname{pa}(\pi_t), q(\pi_t)\right\}=\{v_1, v_2\},\, \{\text{pa}(\pi_s)\}_{s \in [n]\backslash{t}}, \bm{G}_n \biggr) \mathbbm{1}_{\mathcal{A}_t(\{v_1, v_2\})} \biggr] \\
& \stackrel{(e)}{=} \sum_{\{v_1, v_2\} \in \pi_{1:(t-1)}} \frac{1}{2} \mathbb{E}\bigl[\mathbbm{1}_{\mathcal{A}_t(\{v_1, v_2\})} \bigr]  \stackrel{(f)}{=} \frac{1}{2} \mathbb{P}( \mathcal{A}_t),
\end{aligned}
\end{equation}
where the identities (a) and (f) follow from the fact that the events $\mathcal{A}_t(\{v_1, v_2\})$ are disjoint for different unordered pairs $\{v_1, v_2\}$. 
Step (b) follows from the tower property of conditional expectation. Step (c) follows because $\mathbbm{1}_{\mathcal{A}_t(\{v_1, v_2\})}$ is fixed after we condition on $\left\{\operatorname{pa}(\pi_t), q(\pi_t)\right\}=\{v_1, v_2\},\, \{\text{pa}(\pi_s)\}_{s \in [n]\backslash{t}}, \bm{G}_n$. 
For (d), the estimator $\hat{\ell}(\pi_t)$ is fixed given the observed graph $\bm{G}_n$, whereas the random true label $l(\pi_t)$ depends only on whether $\operatorname{pa}(\pi_t) = v_1$ or $\operatorname{pa}(\pi_t) = v_2$, 
The random permutation $\bar \sigma$ is also fixed conditional on ${\ell}(\pi_{1: \lceil n/2 \rceil})$ (which is fully determined by $\{\text{pa}(\pi_t)\}_{t=1}^{\lceil n/2 \rceil}$) and on $\bm{G}_n$  by its definition.
For equality (e), we show that the two conditional probabilities corresponding to $v_1$ and $v_2$ are equal on the event $\mathcal{A}_t(\{v_1, v_2\})$.  
By Bayes' rule and the attachment probability in \eqref{defin: attachment rule 2}, we have
\begin{align}
&\mathbb{P}\biggl(\text{pa}(\pi_t) = v_1 \,\bigg| \left\{\operatorname{pa}(\pi_t), q(\pi_t)\right\}=\{v_1, v_2\},\, \{\text{pa}(\pi_s)\}_{s \in [n]\backslash{t}}, \bm{G}_n \biggr)\mathbbm{1}_{\mathcal{A}_t(\{v_1, v_2\})} \label{eq:pa_prob_overall} \\
&=\frac{\mathbb{P}\biggl(\text{pa}(\pi_t) = v_1, \, \left\{\operatorname{pa}(\pi_t), q(\pi_t)\right\}=\{v_1, v_2\},\, \{\text{pa}(\pi_s)\}_{s \in [n]\backslash{t}}, \bm{G}_n \biggr)}{\mathbb{P}\biggl(\left\{\operatorname{pa}(\pi_t), q(\pi_t)\right\}=\{v_1, v_2\},\, \{\text{pa}(\pi_s)\}_{s \in [n]\backslash{t}}, \bm{G}_n \biggr)} \mathbbm{1}_{\mathcal{A}_t(\{v_1, v_2\})} \nonumber  \\
&=\frac{\mathbb{P}\biggl(\text{pa}(\pi_t) = v_1, \, q(\pi_t)=v_2,\, \{\text{pa}(\pi_s)\}_{s<t}\biggr)\mathbb{P}\biggl(\bm{G}_n, \{\text{pa}(\pi_s)\}_{s>t} \,\bigg| \text{pa}(\pi_t) = v_1, \{\text{pa}(\pi_s)\}_{s<t}\biggr)}{\mathbb{P}\biggl(\left\{\operatorname{pa}(\pi_t), q(\pi_t)\right\}=\{v_1, v_2\},\, \{\text{pa}(\pi_s)\}_{s \in [n]\backslash{t}}, \bm{G}_n \biggr)} \mathbbm{1}_{\mathcal{A}_t(\{v_1, v_2\})} \nonumber
\end{align}
We first see that 
\begin{align}
&\mathbb{P}\bigl( \operatorname{pa}(\pi_t) = v_1, q(\pi_t) = v_2, \{ \operatorname{pa}(\pi_s) \}_{s < t} \bigr) \mathbbm{1}_{\mathcal{A}_t(\{v_1, v_2\})}  \nonumber \\
&= \mathbb{P}\bigl( q(\pi_t) = v_2 \,\big|\, \operatorname{pa}(\pi_t) = v_1,  \{ \operatorname{pa}(\pi_s) \}_{s < t} \bigr) 
\mathbb{P}\bigl( \operatorname{pa}(\pi_t) = v_1 \,|\, \{ \operatorname{pa}(\pi_s) \}_{s < t} \bigr) \nonumber \\
&\qquad \qquad \mathbb{P}( \{\operatorname{pa}(\pi_s) \}_{s < t} ) \mathbbm{1}_{\mathcal{A}_t(\{v_1, v_2\})}\label{eq:pa_prob_decomp1}
\end{align}
On the event $\mathcal{A}_t(\{v_1, v_2\})$, we have that $h_t > 0$ so that the first term of~\eqref{eq:pa_prob_decomp1} is equal to $\frac{1}{t-2}$ by symmetry. The second term is equal to $\frac{\alpha}{2  (t-2) + \alpha ( t - 1)}$. The third term does not depend on $v_1, v_2$. Therefore, we have that
\begin{align}
&\mathbb{P}\bigl( \operatorname{pa}(\pi_t) = v_1, q(\pi_t) = v_2, \{ \operatorname{pa}(\pi_s) \}_{s < t} \bigr) \mathbbm{1}_{\mathcal{A}_t(\{v_1, v_2\})} \\
&\qquad \qquad =  
\mathbb{P}\bigl( \operatorname{pa}(\pi_t) = v_2, q(\pi_t) = v_1, \{ \operatorname{pa}(\pi_s) \}_{s < t} \bigr) \mathbbm{1}_{\mathcal{A}_t(\{v_1, v_2\})}.
\end{align}

Next, we observe that 
\begin{align}
& \mathbb{P}\biggl(\bm{G}_n, \{\text{pa}(\pi_s)\}_{s>t} \,\bigg| \text{pa}(\pi_t) = v_1, \{\text{pa}(\pi_s)\}_{s<t}\biggr)\mathbbm{1}_{\mathcal{A}_t(\{v_1, v_2\})} \nonumber \\
& = \mathbb{P}\bigl( \bm{G}_n \,|\, \{\operatorname{pa}(\pi_s)\}_{s \in [n] \backslash \{t\}}, \operatorname{pa}(\pi_t) = v_1 \bigr) \mathbb{P}\bigl( \{\text{pa}(\pi_s)\}_{s>t} \,\big|\, \text{pa}(\pi_t) = v_1, \{\text{pa}(\pi_s)\}_{s<t}\bigr)\mathbbm{1}_{\mathcal{A}_t(\{v_1, v_2\})}
\label{eq:pa_prob_decomp2}
\end{align}

Looking at the first term of~\eqref{eq:pa_prob_decomp2}, we have that
\[
\mathbb{P}\bigl( \bm{G}_n \,|\, \{\operatorname{pa}(\pi_s)\}_{s \in [n] \backslash \{t\}}, \operatorname{pa}(\pi_t) = v_1 \bigr) \mathbbm{1}_{\mathcal{A}_t(\{v_1, v_2\})} = \mathbb{P}\bigl( \bm{G}_n \,|\, \{\operatorname{pa}(\pi_s)\}_{s \in [n] \backslash \{t\}}, \operatorname{pa}(\pi_t) = v_2 \bigr) \mathbbm{1}_{\mathcal{A}_t(\{v_1, v_2\})}
\]
since the probability is determined by the Erd\H{o}s--R\'enyi random edges and since, on the event $\mathcal{A}_t(\{v_1, v_2\})$, both edges $(\pi_t, v_1)$ and $(\pi_t, v_2)$ are present in $\bm{G}_n$. 

Similarly, for the second term of~\eqref{eq:pa_prob_decomp2}, we also have
\begin{align*}
&\mathbb{P}\bigl( \{\text{pa}(\pi_s)\}_{s>t} \,\big|\, \text{pa}(\pi_t) = v_1, \{\text{pa}(\pi_s)\}_{s<t}\bigr)\mathbbm{1}_{\mathcal{A}_t(\{v_1, v_2\})}\\
& \qquad \qquad = \mathbb{P}\bigl( \{\text{pa}(\pi_s)\}_{s>t} \,\big|\, \text{pa}(\pi_t) = v_2, \{\text{pa}(\pi_s)\}_{s<t}\bigr)\mathbbm{1}_{\mathcal{A}_t(\{v_1, v_2\})}
\end{align*}
because, on the event $\mathcal{A}_t(\{v_1, v_2\})$, $\text{pa}(\pi_s) \notin \{v_1, v_2\}$ for all $s > t$. 

Returning to~\eqref{eq:pa_prob_overall}, we thus have that
\begin{equation*}
\begin{aligned}
&\mathbb{P}\biggl(\text{pa}(\pi_t) = v_1 \,\bigg| \left\{\operatorname{pa}(\pi_t), q(\pi_t)\right\}=\{v_1, v_2\},\, \{\text{pa}(\pi_s)\}_{s \in [n]\backslash{t}}, \bm{G}_n \biggr) \mathbbm{1}_{\mathcal{A}_t(\{v_1, v_2\})} \\
&=\mathbb{P}\biggl(\text{pa}(\pi_t) = v_2 \,\bigg| \left\{\operatorname{pa}(\pi_t), q(\pi_t)\right\}=\{v_1, v_2\},\, \{\text{pa}(\pi_s)\}_{s \in [n]\backslash{t}}, \bm{G}_n \biggr) \mathbbm{1}_{\mathcal{A}_t(\{v_1, v_2\})} .    
\end{aligned}
\end{equation*}
Since these two probabilities sum to $\mathbbm{1}_{\mathcal{A}_t(\{v_1, v_2\})}$, each must equal $\frac{1}{2} \mathbbm{1}_{\mathcal{A}_t(\{v_1, v_2\})}$, which proves (e).

%follows from the symmetry of the construction conditional on the information that 
%$\left\{\operatorname{pa}(\pi_t), \operatorname{pa-N-1}(\pi_t)\right\}= \{v_1,v_2\}$.

% {
% \color{red}
% By Lemma~\ref{lemma balance attachment set bound}, we already have $\mathbb{P}\left( \mathcal{A}_t\right)\geq 6\lambda$.
% Together with \eqref{thm impossible: eqn1} and \eqref{thm: impossible eqn2}, we have
% %
% \begin{equation*}
% \mathbb{E}\biggl[\frac{d^{\text{Ham}}\bigl(\hat{\ell},\sigma\circ \ell\bigr)}{n}\biggr]\geq \frac{1}{3}  \min_{t\geq \lceil n/2 \rceil} \mathbb{P}\left(\left\{\hat{\ell}(\pi_t) \neq \sigma \circ \ell(\pi_t) \right\}\cap \mathcal{A}_t\right)\geq \frac{1}{3}\times \frac{1}{2} \times 6\lambda =\lambda.
% \end{equation*}
% %
% }

By Lemma~\ref{lemma balance attachment set bound}, we already have $\mathbb{P}\left( \mathcal{A}_t\right)\geq 6\lambda^{1/2}$.
Together with \eqref{thm impossible: eqn1} and \eqref{thm: impossible eqn2}, we have
\begin{equation*}
\begin{aligned}
\mathbb{E}\biggl[\frac{d\bigl(\hat{\ell},\ell\bigr)}{n}\biggr]&\geq
\frac{1}{36}
\Bigl(\min_{t > \lceil n/2 \rceil}\mathbb{P}\left(\hat{\ell}\left(\pi_{t}\right)\neq \bar{\sigma}\circ \ell\left(\pi_{t}\right)\right)\Bigr)^2\\
&\geq \frac{1}{36}
\Bigl(\min_{t > \lceil n/2 \rceil}\mathbb{P}\left(\left\{\hat{\ell}\left(\pi_{t}\right)\neq \bar{\sigma}\circ \ell\left(\pi_{t}\right)\right\}\cap \mathcal{A}_t\right)\Bigr)^2\\
&\geq \frac{1}{36}\times 36\lambda=\lambda.
\end{aligned}
\end{equation*}

The result follows as desired.
\end{proof}

\begin{lemma}
\label{lemma: split alignment}
Let $(a_1,\ldots,a_n),\, (b_1,\ldots,b_n)\in\{1,2\}^n$, and $\bar{\sigma}
\in
\arg\min_{\sigma\in S_2}
\sum_{t=1}^{\lceil n/2\rceil}
\mathbbm{1}\{a_t\neq\sigma(b_t)\}$.
Then
\begin{equation*}
\min_{\sigma\in S_2}
\frac1n
\sum_{t=1}^n
\mathbbm{1}\{a_t\neq\sigma(b_t)\}
\ge
\min\Biggl\{
\frac1n
\sum_{t=\lceil n/2\rceil+1}^{n}
\mathbbm{1}\{a_t\neq\bar{\sigma}(b_t)\},
\,
\frac14
\Biggr\}.    
\end{equation*}
\end{lemma}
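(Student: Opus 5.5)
Write $S_2=\{\bar\sigma,\sigma'\}$, where $\sigma'$ is the other permutation. The key observation is that for every $t$ exactly one of the indicators $\mathbbm{1}\{a_t\neq\bar\sigma(b_t)\}$ and $\mathbbm{1}\{a_t\neq\sigma'(b_t)\}$ equals $1$, since the two permutations disagree on every label in $\{1,2\}$. Split the index range into the first block $I_1=\{1,\ldots,\lceil n/2\rceil\}$ and the second block $I_2=\{\lceil n/2\rceil+1,\ldots,n\}$. Let $E_1=\sum_{t\in I_1}\mathbbm{1}\{a_t\neq\bar\sigma(b_t)\}$ and $E_2=\sum_{t\in I_2}\mathbbm{1}\{a_t\neq\bar\sigma(b_t)\}$. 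Then the counts under $\sigma'$ are $|I_1|-E_1$ on $I_1$ and $|I_2|-E_2$ on $I_2$.

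Since $\bar\sigma$ minimizes over the first block, we have $E_1\le |I_1|-E_1$, so $E_1\le |I_1|/2$ and hence $|I_1|-E_1\ge |I_1|/2\ge n/4$. It remains to bound the global minimum, which is the smaller of the two global counts.

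\textbf{Case $\bar\sigma$ attains the global minimum.} The global count is $E_1+E_2\ge E_2$. Dividing by $n$ gives at least $\frac1n E_2$, the first term in the minimum.

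\textbf{Case $\sigma'$ attains the global minimum.} The global count is $(|I_1|-E_1)+(|I_2|-E_2)\ge |I_1|-E_1\ge n/4$. Dividing by $n$ gives at least $1/4$.

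In either case the left-hand side is at least $\min\{\frac1n E_2,\frac14\}$, which is the claim.

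The only step needing care is $|I_1|/2\ge n/4$. This holds because $|I_1|=\lceil n/2\rceil\ge n/2$. Everything else is a direct two-case check, so I expect no real obstacle.
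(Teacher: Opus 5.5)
Your proposal is correct and follows essentially the same argument as the paper: split into whether the global minimizer coincides with $\bar\sigma$, and in the other case use that the two permutations in $S_2$ have complementary indicators, so the first-block count under the other permutation is at least $\lceil n/2\rceil/2 \ge n/4$.
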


\begin{proof}[Proof of Lemma~\ref{lemma: split alignment}]
Let $m=\lceil n/2\rceil$, $\sigma^*
\in
\arg\min_{\sigma\in S_2}
\sum_{t=1}^{n}
\mathbbm{1}\{a_t\neq\sigma(b_t)\}$.
If $\sigma^*=\bar{\sigma}$, then
\begin{equation*}
\sum_{t=1}^{n}
\mathbbm{1}\{a_t\neq\sigma^*(b_t)\}
\ge
\sum_{t=m+1}^{n}
\mathbbm{1}\{a_t\neq\bar{\sigma}(b_t)\},   
\end{equation*}
and the conclusion of the lemma follows immediately. On the other hand, suppose $\sigma^*\neq\bar{\sigma}$. Since $S_2$ contains only two permutations, we have that $a_t \neq \sigma^*(b_t)$ if and only if $a_t = \bar{\sigma}(b_t)$ so that
\begin{equation*}
\sum_{t=1}^{m}
\mathbbm{1}\{a_t\neq\sigma^*(b_t)\}
=
m-
\sum_{t=1}^{m}
\mathbbm{1}\{a_t\neq\bar{\sigma}(b_t)\}
\ge
\frac m2
\ge
\frac n4,
\end{equation*}
where the first inequality follows from the optimality of $\bar\sigma$. 
Therefore, we have
\begin{equation*}
\frac{1}{n}\sum_{t=1}^{n}
\mathbbm{1}\{a_t\neq\sigma^*(b_t)\}
\ge
\min\Biggl\{
\frac{1}{n}\sum_{t=m+1}^{n}
\mathbbm{1}\{a_t\neq\bar{\sigma}(b_t)\},
\frac 14
\Biggr\}.    
\end{equation*}
The lemma follows as desired.
\end{proof}

\begin{lemma}
\label{lemma: truncated lower bound}
Let $X$ be a random variable taking value on $[0, 1]$. Then
\begin{equation*}
\mathbb{E}\bigl[\min\{X, 1/4\}\bigr]
\ge
\frac{1}{4}\min\bigl\{\mathbb{E}X, 1/2\bigr\}\,
\mathbb{E}X.    
\end{equation*}
\end{lemma}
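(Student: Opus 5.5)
The inequality follows from a pointwise lower bound on the truncation followed by a crude comparison of the two sides; no distributional information about $X$ beyond $X \in [0,1]$ is needed. Write $m := \mathbb{E}X \in [0,1]$.

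First, I would establish the pointwise bound $\min\{X, 1/4\} \ge X/4$ on $[0,1]$, checking two cases.
\begin{itemize}
\item If $X \le 1/4$, then $\min\{X,1/4\} = X \ge X/4$, because $X \ge 0$.
\item If $X > 1/4$, then $\min\{X,1/4\} = 1/4 \ge X/4$, because $X \le 1$.
\end{itemize}
Taking expectations then gives
\[
\mathbb{E}\bigl[\min\{X,1/4\}\bigr] \ge \frac{m}{4}.
\]

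Second, I would show the right-hand side of the lemma is at most $m/4$. Since $m \ge 0$ and $\min\{m, 1/2\} \le 1/2 \le 1$, we have
\[
\frac14 \min\{m, 1/2\}\, m \le \frac{m}{8} \le \frac{m}{4}.
\]
Chaining the two displays proves the lemma.

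I do not expect any real obstacle. The only point needing care is that the pointwise bound uses both endpoints of the support: $X \ge 0$ in the first case and $X \le 1$ in the second. The lemma is stated in a weaker, quadratic-looking form only because that is the shape consumed in step (b) of \eqref{thm impossible: eqn1}.
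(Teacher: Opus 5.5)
Your proof is correct, and it takes a different and shorter route than the paper. The paper uses a tail estimate. With $\mu=\mathbb{E}X$ and $a=\mu/2$, it uses $X\le 1$ to get $\mu\le a+\mathbb{P}(X\ge a)$, hence $\mathbb{P}(X\ge \mu/2)\ge \mu/2$. It then bounds $\mathbb{E}[\min\{X,1/4\}]\ge \min\{\mu/2,1/4\}\,\mathbb{P}(X\ge \mu/2)$; multiplying a level and a probability that are each of order $\mu$ is what produces the quadratic right-hand side.

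You instead use concavity of $x\mapsto\min\{x,1/4\}$, which lies above its chord $x/4$ on $[0,1]$. This gives the linear bound $\mathbb{E}[\min\{X,1/4\}]\ge \mathbb{E}X/4$, strictly stronger than the lemma, whose right-hand side never exceeds $\mathbb{E}X/8$. Both arguments use $0\le X\le 1$ in the same essential way.

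Your version would also simplify the proof of Theorem~\ref{thm: impossible}. Feeding the linear bound into step (b) of \eqref{thm impossible: eqn1} gives $\mathbb{E}[d(\hat{\ell},\ell)/n]\ge \frac{1}{12}\min_t \mathbb{P}(\hat{\ell}(\pi_t)\neq\bar{\sigma}\circ\ell(\pi_t))\ge \frac{1}{24}\min_t\mathbb{P}(\mathcal{A}_t)$. So $\lambda$ could be taken linear in $\mathbb{P}(\mathcal{A}_t)$, and the squaring and the $\lambda^{1/2}$ normalization in Lemma~\ref{lemma balance attachment set bound} would be unnecessary. The paper's tail argument yields only the by-product $\mathbb{P}(X\ge\mathbb{E}X/2)\ge\mathbb{E}X/2$, which is not used elsewhere, so here it simply costs a factor of $\mathbb{E}X$.
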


\begin{proof}[Proof of Lemma~\ref{lemma: truncated lower bound}]
Let $\mu:=\mathbb{E}X$ and set $a:=\mu/2$. Since $0\le X\le 1$,
\begin{equation*}
\mu
=
\mathbb{E}\bigl[X\mathbbm{1}\{X<a\}\bigr]
+
\mathbb{E}\bigl[X\mathbbm{1}\{X\ge a\}\bigr]
\le
a+\mathbb{P}(X\ge a).    
\end{equation*}
Hence $\mathbb{P}(X\ge a)\ge \mu-a=\mu/2$. Therefore,
\begin{equation*}
\mathbb{E}\bigl[\min\{X,1/4\}\bigr]
\ge
\min\{a,1/4\}\mathbb{P}(X\ge a)
\ge
\frac{\mu}{2}\min\left\{\frac{\mu}{2},\frac14\right\}= \frac{1}{4}\mu\min\{\mu,\frac{1}{2}\}.    
\end{equation*}
The desired bound follows.
\end{proof}

\begin{lemma}
\label{lemma balance attachment set bound}
Let $\boldsymbol{G}_{n} \sim \mathrm{PF}(\alpha, \theta, \pi, 2)$ with $\theta \geq \frac{c}{n}$ for some constant $c>0$.
For any fixed $t\in \mathbb{N}$ satisfying $t\ge \lceil n/2\rceil$, let $\mathcal{A}_t$ be defined in~\eqref{eqn: balance attachment}.
Then there exists a constant $\lambda= \lambda(c, \alpha)>0$ such that, for all sufficiently large $n$,
\begin{equation*}
    \mathbb{P}\left(\mathcal{A}_t\right)\geq 6\lambda^{\frac{1}{2}}.
\end{equation*}
\end{lemma}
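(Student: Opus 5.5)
The plan is to condition on the forest at time $t-1$ and then factor $\mathbb{P}(\mathcal{A}_t)$ into three pieces, each bounded below by a constant depending only on $(c,\alpha)$. Those pieces are: the state of the forest just before time $t$, the choices made at time $t$, and the evolution after time $t$.

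First, I unpack the constraint $\operatorname{pa}(\pi_s)\notin\{v_1,v_2\}$ for all $s\neq t$. For $s<t$ it forces $v_1$ and $v_2$ to be leaves of $\bm{F}_{t-1}$, i.e. degree one in their own tree. For $s>t$ it forces neither vertex to receive a further child. Let $\Lambda^i_{t-1}$ be the set of leaves of $\bm{T}^i_{t-1}$. Then $\mathcal{A}_t$ contains the disjoint union, over $(i,j)\in\{(1,2),(2,1)\}$, of the following event. The parent $\operatorname{pa}(\pi_t)$ is some $v\in\Lambda^i_{t-1}$. The node $\pi_t$ has $h_t\ge 1$ noise edges back into $\pi_{1:(t-1)}$, and the uniformly chosen endpoint is $q(\pi_t)=w\in\Lambda^j_{t-1}$. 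Finally, no $\pi_s$ with $s>t$ attaches to $v$ or $w$. I will lower bound the $(1,2)$ term; the other is symmetric.

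\textbf{Step 1 (linearly many leaves in each tree).} I will show there is $\eta=\eta(\alpha)>0$ such that, for large $n$, the event
\[
\mathcal{E}:=\bigl\{|\Lambda^1_{t-1}|\ge \eta t,\ |\Lambda^2_{t-1}|\ge \eta t\bigr\}
\]
has probability at least $1/2$.

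The tree sizes come from the P\'olya-urn argument cited after Definition~\ref{defin: RC-PF}. The proportions $n_k(t)/t$ converge to a Dirichlet law with positive parameters, which has no atom at the boundary of the simplex. Hence for some $\eta_0$ we have $\mathbb{P}(\min_k n_k(t-1)\ge \eta_0 t)\ge 3/4$ once $t$ (equivalently $n$, since $t\ge n/2$) is large.

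Given that each tree is large, I need its leaf count to be linear. Conditionally on the sequence of which tree receives each new node, each $\bm{T}^k$ evolves as an $\mathrm{APA}(\alpha)$ tree in its own time. For an APA tree the leaf count $\Lambda_m$ after $m$ nodes satisfies
\[
\mathbb{E}[\Lambda_{m+1}-\Lambda_m\mid\mathcal{F}_m]=1-\frac{(1+\alpha)\Lambda_m}{(2+\alpha)m+O(1)}.
\]
This gives $\Lambda_m/m\to \frac{2+\alpha}{3+2\alpha}>0$ almost surely, by a standard stochastic-approximation or bounded-difference martingale argument. I expect this step to be the main technical obstacle. The first thing I would try is Azuma's inequality applied to the Doob martingale of $\Lambda_m$ (increments bounded by $1$), combined with the linear recursion for $\mathbb{E}\Lambda_m$. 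The concentration needed is only with constant probability, uniformly over tree sizes at least $\eta_0 t$.

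\textbf{Step 2 (the choices at time $t$).} On $\mathcal{E}$, the probability in \eqref{defin: attachment rule 2} that $\pi_t$ attaches to a leaf of $\bm{T}^1$ is
\[
\frac{(1+\alpha)|\Lambda^1_{t-1}|}{2(t-3)+(t-1)\alpha}\ \ge\ \frac{(1+\alpha)\eta}{2+\alpha}=:\kappa_1>0 .
\]
This is where $\alpha>-1$ is used. Independently of the forest, $\bm{R}_n$ gives $h_t\sim\mathrm{Bin}(t-1,\theta)$. With $t-1\ge n/3$ and $\theta\ge c/n$,
\[
\mathbb{P}(h_t\ge1)\ge 1-e^{-c/3}.
\]
Given $h_t$ and the forest, the noise endpoints form a uniform subset of $\pi_{1:(t-1)}$, possibly overlapping $\operatorname{pa}(\pi_t)$; collapsing such a multi-edge only matters on a set of probability $O(1/t)$. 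Hence $q(\pi_t)$ is uniform on $\pi_{1:(t-1)}$, and it lands in $\Lambda^2_{t-1}$ with probability at least $\eta/2$.

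\textbf{Step 3 (no further children).} After time $t$, $v$ has forest degree $2$ and $w$ has forest degree $1$; noise edges do not enter the attachment rule. At each step $s>t$, the probability that neither is chosen is at least $1-\frac{(3+2\alpha)}{(2+\alpha)s-O(1)}$. Taking the product over $s=t+1,\dots,n$ gives
\[
\prod_{s=t+1}^{n}\Bigl(1-\frac{3+2\alpha}{(2+\alpha)s-O(1)}\Bigr)\ \ge\ \exp\Bigl(-C\sum_{s>t}\tfrac1s\Bigr)\ \ge\ (t/n)^{C}\ \ge\ 2^{-C},
\]
with $C=C(\alpha)$, since $t\ge n/2$.

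Multiplying the bounds from Steps 1--3 yields
\[
\mathbb{P}(\mathcal{A}_t)\ \ge\ \tfrac12\cdot\kappa_1\cdot(1-e^{-c/3})\cdot\tfrac{\eta}{2}\cdot 2^{-C}-o(1),
\]
which is a positive constant depending only on $(c,\alpha)$. Defining $\lambda$ by setting $6\lambda^{1/2}$ equal to half this constant completes the proof.
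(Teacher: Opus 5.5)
Your proposal is correct and follows essentially the same route as the paper. The paper's nested events $\Omega_3\subset\Omega_2\subset\Omega_1$ match your Steps 1--3: linearly many degree-one nodes in each tree at time $t-1$, the parent and the noise endpoint of $\pi_t$ landing on such nodes in opposite trees, and no later attachments to either, bounded by the same product over $s>t$. The differences are minor and in your favor: you obtain the leaf proportion from the conditional-APA structure plus a martingale argument instead of citing the degree-one proportion lemma, and you use that $q(\pi_t)$ is uniform on $\pi_{1:(t-1)}$ given $h_t\ge 1$. That second point is more accurate than the paper's ``at least one noise edge to a degree-one node of $\boldsymbol{T}^2_{t-1}$'' bound when $\theta$ is large.
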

\begin{proof}[Proof of Lemma~\ref{lemma balance attachment set bound}]
To derive a lower bound for the probability of the event $\mathcal{A}_t$,
we introduce a sequence of nested events $\Omega_3 \subset \Omega_2 \subset \Omega_1$, with $\Omega_3 \subset \mathcal{A}_t$, ensuring that all conditions in $\mathcal{A}_t$ defined in \eqref{eqn: balance attachment} are satisfied.

For each consecutive pair of events, we establish lower bounds for the corresponding conditional probabilities.
By combining these bounds, we obtain 
\begin{eqnarray*}
\mathbb{P}\left(\mathcal{A}_t\right)\geq \mathbb{P}\left(\Omega_3\right)=  \mathbb{P}\left(\Omega_3\mid \Omega_2\right) \mathbb{P}\left(\Omega_2\right)= \mathbb{P}\left(\Omega_3\mid \Omega_2\right) \mathbb{P}\left(\Omega_2\mid \Omega_1\right)
\mathbb{P}\left(\Omega_1\right),
\end{eqnarray*}
Then we bound $\mathbb{P}\left(\mathcal{A}_t\right)$ by bounding $\mathbb{P}\left(\Omega_3\mid \Omega_2\right), \mathbb{P}\left(\Omega_2\mid \Omega_1\right), \mathbb{P}\left(\Omega_1\right)$ separately.
The precise definitions of the events $\Omega_i$, $i\in[3]$, 
and their corresponding lower bounds will be introduced in the rest of the proof.
Let
\begin{equation*}
\Omega_1:= \left\{\text{For each}\  i=1, 2,\bigl|V\bigl(\bm{T}^i_{t-1}\bigr)\bigr|\geq  \frac{n}{5}, 
\bigl|\bigl\{v: \operatorname{deg}_{\bm{T}^i_{t-1}}(v)=1\bigr\}\bigr|\geq \frac{p(1)n}{10}\right\},
\end{equation*}
where $p(1)$ is defined in Lemma~\ref{lemma: degree 1 proportion}.
A standard P\'olya–urn argument implies that, as $t \to\infty$, the joint distribution of $\bigl(\frac{|V(\bm{T}^1_{t-1})|}{t-1}, \frac{|V(\bm{T}^2_{t-1})|}{t-1}\bigr)$ converges to $\operatorname{Dirichlet}(\frac{2+2\alpha}{2+\alpha}, \frac{2+2\alpha}{2+\alpha})$.
Consequently, by Lemma~\ref{lemma: degree 1 proportion}, the proportion of nodes with degree 1 in each tree converges to $p(1)$.
Since we also have $t\geq n/2$, the two convergence statements imply that there exists $\lambda_1>0$ such that $\mathbb{P}\left(\Omega_1\right)>\lambda_1$ for all sufficiently large $n$.

The definition of $\Omega_1$ ensures that there are a large number of degree 1 nodes in $\pi_{1:t-1}$.
Next, we construct an event $\Omega_2 \subset \Omega_1$ 
that captures the constraints of $\mathcal{A}_t$ for $\operatorname{pa}(\pi_{i})$ for $i \in [t]$ and $q(\pi_{t})$:
\begin{equation*}
\begin{aligned}
\Omega_2:=\Omega_1\cap \Bigl\{&\text{there exist}\ v_1\in V\left(\bm{T}^1_{t-1}\right), v_2\in V\left(\bm{T}^2_{t-1}\right), \text{such that for each}\ i< t,\\
& \ \operatorname{pa}\left(\pi_i\right)\notin \left\{v_1, v_2\right\}, \text{and}\ \operatorname{pa}\left(\pi_t\right)=v_1, q\left(\pi_t\right)=v_2
\Bigr\}.
\end{aligned}
\end{equation*}
This requires $\operatorname{pa}\left(\pi_{t}\right)$ and $\operatorname{q}\left(\pi_{t}\right)$ to be degree-1 node in $\bm{F}_{t-1}$.

We now proceed to derive a lower bound for $\mathbb{P}(\Omega_2 \mid \Omega_1)$.
By the generation rule in \eqref{defin: attachment rule 2} and noting that on the event $\Omega_1$ there are at least $\tfrac{p(1)n}{10}$ degree 1 nodes in $\bm{T}^1_{t-1}$,
the probability that $\operatorname{pa}(\pi_t)$ is a degree 1 node in $\bm{T}^1_{t-1}$ is at least $\frac{1+\alpha}{2\left(n-2\right)+n \alpha}\times \frac{p\left(1\right)n}{10}=\frac{p\left(1\right)\left(1+\alpha\right)}{10\left(2+\alpha\right)-40/n}$.

Similarly, since on $\Omega_1$ there are at least $\tfrac{p(1)n}{10}$ degree-1 nodes in $\bm{T}_{t-1}^2$ and $\theta\ge c/n$, the probability that $\pi_t$ is connected by at least one noise edge to a degree-1 node in $\bm{T}_{t-1}^2$ is at least
\begin{equation*}
1-\left(1-\frac{c}{n}\right)^{p(1)n/10}
\ge
1-\exp\left(-\frac{cp(1)}{10}\right),    
\end{equation*}
where the inequality follows from $1-x\le e^{-x}$.
Consequently,
\begin{equation*}
\mathbb{P}\bigl(\Omega_2\mid\Omega_1\bigr)
\ge
\frac{p(1)(1+\alpha)}
{10(2+\alpha)-40/n}
\left(1-\exp\left(-\frac{cp(1)}{10}\right)\right).
\end{equation*}

Then we construct $\Omega_3$ on the event of $\Omega_2$ to capture the restriction of the graph for $\pi_{(t+1):n}$:
\begin{equation*}
\Omega_3:=\Omega_2\cap \left\{\text{for each}\ i> t, \ \operatorname{pa}\left(\pi_i \right)\notin \left\{\operatorname{pa}(\pi_t), q(\pi_t)\right\} \right\},     
\end{equation*}
which represents the event that all nodes arriving after time $t$ 
are not attached to $\operatorname{pa}(\pi_t)$ or $q(\pi_t)$.
Therefore, by the attachment rule in~\eqref{defin: attachment rule 2} and the fact that node $\operatorname{deg}_{\bm{F}_{t}}\left(\operatorname{pa}(\pi_t)\right)=2, \operatorname{deg}_{\bm{F}_{t}}\left(q(\pi_t)\right)=1$ on the event $\Omega_2$,
we have, for all sufficiently large $n$,
\begin{eqnarray*}
\mathbb{P}\left(\Omega_3\mid \Omega_2\right)&=&\prod^{n}_{i=t+1}\left(1-\frac{ 2 + \alpha+1+\alpha}{2(i-2-1) + (i-1)\alpha}\right)=\prod^{n}_{i=t+1}\left(1-\frac{ 3 + 2\alpha}{2(i-3) + (i-1)\alpha}\right)\\
&\geq &
\left(1-\frac{ 3 + 2\alpha}{2(n/2-3) + (n/2-1)\alpha}\right)^{n-t} \geq \left(1-\frac{ 3 + 2\alpha}{n-6 + (n/2-1)\alpha}\right)^{n-3}\\
&=&  \exp\left(\left(n-3\right)\ln\left(1-\frac{ 3 + 2\alpha}{n-6 + (n/2-1)\alpha}\right)\right)\geq  \exp\left(-\frac{\left(12+8\alpha\right)\left(n-3\right)}{2\left(n-6\right) + \left(n-2\right)\alpha}\right)\\
&=&\exp\left(-\frac{\left(12+8\alpha\right)\left(n-3\right)}{\left(2+\alpha\right)\left(n-3\right) + \alpha-6}\right)\geq \exp\left(-\frac{24+16\alpha}{2+\alpha}\right).
\end{eqnarray*}
The first inequality comes from the fact that $t\geq n/2$.
In the third inequality, we apply the bound 
$\ln(1 - x) \ge -2x$ for $0\le x \le \frac{1}{2}$.

By the definition of $\mathcal{A}_t$ in~\eqref{eqn: balance attachment} 
and the construction of the events $\Omega_i$, $i \in [3]$, 
we observe that $\Omega_3 \subset \mathcal{A}_t$, 
since all the conditions required by $\mathcal{A}_t$ are satisfied on $\Omega_3$.  
Combining all the conditional probabilities derived above,
Let
\begin{equation*}
\lambda^{\frac{1}{2}}(c,\alpha)
:=
\frac{\lambda_1 p(1)}
{60}
\left(1-\exp\left(-\frac{cp(1)}{10}\right)\right)
\exp\left(-\frac{24+16\alpha}{2+\alpha}\right).
\end{equation*}
Therefore, for all sufficiently large $n$, we have
\begin{eqnarray*}
\mathbb{P}\left(\mathcal{A}_t\right)&\geq& \mathbb{P}\left(\Omega_3\right)\geq \mathbb{P}\left(\Omega_3\mid \Omega_2\right) \mathbb{P}\left(\Omega_2\mid \Omega_1\right)
\mathbb{P}\left(\Omega_1\right)\\
&\geq&  \frac{\lambda_1 p(1)(1+\alpha)}
{10(2+\alpha)-40/n}
\left(1-\exp\left(-\frac{cp(1)}{10}\right)\right)\exp\left(-\frac{24+16\alpha}{2+\alpha}\right)\geq 6\lambda^{\frac{1}{2}}.    
\end{eqnarray*}
The lemma follows as desired.
\end{proof}

\section{Appendix for Section~\ref{sec: algorithm}}

\subsection{Choosing the Core Degree Threshold}
\label{secsec: Choosing the Core Degree Threshold}
We set the core degree threshold by matching the order of top degrees: 
We simulate the planted forest $T$ times under the known parameters $(n, K,\alpha,\theta)$ and a chosen minimum component proportion $1/H$ ($H\ge K$).
In each run, we record the global degree rank of the weakest hub (i.e., the largest-degree node in each component; take the minimum across components) and take a high quantile (e.g., 95th percentile and above) of this rank, denoted $r$. 
On the observed graph, we sort vertices by degree and set the core degree threshold $\tau$ to the degree of the node at rank $r$.
The procedure is summarized in Algorithm~\ref{alg: choosing the core degree threshold}.

\begin{algorithm}[H]
\caption{Rank-based selection of core degree threshold $\tau$}
\label{alg: choosing the core degree threshold}
\KwIn{Graph $\boldsymbol{G}_n=\left(V,E\right)$; community count $K\in\mathbb{N}$; minimum component proportion $1/H$; rank quantile $q = 0.95$; simulation times $B\in \mathbb{N}$.}
\KwOut{Core degree threshold $\tau$.}
\For{$t=1$ \KwTo $B$}{
  Simulate graph $\bm{G}_n^{(t)}\sim \mathrm{PF}(\alpha,\theta,\ell,\pi)$, where the community label function $\ell$ is chosen so that 
  $|V^1|=\cdots=|V^{K-1}|=\frac{n}{H}$, and $|V^K|=n-\frac{n(K-1)}{H}$; find each planted tree’s highest degree node; record $r^{(t)}=\max_k \operatorname{rank}^{(t)}(\text{highest degree node in comp }k)$.
}
$r:= \operatorname{Quantile}_{q}\big(\{r^{(t)}\}_{t=1}^T\big)$\;
\Return $\tau:=\operatorname{rank}_{r}\left\{\operatorname{deg}\left(u\right)\mid u \in V\left(\bm{G}_n\right)\right\}$.
\end{algorithm}

Equipped with the core degree threshold, we now proceed to the graph pruning step.

\begin{remark}
Algorithm~\ref{alg: choosing the core degree threshold} assumes the graph parameters $(\alpha,\theta)$ are available.
If the graph parameters $(\alpha, \theta)$ are assumed to be unknown,
the noise level $\theta$ can still be easy to estimate, e.g.\ by 
$\hat{\theta}=\frac{|E(\boldsymbol{G}_n)|-n+1}{|E(\boldsymbol{G}_n)|}$.
By contrast, $\alpha$ can be harder to estimate.
A principled estimation procedure is given in the supplementary material of \cite{crane2024root}. 
In practice, one may use plug-in estimates or, for a conservative calibration in the simulations, a known upper bound for $\alpha$.
\end{remark}

\begin{remark}
With a chosen minimum proportion $1/H$. where $H\geq K$, the simulation setup we adopt here $\bm{G}_n^{(t)}\sim \mathrm{PF}(\alpha,\theta,\ell,\pi)$ with $\ell$ chosen as $|V^1|=\cdots=|V^{K-1}|=\frac{n}{H}$ and $|V^K|=n-\frac{n(K-1)}{H}$ represents the worst case: it maximizes imbalance across components and thus yields the largest (most conservative) values of $r^{(t)}$.
\end{remark}

\begin{remark}
A closely related alternative is to work with degrees directly instead of with degrees rank: in each simulation $t$, let  $\tau^{(t)}$ be the smallest top–degree across components, and set $\tau:= \operatorname{Quantile}_{q}\big(\{\tau^{(t)}\}_{t=1}^T\big)$.
\end{remark}

\subsection{The Simple Degree Thresholding Algorithm}
\label{secsec: The simple Two-Step Algorithm}
\begin{algorithm}[H]
\caption{Simple Forest Community Detection Method}
\label{alg: theory align}
\KwIn{Graph $\boldsymbol{G}_n=\left(V,E\right)$; core degree threshold $\tau_1 \in \mathbb{N}$; pruning degree threshold $\tau_2\in\mathbb{N}$}
\KwOut{$\hat{K}$, core vertex set $\{V_1, \ldots, V_{\hat{K}}\}$, community label $\hat{\ell}:V\to\{1,\dots,\hat{K}\}$}

Remove $\left\{v \in V: \deg_{\boldsymbol{G}_n}\left(v\right)<  \tau_2\right\}$ from $\bm{G}_n$ to get $\bar{\bm{G}}$\;
Let $\hat{K}:= \Bigl|\bigl\{\, \boldsymbol{g} \in \mathcal{C}\left(\bar{\bm G}\right)\,:\, \deg_{\bm{G}_n}\left(V\left(\boldsymbol{g}\right)\right)\ge \tau_1 \bigr\}\Bigr|$\;
Let $\left\{{V}_1,\dots,{V}_{\hat{K}}\right\}:=\left\{V\left(\bm{g}\right)\,:\, \boldsymbol{g} \in \mathcal{C}\left(\bar{\bm G}\right)\ \text{and}\ \deg_{\boldsymbol{G}_n}\left(V\left(\bm g\right)\right)\ge \tau_1\right\}_{}$;

\BlankLine
\textbf{Labeling:}
\For{$u\in V$}{
    $D_k\left(u\right)= \min_{v\in {V}_k}\operatorname{dist}_{\boldsymbol{G}_n}\left(u,v\right)$ for all $k\in[\hat{K}]$;\quad
    $\hat{\ell}(u):= \operatorname{argmin}_{k\in[\hat{K}]} D_k\left(u\right)$ with uniform random tie–breaking if non-unique\;
}
%\Return $\hat{\ell}$\;
\end{algorithm}

\begin{example}

\begin{figure}[H]
    \centering
    % First subfigure
    \begin{subfigure}[b]{0.32\linewidth}
        \centering
        \vspace{-0.4cm}
        \includegraphics[width=1\linewidth]{numerical_output/figure1_original_graph.pdf}
        \subcaption{Origin Graph}
        \label{fig: simple algorithm a}
    \end{subfigure}
    \hfill
    % Second subfigure
    \begin{subfigure}[b]{0.32\linewidth}
        \centering
        \vspace{-0.4cm}
        \includegraphics[width=1\linewidth]{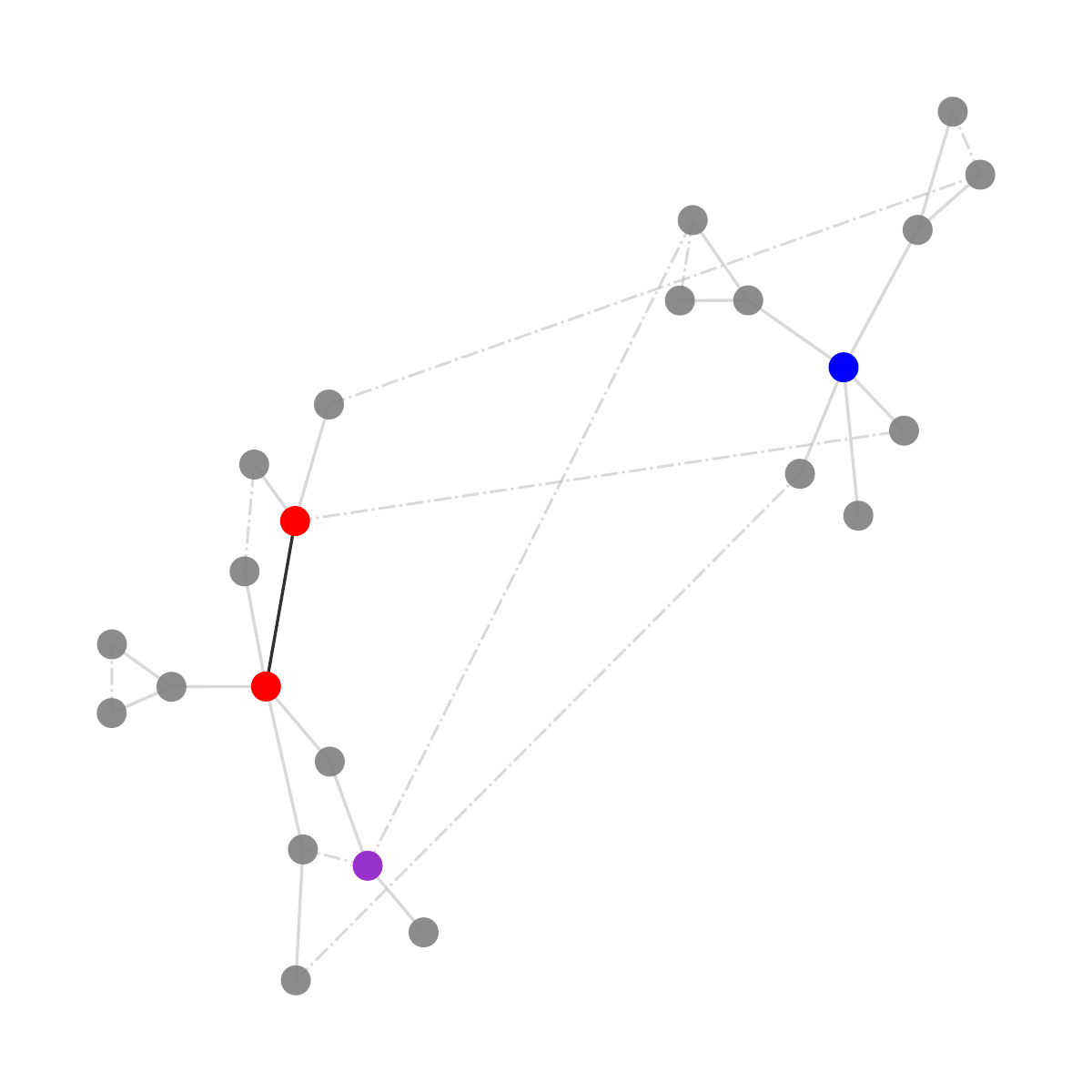} % Add your image here
        \subcaption{Graph after Degree Pruning}
        \label{fig: simple algorithm b}
    \end{subfigure}
    % Third subfigure
    \begin{subfigure}[b]{0.32\linewidth}
        \centering
        \vspace{-0.4cm}
        \includegraphics[width=1\linewidth]{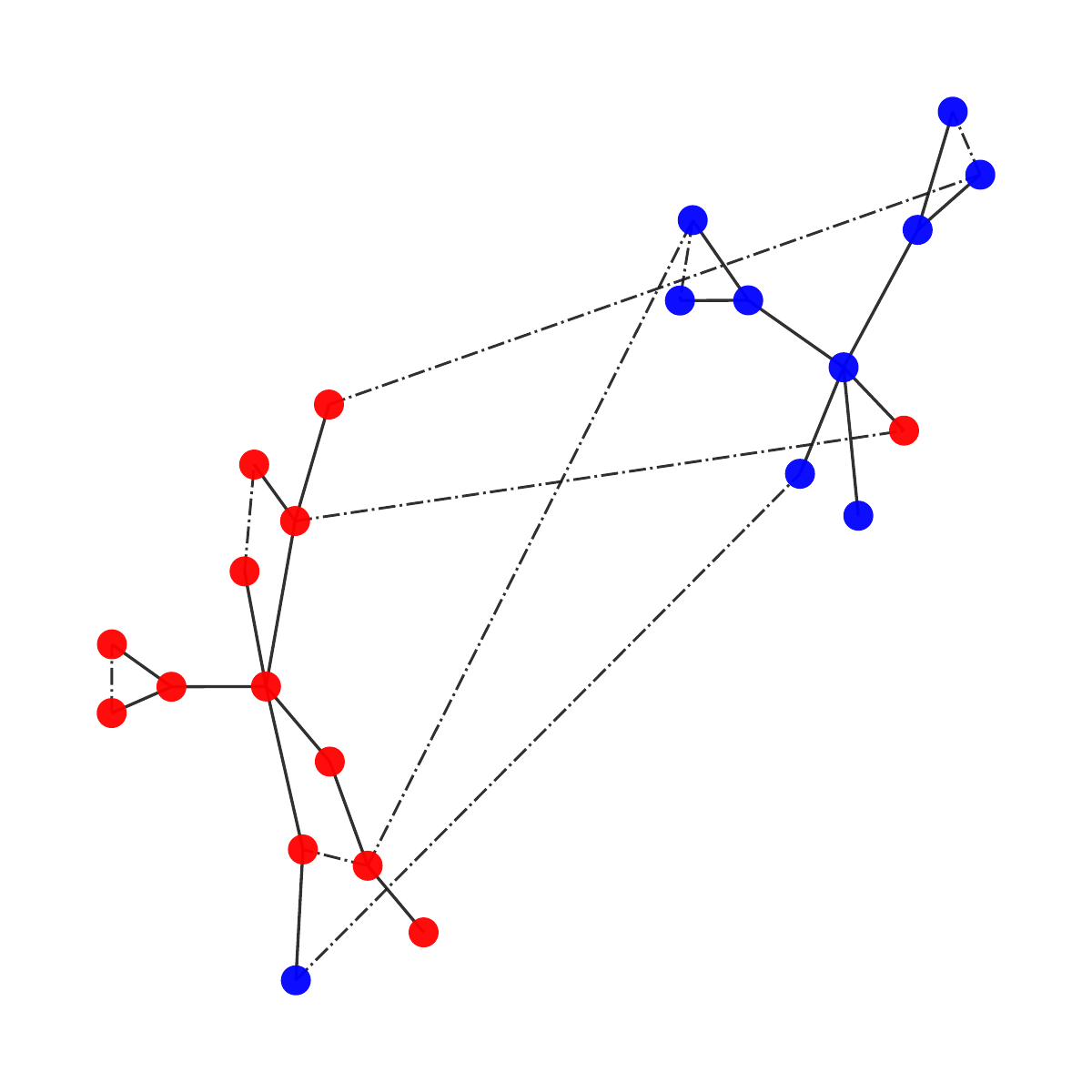} % Add your image here        
        \subcaption{Final Classification}
        \label{fig: simple algorithm c}
    \end{subfigure}
    \caption{Forest community detection via Algorithm~\ref{alg: theory align} 
        on a toy example with thresholds $\tau_1=5$ and $\tau_2=4$. 
        \textbf{(a)} Original graph containing two planted trees. 
        \textbf{(b)} Three cores obtained after degree-based pruning; 
        red/blue cores contain nodes of degree at least $5$. 
        \textbf{(c)} Final community assignment using red/blue cores for distance-based recovery.}
\label{fig: simple algorithm}
\end{figure}

We analyze the same synthetic example graph from Figure~\ref{fig: improved algorithm a} to illustrate Algorithm~\ref{alg: theory align}.
We set the core threshold to $\tau_1 = 5$ and the pruning threshold to $\tau_2 = 4$. After degree-based pruning (Figure~\ref{fig: simple algorithm b}), the graph decomposes into $M = 3$ isolated cores. Among these, only two cores satisfy the highest-degree requirement of $\tau_1 = 5$, which includes the most ``center" nodes in each planted tree.

The final distance recovery outcome, presented in Figure~\ref{fig: simple algorithm c} shows that the procedure is effective for this example: only 2 nodes are misclassified in the entire graph.
\end{example}

\begin{lemma}
\label{lemma improved}  
Let $\hat{K}$ be the estimated number of clusters and $V^1_1, \ldots, V^1_{\hat{K}}$ be the core vertex sets outputted by Algorithm~\ref{alg: theory align} with input parameter $\tau_1, \tau_2 > 0$; define $Q_{\max} := \min_{k \in [\hat{K}]} |V^1_k|$. Let $V^2_1, \ldots, V^2_{\hat{K}}$ be the output core vertex sets of Algorithm~\ref{alg: step I} with input $K = \hat{K}$, $\tau = \tau_1$, and any $Q \in \mathbb{N}$ such that $Q \leq Q_{\max}$. Then, with suitable reindexing, 

%Consider Algorithm~\ref{alg: theory align} with tuning parameters $(\tau_1, \tau_2)$ and Algorithm~\ref{alg: step I} with $(\tau^s_1, Q^s)$, both applied to the same graph $\bm{G}_n = (V, E)$.  
%Let the resulting core sets from Algorithm~\ref{alg: theory align} be $V_1, \dots, V_K$, and those from Algorithm~\ref{alg: step I} be $V^s_1, \dots, V^s_K$.  
%If $\tau_1=\tau^s_1, \hat{K}=K$ and $Q^s\le \min_{1\le i \le K} |V_i|$, then after a suitable reindexing, 
%
\begin{equation*}
    V^1_i \subseteq V^2_i  \quad \text{for all $i\in \bigl[ \hat{K} \bigr]$}.
\end{equation*}
i.e., each core vertex set from Algorithm~\ref{alg: theory align} is contained in the corresponding core vertex set returned by Algorithm~\ref{alg: step I}. 
Moreover, $D^*$ is strictly increasing in each iteration, and the final output of $D^*$ is smaller than $\tau_2$. 
\end{lemma}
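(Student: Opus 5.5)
Let $\bar{\bm G}^{(0)} = \bm G_n \supset \bar{\bm G}^{(1)} \supset \cdots$ denote the successive graphs produced by Algorithm~\ref{alg: step I}, with $D^*_j$ the degree removed at iteration $j$. Let $U := \{v : \deg_{\bm G_n}(v) \ge \tau_2\}$, so that the output $\{V^1_k\}$ of Algorithm~\ref{alg: theory align} consists of the $\hat K$ components of $\bm G_n \cap U$ whose maximum degree is at least $\tau_1$. The plan is to prove, by induction on $j$, the invariant
\[
(\mathrm{I}_j):\quad U \subseteq V(\bar{\bm G}^{(j)}),
\]
that is, no node of degree $\ge \tau_2$ is removed before the algorithm stops. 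The remaining claims then follow from it.

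For the induction, assume $(\mathrm{I}_j)$ holds and that the algorithm does not stop at step $j$, so $K_c < \hat K$. It suffices to find some $w \in \tilde V$ with $\deg_{\bm G_n}(w) < \tau_2$. Then $D^*_j < \tau_2$, and only nodes outside $U$ are removed, giving $(\mathrm{I}_{j+1})$.

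To find such a $w$, note that $U \subseteq V(\bar{\bm G}^{(j)})$, so each $V^1_k$ lies inside a single component of $\bar{\bm G}^{(j)}$. Each of these components has size $\ge |V^1_k| \ge Q$ and maximum degree $\ge \tau_1 = \tau$, so it is qualified. Since $K_c < \hat K$, two distinct cores $V^1_a, V^1_b$ must lie in the same component $\bm g$. Take a path in $\bm g$ from $V^1_a$ to $V^1_b$. It must leave $U$ at some vertex $w \notin U$, because $V^1_a$ and $V^1_b$ are different components of $\bm G_n \cap U$. There are two cases.
\begin{itemize}
\item If $w$ lies on a cycle, then $w \in \tilde V'$.
\item Otherwise, choose $w$ as a cut vertex separating $V^1_a$ from $V^1_b$. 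Such a vertex exists: if no vertex outside $U$ on any such path were a cut vertex, then the union of the paths through non-$U$ vertices would contain a cycle through one of them. Removing this cut vertex splits $\bm g$ into pieces containing $V^1_a$ and $V^1_b$ respectively, and both pieces are qualified. The number of qualified components therefore strictly increases, so $w \in \tilde V''$.
\end{itemize}
This case analysis, which shows that some low-degree separating vertex is always a removal candidate, is the main obstacle. It would be done carefully using the block–cut tree of $\bm g$: either some non-$U$ vertex on the $V^1_a$–$V^1_b$ route lies in a nontrivial block (giving $\tilde V'$), or all such vertices are cut vertices (giving $\tilde V''$).

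The monotonicity of $D^*$ needs a separate lemma: a node not in $\tilde V$ at one step cannot enter $\tilde V$ later. Deleting vertices never creates cycles, so membership in $\tilde V'$ cannot be gained. For $\tilde V''$, deleting other vertices only subdivides components, and this cannot make a previously non-splitting node become splitting. This second point is less immediate and would need a short argument.

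Given this lemma, every candidate at step $j+1$ was already a candidate at step $j$ and was not removed, so its degree exceeds $D^*_j$. Hence $D^*_{j+1} > D^*_j$. Together with the induction, every removal happens at degree below $\tau_2$, so the final $D^*$ is less than $\tau_2$.

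Termination with output follows because $D^*$ strictly increases and stays below $\tau_2$, so the number of iterations is finite. By the argument above, the loop can only exit once no two cores share a component, which makes $K_c \ge \hat K$; FAIL cannot occur before then. At exit the $\hat K$ components containing the $V^1_k$ are distinct and qualified. The main remaining gap is to justify that these are exactly the $\hat K$ components of largest maximum degree that the algorithm returns; I expect this follows because any other qualified component has no node in $U$ of degree $\ge \tau_1$. After reindexing, $V^1_i \subseteq V^2_i$.
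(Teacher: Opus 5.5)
Your route is the paper's: cores survive because at every non-terminal step a connecting path carries a degree-$<\tau_2$ candidate, and $D^*$ increases because the candidate sets are nested. The last step, however, has a genuine gap, and it is larger than the one you flag.

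The sentence ``by the argument above, the loop can only exit once no two cores share a component'' does not follow. Your induction proves only the converse direction: $K_c<\hat K$ forces two cores into one component. The return test $K_c\ge\hat K$ could still fire while $V^1_a,V^1_b$ share a component, provided some other component is qualified without containing a core.

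Excluding this requires every node with $\deg_{\bm G_n}\ge\tau_1$ to lie in some core, i.e.\ $\tau_1\ge\tau_2$. This is the regime $\tau_1>\tau_2$ of Remark~\ref{rem: K estimation 2}; the paper's proof uses it implicitly through $\{v:\deg_{\bm G_n}(v)\ge\tau_1\}\subseteq\cup_i V^1_i$.

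Without it the statement fails. Let $a$ and $b$ each have two pendant leaves and a common neighbour $w$ of degree $2$, and add a disjoint path on three vertices. Take $\tau_2=3$, $\tau_1=2$, $Q=1$. Algorithm~\ref{alg: theory align} returns the cores $\{a\}$ and $\{b\}$. Algorithm~\ref{alg: step I} stops immediately and returns $\{a,b,w,\dots\}$ together with the path, so both cores land in the same output set.

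Your heuristic ``any other qualified component has no node in $U$ of degree $\ge\tau_1$'' is true by definition. It does not rule out a qualified component whose high-degree node lies outside $U$, and that is exactly what $\tau_1\ge\tau_2$ rules out. Under that hypothesis, finish as the paper does:
\begin{itemize}
\item Every qualified component of $\bar{\bm G}$ has a node of degree $\ge\tau_1$. That node lies in some core, and the whole core lies in the component, since it survives and is connected.
\item So at exit the $K_c\ge\hat K$ qualified components each contain at least one of the $\hat K$ disjoint cores. This forces $K_c=\hat K$ and a bijection.
\item Core-free components have maximum degree $<\tau_1$, so they are not among those returned.
\end{itemize}

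Two smaller points. First, your justification of the monotonicity lemma is false as stated: deleting vertices can turn a non-splitting vertex lying on a cycle into a splitting cut vertex. What is true, and what the paper uses (candidates at step $j+1$ lie in $\tilde V'\cup\tilde V''$ at step $j$), is narrower. A vertex $u$ on no cycle of $\bar{\bm G}^{(j)}$ that splits at step $j+1$ already splits at step $j$. Each piece of $u$'s component left by removing $u$ at step $j+1$ contains exactly one neighbour of $u$, and sits inside the distinct step-$j$ piece containing that neighbour. Qualification is monotone under enlargement, so two qualified pieces at step $j+1$ give two at step $j$. Since a non-candidate lies on no cycle, this is all you need.

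Second, the cut-vertex case needs no block--cut tree. If the non-$U$ vertex $w$ on a shortest $V^1_a$--$V^1_b$ path lies on no cycle, then $w$ itself separates $V^1_a$ from $V^1_b$. Otherwise a $V^1_a$--$V^1_b$ path avoiding $w$, combined with the two halves of the original path and the connectivity of the cores, would join the two path-neighbours of $w$ in $\bar{\bm G}^{(j)}\setminus w$. That would close a cycle through $w$.
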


The following direct corollary of Lemma~\ref{lemma improved} will be useful for our theoretical analysis in Section~\ref{sec: theory proofs}. 

\begin{corollary}
\label{cor: simple containment}
For any $\tau> 0$ and $Q, K \in \mathbb{N}$, let $\{V_1^2, \ldots, V_K^2\}$ and $D^*$ be the output of Algorithm~\ref{alg: step I} with input $\tau$, $Q$, and $K$. Let $0 < \tau' \leq \tau$ and define
\begin{align*}
V_1^1, \ldots, V_M^1 &= \mathcal{C}\bigl( \bm{G}_n \cap V^{\tau'}(\bm{G}_n) \bigr) \,\, \text{ and} \\
\mathcal{K} &= \biggl\{ i \in [M] \,:\, \mathrm{deg}_{\bm{G}_n}(V_i^1) \geq \tau \biggr\}.
\end{align*}
If $|\mathcal{K}| = K$ and $Q \leq |V_i^1|$ for all $i \in \mathcal{K}$, then $D^* < \tau'$ and there exist a bijection $\sigma \,:\, \mathcal{K} \rightarrow [K]$ such that
\[
V_i^1 \subseteq V_{\sigma(i)}^2, \quad \forall i \in \mathcal{K}.
\]

\end{corollary}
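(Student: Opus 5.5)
The plan is to obtain the corollary by specializing Lemma~\ref{lemma improved}. We run Algorithm~\ref{alg: theory align} on $\bm{G}_n$ with core threshold $\tau_1 := \tau$ and pruning threshold $\tau_2 := \tau'$, and check that its output matches the objects in the statement.

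\emph{Identify the output of Algorithm~\ref{alg: theory align}.} The first step of Algorithm~\ref{alg: theory align} deletes every vertex with $\deg_{\bm{G}_n}(v) < \tau'$. The remaining graph $\bar{\bm{G}}$ is therefore exactly the induced subgraph $\bm{G}_n \cap V^{\tau'}(\bm{G}_n)$, so its connected components are $V^1_1,\ldots,V^1_M$. The algorithm keeps those components whose maximum degree, measured in $\bm{G}_n$, is at least $\tau_1 = \tau$. These are precisely $\{V^1_i : i \in \mathcal{K}\}$. Hence $\hat K = |\mathcal{K}| = K$, and the core vertex sets are $\{V^1_i\}_{i\in\mathcal{K}}$. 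The hypothesis $\tau' \le \tau$ is used here only to make the two thresholds consistent with the roles of $\tau_2 \le \tau_1$; in particular, a vertex of degree at least $\tau$ is never pruned.

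\emph{Check the hypotheses of Lemma~\ref{lemma improved}.} By definition $Q_{\max} = \min_{i\in\mathcal{K}} |V^1_i|$, and the assumption $Q \le |V^1_i|$ for all $i\in\mathcal{K}$ gives $Q \le Q_{\max}$. We run Algorithm~\ref{alg: step I} with input $K = \hat K$, $\tau = \tau_1$ and this $Q$, which is exactly the configuration covered by the lemma. The lemma therefore yields two facts.
\begin{enumerate}
\item After a suitable reindexing, each core set of Algorithm~\ref{alg: theory align} lies inside a distinct output set of Algorithm~\ref{alg: step I}. Recording this reindexing as a bijection $\sigma:\mathcal{K}\to[K]$ gives $V^1_i \subseteq V^2_{\sigma(i)}$ for every $i \in \mathcal{K}$.
\item The final value of $D^*$ satisfies $D^* < \tau_2 = \tau'$.
\end{enumerate}
These are exactly the two conclusions of the corollary.

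\emph{Main point of care.} The only delicate step is confirming that Algorithm~\ref{alg: step I} actually returns $K$ sets rather than FAIL, so that the $V^2$'s and $D^*$ are well defined. Lemma~\ref{lemma improved} implicitly asserts that an output exists under its hypotheses. Intuitively this holds because Algorithm~\ref{alg: step I} only ever removes candidate vertices in increasing order of degree. Before $D^*$ reaches $\tau'$, the vertices of degree below $\tau'$ that separate the $V^1_i$'s remain eligible for removal. Once enough of them are gone, at least $K$ qualified components (size at least $Q$, maximum degree at least $\tau$) appear.

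I would cite the lemma for this rather than reprove it. I would, however, remark that the indexing convention matters: the reindexing in Lemma~\ref{lemma improved} is over $[\hat K]$, whereas here the index set is $\mathcal{K}\subseteq[M]$. Composing with any enumeration of $\mathcal{K}$ converts one into the other.
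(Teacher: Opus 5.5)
Your proposal is correct and follows the paper's route exactly. The paper states this corollary as a direct consequence of Lemma~\ref{lemma improved}: run Algorithm~\ref{alg: theory align} with $\tau_1=\tau$ and $\tau_2=\tau'$, so that its cores are $\{V_i^1\}_{i\in\mathcal{K}}$, $\hat{K}=|\mathcal{K}|=K$, and $Q\le Q_{\max}$; your reindexing remark and your reliance on the lemma for non-failure of Algorithm~\ref{alg: step I} are appropriate. Your note that $\tau'\le\tau$ keeps every vertex of degree at least $\tau$ from being pruned is also the right point of care, since the lemma's proof uses exactly this to rule out extra high-degree components in the output of Algorithm~\ref{alg: step I}.
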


\subsection{Proofs for Section~\ref{sec: algorithm}}
\begin{proof}[Proof of Lemma~\ref{lemma improved}]
Suppose the pruning procedure in Algorithm~\ref{alg: step I} produce a graph sequence such that $\bm{G}^{(T)}\subset \bm{G}^{(T-1)} \cdots \subset \bm{G}^{(0)}=\bm{G}_n$, where $\bm{G}^{(t)}$ denotes the graph after $t$-th pruning iteration.
And the $D^*$ at each iteration will be denoted as $D^{(t)}$, while the final output of $D^*$ being $D^{(T)}$.
Let $\tilde{V}^{\prime(t)}, \tilde{V}^{\prime \prime(t)}$ be the cycle-based and split-based target set of $\bm{G}^{(t)}$ from Algorithm~\ref{alg: step I} at iteration $t$, respectively. In the last iteration $T$, it must be either that we return $\hat{K}$ vertex sets $\{V_1^2, \ldots, V_{\hat{K}}^2\}$ such that $| V_k^2| \geq Q_{\max}$ and $\operatorname{deg}_{\bm{G}_n}(V^2_k) \geq \tau_1$ for all $k \in [\hat{K}]$ or that $\tilde{V}^{\prime (T)}\cup\tilde{V}^{\prime \prime(T)}=\emptyset$ and Algorithm~\ref{alg: step I} fails. 

We first claim, for every $k \in [\hat{K}]$, that $V^1_k \subset V(\bm{G}^{(t)})$ for every $t \in [T]$. We will then finish the proof by showing that failure is impossible and that each $V^1_k$ is contained in a distinct $V^2_{k'}$. 

%In the following part, we first show that $\cup_{i=1}^K V_i \subset V(\bm{G}^{(t)})$ for every $t$, and then show that they are separately contained in isolated components of $\bm{G}^{(T)}$.

For the first claim, we use proof by induction on $t\le T$. 
The induction hypothesis is $\cup_{i=1}^{\hat{K}} V^1_i \subset V(\bm{G}^{(t)})$, which holds trivially for $t=0$.
Assume the hypothesis holds for some $t\le T$.

We observe that, since Algorithm~\ref{alg: step I} did not terminate at step $t$, it must be that there exists $k \neq k'$ such that $V^1_k$ and $V^1_{k'}$ are connected in $\bm{G}^{(t)}$. Indeed, if $V^1_1, \ldots, V^1_{\hat{K}}$ are all disconnected in $\bm{G}^{(t)}$, then, writing $V^{(t)}_1, \ldots, V^{(t)}_{\hat{K}}$ as the connected components of $\bm{G}^{(t)}$ containing $V_1^1, \ldots, V^1_{\hat{K}}$, we have that $|V^{(t)}_k| \geq |V^1_k| \geq Q_{\max}\geq Q$ and that $\operatorname{deg}_{\bm{G}_n}(V^{(t)}_k) \geq \operatorname{deg}_{\bm{G}_n}(V^1_k) \geq \tau_1$ and thus, the condition in line 4 of Algorithm~\ref{alg: step I} would be satisfied and the algorithm would terminate, creating a contradiction. 

%Because $\bm{G}^{(t)}$ does not yield $K$ qualified cores,
%WLOG, there must be a chain linking $V_1$ and $V_2$. 

Now we show that there exists a node $u \in \tilde{V}^{(t)} = \tilde{V}^{
\prime(t)} \cup \tilde{V}^{\prime \prime(t)}$ such that $\operatorname{deg}_{\bm{G}_n}(u) < \min\{ \operatorname{deg}_{\bm{G}_n}(v) \,:\, v \in \cup_{k=1}^K V^1_k \}$. To see this, consider any chain of nodes $\{u_1, \ldots, u_m\}$ connecting $V^1_k$ and $V^1_{k'}$ in $\bm{G}^{(t)}$. By line 1 of Algorithm~\ref{alg: theory align}, $V^1_k$ and $V^1_{k'}$ are disconnected after removing all nodes $u \in \bm{G}_n$ such that $\operatorname{deg}_{\bm{G}_n}(u) < \tau_2$. 
Thus, it must be that there is a node $u_i$ on the chain has degree $\operatorname{deg}_{\bm{G}_n}(u_i) < \tau_2$. On the other hand, every node in $\cup_{k=1}^K V^1_k$ must have degree (with respect to $\bm{G}_n$) at least $\tau_2$ by construction. Moreover, we note that $u_i \in \tilde{V}^{(t)}$ because either $u_i$ is in a cycle (and hence in $\tilde{V}^{\prime (t)}$) or if it is not in a cycle, then any path from a node in $V^1_k$ to a node in $V^1_{k'}$ must pass through $u_i$, and hence, removing $u_i$ would disconnect $V^1_k$ and $V^1_{k'}$ in $\bm{G}^{(t)}$ so that $u_i$ must be in $\tilde{V}^{\prime \prime(t)}$. Thus,

\begin{equation}
\label{lemma improved eqn 1}
    \min_{v \in \tilde{V}^{(t)}} \operatorname{deg}_{\bm{G}_n}\left(v\right) <\tau_2, 
\end{equation}
so that the removed nodes in $(t+1)$th pruning iteration have degree smaller than $\tau_2$ and cannot belong to $\cup_{i=1}^{\hat{K}} V^1_i$. This completes the induction and proves the claim that $\cup_{k=1}^{\hat{K}} V^1_k \subset V(\bm{G}^{(t)})$ for each $t \in [T]$.

Suppose for contradiction that the algorithm fails. Then it must be that there exists $k \neq k'$ such that $V^1_k$ and $V^1_{k'}$ are connected in $\bm{G}^{(T)}$ and that $\tilde{V}^{(T)} = \emptyset$. Indeed, using the same argument as before, if $V^1_1, \ldots, V^1_{\hat{K}}$ are all disconnected in $\bm{G}^{(T)}$, then the connected components of $\bm{G}^{(T)}$ containing $V_1^1, \ldots, V^1_{\hat{K}}$ would have size at least $Q_{\max}$ and maximum degree at least $\tau_1$ and satisfy condition in line 4 of Algorithm~\ref{alg: step I}. However, if there exists a pair $V^1_k$ and $V^1_{k'}$ that are connected in $\bm{G}^{(T)}$, then at least one of the vertex on the chain connected $V^1_k$ and $V^1_{k'}$ would be in $\tilde{V}^{(T)}$ and thus creating a contradiction. 

Therefore, $V\bigl(\bm{G}^{(T)}\bigr)$ consists of $\bar K$ disjoint components $V^2_1, \ldots, V^2_{\bar K}$ for $\bar K \geq \hat{K}$. 
By Algorithm~\ref{alg: step I} and Algorithm~\ref{alg: theory align}, $\operatorname{deg}_{\bm{G}_n}\left(V^2_k\right)\geq \tau_1$ and $\left\{v \in V: \operatorname{deg}_{\bm{G}_n}\left(u\right) \geq \tau_1\right\} \subset \cup_{i=1}^K V^1_i$ so that, for every $k \in [\bar{K}]$, there exists $k'$ such that $V^1_{k'} \cap V^2_k \neq \emptyset$ so that $\bar{K} = \hat{K}$. But since $V^1_k \subset V\bigl(\bm{G}^{(T)}\bigr)$, it must be that, after suitable reindexing, 
\[
\forall k \in [\hat{K}], \, V^1_k \subseteq V^2_k.
\]

We note that Equation~\eqref{lemma improved eqn 1} immediately implies that $D^{(T)} < \tau_2$ so that the final output of $D^*$ is smaller than $\tau_2$.

Now we show that $D^{(t)}> D^{(t-1)}$ for all $t \in \{2, 3, \ldots, T\}$. We note that the following must be true:
\begin{align}
\tilde{V}^{\prime (t)} \subseteq \tilde{V}^{\prime (t-1)}, \quad \tilde{V}^{\prime \prime (t)} \subseteq \tilde{V}^{\prime (t-1)} \cup \tilde{V}^{\prime \prime (t-1)},
\label{eq:d_increasing_eq1}
\end{align}
where the first inclusion holds because removing nodes from a graph cannot create new cycles. The second inclusion holds because if a node $u$ in $\tilde{V}^{\prime \prime (t)}$, then \begin{align}
| \{ \bm{g} \in \mathcal{C}( \bm{G}^{(t)} \backslash \{ u \}) \,:\, |V(\bm{g})| \geq Q, \, \operatorname{deg}_{\bm{G}_n}( V(\bm{g})) \geq \tau_1 \} | > | \{ \bm{g} \in \mathcal{C}( \bm{G}^{(t)}) \,:\, |V(\bm{g})| \geq Q, \, \operatorname{deg}_{\bm{G}_n}( V(\bm{g})) \geq \tau_1 \} |  \label{eq:d_increasing_eq2}.
\end{align}
Since $\bm{G}^{(t)}$ is a subgraph of $\bm{G}^{(t-1)}$, it must be that either $u$ belongs in a cycle in $\bm{G}^{(t-1)}$ (so that $u \in \tilde{V}^{\prime (t-1)}$) or, if $u$ does not belong in any cycle in $\bm{G}^{(t-1)}$, $u$ must satisfy~\eqref{eq:d_increasing_eq2} with $\bm{G}^{(t-1)}$ in place of $\bm{G}^{(t)}$. Since we remove the nodes in $\tilde{V}^{\prime (t-1)} \cup \tilde{V}^{\prime \prime (t-1)}$ with the smallest degree, we use~\eqref{eq:d_increasing_eq1} to conclude that
\[
D^{(t-1)} = \min_{v \in \tilde{V}^{\prime (t-1)} \cup \tilde{V}^{\prime \prime (t-1)}} \operatorname{deg}_{\bm{G}_n}(v) < \min_{v \in \tilde{V}^{\prime (t)} \cup \tilde{V}^{\prime \prime (t)}} \operatorname{deg}_{\bm{G}_n}(v) = D^{(t)}. 
\]

The lemma follows as desired.
\end{proof}

\section{Additional numerical results for multiple unbalanced trees}
\label{sec: additional simulation}
We adopt the same experimental setting as in Section~\ref{sec: simulation}, 
except that the network now consists of three planted trees of unequal sizes, 
with $n_1 = 3n_3$, $n_2 = 2n_3$, and 
$n_3 \in \{200, 400, 600, 800, 1000\}$.

For $n=1200$ and $\theta=0.01$, this procedure yields thresholds of $11$ for 
$\alpha=-0.5$, $12$ for $\alpha=0$, and $9$ for $\alpha=2$. 
As $n$ increases, the corresponding $0.95$ quantiles range from $9$ to $15$. 
For simplicity and consistency across all experimental settings, we therefore 
fix $\tau$ to be the degree of the 15th highest-degree vertex in the observed graph. 
All remaining tuning parameters are chosen in the same manner as in the 
balanced two-tree setting described in Section~\ref{sec: simulation}.

The corresponding numerical results are summarized in the tables and figures below. 
Overall, the patterns are similar to those observed in the simulations presented in Section~\ref{sec: simulation}. The main difference is that the misclassification rates are uniformly higher in this setting, reflecting the increased difficulty 
of the problem when three planted trees of unequal sizes are present.

\begin{table}[H]
\centering
\caption{Mean misclassification rate with Algorithm~\ref{alg: step I} and distance recovery method for $\theta \asymp n^{-1}$ under the multiple tree setting $n_1 = 3 n_3, n_2=2 n_3$ (with $\theta=0.01$ for $n=1200$).}
{
\begin{tabular}{c|c|cc|cc|ccc|c}
\hline
\hline
\multirow{3}{*}{$\alpha$} & \multirow{3}{*}{n}
& \multicolumn{8}{c}{Misclassification Rate for Different Subsets} \\
\cline{3-10}
& 
& \multicolumn{2}{c|}{First Arriving}
& \multicolumn{2}{c|}{Highest Degree}
& \multicolumn{3}{c|}{Structural Position}
& \multirow{2}{*}{Overall} \\
\cline{3-9}
& 
& $L=10$ & $L=50$
& $L=10$ & $L=50$
& Root (Layer 0) & Layer 1 & Layer 2
& \\
\hline
\multirow{5}{*}{2}
& 1200 & 0.43 & 0.55 & 0.37 & 0.52 & 0.29 & 0.39 & 0.51 & 0.61 \\
& 2400 & 0.39 & 0.51 & 0.33 & 0.49 & 0.26 & 0.34 & 0.46 & 0.61 \\
& 3600 & 0.37 & 0.49 & 0.33 & 0.48 & 0.25 & 0.32 & 0.45 & 0.61 \\
& 4800 & 0.35 & 0.47 & 0.32 & 0.47 & 0.23 & 0.30 & 0.44 & 0.61 \\
& 6000 & 0.33 & 0.46 & 0.30 & 0.45 & 0.23 & 0.28 & 0.42 & 0.61 \\
\hline
\multirow{5}{*}{0}
& 1200 & 0.30 & 0.43 & 0.24 & 0.41 & 0.18 & 0.26 & 0.39 & 0.54 \\
& 2400 & 0.25 & 0.38 & 0.18 & 0.35 & 0.16 & 0.24 & 0.36 & 0.54 \\
& 3600 & 0.21 & 0.33 & 0.16 & 0.30 & 0.11 & 0.21 & 0.33 & 0.53 \\
& 4800 & 0.19 & 0.31 & 0.14 & 0.27 & 0.12 & 0.19 & 0.33 & 0.53 \\
& 6000 & 0.18 & 0.29 & 0.14 & 0.26 & 0.10 & 0.19 & 0.31 & 0.53 \\
\hline
\multirow{5}{*}{-0.5}
& 1200 & 0.21 & 0.32 & 0.13 & 0.31 & 0.13 & 0.20 & 0.32 & 0.44 \\
& 2400 & 0.19 & 0.27 & 0.12 & 0.25 & 0.10 & 0.19 & 0.30 & 0.44 \\
& 3600 & 0.15 & 0.23 & 0.09 & 0.19 & 0.09 & 0.16 & 0.27 & 0.43 \\
& 4800 & 0.11 & 0.18 & 0.08 & 0.16 & 0.05 & 0.11 & 0.24 & 0.42 \\
& 6000 & 0.09 & 0.16 & 0.08 & 0.14 & 0.05 & 0.09 & 0.22 & 0.42 \\
\hline
\hline
\end{tabular}
}
\label{tab:mean_error_scale_n1 3}
\end{table}

\begin{table}[H]
\centering
\caption{Mean misclassification rate with Algorithm~\ref{alg: step I} and distance recovery method for $\theta \asymp n^{-3/4}$ under the multiple tree setting $n_1 = 3 n_3, n_2=2 n_3$ (with $\theta=0.01$ for $n=1200$).}
{
\begin{tabular}{c|c|cc|cc|ccc|c}
\hline
\hline
\multirow{3}{*}{$\alpha$} & \multirow{3}{*}{n}
& \multicolumn{8}{c}{Misclassification Rate for Different Subsets} \\
\cline{3-10}
& 
& \multicolumn{2}{c|}{First Arriving}
& \multicolumn{2}{c|}{Highest Degree}
& \multicolumn{3}{c|}{Structural Position}
& \multirow{2}{*}{Overall} \\
\cline{3-9}
& 
& $L=10$ & $L=50$
& $L=10$ & $L=50$
& Root (Layer 0) & Layer 1 & Layer 2
& \\
\hline
\multirow{5}{*}{2}
& 1200 & 0.43 & 0.54 & 0.38 & 0.52 & 0.27 & 0.38 & 0.50 & 0.61 \\
& 2400 & 0.40 & 0.52 & 0.36 & 0.51 & 0.26 & 0.35 & 0.49 & 0.62 \\
& 3600 & 0.39 & 0.52 & 0.36 & 0.51 & 0.28 & 0.34 & 0.48 & 0.62 \\
& 4800 & 0.39 & 0.51 & 0.33 & 0.50 & 0.24 & 0.35 & 0.48 & 0.63 \\
& 6000 & 0.38 & 0.50 & 0.34 & 0.50 & 0.22 & 0.32 & 0.47 & 0.63 \\
\hline
\multirow{5}{*}{0}
& 1200 & 0.31 & 0.43 & 0.25 & 0.42 & 0.19 & 0.27 & 0.39 & 0.54 \\
& 2400 & 0.27 & 0.39 & 0.20 & 0.37 & 0.18 & 0.23 & 0.37 & 0.54 \\
& 3600 & 0.25 & 0.37 & 0.20 & 0.35 & 0.16 & 0.23 & 0.37 & 0.56 \\
& 4800 & 0.24 & 0.35 & 0.17 & 0.32 & 0.15 & 0.23 & 0.36 & 0.56 \\
& 6000 & 0.23 & 0.34 & 0.16 & 0.31 & 0.14 & 0.23 & 0.36 & 0.56 \\
\hline
\multirow{5}{*}{-0.5}
& 1200 & 0.22 & 0.33 & 0.15 & 0.32 & 0.13 & 0.21 & 0.33 & 0.45 \\
& 2400 & 0.19 & 0.29 & 0.12 & 0.26 & 0.11 & 0.19 & 0.31 & 0.45 \\
& 3600 & 0.16 & 0.26 & 0.10 & 0.22 & 0.10 & 0.19 & 0.30 & 0.45 \\
& 4800 & 0.16 & 0.24 & 0.10 & 0.20 & 0.09 & 0.17 & 0.29 & 0.45 \\
& 6000 & 0.12 & 0.20 & 0.09 & 0.18 & 0.06 & 0.14 & 0.27 & 0.45 \\
\hline
\hline
\end{tabular}
}
\label{tab:mean_error_scale_n34 3}
\end{table}

\begin{table}[H]
\centering
\caption{Mean misclassification rate with Algorithm~\ref{alg: step I} and distance recovery method for $\theta \asymp n^{-1/2}$ under the multiple tree setting $n_1 = 3 n_3, n_2=2 n_3$ (with $\theta=0.01$ for $n=1200$).}
{
\begin{tabular}{c|c|cc|cc|ccc|c}
\hline
\hline
\multirow{3}{*}{$\alpha$} & \multirow{3}{*}{n}
& \multicolumn{8}{c}{Misclassification Rate for Different Subsets} \\
\cline{3-10}
& 
& \multicolumn{2}{c|}{First Arriving}
& \multicolumn{2}{c|}{Highest Degree}
& \multicolumn{3}{c|}{Structural Position}
& \multirow{2}{*}{Overall} \\
\cline{3-9}
& 
& $L=10$ & $L=50$
& $L=10$ & $L=50$
& Root (Layer 0) & Layer 1 & Layer 2
& \\
\hline
\multirow{5}{*}{2}
& 1200 & 0.42 & 0.54 & 0.37 & 0.51 & 0.28 & 0.38 & 0.49 & 0.61 \\
& 2400 & 0.42 & 0.54 & 0.35 & 0.52 & 0.28 & 0.36 & 0.49 & 0.62 \\
& 3600 & 0.41 & 0.54 & 0.37 & 0.53 & 0.29 & 0.38 & 0.51 & 0.63 \\
& 4800 & 0.42 & 0.54 & 0.38 & 0.53 & 0.27 & 0.37 & 0.51 & 0.63 \\
& 6000 & 0.42 & 0.53 & 0.38 & 0.53 & 0.27 & 0.36 & 0.51 & 0.64 \\
\hline
\multirow{5}{*}{0}
& 1200 & 0.30 & 0.43 & 0.23 & 0.41 & 0.19 & 0.26 & 0.38 & 0.54 \\
& 2400 & 0.29 & 0.41 & 0.23 & 0.40 & 0.18 & 0.26 & 0.39 & 0.56 \\
& 3600 & 0.26 & 0.39 & 0.19 & 0.36 & 0.17 & 0.24 & 0.38 & 0.56 \\
& 4800 & 0.26 & 0.38 & 0.20 & 0.37 & 0.16 & 0.26 & 0.40 & 0.58 \\
& 6000 & 0.26 & 0.37 & 0.18 & 0.34 & 0.16 & 0.24 & 0.38 & 0.58 \\
\hline
\multirow{5}{*}{-0.5}
& 1200 & 0.21 & 0.32 & 0.15 & 0.31 & 0.12 & 0.20 & 0.32 & 0.44 \\
& 2400 & 0.20 & 0.30 & 0.12 & 0.29 & 0.11 & 0.20 & 0.33 & 0.46 \\
& 3600 & 0.19 & 0.29 & 0.13 & 0.25 & 0.10 & 0.20 & 0.33 & 0.48 \\
& 4800 & 0.18 & 0.28 & 0.12 & 0.24 & 0.11 & 0.20 & 0.33 & 0.48 \\
& 6000 & 0.17 & 0.26 & 0.10 & 0.22 & 0.08 & 0.19 & 0.32 & 0.48 \\
\hline
\hline
\end{tabular}
}
\label{tab:mean_error_scale_n12 3}
\end{table}

\begin{figure}[H]
    \centering
        {\includegraphics[width=.77\linewidth]{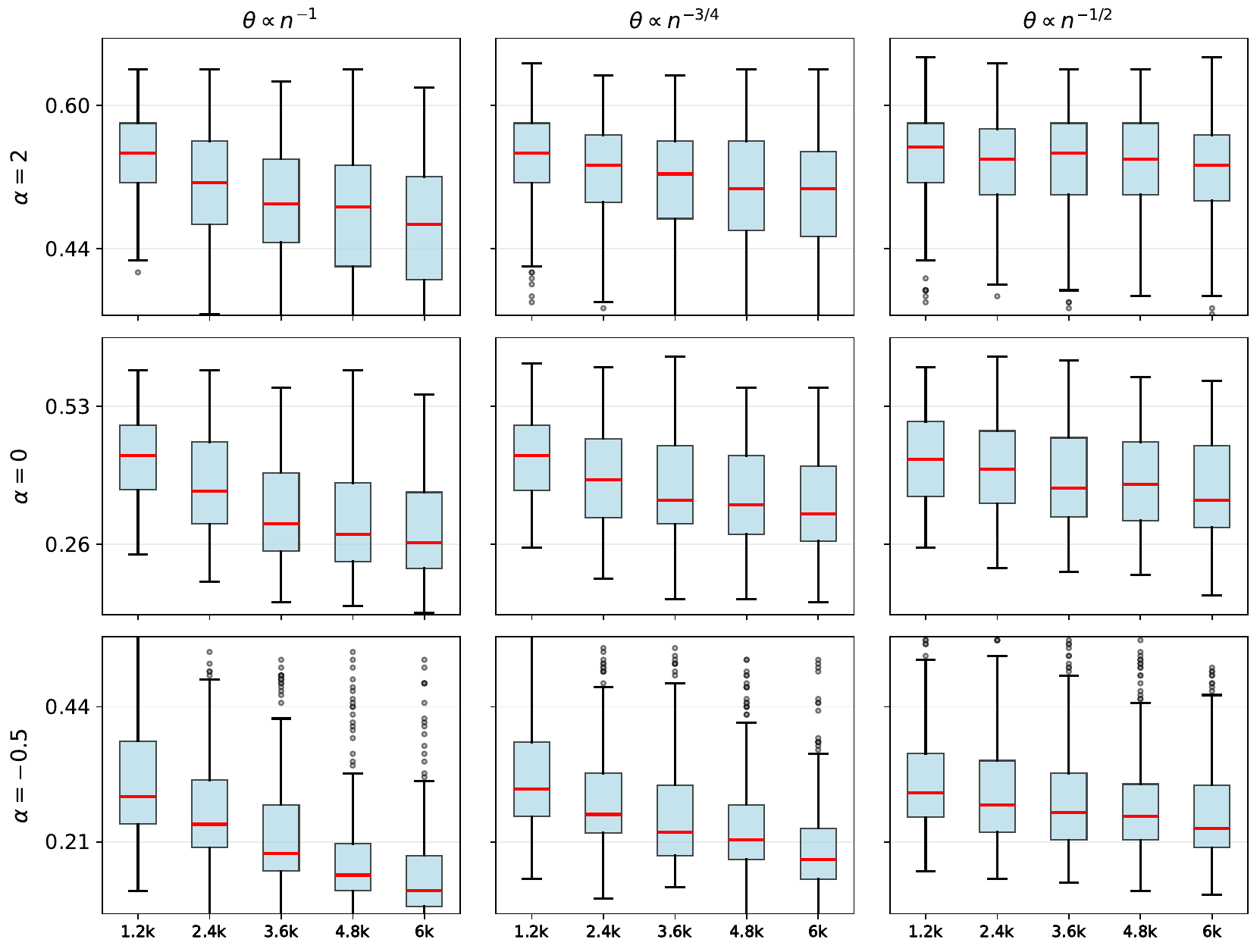}}
    \caption{Boxplots of misclassification rates for the first $50$ arriving vertices in each tree, computed over $200$ simulated networks.}
    \label{fig: 3 first 50 3 times 3}
\end{figure}

\begin{figure}[H]
    \centering
        {\includegraphics[width=.77\linewidth]{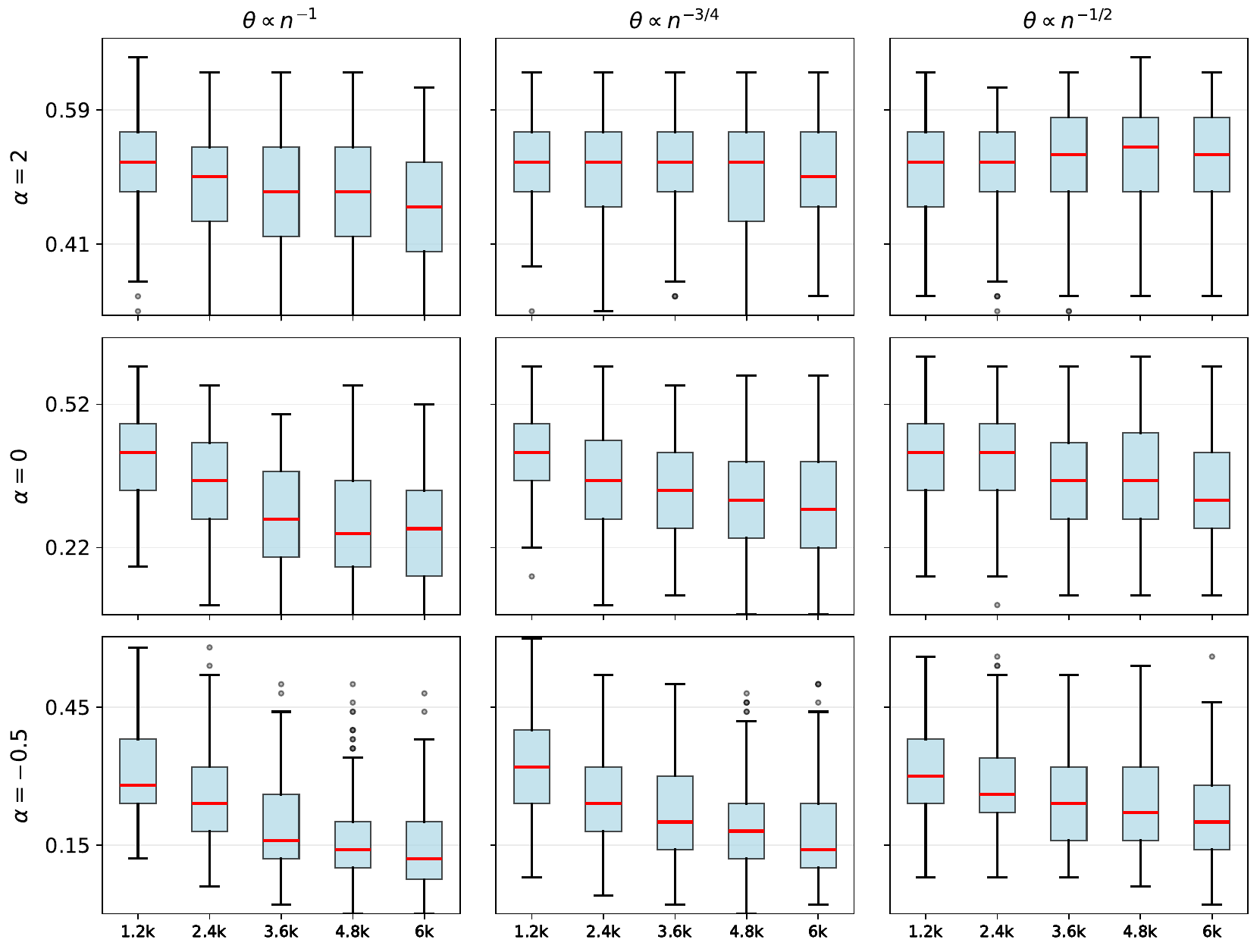}}
    \caption{Boxplots of misclassification rates for the $50$ highest degree vertices in the network, computed over $200$ simulated networks.}
    \label{fig: 3 high degree 50 3 times 3}
\end{figure}

\begin{figure}[H]
    \centering
        {\includegraphics[width=.77\linewidth]{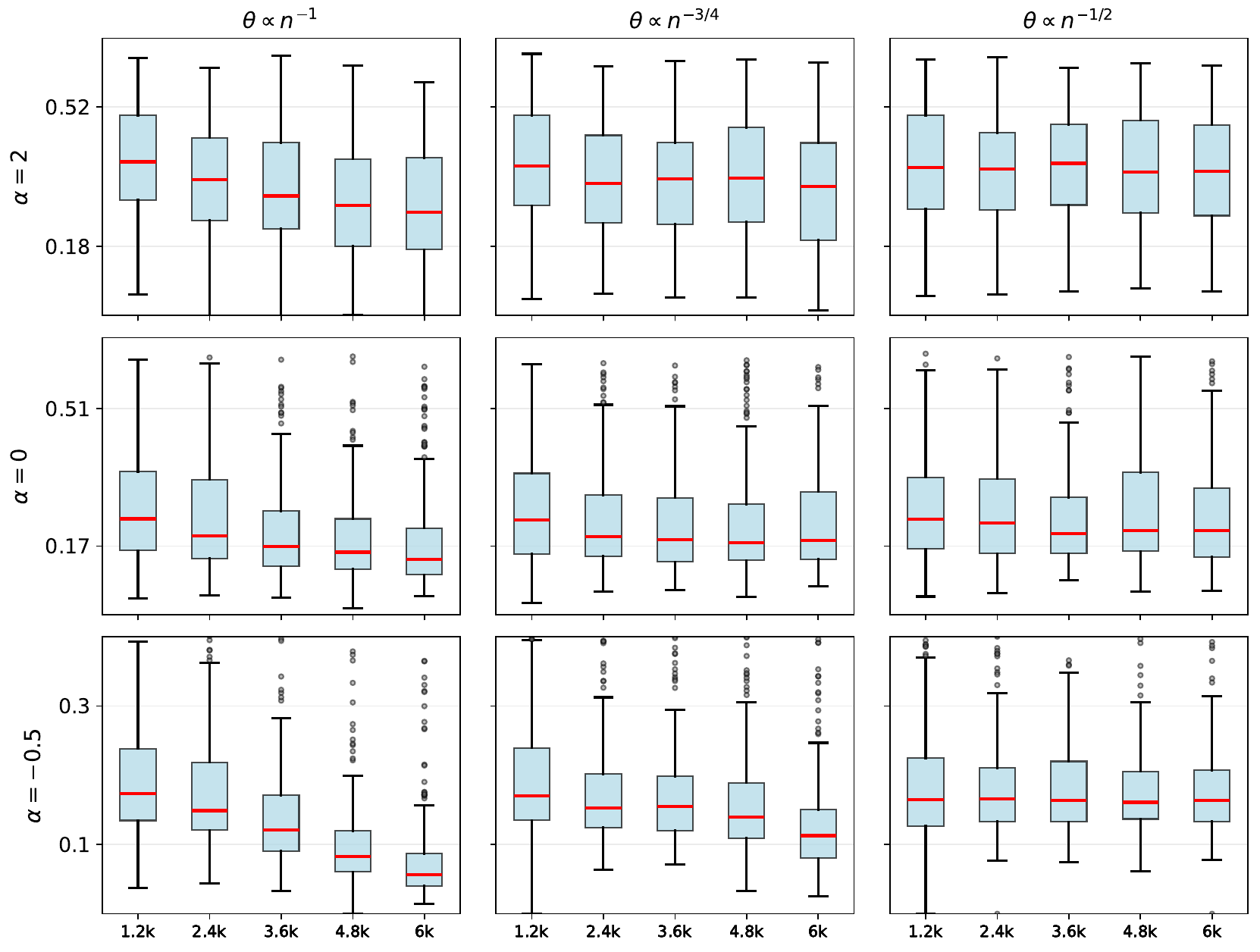}}
    \caption{Boxplots of misclassification rates for the layer-1 vertices in the network, computed over $200$ simulated networks.}
    \label{fig: 3 layer 1 3 times 3}
\end{figure}

\section{Proofs for Section~\ref{sec: theory}}
\label{sec: theory proofs}

\subsection{Useful intermediate results}
We first introduce several quantities that will be used repeatedly throughout the proof:
For $L \in \mathbb{N}$, $\varepsilon > 0$ and $c\in (0,1]$, we define
\begin{eqnarray*}
\tilde{L}(\varepsilon, c) &:=& L\biggl( \frac{\varepsilon}{3H}, c \gamma\biggl( \frac{\varepsilon}{3H}, 1 \biggr) \biggr),\\
\tilde{\gamma}(\varepsilon, L,c) & := & \gamma\biggl( \frac{\varepsilon}{3H},\,\, \tilde{L}(\varepsilon,c) \vee L \biggr).
\end{eqnarray*}
Here, the functions $\gamma(\varepsilon, L)$ and $L(\varepsilon, \gamma)$ are defined in Lemma~\ref{lemma order--degree}.

\subsection{Proof of Theorem~\ref{thm: first L}}
\label{secsec: thm first L}

\begin{proof}

Let $\{V_1^2,\dots,V_K^2\}$ and $D^*$ be the output of Algorithm~\ref{alg: step I} (and thus the input of Algorithm~\ref{alg: distance recovery}) with $\tau := C_1 n^{\frac{1}{2+\alpha}} =\gamma\left(\frac{\varepsilon}{3H}, 1\right)n^{\frac{1}{2+\alpha}}$. Let $\hat{\ell}$ be the estimated community labels outputted by Algorithm~\ref{alg: distance recovery} with $\tau' < D^*$. 

%For any $i \in [K]$, define $\tilde{V}^2_i := \{ u \in V_i^2 \,:\, \text{deg}_{\bm{G}_n}(u) \geq \tau'\}$. 

Let us define
\begin{align*}
V_1^1, \ldots, V_M^1 &= \mathcal{C}\bigl( \bm{G}_n \cap V^{\tilde{\gamma}(\varepsilon,\, L\vee Q, 1), \alpha}(\bm{G}_n)\bigr) \,\, \text{ and} \\
\mathcal{K} &= \biggl\{ i \in [M] \,: \, \text{deg}_{\bm{G}_n}(V_i^1) \geq \gamma\bigl(\frac{\varepsilon}{3H}, 1\bigr) n^{\frac{1}{2+\alpha}} \biggr\}.
\end{align*}

Without loss of generality, we may suppose that $\mathcal{K} = \{1, 2, \ldots, |\mathcal{K}|\}$. Define then the event
\[
\Omega:=\left\{ |\mathcal{K}| = K \text{ and } \exists \sigma \in S_K \text{ such that }\pi_{1:(L \vee Q)}(\boldsymbol{T}^{\sigma\left(i\right)})\subset V^1_i,\, \text{for all $i\in \left[K\right]$}\right\}.
\]

We claim that $d_{\cup_{i=1}^K \pi_{1:(L \vee Q)}(\bm{T}^i)}(\hat{\ell}, \ell) = 0$ on the event $\Omega$. To prove this, we work on the event $\Omega$ for the remainder of the proof. Since $|\mathcal{K}| = K$ and $Q \leq | \pi_{1:(L\vee Q)}( \bm{T}^{\sigma(i)})| \leq | V_i^1| $ for all $i \in \mathcal{K}$, we may conclude using Corollary~\ref{cor: simple containment} that there exists $\sigma' \in S_K$ such that
\begin{align}
\forall i \in [K], \quad V_i^1 \subset V_{\sigma'(i)}^2. \label{eq: first L key 1}
\end{align}

Moreover, by Corollary~\ref{cor: simple containment}, we also have that $D^* \leq \tilde{\gamma}(\varepsilon, L \vee Q,1) n^{\frac{1}{2+\alpha}}$ so that $\tau' < \tilde{\gamma}(\varepsilon, L \vee Q, 1) n^{\frac{1}{2+\alpha}}$ as well since $\tau'$ is chosen to be less than $D^*$. Therefore, via the definition of $V_i^1$, we have that 
\begin{align}
\forall i \in [K],\quad V_i^1 \subset V^{\tilde{\gamma}(\varepsilon, L \vee Q, 1), \alpha}(\bm{G}_n) \subset V^{\tau'}({\bm{G}_n}). \label{eq: first L key 2}
\end{align}

From the definition of Algorithm~\ref{alg: distance recovery}, for any $i \in [K]$, we assign $\hat{\ell}(u) = i$ for all $u \in V_i^2 \cap V^{\tau'}(\bm{G}_n)$. Combining this with the definition of event $\Omega$, with~\eqref{eq: first L key 1}, and with~\eqref{eq: first L key 2}, we have that 
\[
\forall u \in \pi_{1:(L \vee Q)}(\bm{T}^i), \quad \hat{\ell}(u) = \sigma' (\sigma^{-1}(i)) \,\, \text{ and } \,\, \ell(u) = i.
\]
It holds therefore that
\begin{align*}
   d_{\cup_{i=1}^K \pi_{1:L}\left(\bm{T}^i\right)}(\hat{\ell}, \ell) 
   = 
   \min_{\sigma'' \in S_K} d_{\cup_{i=1}^K \pi_{1:L}\left(\bm{T}^i\right)}^{\text{Ham}}\left(\hat{\ell},\sigma'' \circ \ell\right)
   \le  
   d_{\cup_{i=1}^K \pi_{1:L}\left(\bm{T}^i\right)}^{\text{Ham}}\left(\hat{\ell},\sigma' \circ \sigma^{-1} \circ \ell\right)
   =0.
\end{align*}

It remains only to bound the probability of the event $\Omega$. By Lemma~\ref{lemma early hist containment}, we have $\mathbb{P}(\Omega) \geq 1 - \varepsilon + \eta_n$ where $\eta_n=o(1)$ depends only on $\varepsilon, L, Q, H, \alpha, \delta$. The Theorem thus follows as desired. 

\end{proof}

\begin{lemma}
\label{lemma order--degree}
For $\boldsymbol{G}_n \sim \mathrm{PF}(\alpha, \theta, \ell, \pi)$, under Assumption~\ref{assumption: bound} and \ref{assumption: sparsity}. For any $L\in \mathbb{N}$, and $\epsilon>0$, there exists $\gamma\left(\varepsilon, L\right) \in (0, \infty)$ such that, for any $i \in [K]$,
\begin{equation}
\label{order to degree}
     \liminf_{n \rightarrow \infty} \mathbb{P}\biggl\{\pi_{1:L}\bigl(\boldsymbol{T}^i\bigr)\subset V^{\gamma(\varepsilon, L), \alpha}_{\boldsymbol{G}_n}\bigl(\boldsymbol{T}^i \bigr)\biggr\} \ge 1-\varepsilon.
\end{equation}

Similarly, for any $C > 0$, and $\epsilon>0$, there exist an integer $L\left(\varepsilon, C\right)$ such that, for any $i \in [K]$,
\begin{equation}
\label{degree to order}
     \liminf_{n \rightarrow \infty} \mathbb{P}\biggl\{ V^{C,\alpha}_{\boldsymbol{G}_n}\bigl(\boldsymbol{T}^i\bigr)\subset\pi_{1:{L\left(\varepsilon, C\right)}}\bigl(\boldsymbol{T}^i\bigr)\biggr\}\ge 1-\varepsilon.
\end{equation}
\end{lemma}

\begin{proof}
We begin by proving the first claim~\eqref{order to degree}. Let $i \in [K]$ and let $\gamma(\varepsilon, L)$ be a positive number whose definition we defer to later in~\eqref{eq:gamma_defn}. Recall the definition $\pi_{1:{L}}\bigl(\boldsymbol{T}^i\bigr):=\bigl\{{\pi}_t\left(\boldsymbol{T}^i\right)\mid t \in [L]\bigr\}$ from Definition~\ref{defin: PF} and $V^{\gamma\left(\varepsilon, L\right), \alpha}_{\boldsymbol{G}_n}\bigl(\boldsymbol{T}^i\bigr):= \bigl\{ u \in V\left(\boldsymbol{T}^i\right) \;\big|\; \operatorname{deg}_{\boldsymbol{G}_n}\left(u\right) \ge \gamma\left(\varepsilon, L\right) n^{\frac{1}{2+\alpha}} \bigr\}$ from \eqref{define VC}, we have
\begin{equation}
\label{lemma 1: eqn 1}
     \mathbb{P}\biggl\{\pi_{1:L}\bigl(\boldsymbol{T}^i\bigr)\subset V^{\gamma\left(\varepsilon, L\right), \alpha}_{\boldsymbol{G}_n}\left(\boldsymbol{T}^i\right)\biggr\} \geq 
    \mathbb{P}\Biggl(\,\bigcap^L_{j=1}\biggl\{n^{-\frac{1}{2+\alpha}}\operatorname{deg}_{\boldsymbol{G}_n}\bigl(\pi_{j}\bigl(\boldsymbol{T}^i\bigr)\bigr) >  \gamma\left(\varepsilon, L\right) \biggr\}\Biggr).
\end{equation}
%Next, we apply Lemma~\ref{lemma: G} to link the expression in~\eqref{lemma 1: eqn 1} to the distribution of the random variables $Y_i$, for $i \in [L]$.

Let $(Y_1, Y_2, \ldots )$ be a sequence of random variables specified in Lemma~\ref{lemma: G}.
By~\eqref{lemma 1: eqn 1}, the fact that $\frac{n}{n_i} \leq H$, and by applying Lemma~\ref{lemma: G} in conjunction with the Portmanteau lemma on the set $\prod_{j=1}^{L} \left(H^{\frac{1}{2+\alpha}}\gamma\left(\varepsilon, L\right),\infty\right)
   \times \prod_{j=L+1}^{\infty} \mathbb{R}
\ \subset \mathbb{R}^\mathbb{N}$ which is open with respect to the $\ell_q$ topology for any $q \geq 2$, it follows that
\begin{eqnarray}
   \label{lemma 1 eqn 2}
   \nonumber
   \liminf_{n \rightarrow \infty} \mathbb{P}\left(\pi_{1:L}\left(\boldsymbol{T}^i\right)\subset  V^{\gamma\left(\varepsilon, L\right), \alpha}_{\boldsymbol{G}_n}\left(\boldsymbol{T}^i\right)\right)&\geq& \liminf_{n \rightarrow \infty} \mathbb{P}\Biggl(\,\bigcap^L_{j=1}\biggl\{n^{-\frac{1}{2+\alpha}}\operatorname{deg}_{\boldsymbol{G}_n}\bigl(\pi_{j}\bigl(\boldsymbol{T}^i\bigr)\bigr) >  \gamma\left(\varepsilon, L\right) \biggr\}\Biggr)\\
   \nonumber
   &\geq & \liminf_{n \rightarrow \infty} \mathbb{P}\Biggl(\,\bigcap^L_{j=1}\biggl\{n_i^{-\frac{1}{2+\alpha}}\operatorname{deg}_{\boldsymbol{G}_n}\bigl(\pi_{j}\bigl(\boldsymbol{T}^i\bigr)\bigr) >  H^{\frac{1}{2+\alpha}}\gamma\left(\varepsilon, L\right) \biggr\}\Biggr)\\
   \nonumber
   &\geq &\mathbb{P}\Biggl(\,\bigcap^L_{j=1}\left\{Y_{j, \alpha} > H^{\frac{1}{2+\alpha}}\gamma\left(\varepsilon, L\right) \right\}\Biggr)\\
   \nonumber
%   &=& 1- \mathbb{P}\left(\cup^L_{j=1}\left\{Y_{j, \alpha}\geq H^{\frac{1}{2+\alpha}}\gamma\left(\varepsilon, L\right) \right\}^c\right)\\
%   \nonumber
   &=& 1- \mathbb{P}\Biggl(\, \bigcup^L_{j=1}\left\{Y_{j, \alpha} \leq H^{\frac{1}{2+\alpha}}\gamma\left(\varepsilon, L\right) \right\}\Biggr)\\
   &\ge &1-\sum^L_{j=1} \mathbb{P}\left(Y_{j, \alpha} \leq H^{\frac{1}{2+\alpha}}\gamma\left(\varepsilon, L\right) \right).
\end{eqnarray}

To bound the final term $\sum_{j=1}^L \mathbb{P}\left(Y_{j, \alpha} \leq H^{\frac{1}{2+\alpha}} \gamma\left(\varepsilon, L\right)\right)$.
we note that, by Lemma~\ref{lemma: G}, each random variable $Y_{j, \alpha}$ has a density $q_j(\cdot)$ supported on $[0, \infty)$ with respect to the Lebesgue measure. Thus, the function $t \mapsto \sum_{j=1}^L \mathbb{P}(Y_{j, \alpha} \leq t )$ is continuous on $[0, \infty)$ and hence surjective on its image $[0, L]$. Therefore, for any $\varepsilon > 0$, we choose $\gamma\left(\varepsilon, L\right)$ to be the unique positive number such that  
\begin{equation}
\label{eq:gamma_defn}   
\sum_{j=1}^L \mathbb{P}\left(Y_{j, \alpha} \leq H^{\frac{1}{2+\alpha}}\gamma\left(\varepsilon,L\right)\right) = \varepsilon.  
\end{equation}
Putting this together with inequality~\eqref{lemma 1 eqn 2}, we conclude that
\begin{equation*}
  \liminf_{n \rightarrow \infty} \mathbb{P}\left(\pi_{1:L}\bigl(\boldsymbol{T}^i\bigr)\subset  V^{\gamma\left(\varepsilon, L\right), \alpha}_{\boldsymbol{G}_n}\bigl(\boldsymbol{T}^i\bigr)\right)\geq  1-\sum^L_{j=1} \mathbb{P}\left(Y_{j, \alpha} \leq H^{\frac{1}{2+\alpha}}\gamma\left(\varepsilon, L\right) \right) = 1-\varepsilon.
\end{equation*}
This completes the proof of the first claim~\eqref{order to degree} in the lemma.
We now proceed to prove the second claim~\eqref{degree to order}. We follow a similar argument. Let $L(\varepsilon, C)$ be a positive integer whose definition we defer to later in~\eqref{eq:L_defn}. Recall the definition $\pi_{1:L\left(\varepsilon, C\right)}\left(\boldsymbol{T}^i\right):=\bigl\{{\pi}_t\left(\boldsymbol{T}^i\right)\mid t \in [L(\varepsilon, C)] \bigr\}$ from Definition~\ref{defin: PF} and $V^{C,\alpha}_{\boldsymbol{G}_n}\left(\boldsymbol{T}^i\right):= \bigl\{ u \in V\left(\boldsymbol{T}^i\right) \mid \operatorname{deg}_{\boldsymbol{G}_n}\left(u\right) \ge Cn^{\frac{1}{2+\alpha}} \bigr\}$ from \eqref{define VC}, we have
\begin{eqnarray*}
    \mathbb{P}\left( V^{C,\alpha}_{\boldsymbol{G}_n}\bigl(\boldsymbol{T}^i\bigr)\subset\pi_{1:{L\left(\varepsilon, C\right)}}\bigl(\boldsymbol{T}^i\bigr)\right)&=& \mathbb{P}\left( \pi_{\left({L\left(\varepsilon, C\right)+1}\right):{n_i}}\bigl(\boldsymbol{T}^i\bigr)\cap V^{C,\alpha}_{\boldsymbol{G}_n}\bigl(\boldsymbol{T}^i\bigr)=\emptyset\right)\\
    &=& 
    \mathbb{P}\left(\, \bigcap^{n_i}_{j=L\left(\varepsilon, C\right)+1}\left\{n^{-\frac{1}{2+\alpha}}\operatorname{deg}_{\boldsymbol{G}_n}\bigl(\pi_{j}\bigl(\boldsymbol{T}^i\bigr)\bigr)<C \right\}\right)\\
    &\ge & \mathbb{P}\left(\, \bigcap^{n_i}_{j=L\left(\varepsilon, C\right)+1}\left\{n_i^{-\frac{1}{2+\alpha}}\operatorname{deg}_{\boldsymbol{G}_n}\bigl(\pi_{j}\bigl(\boldsymbol{T}^i\bigr)\bigr)< C \right\}\right).
\end{eqnarray*}
We fix $q > \left(2+\alpha\right) \vee \frac{1}{\delta}$, and again apply Lemma~\ref{lemma: G} in conjunction with the Portmanteau lemma, noting that the set $\prod_{j=1}^{L(\varepsilon,C)} \mathbb{R} 
   \times \prod_{j=L(\varepsilon,C)+1}^{\infty} (-\infty,C)
\ \subset \ \mathbb{R}^{\mathbb{N}}$ is open with respect to the $\ell_q$ topology, to obtain
\begin{eqnarray}
\label{lemma 1 eqn 3}
\nonumber
    \liminf_{n \rightarrow \infty} \mathbb{P}\left( V^{C,\alpha}_{\boldsymbol{G}_n}\bigl(\boldsymbol{T}^i\bigr)\subset\pi_{1:{L\left(\varepsilon, C\right)}}\bigl(\boldsymbol{T}^i\bigr)\right)&\ge &\liminf_{n \rightarrow \infty} \mathbb{P}\left(\,\bigcap^{n_i}_{j=L\left(\varepsilon, C\right)+1}\left\{n_i^{-\frac{1}{2+\alpha}}\operatorname{deg}_{\boldsymbol{G}_n}\bigl(\pi_{j}\bigl(\boldsymbol{T}^i\bigr)\bigr)< C \right\}\right)\\
    \nonumber
    &\geq& \mathbb{P}\left(\, \cap^\infty_{j=L\left(\varepsilon, C\right)+1}\left\{Y_{j, \alpha}< C \right\}\right)\\
    % \nonumber
    % &=& 1- \mathbb{P}\left(\cup^\infty_{j=L\left(\varepsilon, C\right)+1}\left\{Y_{j, \alpha}< C \right\}^c\right)\\
    \nonumber
    &=& 1- \mathbb{P}\left( \cup^\infty_{j=L\left(\varepsilon, C\right)+1}\left\{Y_{j, \alpha}\geq  C \right\}\right)\\
    \nonumber
    &=& 1- \mathbb{P}\left(\cup^\infty_{j=L\left(\varepsilon, C\right)+1}\left\{Y^q_j\geq  C^q \right\}\right)\\
    \nonumber
    &\ge& 1-\mathbb{P}\left(\sum_{j=L\left(\varepsilon, C\right)+1}^\infty Y^q_j\geq  C^q \right)\\
    &\ge &1- \frac{\sum_{j=L\left(\varepsilon, C\right)+1}^\infty \mathbb{E} Y^q_j}{C^q},
\end{eqnarray}
where the last inequality in \eqref{lemma 1 eqn 3} comes from Markov inequality.
By Lemma~\ref{lemma: G}, we have that $\sum_{j=1}^\infty \mathbb{E}\bigl(Y^q_j\bigr)<\infty$. Therefore, we choose $L\left(\varepsilon,C\right)$ to be the smallest positive integer such that 
\begin{align}
\sum_{j=L(\varepsilon, C)}^\infty \mathbb{E} Y^q_j \le C^q \varepsilon. \label{eq:L_defn}
\end{align}
Putting this together with \eqref{lemma 1 eqn 3}, we have that
\begin{equation*}
     \liminf_{n \rightarrow \infty} \mathbb{P}\left( V^{C,\alpha}_{\boldsymbol{G}_n}\bigl(\boldsymbol{T}^i\bigr)\subset\pi_{1:{L\left(\varepsilon, C\right)}}\bigl(\boldsymbol{T}^i\bigr)\right)\geq 1- \frac{\sum_{j=L\left(\varepsilon, C\right)+1}^\infty \mathbb{E} Y^q_j}{C^q}\geq 1-\varepsilon.
\end{equation*}
The lemma follows as desired.
\end{proof}

\begin{lemma}
\label{lemma: noise}
Let $\boldsymbol{G}_n \sim \mathrm{PF}(\alpha, \theta, \ell, \pi)$, and suppose Assumptions~\ref{assumption: bound} and \ref{assumption: sparsity} hold. Then for any fixed $C > 0$, the following result holds:
\begin{equation*}
     \lim_{n \rightarrow \infty} \mathbb{P}\left(\bigl|E_{\boldsymbol{R}_n}\bigl(V^{C,\alpha}\left(\boldsymbol{G}_n\right)\bigr)\bigr|=0\right)=1.
\end{equation*}
\end{lemma}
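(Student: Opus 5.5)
The plan is to combine two facts. First, the high-degree set $V^{C,\alpha}(\boldsymbol{G}_n)$ is, with high probability, contained in a bounded number of early-arriving nodes. Second, a fixed finite set of nodes carries no Erd\H{o}s--R\'enyi edge among its members with probability tending to one, because $\theta\to 0$.

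The difficulty is that $V^{C,\alpha}(\boldsymbol{G}_n)$ is random and depends on $\boldsymbol{R}_n$ itself, so we cannot simply condition on it. We therefore fix $\varepsilon>0$ and use the second claim \eqref{degree to order} of Lemma~\ref{lemma order--degree}. Applied with the same $C$ to each tree $i\in[K]$, together with a union bound over the $K$ trees, it gives an integer $L_\varepsilon:=\max_i L(\varepsilon/K, C)$ such that
\[
\liminf_{n\to\infty}\mathbb{P}\Bigl(V^{C,\alpha}(\boldsymbol{G}_n)\subset W_n\Bigr)\ge 1-\varepsilon,\qquad W_n:=\bigcup_{i=1}^K \pi_{1:L_\varepsilon}(\boldsymbol{T}^i).
\]
Here we use that $V^{C,\alpha}(\boldsymbol{G}_n)=\cup_i V^{C,\alpha}_{\boldsymbol{G}_n}(\boldsymbol{T}^i)$, since $|V(\boldsymbol{G}_n)|=n$ and every node lies in exactly one tree. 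The key point is that $W_n$ is a deterministic set of at most $KL_\varepsilon$ nodes, because the arrival ordering $\pi$ is a fixed parameter of the $\mathrm{PF}(\alpha,\theta,\ell,\pi)$ model.

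On the event $V^{C,\alpha}(\boldsymbol{G}_n)\subset W_n$ we have $E_{\boldsymbol{R}_n}(V^{C,\alpha}(\boldsymbol{G}_n))\subset E_{\boldsymbol{R}_n}(W_n)$. A union bound over the at most $\binom{KL_\varepsilon}{2}$ pairs in $W_n$ gives
\[
\mathbb{P}\bigl(|E_{\boldsymbol{R}_n}(W_n)|>0\bigr)\le \binom{KL_\varepsilon}{2}\theta\le \binom{KL_\varepsilon}{2} C_0 n^{-\frac{1+\alpha}{2+\alpha}-\delta}\to 0,
\]
by Assumption~\ref{assumption: sparsity}. Combining the two bounds yields $\liminf_n \mathbb{P}(|E_{\boldsymbol{R}_n}(V^{C,\alpha}(\boldsymbol{G}_n))|=0)\ge 1-\varepsilon$. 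Letting $\varepsilon\downarrow 0$ gives the limit equal to one.

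The main obstacle is the localization step, which is already contained in Lemma~\ref{lemma order--degree}. That lemma in turn depends on Lemma~\ref{lemma: G}, which needs the rescaled $\boldsymbol{G}_n$-degrees, including the noise contribution, to converge in $\ell_q$ to the pure preferential-attachment limit. Assumption~\ref{assumption: sparsity} is what makes this hold: the noise degree of each node is of order $n\theta=o(n^{\frac{1}{2+\alpha}})$, uniformly. I would only take care that the lemma is applied with the community-normalized degrees, absorbing the factor $H$ as in the proof of \eqref{degree to order}. I would also check that $C$ fixed is legitimate, which it is, since $L(\varepsilon,C)$ is then a finite constant.
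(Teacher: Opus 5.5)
Your proposal is correct and follows essentially the same argument as the paper. You localize $V^{C,\alpha}(\boldsymbol{G}_n)$ inside the deterministic set $\bigcup_{i=1}^K \pi_{1:L}(\boldsymbol{T}^i)$ using the second claim \eqref{degree to order} of Lemma~\ref{lemma order--degree} with a union bound over trees, then bound the probability of any noise edge among these $O(1)$ nodes by $\binom{KL}{2}\theta \to 0$ via Assumption~\ref{assumption: sparsity}, and finally let $\varepsilon \downarrow 0$. The only difference is cosmetic: you apply the lemma with $\varepsilon/K$ to get $1-\varepsilon$ directly, while the paper uses $\varepsilon$ and settles for $1-K\varepsilon-o(1)$.
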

\begin{proof}

% Since $V^{C,\alpha}\left(\boldsymbol{G}_{n}\right)=\cup_{i=1}^K V^{C,\alpha}_{\boldsymbol{G}_n}\left(\boldsymbol{T}^i\right)$, it follows that 
% %
% \begin{equation}
% \label{lemma 2 eqn 1}
% \mathbb{P}\left(\left|E_{\boldsymbol{R}_n}\left(V^{C,\alpha}\left(\boldsymbol{G}_n\right)\right)\right|=0\right)=\mathbb{P}\left(\left|E_{\boldsymbol{R}_n}\left(\cup_{i=1}^K V^{C,\alpha}_{\boldsymbol{G}_n}\left(\boldsymbol{T}^i\right)\right)\right|=0\right) 
% \end{equation}
% %
Fix $\varepsilon > 0$ arbitrarily. Suppose $n \geq H L(\varepsilon, C)$ (where $L(\varepsilon, C)$ is defined in Lemma~\ref{lemma order--degree}) so that $\min_{i \in [K]} n_i \geq L(\varepsilon, C)$ and the set $\pi_{1:L(\varepsilon, C)}(\boldsymbol{T}^i)$ is well-defined. Let $\Omega_i=\bigl\{ V^{C,\alpha}_{\boldsymbol{G}_n}\left(\boldsymbol{T}^i\right)\subset\pi_{1:{L\left(\varepsilon, C\right)}}\left(\boldsymbol{T}^i\right)\bigr\}$ and $ \Omega:=\cap_{i=1}^K \Omega_i$. Using the fact that $V^{C,\alpha}\left(\boldsymbol{G}_{n}\right)=\cup_{i=1}^K V^{C,\alpha}_{\boldsymbol{G}_n}\left(\boldsymbol{T}^i\right)$, we have
\begin{eqnarray} 
\label{lemma 2 eqn 2}
&&\mathbb{P}\left(\bigl|E_{\boldsymbol{R}_n}\bigl(V^{C,\alpha}\left(\boldsymbol{G}_n\right)\bigr)\bigr|=0\right) \nonumber \\
&=&\mathbb{P}\left(\biggl|E_{\boldsymbol{R}_n}\left(\cup_{i=1}^K V^{C,\alpha}_{\boldsymbol{G}_n}\bigl(\boldsymbol{T}^i\bigr)\right)\biggr|=0\right) \nonumber \\
\nonumber
&\geq & \mathbb{P}\left(\left\{\left|E_{\boldsymbol{R}_n}\left(\cup_{i=1}^K V^{C,\alpha}_{\boldsymbol{G}_n}\bigl(\boldsymbol{T}^i\bigr)\right)\right|=0 \right\}\cap \Omega\right)\\
\nonumber
&\geq& \mathbb{P}\left(\left\{\left|E_{\boldsymbol{R}_n}\left(\cup_{i=1}^K \pi_{1:L\left(\varepsilon,C\right)}\bigl(\boldsymbol{T}^i\bigr)\right)\right|=0\right\}\cap \Omega\right)\\
\nonumber
&=& \mathbb{P}\left(\left|E_{\boldsymbol{R}_n}\left(\cup_{i=1}^K \pi_{1:L\left(\varepsilon,C\right)}\bigl(\boldsymbol{T}^i\bigr)\right)\right|=0\right)-\mathbb{P}\left(\left\{\left|E_{\boldsymbol{R}_n}\left(\cup_{i=1}^K \pi_{1:L\left(\varepsilon,C\right)}\bigl(\boldsymbol{T}^i\right)\bigr)\right|=0\right\}\cap \Omega^c\right)\\
&\geq & \mathbb{P}\left(\left|E_{\boldsymbol{R}_n}\left(\cup_{i=1}^K \pi_{1:L\left(\varepsilon,C\right)}\bigl(\boldsymbol{T}^i\bigr)\right)\right|=0\right)-\mathbb{P}\left(\Omega^c\right).
\end{eqnarray}

We begin by bounding the first term of~\eqref{lemma 2 eqn 2}. Note that the set $\cup_{i=1}^K \pi_{1:L\left(\varepsilon,C\right)}\left(\boldsymbol{T}^i\right)$
is deterministic and not dependent on $\boldsymbol{R}_n$ by Definition~\ref{defin: PF}. Therefore, the number of random edges of $\boldsymbol{R}_n$ with both endpoints in this set satisfies
\begin{equation*}
    \left| E_{\boldsymbol{R}_n} \left( \cup_{i=1}^K \pi_{1:L\left(\varepsilon,C\right)}\bigl(\boldsymbol{T}^i\bigr) \right) \right|
\sim \operatorname{Bin} \left( \frac{L(\varepsilon,C)K \left\{ L(\varepsilon,C)K - 1 \right\}}{2}, \theta \right),
\end{equation*}
since each pair of nodes in the selected set contributes an edge independently with probability $\theta$. Therefore
\begin{eqnarray}
\label{lemma 2 eqn 3}
\nonumber
\mathbb{P}\left(\left|E_{\boldsymbol{R}_n}\left(\cup_{i=1}^K \pi_{1:L\left(\varepsilon,C\right)}\bigl(\boldsymbol{T}^i\bigr)\right)\right|=0\right)&=& 1- \mathbb{P}\left(\left|E_{\boldsymbol{R}_n}\left(\cup_{i=1}^K \pi_{1:L\left(\varepsilon,C\right)}\bigl(\boldsymbol{T}^i\bigr)\right)\right|\ge 1\right)\\
\nonumber
&\geq & 1- \mathbb{E}\left(\left|E_{\boldsymbol{R}_n}\left(\cup_{i=1}^K \pi_{1:L\left(\varepsilon,C\right)}\bigl(\boldsymbol{T}^i\bigr)\right)\right|\right)\\
\nonumber
&=& 1- \frac{L(\varepsilon,C)K\left(L(\varepsilon,C)K-1\right)\theta}{2}\\
&=&1- O\bigl(n^{-\frac{1+\alpha}{2+\alpha}-\delta}\bigr).
\end{eqnarray}
The last equality follows from Assumption~\ref{assumption: sparsity}.

Now we bound the second term $\mathbb{P}\left(\Omega^c\right)$ of~\eqref{lemma 2 eqn 2}. 
By Lemma~\ref{lemma order--degree}, we have
\begin{equation}
\label{lemma 2 eqn 4}
\mathbb{P}\left(\Omega^c\right)=\mathbb{P}\left(\cup_{i=1}^K \Omega_i^c\right)\le \sum_{i=1}^K \mathbb{P}\left(\Omega_i^c\right)= \sum_{i=1}^K \mathbb{P}\left(\left\{ V^{C,\alpha}_{\boldsymbol{G}_n}\bigl(\boldsymbol{T}^i\bigr)\subset\pi_{1:{L\left(\varepsilon, C\right)}}\bigl(\boldsymbol{T}^i\bigr)\right\}^c\right) \leq K\varepsilon + o(1).
\end{equation}
Combining equations~\eqref{lemma 2 eqn 2}–\eqref{lemma 2 eqn 4}, we have shown that
\begin{equation*}  \mathbb{P}\left(\left|E_{\boldsymbol{R}_n}\left(V^{C,\alpha}\left(\boldsymbol{G}_n\right)\right)\right|=0\right)\geq 1-K\varepsilon-o(1).
\end{equation*}
Since $\varepsilon>0$ can be arbitrary small, we have 
\begin{equation*}
     \lim_{n \rightarrow \infty} \mathbb{P}\left(\left|E_{\boldsymbol{R}_n}\left(V^{C,\alpha}\left(\boldsymbol{G}_n\right)\right)\right|=0\right)=1.
\end{equation*}
\end{proof}

\begin{lemma}
\label{lemma early hist containment}

Let $L \in \mathbb{N}$, $\varepsilon > 0$ and $c\in (0,1]$. Let $\bm{G}_n \sim \mathrm{PF}(\alpha, \theta, \ell, \pi)$ and suppose Assumptions~\ref{assumption: bound} and \ref{assumption: sparsity} hold. Define
\begin{align*}
V_1, \ldots, V_M &= \mathcal{C}(\bm{G}_n \cap V^{\tilde{\gamma}(\varepsilon, L,c), \alpha}(\bm{G}_n))\\
\mathcal{K} &:= \left\{ i \in [M] \,: \, \operatorname{deg}_{\boldsymbol{G}_n}(V_i) \ge \gamma\biggl( \frac{\varepsilon}{3H}, 1 \biggr) n^{\frac{1}{2+\alpha}} \right\}.
\end{align*}

Then, with probability at least $1 - \varepsilon + \eta_n$ where $\eta_n = o(1)$ and depends only on $\varepsilon, L, \alpha, H, c, \delta$.
\begin{align}
|\mathcal{K}| &= K,\quad  \text{ and } \nonumber
\exists \text{ bijection } \sigma \,:\, [K]  \rightarrow \mathcal{K} \text{ s.t.}\\
\quad \pi_1(\bm{T}^i) &\subset V^{\gamma( \frac{\varepsilon}{3H}, 1 ), \alpha}_{\bm{G}_n}(\bm{T}^i)
\subset V^{c\gamma(\frac{\varepsilon}{3H}, 1 ), \alpha}_{\bm{G}_n}(\bm{T}^i)
\subset \pi_{1:(L \vee \tilde{L}(\varepsilon,c))}( \bm{T}^{i}) \subset V_{\sigma(i)},\, \text{ for all $i \in [K]$}. \label{prop 2 eqn 1}
\end{align}

\end{lemma}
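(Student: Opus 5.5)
The plan is to intersect a finite number of high-probability events built from Lemma~\ref{lemma order--degree} and Lemma~\ref{lemma: noise}, and then argue deterministically on that intersection. Write $\gamma_1 := \gamma(\frac{\varepsilon}{3H},1)$ and $L^* := L\vee \tilde{L}(\varepsilon,c)$, so that $\tilde\gamma(\varepsilon,L,c)=\gamma(\frac{\varepsilon}{3H},L^*)$. For each $i\in[K]$ I introduce four events:
\begin{itemize}
\item[(E1)] $\pi_1(\bm{T}^i)\in V^{\gamma_1,\alpha}_{\bm{G}_n}(\bm{T}^i)$, from~\eqref{order to degree} with $L=1$;
\item[(E2)] $V^{c\gamma_1,\alpha}_{\bm{G}_n}(\bm{T}^i)\subset \pi_{1:L^*}(\bm{T}^i)$, from~\eqref{degree to order} with $C=c\gamma_1$; this is exactly how $\tilde L$ is defined, and $L^*\ge \tilde L$ gives the inclusion;
\item[(E3)] $\pi_{1:L^*}(\bm{T}^i)\subset V^{\tilde\gamma,\alpha}_{\bm{G}_n}(\bm{T}^i)$, from~\eqref{order to degree} with $L^*$ in place of $L$;
\item[(E4)] $E_{\bm{R}_n}(V^{\tilde\gamma\wedge c\gamma_1,\alpha}(\bm{G}_n))=\emptyset$, from Lemma~\ref{lemma: noise}.
\end{itemize}
The middle inclusion $V^{\gamma_1,\alpha}\subset V^{c\gamma_1,\alpha}$ in~\eqref{prop 2 eqn 1} is trivial since $c\le 1$. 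Each of (E1)--(E3) fails with probability at most $\frac{\varepsilon}{3H}+o(1)$ per community. Summing over the $K\le H$ communities gives a total failure probability of at most $\varepsilon+o(1)$, and (E4) adds only $o(1)$. The $\eta_n$ collects these $o(1)$ terms, which depend only on the listed constants.

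On the intersection, the chain~\eqref{prop 2 eqn 1} holds except for the last inclusion. For that, the key observation is that $\pi_{1:L^*}(\bm{T}^i)$ is connected in $\bm{T}^i$: every early node's parent is itself an earlier arrival, so the set induces a subtree. By (E3) this subtree lies entirely in $V^{\tilde\gamma,\alpha}(\bm{G}_n)$, and since $\bm{T}^i\subset \bm{G}_n$, it lies inside a single component $V_{\sigma(i)}$. That component contains the root, which by (E1) has degree at least $\gamma_1 n^{1/(2+\alpha)}$, so $\sigma(i)\in\mathcal{K}$.

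It remains to show that $\sigma$ is a bijection onto $\mathcal{K}$. I need $\tilde\gamma\le c\gamma_1$; this should follow from monotonicity of $\gamma(\varepsilon,\cdot)$ in $L$ together with $L^*\ge \tilde L$, and I will check it against the definitions or else shrink the threshold. Granting it, (E4) says that no noise edge joins two nodes of $V^{\tilde\gamma,\alpha}(\bm{G}_n)$. Hence every component $V_j$ is a connected subgraph of the forest and lies inside one tree, which gives injectivity. For surjectivity, take $j\in\mathcal{K}$ contained in tree $\bm{T}^i$. Its max-degree node has degree at least $\gamma_1 n^{1/(2+\alpha)}\ge c\gamma_1 n^{1/(2+\alpha)}$, so by (E2) it lies in $\pi_{1:L^*}(\bm{T}^i)\subset V_{\sigma(i)}$, forcing $j=\sigma(i)$.

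The main obstacle is the ordering $\tilde\gamma\le c\gamma_1$, which is needed so that (E4) covers every node in the components. If monotonicity of $\gamma(\cdot,L)$ is not immediate from~\eqref{eq:gamma_defn}, I would argue it directly: the left side of~\eqref{eq:gamma_defn} is increasing in $L$, so $\gamma$ decreases in $L$. This gives $\gamma(\frac{\varepsilon}{3H},L^*)\le\gamma(\frac{\varepsilon}{3H},1)$. The remaining factor of $c$ has to come from taking $L^*\ge \tilde L(\varepsilon,c)$ large enough, and that is the delicate point.
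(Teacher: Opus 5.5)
Your proof is correct and is essentially the paper's. Your (E1) and (E2) are its $\Omega_1$ and $\Omega_2$, and your (E3) and (E4), together with the connectivity of the subtree $\pi_{1:L^*}(\bm{T}^i)$, are how it obtains $\Omega_3$ through $\mathcal{A}_1\cap\mathcal{A}_2$, with the same $\varepsilon/3$ budget for each family of events.

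The ``main obstacle'' you flag is not a real one, and no comparison between $\tilde\gamma$ and $c\gamma_1$ is needed. Lemma~\ref{lemma: noise} holds for every fixed $C>0$; the paper simply applies it with $C=\tilde\gamma(\varepsilon,L,c)$. Your own (E4), taken at level $\tilde\gamma\wedge c\gamma_1$, already covers all of $V^{\tilde\gamma,\alpha}(\bm{G}_n)$. Finally, surjectivity uses only (E2) and the disjointness of the components.
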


\begin{proof}
Define
\[
\tilde{\gamma}(\varepsilon, L, c) := \gamma\biggl( \frac{\varepsilon}{3H},\,\, \tilde{L}(\varepsilon, c) \vee L \biggr), \quad \text{ where } \tilde{L}(\varepsilon, c) := L\biggl( \frac{\varepsilon}{3H}, c \gamma\biggl( \frac{\varepsilon}{3H}, 1 \biggr) \biggr).
\]
Let us define a ``good" event $
    \Omega := \Omega_1\cap\Omega_2\cap \Omega_3$ where
\begin{eqnarray*}
   \Omega_1&:=& \left\{\pi_{1}(\boldsymbol{T}^i)\subset V^{\gamma( \frac{\varepsilon}{3H}, 1 ), \alpha}_{\boldsymbol{G}_n}(\boldsymbol{T}^i), \text{for all $i \in \left[K\right]$}\right\} \\
   \Omega_2&:=& \left\{V^{\gamma( \frac{\varepsilon}{3H}, 1 ), \alpha}_{\boldsymbol{G}_n}(\boldsymbol{T}^i) \subseteq V^{ c\gamma( \frac{\varepsilon}{3H}, 1 ), \alpha}_{\boldsymbol{G}_n}(\boldsymbol{T}^i)\subset \pi_{1:\tilde{L}(\varepsilon, c)}(\boldsymbol{T}^i), \text{for all $i \in \left[K\right]$}\right\}\\
   \Omega_3&:=& \left\{ \exists \text{ injective } \sigma \,:\, [K] \rightarrow [M] \text{ such that }\, \pi_{1:{\tilde{L}(\varepsilon, c)} \vee L }(\boldsymbol{T}^{i})\subset V_{\sigma(i)},  \text{for all $i \in \left[K\right]$}\right\}.
\end{eqnarray*}

By definition, on $\Omega \subset \Omega_3$, it holds that $\pi_{1:L}(\bm{T}^i) \subset V_{\sigma(i)}$ for some injective $\sigma \,:\, [K] \rightarrow [M]$. We now show that $| \mathcal{K} | = K$ on $\Omega$ and that $\sigma$ is a bijection mapping $[K]$ to $\mathcal{K}$. To that end, suppose we are on $\Omega$. Let $\sigma \,:\, [K] \rightarrow [M]$ be the injective mapping such that $\pi_{1:{\tilde{L}(\varepsilon, c)} \vee L }(\boldsymbol{T}^i)\subset V_{\sigma(i)}$ for all $i \in [K]$ whose existence is guaranteed on $\Omega_3$. Then, $\pi_1(\bm{T}^i) \in V_{\sigma(i)}$ and $\pi_1(\bm{T}^i) \in V_{\bm{G}_n}^{{\gamma( \frac{\varepsilon}{3H}, 1 )}, \alpha}(\bm{T}^i)$; hence, for all $i \in [K]$,
\[
\operatorname{deg}_{\bm{G}_n}(V_{\sigma(i)}) \geq \operatorname{deg}_{\bm{G}_n}(\pi_1(\bm{T}^i)) \geq {\gamma( \frac{\varepsilon}{3H}, 1 )} n^{\frac{1}{2+\alpha}}.
\]
It thus holds that $\sigma([K]) \subset \mathcal{K}$ and so $|\mathcal{K}| \geq K$. To show that $|\mathcal{K}| \leq K$, first note that, 
by the fact that $V(\bm{G}_n) = \cup_{i=1}^K V(\bm{T}^i)$ and the definition of $\Omega_2$ and $\Omega_3$, we have
\[
V^{{\gamma( \frac{\varepsilon}{3H}, 1 )}, \alpha}(\bm{G}_n) = \bigcup_{i=1}^K V_{\bm{G}_n}^{{\gamma( \frac{\varepsilon}{3H}, 1 )}, \alpha}(\bm{T}^i) \subset \bigcup_{i=1}^K V_{\bm{G}_n}^{{c\gamma( \frac{\varepsilon}{3H}, 1 )}, \alpha}(\bm{T}^i) \subset \bigcup_{i=1}^K \pi_{1:{\tilde{L}(\varepsilon, c)} }(\bm{T}^i) \subset \bigcup_{i=1}^K V_{\sigma(i)}.
\]
Therefore, if $i \notin \sigma([K])$, then $V_i \cap V^{{\gamma( \frac{\varepsilon}{3H}, 1 )}, \alpha}(\bm{G}_n) = \emptyset$ and hence $i \notin \mathcal{K}$ as well. We thus have that $\mathcal{K} \subset \sigma([K])$. Therefore, we conclude that $|\mathcal{K}| = K$ and that $\sigma$ is a bijection from $[K]$ to $\mathcal{K}$.

It remains to bound the probability of the event $\Omega$. We have by Lemma~\ref{lemma order--degree} that for each $i \in [K]$, there exists $\eta^i_n = o(1)$ (dependent only on $\varepsilon, L, H$) such that
\[
\mathbb{P}\bigl\{ \pi_1(\bm{T}^i) \in V_{\bm{G}_n}^{{\gamma( \frac{\varepsilon}{3H}, 1 )}, \alpha}(\bm{T}^i) \bigr\} \geq 1 - \frac{\varepsilon}{3H} + \eta^i_n.
\]
By a union bound and writing $\eta'_n = \sum_{i=1}^K \eta^i_n$, we have
\[
\mathbb{P}\bigl\{ \forall i \in [K], \, \pi_1(\bm{T}^i) \in V_{\bm{G}_n}^{{\gamma( \frac{\varepsilon}{3H}, 1 )}, \alpha}(\bm{T}^i) \bigr\} \geq 1 - \frac{K\varepsilon}{3H} + \eta'_n \geq 1 - \frac{\varepsilon}{3} + \eta'_n. 
\]
Thus, $\mathbb{P}(\Omega_1^c) \leq \frac{\varepsilon}{3} - \eta'_n$. Recall now that $\tilde{L}(\varepsilon, c) = L\bigl(\frac{\varepsilon}{3H}, {c\gamma\bigl( \frac{\varepsilon}{3H}, 1 \bigr)}\bigr)$; via a similar argument, by using the second part of Lemma~\ref{lemma order--degree} and a union bound, we also obtain that $\mathbb{P}( \Omega_2^c) \leq \frac{\varepsilon}{3} - \eta''_n$ where $\eta''_n \rightarrow 0$ (dependent only on $\varepsilon, L, H$) as $n \rightarrow \infty$. 
% {\color{purple}
% Now we apply Proposition~\ref{prop early hist containment} with $C_1 = \tilde{\gamma}(\varepsilon, L) = \gamma(\frac{\varepsilon}{3H}, \max(\tilde{L}(\varepsilon), L))$ to obtain that there exists $\eta'''_n$ (dependent on $\varepsilon, L$) that is $o(1)$ as $n \rightarrow \infty$ such that
% \[
% \mathbb{P}( \Omega^c_3 ) \leq \frac{\varepsilon}{3} - \eta'''_n. 
% \]
% }
% {

To upper bound $\mathbb{P}(\Omega_3^c)$, we first define $\mathcal{A}:=\mathcal{A}_1 \cap \mathcal{A}_2$, where
\begin{eqnarray*}
  \mathcal{A}_1&:=&\bigl\{ \pi_{1:{\tilde{L}(\varepsilon, c)} \vee L }(\boldsymbol{T}^i)\subset V^{{{\tilde{\gamma}(\varepsilon, L,c)}},\alpha}_{\boldsymbol{G}_n}(\boldsymbol{T}^i), \text{for all $i \in \left[K\right]$}\bigr\} \\
  \mathcal{A}_2 &:=& \left\{\left|E_{\boldsymbol{R}_n}\bigl(V^{{{\tilde{\gamma}(\varepsilon, L,c)}}, \alpha}(\boldsymbol{G}_n) \bigr)\right|=0\right\}.
\end{eqnarray*}
We now claim that $\mathcal{A}\subset \Omega_3$.
We verify the claim by showing that on $\mathcal{A}$, there exist $\sigma: [K]\rightarrow [M]$ such that $\pi_{1:({\tilde{L}(\varepsilon, c)} \vee L)}\left(\bm{T}^i\right)\subseteq V_{\sigma(i)}$ for all $i\in [K]$.

We note that since $V(\bm{G}_n) = \sqcup_{i=1}^K V(\bm{T}^i)$, we have $V^{{{\tilde{\gamma}(\varepsilon, L,c)}},\alpha}(\bm{G}_n) = \sqcup_{i=1}^K V_{\bm{G}_n}^{{{\tilde{\gamma}(\varepsilon, L,c)}}, \alpha}(\bm{T}^i)$. 
Moreover, on $\mathcal{A}_2$, for any $i \neq j$, the two sets $V^{{{\tilde{\gamma}(\varepsilon, L,c)}}, \alpha}_{\boldsymbol{G}_n}(\boldsymbol{T}^i)$ and $V^{{{\tilde{\gamma}(\varepsilon, L,c)}}, \alpha}_{\boldsymbol{G}_n}(\boldsymbol{T}^j)$ are disconnected in $\bm{G}_n$; to see this, note that
\[
E_{\boldsymbol{G}_n}\bigl(V^{{{\tilde{\gamma}(\varepsilon, L,c)}}, \alpha}_{\boldsymbol{G}_n}(\boldsymbol{T}^i),V^{{{\tilde{\gamma}(\varepsilon, L,c)}}, \alpha}_{\boldsymbol{G}_n}(\boldsymbol{T}^j)\bigr)=E_{\boldsymbol{R}_n}\bigl(V^{{{\tilde{\gamma}(\varepsilon, L,c)}}, \alpha}_{\boldsymbol{G}_n}(\boldsymbol{T}^i),V^{{{\tilde{\gamma}(\varepsilon, L,c)}}, \alpha}_{\boldsymbol{G}_n}(\boldsymbol{T}^j)\bigr)\subset E_{\boldsymbol{R}_n}\bigl(V^{{{\tilde{\gamma}(\varepsilon, L,c)}}, \alpha}_{\boldsymbol{G}_n}\left(\boldsymbol{G}_{n}\bigr)\right)=\emptyset,
\]
where the first equality follows because $\bm{T}^i$ and $\bm{T}^j$ are disconnected and the second equality follows from the definition of $\mathcal{A}_2$. 

Therefore, each $V^{{{\tilde{\gamma}(\varepsilon, L,c)}}, \alpha}_{\bm{G}_n}(\bm{T}^i)$ must correspond to a subset of $V_1, \ldots, V_M$; more precisely, there exists a mapping $\kappa \,:\, [M] \rightarrow [K]$ such that $V^{{{\tilde{\gamma}(\varepsilon, L,c)}}, \alpha}_{\bm{G}_n}(\bm{T}^i) = \sqcup_{j \in \kappa^{-1}(i)} V_j$. Since $\pi_{1:\tilde{L}(\varepsilon, c) \vee L }(\bm{T}^i)$ is connected in $\bm{T}^i$ and hence in $\bm{G}_n$, on event $\mathcal{A}_1$, it must be that there exists $j^*_i \in \kappa^{-1}(i)$ such that $\pi_{1:{\tilde{L}(\varepsilon, c)} \vee L }(\bm{T}^i) \subset V_{j^*_i}$. We may then define $\sigma(i) = j^*_i$ so that $\pi_{1:{\tilde{L}(\varepsilon, c)} \vee L }(\bm{T}^i) \subset V_{\sigma(i)}$; note that $\sigma$ is injective by definition. 
The claim $\mathcal{A}\subset \Omega_3$ follows as desired.

It remains only then to upper bound the probability of $\mathbb{P}(\mathcal{A}^c)$. 
Recalling that ${{\tilde{\gamma}(\varepsilon, L,c)}} = \gamma(\frac{\varepsilon}{3H}, {\tilde{L}(\varepsilon, c)} \vee L )$, we have by Lemma~\ref{lemma order--degree} that, for each $i \in [K]$, there exists $\mu^i_n$ (dependent on $\varepsilon, L$) that is $o(1)$ as $n \rightarrow \infty$ such that
\[
\mathbb{P}\bigl\{ \pi_{1:{\tilde{L}(\varepsilon, c)} \vee L }(\bm{T}^i) \subset V_{\bm{G}_n}^{{{\tilde{\gamma}(\varepsilon, L,c)}}, \alpha}(\bm{T}^i) \bigr\} \geq 1 - \frac{\varepsilon}{3H} + \mu^i_n.
\]
By an application of union bound, by defining $\mu'_n = \sum_{i=1}^K \mu^i_n$, and using the fact that $H \geq K$, we have that
\[
\mathbb{P}(\mathcal{A}_1)= \mathbb{P}\bigl\{\forall i \in [K],\,  \pi_{1:{\tilde{L}(\varepsilon, c)} \vee L }(\bm{T}^i) \subset V_{\bm{G}_n}^{{{\tilde{\gamma}(\varepsilon, L,c)}}, \alpha}(\bm{T}^i) \bigr\} \geq 1 - \frac{K\varepsilon}{3H} + \mu'_n \geq 1 -\frac{\varepsilon}{3} + \mu'_n.
\]

Now, by Lemma~\ref{lemma: noise}, there exists $\mu''_n$ (dependent only on $\varepsilon, L, H$) that is $o(1)$ as $n \rightarrow \infty$ such that $\mathbb{P}(\mathcal{A}_2) \geq 1 + \mu''_n$. Writing $\mu_n = \mu'_n + \mu''_n$ and using another union bound, we then have that $\mathbb{P}(\Omega_3)\geq \mathbb{P}(\mathcal{A}) \geq 1 - \frac{\varepsilon}{3} + \mu_n$ as desired.

By another application of the union bound, we have that $\mathbb{P}(\Omega) \geq 1 - \varepsilon + \eta_n$ where 
$\eta_n = \eta'_n + \eta''_n + \mu_n$. The proposition thus follows as desired. 

\end{proof}

\subsection{Proof of Corollary~\ref{cor: first L}}
\begin{proof}

Define the event $\Omega = \Omega_1 \cap \Omega_2$ where
\begin{eqnarray*}
 \Omega_1 &:=&\left\{d_{\cup_{i=1}^K \pi_{1:L(\varepsilon, \tilde{C})}\left(\bm{T}^i\right)}(\hat{\ell}, \ell) = 0\right\},\\  
 \Omega_2 &:= &\left\{V^{\tilde{C}, \alpha}\left(\bm{G}_n\right) \subset  \cup_{i=1}^K \pi_{1:L(K^{-1}\varepsilon, \tilde{C})}\bigl(\bm{T}^i\bigr)\right\}.
 \end{eqnarray*}

It is clear that on event $\Omega$, the conclusion of the Corollary~\eqref{cor: first L eqn 1} holds; indeed, by Definition~\ref{defin mismatch},
\begin{eqnarray*}
d_{V^{\tilde{C}, \alpha}\left(\bm{G}_n\right)}(\hat{\ell}, \ell)
&:=& 
\min_{\sigma \in S_K} \sum_{u\in V^{\tilde{C}, \alpha}\left(\bm{G}_n\right) }\mathbbm{1}\left\{{\ell} \left(u\right)\neq \sigma\circ \hat{\ell} \left(u\right)\right\}
\le \min_{\sigma \in S_K} \sum_{u\in \cup_{i=1}^K \pi_{1:L(K^{-1}\varepsilon, \tilde{C})}\left(\bm{T}^i\right) }\mathbbm{1}\left\{{\ell} \left(u\right)\neq \sigma\circ \hat{\ell} \left(u\right)\right\}\\
& = & d_{\cup_{i=1}^K \pi_{1:L(K^{-1}\varepsilon, \tilde{C})}\left(\bm{T}^i\right)}(\hat{\ell}, \ell)=0.
\end{eqnarray*}

We thus only need to lower bound the probability of $\Omega$. By Theorem~\ref{thm: first L}, we have $\mathbb{P}(\Omega_1^c)\le \varepsilon + \eta_n$ where $\eta_n = o(1)$ and depends only on $\varepsilon, \tilde{C}, Q, \alpha, H, \delta$.
By \eqref{degree to order} in Lemma~\ref{lemma order--degree}, we have 
\begin{equation*}
V^{\tilde{C}, \alpha}\left(\bm{G}_n\right):= \cup_{i=1}^K V_{\bm G_n}^{\tilde{C}, \alpha}\left(\bm{T}^i\right)\subset \cup_{i=1}^K \pi_{1:L\left(K^{-1}\varepsilon, \tilde{C}\right)}\left(\bm{T}^i\right).    
\end{equation*}
with probability at least $1-\varepsilon-\eta'_n$ where $\eta'_n = o(1)$ and depends only on $\varepsilon, \tilde{C}, Q, \alpha, H$. We thus have that $\mathbb{P}(\Omega_2^c) \leq \varepsilon + \eta'_n$. Therefore, we have
\begin{equation*}
\mathbb{P}\left(\Omega\right) \geq 1 - 2 \varepsilon - \eta_n - \eta'_n.
\end{equation*}

The Corollary then follow as desired.

\end{proof}

\subsection{Proof of Theorem~\ref{thm: layer 1/2}}
\begin{proof}

Let $V^2_1, \ldots, V^2_K$ be the output of Algorithm~\ref{alg: step I} (and thus the input of Algorithm~\ref{alg: distance recovery}). Recall that $\tau = \gamma(\frac{\varepsilon}{18H}, 1) n^{\frac{1}{2+\alpha}}$. 

Let us define
\begin{align*}
V_1^1, \ldots, V_M^1 &= \mathcal{C}\bigl( \bm{G}_n \cap V^{\tilde{\gamma}(\frac{\varepsilon}{6},\, Q, c_0), \alpha}(\bm{G}_n)\bigr) \,\, \text{ and} \\
\mathcal{K} &= \biggl\{ i \in [M] \,: \, \text{deg}_{\bm{G}_n}(V_i^1) \geq \gamma\bigl(\frac{\varepsilon}{18H}, 1\bigr) n^{\frac{1}{2+\alpha}} \biggr\}.
\end{align*}

It will be convenient to also define, for $j \in [K]$,
\begin{align*}
\tilde{V}^2_j := V^2_j \cap V^{\tau'}(\bm{G}_n) \subset V_j^2 \cap V^{c_0 \gamma(\frac{\varepsilon}{18H}, 1), \alpha}(\bm{G}_n).
\end{align*}

To prove the first claim of the Theorem, i.e.~\eqref{thm: layer 1/2 eqn 1}, we let $C_2$ be a positive constant whose definition will be specified later and also define the "good" event $\Omega=\Omega_1\cap \Omega_2$ where
\begin{align}
\Omega_1:=\Bigl\{& {|\mathcal{K}|} = K \text{ and } \exists \sigma \in S_K \text{ s.t. } \label{eq:omega1_defn}\\
&\pi_1(\bm{T}^{i}) \subset V^{{\gamma(\frac{\varepsilon}{18H}, 1)}, \alpha}_{\bm{G}_n}(\bm{T}^i) \subset V^{{c_0\gamma(\frac{\varepsilon}{18H}, 1)}, \alpha}_{\bm{G}_n}(\bm{T}^i) \subset \pi_{1:(Q \vee {\tilde{L}(\varepsilon/6, c_0)})}(\boldsymbol{T}^{i})\subset {V}^1_{\sigma(i)},\quad \forall i \in [K]\Bigr\} \nonumber \\ 
\Omega_2:= \Biggl\{&\frac{\Bigr|E_{\boldsymbol{R}_n}\bigl(V^{c_0\gamma\left(\frac{\varepsilon}{18 H}, 1\right), \alpha}(\boldsymbol{G}_n), \,\mathcal{L}_1\left(\boldsymbol{F}_n\right)\bigr)\Bigr|}{\left| \mathcal{L}_1\left(\boldsymbol{F}_n\right)\right|}\le \frac{C_2 n^{-\frac{1+\alpha}{2+\alpha}-\delta}}{2}\Biggr\}. \nonumber
\end{align}

\textbf{Part A:} We prove the first statement of the Theorem, i.e.~\eqref{thm: layer 1/2 eqn 1}. Let us first show that~\eqref{thm: layer 1/2 eqn 1} holds on the event $\Omega$. We proceed in 3 steps. Let $\sigma \in S_K$ be the permutation whose existence is guaranteed on $\Omega_1$.

\textbf{Step A1:} We claim that on $\Omega$, there exists $\sigma'' \in S_K$ such that
\begin{align}
\forall i \in [K], \quad \pi_1(\bm{T}^{i}) \subset V_{\sigma''(i)}^2 \cap V^{\tau'}(\bm{G}_n) = \tilde{V}_{\sigma''(i)}^2\label{eq:pi1_containment1}
\end{align}
To see this, we work on $\Omega$ and note that since $|\mathcal{K}| = K$ and $Q \leq |\pi_{1:(Q \vee {\tilde{L}(\varepsilon/6, c_0)})}(\bm{T}^{i}) | \leq |V_{\sigma(i)}^1|$ for all $i \in [K]$, we may apply Corollary~\ref{cor: simple containment} to conclude that there exists $\sigma' \in S_K$ such that 
\begin{align}
\forall i \in [K], \quad V_i^1 \subset V_{\sigma'(i)}^2. \label{eq:v1v2_containment1}
\end{align}
Since $\tau' \leq \gamma(\frac{\varepsilon}{18H}, 1) n^{\frac{1}{2+\alpha}}$, 
\begin{align*}
\pi_1(\bm{T}^i) &\subset V_{\bm{G}_n}^{\gamma(\frac{\varepsilon}{18H}, 1), \alpha}(\bm{T}^i) = V_{\bm{G}_n}^{\gamma(\frac{\varepsilon}{18H}, 1), \alpha}(\bm{T}^i) \cap V_{\sigma(i)}^1 \\
&\subset V_{\bm{G}_n}^{\gamma(\frac{\varepsilon}{18H}, 1), \alpha}(\bm{T}^i) \cap V_{\sigma'(\sigma(i))}^2 \subset V^{\tau'}(\bm{G}_n) \cap V_{\sigma'(\sigma(i))}^2. 
\end{align*}
We conclude then that~\eqref{eq:pi1_containment1} holds with $\sigma'' = \sigma' \circ \sigma$.

\textbf{Step A2:} We next claim that, on $\Omega_1$, for every $i,j \in [K]$ such that $j \neq \sigma''(i)$, $\tilde{V}^2_{j} \cap V(\bm{T}^i) =\emptyset$, i.e., 
\begin{align}
\text{for all $j \in [K]$},\quad \tilde{V}^2_{j}\subseteq V(\bm{T}^{\sigma^{\prime\prime -1}(j)}). 
\label{eq: tilde vj2 well classified 2}
\end{align}
To see this, let $i, j \in [K]$ such that $j \neq \sigma''(i)$; since $\tilde{V}_{j}^2:=V_{j}^2 \cap V^{\tau'}(\bm{G}_n)$, we have 
\begin{eqnarray}
\label{eq: tilde vj2 well classified}
\nonumber
\tilde{V}_{j}^2 \cap V(\bm{T}^i)&=& \bigr({V}_{j}^2 \cap V^{\tau'}(\bm{G}_n)\bigr) \cap V(\bm{T}^i)={V}_{j}^2  \cap \bigr( V(\bm{T}^i)\cap V^{\tau'}(\bm{G}_n)\bigr)\\
\nonumber
&=& {V}_{j}^2 \cap V^{\tau'}(\bm{T}^i) \stackrel{(a)}{\subseteq} {V}_{j}^2 \cap V_{\bm{G}_n}^{c_0 \gamma(\frac{\varepsilon}{18H},1)}(\bm{T}^i)\\
&\stackrel{(b)}{\subseteq} &{V}_{j}^2 \cap V^1_{\sigma(i)} \stackrel{(c)}{\subseteq} {V}_{j}^2 \cap V^2_{\sigma^{\prime\prime}(i)}=\emptyset,
\end{eqnarray}
where $(a)$ follows since $\tau' \geq c_0 \gamma(\frac{\varepsilon}{18H}, 1)$, where $(b)$ follows from the definition of $\Omega_1$, and where $(c)$ follows by~\eqref{eq:v1v2_containment1}.

\textbf{Step A3:} We now proceed to bound $d_{\mathcal{L}_1(\bm{F}_n)}(\hat{\ell}, \ell)$. 
Fix $i \in [K]$ and $u \in \mathcal{L}_1(\bm{T}^i)$. Then $u$ is a neighbor of $\pi_1(\bm{T}^i)$ and hence, by~\eqref{eq:pi1_containment1}, 
\begin{align*}
D_{\sigma''(i)}(u) = \min_{v \in \tilde{V}^2_{\sigma''(i)}} \text{dist}_{\bm{G}_n}(u, v) \leq \text{dist}_{\bm{G}_n}(u, \pi_1(\bm{T}^i)) = 1. 
\end{align*}

On the other hand, for any $j \neq \sigma''(i)$ and any $v \in \tilde{V}_j^2$, we know by~\eqref{eq: tilde vj2 well classified} that $u$ and $v$ are disconnected in $\bm{F}_n$, so that $\text{dist}_{\bm{G}_n}(u, v) = 1$ if and only if $(u, v)$ is an edge in $\bm{R}_n$. Therefore, 
\[
D_{j}(u)=\min_{v\in \tilde{V}^2_j} \text{dist}_{\bm{G}_n}(u, v) 
\begin{cases} = 1 & \text{ if there exists $v \in \tilde{V}_j^2$ such that $(u, v) \in E(\bm{R}_n)$} \\
\geq 2 & \text{ if $(u, v) \notin E(\bm{R}_n)$ for all $v \in \tilde{V}_j^2$}.
\end{cases}
\]
Via the definition of $\hat{l}(\cdot)$ in Algorithm~\ref{alg: distance recovery} then,
\begin{align*}
\mathbbm{1}\{\hat{l}(u)\neq \sigma''(i)\} &\le  \mathbbm{1}\biggl\{\exists j \neq \sigma''(i),\, D_{\sigma''(i)}(u) \geq D_{j}(u) \biggr\} = \mathbbm{1}\biggl\{\exists j \neq \sigma''(i),\, D_{j}(u) = 1 \biggr\}\\
&\leq \mathbbm{1} \biggl\{ \exists v \in \cup_{j=1}^K \tilde{V}_j^2,\; (u, v) \in E(\bm{R}_n) \biggr\} \leq |E_{\bm{R}_n}(u, \cup_{j=1}^K \tilde{V}_j^2)|.
\end{align*}

% \begin{align*}
% \mathbbm{1}\{\hat{l}(u)\neq \sigma''(i)\}&\le  \mathbbm{1}\{D_{\sigma''(i)}(u)\geq \min_{j\neq \sigma''(i)} D_{j}(u)\}\le   \mathbbm{1}\{1\geq \min_{j\neq \sigma''(i)} \{ 1 + \min_{v\in \tilde{V}^2_j} \mathbbm{1}\{ (u, v) \notin E(\bm{R}_n) \}\}\}\\
% &= \mathbbm{1}\{0\geq \min_{j\neq \sigma''(i)} \{  \min_{v\in \tilde{V}^2_j} \mathbbm{1}\{ (u, v) \notin E(\bm{R}_n) \}\}\}\le  \mathbbm{1}\{\min_{v\in \cup^K_{j=1} \tilde{V}^2_j} \mathbbm{1}\{ (u, v) \notin E(\bm{R}_n) \}=0\}\\
% &= \mathbbm{1}\{ E_{\bm{R}_n}(u, \cup^K_{j=1} \tilde{V}^2_j)\neq \emptyset\}\le |E_{\bm{R}_n}(u, \cup^K_{j=1} \tilde{V}^2_j)|
% \end{align*}
%
%Recall the expression of $d(\cdot,\cdot)$ in Definition~\ref{defin mismatch}, with the expression $\mathbbm{1}\{\hat{l}(u)\neq \sigma''(i)\}\le |E_{\bm{R}_n}(u, \cup^K_{j=1} \tilde{V}^2_j)|$, we have on $\Omega_2$

By the definition of $d(\cdot, \cdot)$ in~\eqref{defin mismatch}, on $\Omega$, we have by the definition of $\Omega_2$ that

\begin{equation*}
\begin{aligned}
\frac{d_{\mathcal{L}_1\left(\bm{F}_n\right)}\left(\hat{\ell},\ell\right)}{\left|\mathcal{L}_1\left(\boldsymbol{F}_n\right)\right|} &= \min_{\sigma \in S_K} \frac{d_{\mathcal{L}_1\left(\bm{F}_n\right)}^{\text{Ham}}\left(\hat{\ell},\sigma \circ \ell\right)}{\left|\mathcal{L}_1\left(\boldsymbol{F}_n\right)\right|} 
\leq \frac{d_{\mathcal{L}_1\left(\bm{F}_n\right)}^{\text{Ham}}\left(\hat{\ell},\sigma'' \circ \ell\right)}{\left|\mathcal{L}_1\left(\boldsymbol{F}_n\right)\right|} \\
&= \sum_{i=1}^K \sum_{u\in \mathcal{L}_1(\bm{T}^i)}
\frac{\mathbbm{1}\{\hat{l}(u)\neq \sigma''(i)\}}{\left|\mathcal{L}_1\left(\boldsymbol{F}_n\right)\right|} 
\le \sum_{u\in \mathcal{L}_1(\bm{F}_n)}\frac{|E_{\bm{R}_n}(u, \cup^K_{j=1} \tilde{V}^2_j)|}{\left|\mathcal{L}_1\left(\boldsymbol{F}_n\right)\right|}\\
&\stackrel{(a)}{\le} \frac{2\left|E_{\boldsymbol{R}_n}\bigl(\cup_{j=1}^K \tilde{V}^2_j, \,\mathcal{L}_1\left(\boldsymbol{F}_n\right)\bigr)\right|}{\left| \mathcal{L}_1\left(\boldsymbol{F}_n\right)\right|}
\le \frac{2\left|E_{\boldsymbol{R}_n}\bigl(V^{c_0\gamma\left(\frac{\varepsilon}{18 H}, 1\right), \alpha}(\boldsymbol{G}_n), \,\mathcal{L}_1\left(\boldsymbol{F}_n\right)\bigr)\right|}{\left| \mathcal{L}_1\left(\boldsymbol{F}_n\right)\right|}\le C_2 n^{-\frac{1+\alpha}{2+\alpha}-\delta}.
\end{aligned}
\end{equation*}
where $(a)$ follows from Lemma~\ref{lem:edge_count_inequality}. We have thus shown that~\eqref{thm: layer 1/2 eqn 1} holds on $\Omega$.

It remains to bound the probability of the event $\Omega_1, \Omega_2$. By applying Lemma~\ref{lemma early hist containment} (with $L \mapsto Q \vee \tilde{L}(\varepsilon, c_0)$ and $\varepsilon \mapsto \varepsilon/6$), we have 
\begin{align}
\mathbb{P}(\Omega_1) \geq 1 - \frac{\varepsilon}{6} + o(1) \label{eq:omega1_prob}
\end{align}
where the $o(1)$ sequence depends only on $\varepsilon, Q, H, \alpha, \delta$. To bound $\mathbb{P}\left(\Omega_2^c\right)$, we have by Lemma~\ref{lemma layer noise general} (with $C \mapsto c_0 \gamma(\frac{\varepsilon}{18H}, 1)$ and $s \mapsto 1$) that there exists 
$C_2 > 0$ such that
\begin{align*}
\mathbb{P}\biggl( 2 n^{\frac{1+\alpha}{2+\alpha} + \delta} \frac{\left|E_{\boldsymbol{R}_n}\bigl(V^{c_0\gamma\left(\frac{\varepsilon}{18 H}, 1\right), \alpha}(\boldsymbol{G}_n), \,\mathcal{L}_1\left(\boldsymbol{F}_n\right)\bigr)\right|}{\left| \mathcal{L}_1\left(\boldsymbol{F}_n\right)\right|} \geq C_2\biggr) \leq \frac{\varepsilon}{2}. 
\end{align*}
This immediately implies that $\mathbb{P}(\Omega_2) \geq 1 - \varepsilon/2$. Therefore, by an application of union bound, 
\begin{equation*}
\mathbb{P}\left(\Omega^c\right) \leq \mathbb{P}(\Omega_1^c) + \mathbb{P}(\Omega_2^c) \leq \frac{\varepsilon}{6} + \frac{\varepsilon}{2} + o(1). 
\end{equation*}
Thus, we have have completed the proof of the first claim~\eqref{thm: layer 1/2 eqn 1} of the theorem.
\vspace{.5in}

\textbf{Part B:} We now prove the second statement of the Theorem~\eqref{thm: layer 1/2 eqn 2}, in the linear preferential attachment regime where $\alpha=0$. We let $C_3$ be a positive number whose definition will be specified later. We also define the "good" event
$\tilde{\Omega}={\Omega}_1\cap \tilde{\Omega}_2\cap \tilde{\Omega}_3$, where $\Omega_1$ is defined in~\eqref{eq:omega1_defn} and $\tilde{\Omega}_2$ and $\tilde{\Omega}_3$ are defined below:
\begin{align*}
    \tilde{\Omega}_2 &:= \left\{\left|E_{\boldsymbol{R}_n}\left(V^{c_0\gamma\left(\frac{\varepsilon}{18 H}, 1\right), 0}\left(\boldsymbol{G}_n\right), \mathcal{L}_s\left(\boldsymbol{F}_n\right)\right)\right|=0, \quad\text{for $s=1,3$}\right\}\\
    \tilde{\Omega}_3 &:= \left\{\frac{\Bigl|E_{\boldsymbol{R}_n}\Bigl(V_1^{c_0\gamma\left(\frac{\varepsilon}{18 H}, 1\right), 0}\left(\boldsymbol{G}_n\right), \mathcal{L}_2\bigl(\boldsymbol{F}_n\bigr)\Bigr)\Bigr|+\Bigl|\bigcup_{\substack{i,j\in[K]\\ i\neq j}} \dot{E}_{\boldsymbol{R}_n}\Bigl(V^{c_0 \gamma(\frac{\varepsilon}{18H}, 1), 0}_{\bm{G}_n}(\bm{T}^j), \mathcal{L}_2(\boldsymbol{T}^{i})\Bigr)\Bigr|}{\bigl|\mathcal{L}_2\bigl(\boldsymbol{F}_n\bigr)\bigr|}\le \frac{C_3 n^{-\delta}}{2}\right\},
\end{align*}
where we define 
\begin{align*}
V_1^{c_0\gamma\left(\frac{\varepsilon}{18 H}, 1\right),0}\left(\boldsymbol{G}_n\right)&:=\left\{v \in V\left(\boldsymbol{G}_n\right) \bigg |  \operatorname{dist}_{\bm{F}_n}\left(v, V^{c_0\gamma\left(\frac{\varepsilon}{18 H}, 1\right), 0}\left(\boldsymbol{G}_n\right)\right)\le 1\right\}
\end{align*}
which consists of all nodes in $V^{c_0 \gamma(\frac{\varepsilon}{18H}, 1), 0}(\bm{G}_n)$ as well as their $\bm{F}_n$--neighbors. We also define 
\begin{align*}
\dot{E}_{\boldsymbol{R}_n}(A, B) &:= \bigl\{(u, v) \,\big|\, u \in A,\, v \in B,\, \text{ and } \exists w \in V(\bm{G}_n),\, (u, w), (w, v) \in E(\bm{R}_n) \bigr\},
\end{align*}
as the set of pairs $(u, v)$ with $u \in A$ and $v \in B$ such that $u$ and $v$ are connected by two noise edges. 

We first show that~\eqref{thm: layer 1/2 eqn 2} holds on $\tilde{\Omega}$. To that end, fix $i \in [K]$ and $u \in \mathcal{L}_2(\bm{T}^i)$. Since $\pi_1(\bm{T}^i) \in \tilde{V}_{\sigma''(i)}^2$ by~\eqref{eq:pi1_containment1}, we have $D_{\sigma''(i)}(u) \leq \text{dist}_{\bm{G}_n}(u, \pi_1(\bm{T}^i)) \leq 2$. Let $j \neq \sigma''(i)$ and recall that on $\Omega_1$, by~\eqref{eq: tilde vj2 well classified}, we have that $V(\bm{T}^i) \cap \tilde{V}^2_j = \emptyset$ so that
\begin{align*}
E_{\bm{G}_n}(u, \tilde{V}_j^2) = E_{\bm{R}_n}(u, \tilde{V}^2_j),
\end{align*}
that is, all edges from $u$ to $\tilde{V}_j^2$ must be noise edges. Moreover, we have that all the length-two path from $u$ to $\tilde{V}_j^2$ can be decomposed as follows:
\begin{align*}
\dot{E}(u, \tilde{V}_j^2) &:= \bigl\{ (u, v) \, \big|\, v \in \tilde{V}_j^2,\, \text{ and } \exists w \in V(\bm{G}_n), \, (u, w), (w, v) \in E(\bm{G}_n) \bigr\} \\
&= \underbrace{\bigl\{(u, v) \,\big|\, v \in \tilde{V}_j^2,\, \text{ and } \exists w \in V(\bm{T}^i), \, (u, w)\in E(\bm{F}_n), (w, v) \in E(\bm{R}_n) \bigr\}}_{=: \dot{E}_{\text{FR}}(u, \tilde{V}_j^2)} \cup \\
&\qquad \underbrace{\bigl\{(u, v) \,\big|\, v \in \tilde{V}_j^2,\, \text{ and } \exists w \in V(\bm{G}_n), \, (u, w) \in E(\bm{R}_n), (w, v) \in E(\bm{R}_n) \bigr\}}_{=: \dot{E}_{\text{RR}}(u, \tilde{V}_j^2) = \dot{E}_{\bm{R}_n}(u, \tilde{V}_j^2)} \cup \\
&\qquad \underbrace{\bigl\{(u, v) \,\big|\, v \in \tilde{V}_j^2,\, \text{ and } \exists w \in V(\bm{G}_n), \,(u, w) \in E(\bm{R}_n),\, (w, v) \in E(\bm{F}_n) \bigr \}}_{=: \dot{E}_{\text{RF}}(u, \tilde{V}_j^2)} \cup \\
&\qquad \underbrace{\bigl\{(u, v) \,\big|\, v \in \tilde{V}_j^2,\, \text{ and } \exists w \in V(\bm{G}_n), \,(u, w) \in E(\bm{F}_n),\, (w, v) \in E(\bm{F}_n) \bigr \}}_{=: \dot{E}_{\text{FF}}(u, \tilde{V}_j^2)}
\end{align*}

Since $u$ belongs to $\bm{T}^i$ and $\tilde{V}_j^2$ does not intersect $\bm{T}^i$ on $\Omega_1$ by~\eqref{eq: tilde vj2 well classified}, there cannot exist a node $w$ such that $(u, w)$ and $(w, v)$ are both edges in $\bm{F}_n$. Therefore,
\begin{align}
\dot{E}_{\text{FF}}(u, \tilde{V}_j^2) = \emptyset, \quad \text{ on } \Omega_1.
\end{align}

For any $w \in V(\bm{T}^i)$ such that $(u, w) \in E(\bm{F}_n)$, it must be that $w \in \mathcal{L}_1(\bm{T}^i) \cup \mathcal{L}_3(\bm{T}^i)$. Since $\tilde{V}^2_j \subset V^{c_0 \gamma(\frac{\varepsilon}{18H}, 1) 0}(\bm{G}_n)$, it must be that on $\tilde{\Omega}_2$, there are no edges in $E(\bm{R}_n)$ connecting such $w \in \mathcal{L}_1(\bm{T}^i) \cup \mathcal{L}_3(\bm{T}^i)$ in  to $\tilde{V}_j^2$. Therefore, 
\begin{align}
\dot{E}_{\text{FR}}(u, \tilde{V}_j^2) = \emptyset, \quad \text{ on } \tilde{\Omega}_2.
\end{align}

We thus have that $D_j(u) = \min_{v \in \tilde{V}_j^2} \text{dist}_{\bm{G}_n}(u, v) \leq 2$ if and only if 
\[
E_{\bm{R}_n}(u, \tilde{V}_j^2)\cup \dot{E}_{\text{RR}}(u, \tilde{V}_j^2) \cup \dot{E}_{\text{RF}}(u, \tilde{V}_j^2) \neq \emptyset.
\]

We thus have that, on the event $\tilde{\Omega}$, 
\begin{align*}
\mathbbm{1}\{ \hat{\ell}(u) \neq \sigma''(i)\} 
&\leq 
\mathbbm{1}\bigl\{ \exists j \neq \sigma''(i), \, D_j(u) \leq 2 \bigr\} \\
&\leq
\mathbbm{1}\bigl\{ \exists j \neq \sigma''(i), \, E_{\bm{R}_n}(u, \tilde{V}_j^2) \cup \dot{E}_{\text{RR}}(u, \tilde{V}_j^2) \cup \dot{E}_{\text{RF}}(u, \tilde{V}_j^2) \neq \emptyset \bigr\} \\
&\stackrel{(a)}{\leq} \mathbbm{1}\biggl\{ E_{\bm{R}_n}\bigl(u, V_1^{c_0 \gamma(\frac{\varepsilon}{18H}, 1), 0}(\bm{G}_n) \bigr) \neq \emptyset \biggr\} + \mathbbm{1}\bigl\{ \exists j \neq \sigma''(i), \, \dot{E}_{\text{RR}}(u, \tilde{V}_j^2) \neq \emptyset \bigr\}\\
&\leq \biggl| E_{\bm{R}_n}\bigl(u, V_1^{c_0 \gamma(\frac{\varepsilon}{18H}, 1), 0}(\bm{G}_n) \bigr) \biggr| + \sum_{j \neq \sigma''(i)} | \dot{E}_{\text{RR}}(u, \tilde{V}_j^2) |.
\end{align*}

where $(a)$ follows because 
\begin{align*}
E_{\bm{R}_n}\bigl(u, V_1^{c_0 \gamma(\frac{\varepsilon}{18H}, 1), 0}(\bm{G}_n) \bigr) = \emptyset \, \implies \, E_{\bm{R}_n}(u, \tilde{V}_j^2) = \emptyset \text{ and } \dot{E}_{\text{RF}}(u, \tilde{V}_j^2) = \emptyset.
\end{align*}
To see this, note that $\tilde{V}_j^2 \subset V_1^{c_0 \gamma(\frac{\varepsilon}{18H}, 1), 0}(\bm{G}_n)$ so that $E_{\bm{R}_n}(u, \tilde{V}_j^2) \subset E_{\bm{R}_n}\bigl(u, V_1^{c_0 \gamma(\frac{\varepsilon}{18H}, 1), 0}(\bm{G}_n) \bigr)$. Note also that for any $(u, v) \in \dot{E}_{\text{RF}}(u, \tilde{V}_j^2)$, there exists a node $w$ such that $(w, v) \in E(\bm{F}_n)$ so that $w \in V_1^{c_0 \gamma(\frac{\varepsilon}{18H}, 1), 0}(\bm{G}_n)$ and $(u, w)$ is an element in $E_{\bm{R}_n}\bigl(u, V_1^{c_0 \gamma(\frac{\varepsilon}{18H}, 1), 0}(\bm{G}_n) \bigr)$. Thus, we have
\begin{align*}
\frac{d_{\mathcal{L}_2(\bm{F}_n)}(\hat{\ell}, \ell)}{|\mathcal{L}_2(\bm{F}_n)|} 
&= 
\min_{\sigma \in S_K} \frac{ d_{\mathcal{L}_2(\bm{F}_n)}^{\mathrm{Ham}}( \hat{\ell}, \sigma \circ \ell) }{|\mathcal{L}_2(\bm{F}_n)|} 
\leq
\frac{d_{\mathcal{L}_2(\bm{F}_n)}^{\mathrm{Ham}}( \hat{\ell}, \sigma'' \circ \ell) }{|\mathcal{L}_2(\bm{F}_n)|} \\
&\leq \sum_{i=1}^K \sum_{u \in \mathcal{L}_2(\bm{T}^i)} \frac{ \mathbbm{1}\{ \hat{\ell}(u) \neq \sigma''(i)\}}{|\mathcal{L}_2(\bm{F}_n)|} \\
&\leq \sum_{i=1}^K \sum_{u \in \mathcal{L}_2(\bm{T}^i)} \frac{ \bigl| E_{\bm{R}_n}\bigl(u, V_1^{c_0 \gamma(\frac{\varepsilon}{18H}, 1), 0}(\bm{G}_n) \bigr) \bigr| + \sum_{j \neq \sigma''(i)} | \dot{E}_{\text{RR}}(u, \tilde{V}_j^2) | }{ | \mathcal{L}_2(\bm{F}_n) |} \\
&\stackrel{(a)}{\leq} \frac{1}{| \mathcal{L}_2(\bm{F}_n)|} \biggl\{ \sum_{u \in \mathcal{L}_2(\bm{F}_n)} \bigl| E_{\bm{R}_n}\bigl(u, V_1^{c_0 \gamma(\frac{\varepsilon}{18H}, 1), 0}(\bm{G}_n) \bigr) \bigr| + \sum_{i \neq j} \sum_{u \in \mathcal{L}_2(\bm{T}^i)} | \dot{E}_{\bm{R}_n}(u, V^{c_0 \gamma(\frac{\varepsilon}{18H}, 1), 0}(\bm{T}^j))| \biggr\}\\
&\stackrel{(b)}{\leq} \frac{1}{| \mathcal{L}_2(\bm{F}_n)|} \biggl\{ 2 \bigl| E_{\bm{R}_n}\bigl( \mathcal{L}_2(\bm{F}_n), V_1^{c_0 \gamma(\frac{\varepsilon}{18H}, 1), 0}(\bm{G}_n) \bigr) \bigr| + \sum_{i \neq j} 2 \bigl| \dot{E}_{\bm{R}_n}\bigl( \mathcal{L}_2(\bm{T}^i), V^{c_0 \gamma(\frac{\varepsilon}{18H}, 1), 0}(\bm{T}^j)\bigr) \bigr| \biggr\} \\
&\leq C_3 n^{-\delta},\quad \text{ on $\tilde{\Omega}_3$},
\end{align*}
where inequality $(a)$ follows from~\eqref{eq: tilde vj2 well classified 2} and where inequality $(b)$ follows from Lemma~\ref{lem:edge_count_inequality}; to argue that 
\[
\sum_{u \in \mathcal{L}_2(\bm{T}^i)} | \dot{E}_{\bm{R}_n}(u, V^{c_0 \gamma(\frac{\varepsilon}{18H}, 1), 0}(\bm{T}^j))| \leq 2 \bigl| \dot{E}_{\bm{R}_n}\bigl( \mathcal{L}_2(\bm{T}^i), V^{c_0 \gamma(\frac{\varepsilon}{18H}, 1), 0}(\bm{T}^j) \bigr|,
\]
we apply Lemma~\ref{lem:edge_count_inequality} with respect to a graph $\bm{g}'$ where $(u, v)$ forms an edge if $(u, v)$ belongs in the set $\dot{E}_{\bm{R}_n}\bigl( \mathcal{L}_2(\bm{T}^i), V^{c_0 \gamma(\frac{\varepsilon}{18H}, 1), 0}(\bm{T}^j) \bigr)$. 

We have thus shown that~\eqref{thm: layer 1/2 eqn 2} holds on the event $\tilde{\Omega}$. It remains to bound the probability of $\tilde{\Omega}$. Applying the first statement of Lemma~\ref{lemma layer noise} along with a union bound (with $C = c_0 \gamma(\frac{\varepsilon}{18H}, 1)$ and $s = 1$ and then $s=3$), we have that
\begin{align}
\mathbb{P}(\tilde{\Omega}_2^c) \leq 2 o(1),
\end{align}
where the $o(1)$ sequence depends only on $\alpha, c_0, H, \delta$. 

Noticing that $\bigl|\dot{E}_{\mathbb{R}_n}\bigl(A,B\bigr)\bigr|\leq \bigl|{E}^2_{\mathbb{R}_n}\bigl(A,B\bigr)\bigr|$ defined in Lemma~\ref{lemma layer noise}, and apply the second and third statement of the lemma (with $C = c_0 \gamma(\frac{\varepsilon}{18H}, 1)$ and $s = 2$), we may conclude that there exists $C_3 > 0$ such that
\[
\mathbb{P} \biggl( 4n^{2\delta} \frac{ \bigl| \cup_{i,j \in [K], i\neq j} \dot{E}_{\mathbb{R}_n}\bigl( V_{\bm{G}_n}^{c_0 \gamma( \frac{\varepsilon}{18H}, 1), 0}(\bm{T}^j), \mathcal{L}_2(\bm{T}^i) \bigr) \bigr|}{|\mathcal{L}_2(\bm{F}_n)} \geq C_3\biggr) \leq \frac{\varepsilon}{6}
\]
and that 
\[
\mathbb{P}\biggl( 4n^{\delta} \frac{ \bigl| E_{\bm{R}_n}\bigl( V_1^{c_0 \gamma(\frac{\varepsilon}{18H}, 1), 0}(\bm{G}_n), \mathcal{L}_2(\bm{F}_n) \bigr) \bigr|}{ | \mathcal{L}_2(\bm{F}_n) |}\geq C_3 \biggr) \leq \frac{\varepsilon}{6}.
\]
It holds therefore that $\mathbb{P}(\tilde{\Omega}_3^c) \leq \frac{\varepsilon}{3}$. Combining these with~\eqref{eq:omega1_prob} yields that $\mathbb{P}(\tilde{\Omega}^c) \leq \varepsilon + o(1)$ and the Theorem thus follows.

We have proved that $\mathbb{P}\left(\Omega_1^c\right) \le \varepsilon/3$ in the above proof of \eqref{thm: layer 1/2 eqn 1}. 
\end{proof}

\begin{lemma}
\label{lemma layer noise general}
Let $\boldsymbol{G}_n \sim \mathrm{PF}(\alpha, \theta, \ell, \pi)$, and suppose Assumptions~\ref{assumption: bound} and \ref{assumption: sparsity} hold. Then for any fixed $C > 0$ and $s\in \mathbb{N}$, the following result holds:
\begin{equation}
\label{lemma layer noise general  eqn 1}
\frac{\left|E_{\boldsymbol{R}_n}\left(V^{C, \alpha}\left(\boldsymbol{G}_n\right), \mathcal{L}_s\left(\boldsymbol{F}_n\right)\right)\right|}{\left| \mathcal{L}_s\left(\boldsymbol{F}_n\right)\right|}= O_p\left(n^{-\frac{1+\alpha}{2+\alpha}-\delta}\right),
\end{equation}
where $\mathcal{L}_s\left(\boldsymbol{F}_n\right)$ denotes all the nodes in layer $s$ of $\boldsymbol{T}^i$.
\end{lemma}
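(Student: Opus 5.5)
The plan is to localize the high-degree set $V^{C,\alpha}(\bm{G}_n)$ to a bounded number of early-arriving nodes, and then use independence of $\bm{R}_n$ from $\bm{F}_n$. Fix $\varepsilon>0$ and let $L=L(\varepsilon/K,C)$ be given by \eqref{degree to order} in Lemma~\ref{lemma order--degree}. Define the event
\[
\Omega_\varepsilon:=\bigl\{V^{C,\alpha}(\bm{G}_n)\subset W\bigr\},\qquad W:=\cup_{i=1}^K\pi_{1:L}(\bm{T}^i).
\]
By a union bound over $i\in[K]$, $\liminf_n\mathbb{P}(\Omega_\varepsilon)\ge 1-\varepsilon$. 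The set $W$ is deterministic, since $\pi$ is a fixed parameter of the $\mathrm{PF}(\alpha,\theta,\ell,\pi)$ model, and $|W|=KL$ is a constant. On $\Omega_\varepsilon$ we have
\[
\bigl|E_{\bm{R}_n}(V^{C,\alpha}(\bm{G}_n),\mathcal{L}_s(\bm{F}_n))\bigr|\le \bigl|E_{\bm{R}_n}(W,\mathcal{L}_s(\bm{F}_n))\bigr|.
\]
This reduction is the key step: it removes the dependence of the high-degree set on $\bm{R}_n$. That dependence is the only real obstacle, and it is handled exactly as in Lemma~\ref{lemma: noise}.

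Next, condition on $\bm{F}_n$. Then $\mathcal{L}_s(\bm{F}_n)$ is fixed, and $\bm{R}_n$ is independent of $\bm{F}_n$. Each pair $(w,u)$ with $w\in W$ and $u\in\mathcal{L}_s(\bm{F}_n)$ is a noise edge independently with probability $\theta$. Hence
\[
\mathbb{E}\bigl[\,|E_{\bm{R}_n}(W,\mathcal{L}_s(\bm{F}_n))|\;\big|\;\bm{F}_n\bigr]\le KL\,\theta\,|\mathcal{L}_s(\bm{F}_n)|.
\]
We need $|\mathcal{L}_s(\bm{F}_n)|\ge1$, which holds for large $n$ since layer $1$ is nonempty and deeper layers are nonempty with probability tending to one. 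Outside this event we use the convention $0/0=0$. Dividing by $|\mathcal{L}_s(\bm{F}_n)|$, which is $\bm{F}_n$-measurable, gives
\[
\mathbb{E}\Bigl[\tfrac{|E_{\bm{R}_n}(W,\mathcal{L}_s)|}{|\mathcal{L}_s|}\Bigr]\le KL\theta\le KLC_0\,n^{-\frac{1+\alpha}{2+\alpha}-\delta}
\]
by Assumption~\ref{assumption: sparsity}.

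To conclude, apply Markov's inequality. For $M=KLC_0/\varepsilon$,
\[
\mathbb{P}\Bigl(\tfrac{|E_{\bm{R}_n}(V^{C,\alpha}(\bm{G}_n),\mathcal{L}_s)|}{|\mathcal{L}_s|}>M n^{-\frac{1+\alpha}{2+\alpha}-\delta}\Bigr)\le \mathbb{P}(\Omega_\varepsilon^c)+\varepsilon\le 2\varepsilon+o(1).
\]
Since $\varepsilon$ is arbitrary, this is exactly the $O_p$ statement \eqref{lemma layer noise general  eqn 1}. Note that the argument does not need any size estimate on $\mathcal{L}_s$, because the ratio is controlled pairwise in expectation.

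One point to handle carefully is whether $W$ and $\mathcal{L}_s(\bm{F}_n)$ may overlap, for instance early nodes lying in layer $s$. Such pairs are still counted by the same Bernoulli bound, and double counting only affects the constant, so this causes no problem.
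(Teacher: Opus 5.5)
Your proposal is correct and follows essentially the same route as the paper's proof: you use the second part of Lemma~\ref{lemma order--degree} to place $V^{C,\alpha}(\bm{G}_n)$ inside the deterministic set of the first $L$ arrivals of each tree, bound the conditional expectation given $\bm{F}_n$ by $L\theta$ per tree using independence of $\bm{R}_n$, and finish with Markov's inequality and a union bound. The differences are cosmetic: you take $L(\varepsilon/K,C)$ and pool the trees into a single set $W$, whereas the paper treats each tree separately and ends with a $2K\varepsilon+o(1)$ bound.
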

\begin{proof}
We fix $\varepsilon>0$ and define the ``good event'' $\Omega\left(t\right):=\left(\cap_{i=1}^K {\Omega}_i\right) \cap\left(\cap_{i=1}^K \breve{\Omega}_i\left(t\right)\right)$,
where
\begin{equation*}
   \Omega_i=\left\{ V^{C,\alpha}_{\boldsymbol{G}_n}\left(\boldsymbol{T}^i\right)\subset\pi_{1:{L\left(\varepsilon, C\right)}}\left(\boldsymbol{T}^i\right)\right\} 
\end{equation*}
and 
\begin{equation*}
\breve{\Omega}_i\left(t\right):=\left\{\frac{\left|E_{\boldsymbol{R}_n}\bigl(\pi_{1:{L\left(\varepsilon, C\right)}}(\boldsymbol{T}^i), \mathcal{L}_s\left(\boldsymbol{F}_n\right)\bigr)\right|}{\left| \mathcal{L}_s\left(\boldsymbol{F}_n\right)\right|}\le tn^{-\frac{1+\alpha}{2+\alpha}-\delta}\right\}.  
\end{equation*}
for any $t>0$. The function $L(\cdot)$ is defined in Lemma~\ref{lemma order--degree}.

We can observed that for fixed $t$, on the event ${\Omega}\left(t\right)$
\begin{equation*}
\begin{aligned}
\frac{\left|E_{\boldsymbol{R}_n}\left(V^{C, \alpha}\left(\boldsymbol{G}_n\right), \mathcal{L}_s\left(\boldsymbol{F}_n\right)\right)\right|}{\left| \mathcal{L}_s\left(\boldsymbol{F}_n\right)\right|}
&=\sum_{i=1}^K\frac{\left|E_{\boldsymbol{R}_n}\left(V_{\bm{G}_n}^{C, \alpha}\left(\boldsymbol{T}^i\right), \mathcal{L}_s\left(\boldsymbol{F}_n\right)\right)\right|}{\left| \mathcal{L}_s\left(\boldsymbol{F}_n\right)\right|}\\
&\le \sum_{i=1}^K\frac{\left|E_{\boldsymbol{R}_n}\left(\pi_{1:{L\left(\varepsilon, C\right)}}\left(\boldsymbol{T}^i\right), \mathcal{L}_s\left(\boldsymbol{F}_n\right)\right)\right|}{\left| \mathcal{L}_s\left(\boldsymbol{F}_n\right)\right|}\\
&\le K tn^{-\frac{1+\alpha}{2+\alpha}-\delta},    
\end{aligned}    
\end{equation*}
where the equality follows from the fact that $V^{C,\alpha}\left(\boldsymbol{G}_{n}\right)=\cup_{i=1}^K V^{C,\alpha}_{\boldsymbol{G}_n}\left(\boldsymbol{T}^i\right)$, and the two inequalities follow directly from the definition of $\Omega_i$ and $\breve{\Omega}_i(t)$.

%Then we claim that there exist $t$ depending only on $\varepsilon, C$, such that
%$\mathbb{P}\left({\Omega}^c\right)$ is bounded by $\left(2K+1\right)\varepsilon$ and hence \eqref{lemma layer noise general  eqn 1} holds. 

We now bound the probability of the event $\Omega(t)$. Note that by Lemma~\ref{lemma order--degree}, we have that 
\begin{align}
\forall i \in [K], \quad \mathbb{P}\left(\Omega^c_i\right)\le \varepsilon+o(1). \label{eq:omega_i_prob_bound}
\end{align}

Our next step then is to show that there exists $t > 0$ such that $\mathbb{P}\bigl(\breve{\Omega}_i(t)^c\bigr) \leq \varepsilon$. Note that the top $L(\varepsilon, C)$ nodes of each tree $\left\{\pi_{1:{L\left(\varepsilon, C\right)}}\left(\boldsymbol{T}^i\right), i \in \left[K\right]\right\}$ are deterministic and that the layer $s$ nodes $\mathcal{L}_s\left(\boldsymbol{F}_n\right)$
are determined entirely by $\bm{F}_n$ and hence independent of $\boldsymbol{R}_n$ by the Definition~\ref{defin: PF} of our model. Therefore, the random variable $|E_{\bm{R}_n}(\pi_{1:L(\varepsilon,C)}(\bm{T}^i), \mathcal{L}_s(\bm{F}_n))|$, conditioning on $\boldsymbol{F}_n$, is stochastic dominated by the binomial distribution $\operatorname{Bin} \left(\bigl|\pi_{1:{L(\varepsilon, C)}}\left(\boldsymbol{T}^i\right)\bigr| \left|\mathcal{L}_s\left(\boldsymbol{F}_n\right)\right|, \theta \right)$.
Combining with the fact that $\left|\pi_{1:{L\left(\varepsilon, C\right)}}\left(\boldsymbol{T}^i\right)\right|=L\left(\varepsilon, C\right)$, we have
\begin{equation*}    \mathbb{E}_{\bm{R}_n}\left(\frac{\left|E_{\boldsymbol{R}_n}\bigl(\pi_{1:{L\left(\varepsilon, C\right)}}\left(\boldsymbol{T}^i\right), \mathcal{L}_s\left(\boldsymbol{F}_n\right)\bigr)\right|}{\left| \mathcal{L}_s\left(\boldsymbol{F}_n\right)\right|}\right)
\le \bigl|\pi_{1:{L\left(\varepsilon, C\right)}}\bigl(\boldsymbol{T}^i\bigr)\bigr| \theta
\le L\left(\varepsilon, C\right)\theta.
\end{equation*}
By Markov's inequality and Assumption~\ref{assumption: sparsity}, for any $\varepsilon>0$, there exist an universal constant $C^1>0$ such that the following holds with probability at least $1-\varepsilon$:
\begin{equation*}   \frac{\left|E_{\boldsymbol{R}_n}\left(\pi_{1:{L\left(\varepsilon, C\right)}}\left(\boldsymbol{T}^i\right), \mathcal{L}_s\left(\boldsymbol{F}_n\right)\right)\right|}{\left| \mathcal{L}_s\left(\boldsymbol{F}_n\right)\right|}\le \varepsilon^{-1} L\left(\varepsilon, C\right)\theta\le  C^1\varepsilon^{-1} L\left(\varepsilon, C\right)n^{-\frac{1+\alpha}{2+\alpha}-\delta}. 
\end{equation*}
Therefore, we may take $t:=C^1\varepsilon^{-1} L\left(\varepsilon, C\right)$ to obtain that $\mathbb{P}\left(\breve{\Omega}^c_i\left(t\right)\right)\le \varepsilon$.

Combining the preceding bounds with a union bound, for all sufficiently large $n$, we have, for $t = C^1\varepsilon^{-1} L\left(\varepsilon, C\right)$,
\begin{eqnarray*}
\mathbb{P}\left({\Omega}\left(t\right)^c\right)
&\le & \sum_{i=1}^K \mathbb{P}\left({\Omega}^c_i\right)+\sum_{i=1}^K \mathbb{P}\left(\breve{\Omega}_i(t)^c \right)\le K\varepsilon+K\varepsilon+o\left(1\right)\le 2K \varepsilon + o(1).
\end{eqnarray*}
This completes the proof of \eqref{lemma layer noise general  eqn 1}.    
\end{proof}

\begin{lemma}[Degree sum of $\mathcal{L}_s(\boldsymbol{T}_n)$]
\label{lemma degree sum}
Let $\boldsymbol{T}_n \sim \mathrm{APA}(0, n)$. For $s\in \mathbb{N}$, let $\mathcal{L}_s(\boldsymbol{T}_n)$ denote the set of nodes at layer $s$. The sum of the degrees of all the nodes in $\mathcal{L}_s(\boldsymbol{T}_n)$ has the following form:
\begin{equation*}
\sum_{u\in \mathcal{L}_s\left(\boldsymbol{T}_{n}\right)}\operatorname{deg}_{\boldsymbol{T_n}}\left(u\right)=\left|\mathcal{L}_{s+1}\left(\boldsymbol{T}_{n}\right)\right|+\left|\mathcal{L}_{s}\left(\boldsymbol{T}_{n}\right)\right|.
\end{equation*}
\end{lemma}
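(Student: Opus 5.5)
The identity is a double-counting statement about the tree $\boldsymbol{T}_n$. I will split the degree of each layer-$s$ node into its parent edge and its child edges. Since $\boldsymbol{T}_n$ is a tree rooted at $\pi_1(\boldsymbol{T}_n)$, every non-root node $u$ has a unique parent $\operatorname{pa}(u)$, namely its neighbor on the unique path to the root. For $u \in \mathcal{L}_s(\boldsymbol{T}_n)$ this parent lies in $\mathcal{L}_{s-1}(\boldsymbol{T}_n)$. All other neighbors $w$ of $u$ satisfy $\operatorname{dist}_{\boldsymbol{T}_n}(w,\pi_1) = s+1$. This holds because in a tree the distances of adjacent vertices to the root differ by exactly one, and the only neighbor closer to the root is the parent. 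Hence, for $s \ge 1$,
\[
\operatorname{deg}_{\boldsymbol{T}_n}(u) = 1 + \bigl|\{ w \in \mathcal{L}_{s+1}(\boldsymbol{T}_n) : \operatorname{pa}(w) = u \}\bigr| .
\]

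Summing over $u \in \mathcal{L}_s(\boldsymbol{T}_n)$, the first term contributes $|\mathcal{L}_s(\boldsymbol{T}_n)|$. The second term counts every node $w \in \mathcal{L}_{s+1}(\boldsymbol{T}_n)$ exactly once, because each such $w$ has a unique parent and that parent lies in $\mathcal{L}_s(\boldsymbol{T}_n)$. So the second term contributes $|\mathcal{L}_{s+1}(\boldsymbol{T}_n)|$, which gives the claimed identity. Equivalently, one can count the edges of $E_{\boldsymbol{T}_n}(\mathcal{L}_s, \mathcal{L}_{s-1} \cup \mathcal{L}_{s+1})$: there are no edges inside a single layer, since a tree is bipartite with respect to distance parity.

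The only point requiring care is the justification that a non-parent neighbor of a layer-$s$ node lies in layer $s+1$ rather than in layer $s-1$ or $s$. This uses uniqueness of paths in a tree: a second neighbor at distance $s-1$ would create a cycle, and a neighbor at distance $s$ is impossible by parity. I will also remark that the identity is purely deterministic and does not use the APA law or $\alpha=0$. The restriction to $s\in\mathbb{N}$ (so $s\ge1$) matters, because the root has no parent: for $s=0$ the sum equals $|\mathcal{L}_1|$ alone.
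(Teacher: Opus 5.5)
Your proposal is correct and follows essentially the same argument as the paper: write $\operatorname{deg}_{\boldsymbol{T}_n}(u) = 1 + |\text{children of } u|$ for $u \in \mathcal{L}_s(\boldsymbol{T}_n)$, then sum, using that every node of $\mathcal{L}_{s+1}(\boldsymbol{T}_n)$ is the child of exactly one node of $\mathcal{L}_s(\boldsymbol{T}_n)$. Your extra remarks match how the paper uses the lemma. The identity is deterministic, and for $s=0$ the sum is only $|\mathcal{L}_1|$, which is exactly how the $s=1$ case is handled in Lemma~\ref{lemma difference equation}.
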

\begin{proof}
This result can be directly observed from the network structure. For a node $u \in \mathcal{L}_s(\bm{T}_n)$, any edge incident on $u$ must either connect $u$ to one of its children or connect $u$ to its parent; we thus have that $\operatorname{deg}_{\bm{T}_n}(u) = | \text{children of $u$ in $\bm{T}_n$}| + 1$. We also note that any child of $u$ is in layer $s+1$. Therefore, 

\begin{align*}
\sum_{u\in \mathcal{L}_s\left(\boldsymbol{T}_{n}\right)} \operatorname{deg}_{\boldsymbol{T_n}}\left(u\right) &= 
\sum_{u\in \mathcal{L}_s\left(\boldsymbol{T}_{n}\right)}
 \bigl\{ | \text{children of $u$ in $\bm{T}_n$} | + 1 \bigr\} \\
 &= |\mathcal{L}_{s+1}(\bm{T}_n) | + | \mathcal{L}_s(\bm{T}_n)|.
\end{align*}

The Lemma follows as desired.

\end{proof}

\begin{lemma}[Difference equation for $\mathbb{E}\left(\left|\mathcal{L}_s\left(\boldsymbol{T}_n\right)\right|\right)$]
\label{lemma difference equation}
Let $\boldsymbol{T}_n\sim \mathrm{APA}(0, n)$.
Suppose $s,n \in \mathbb{N}$ are integers satisfying $n \ge 2$. let $\mathcal{L}_s\left(\boldsymbol{T}_n\right)$ be the nodes at layer $s$,
if $s\geq 2$
then the expectation obeys the following difference equation 
\begin{equation*}
\mathbb{E}\left(\left|\mathcal{L}_s\left(\boldsymbol{T}_{n+1}\right)\right|\right)-\mathbb{E}\left(\left|\mathcal{L}_s\left(\boldsymbol{T}_n\right)\right|\right)=\frac{\mathbb{E}\left(\left|\mathcal{L}_{s-1}\left(\boldsymbol{T}_n\right)\right|\right)+\mathbb{E}\left(\left|\mathcal{L}_s\left(\boldsymbol{T}_n\right)\right|\right)}{2 \left(n-1\right)}.
\end{equation*}
If $s = 1$ then the expectation obeys the following difference equation 
\begin{equation*}
\mathbb{E}\left(\left|\mathcal{L}_1\left(\boldsymbol{T}_{n+1}\right)\right|\right)-\mathbb{E}\left(\left|\mathcal{L}_1\left(\boldsymbol{T}_n\right)\right|\right)=\frac{\mathbb{E}\left(\left|\mathcal{L}_1\left(\boldsymbol{T}_n\right)\right|\right)}{2 \left(n-1\right)}.
\end{equation*}
\end{lemma}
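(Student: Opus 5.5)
We will prove the two difference equations by conditioning on $\bm{T}_n$ and computing the one-step conditional expected increment of $|\mathcal{L}_s|$ under the linear attachment rule, and then taking expectations.

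\textbf{Conditional increment.} With $\alpha=0$, a tree with $n$ nodes has $n-1$ edges, so its total degree is $2(n-1)$. By \eqref{defin: attachment rule 1}, the new node $\pi_{n+1}$ attaches to an existing node $w$ with probability $\operatorname{deg}_{\bm{T}_n}(w)/\{2(n-1)\}$.

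Node depths never change when later nodes are added. Hence the only way $|\mathcal{L}_s|$ can change at step $n+1$ is by the arrival of the new node. The new node lands in layer $s$ exactly when its parent lies in layer $s-1$. This gives
\[
\mathbb{E}\bigl(|\mathcal{L}_s(\bm{T}_{n+1})| - |\mathcal{L}_s(\bm{T}_n)| \,\big|\, \bm{T}_n\bigr) = \frac{\sum_{w\in\mathcal{L}_{s-1}(\bm{T}_n)} \operatorname{deg}_{\bm{T}_n}(w)}{2(n-1)}.
\]

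\textbf{Case $s\ge 2$.} Here $s-1\ge 1$, so Lemma~\ref{lemma degree sum} applies. Each node in a layer $s-1\ge 1$ has exactly one parent edge, so the degree sum over $\mathcal{L}_{s-1}(\bm{T}_n)$ equals $|\mathcal{L}_{s}(\bm{T}_n)|+|\mathcal{L}_{s-1}(\bm{T}_n)|$. Substituting this into the conditional increment and taking expectations gives the first displayed equation.

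\textbf{Case $s=1$.} The parent layer is layer $0$, which contains only the root $\pi_1$. The root has no parent edge, so its degree equals its number of children, namely $|\mathcal{L}_1(\bm{T}_n)|$. Substituting and taking expectations gives the second equation. This is precisely why the formula lacks the extra $|\mathcal{L}_{s-1}|$ term when $s=1$.

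\textbf{Main obstacle.} The argument is elementary, and the only real care point is the normalization. We need $n\ge 2$ so that the total degree $2(n-1)$ is positive. Definition~\ref{defin: APA} writes the denominator for arrival time $t=n+1$ as $2(t-2)=2(n-1)$, which matches the count above.

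We also need to confirm that Lemma~\ref{lemma degree sum} is used only for layers $s-1\ge 1$. That lemma counts one parent edge per node, which fails at the root, and so the case $s=1$ must be handled separately as above.
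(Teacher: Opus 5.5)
Your proposal is correct and follows the paper's proof essentially step for step. Both condition on $\bm{T}_n$ and write the probability that the new node lands in layer $s$ as the degree mass of layer $s-1$ divided by $2(n-1)$. Both then use Lemma~\ref{lemma degree sum} for $s\ge 2$ and the identity $\operatorname{deg}(\pi_1)=|\mathcal{L}_1(\bm{T}_n)|$ for $s=1$, and finish by taking expectations.
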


\begin{proof}
To derive the difference equation for $s\geq 2$, we first consider the conditional distribution of $\left|\mathcal{L}_s\left(\boldsymbol{T}_{n+1}\right)\right|$ conditioning on $\boldsymbol{T}_n$. 
A direct observation is that $\left|\mathcal{L}_s\left(\boldsymbol{T}_{n+1}\right)\right|=\left|\mathcal{L}_s\left(\boldsymbol{T}_{n}\right)\right|+1$ if and only if the new node is attached to a node in layer $s-1$. 
Therefore, by the generation mechanism of preferential attachment tree and Lemma~\ref{lemma degree sum}
\begin{equation*}
\mathbb{P}\left(\left|\mathcal{L}_{s}\left(\boldsymbol{T}_{n+1}\right)\right|=\left|\mathcal{L}_{s}\left(\boldsymbol{T}_{n}\right)\right|+1 \mid\!\!\mid \boldsymbol{T}_n\right)=\frac{\sum_{u\in \mathcal{L}_{s-1}\left(\boldsymbol{T}_{n}\right)}\operatorname{deg}_{\boldsymbol{T_n}}\left(u\right)}{\sum_{u\in V\left(\boldsymbol{T}_{n}\right)}\operatorname{deg}_{\boldsymbol{T_n}}\left(u\right)}=\frac{\left|\mathcal{L}_{s-1}\left(\boldsymbol{T}_{n}\right)\right|+\left|\mathcal{L}_{s}\left(\boldsymbol{T}_{n}\right)\right|}{2\left(n-1\right)},
\end{equation*}
the second identity comes from Lemma~\ref{lemma degree sum}.
Then we proceed to the difference equation
\begin{eqnarray*}
\mathbb{E}\left(\left|\mathcal{L}_s\left(\boldsymbol{T}_{n+1}\right)\right|\right)-\mathbb{E}\left(\left|\mathcal{L}_s\left(\boldsymbol{T}_n\right)\right|\right)
&=&
\mathbb{E}\bigl\{\mathbb{E}(|\mathcal{L}_s(\boldsymbol{T}_{n+1})| \mid\!\!\mid \boldsymbol{T}_n ) \bigr\}-\mathbb{E}\left(\left|\mathcal{L}_s\left(\boldsymbol{T}_n\right)\right|\right)\\
&=& \mathbb{E} \bigl\{ \mathbb{P}\left(\left|\mathcal{L}_s\left(\boldsymbol{T}_{n+1}\right)\right|=\left|\mathcal{L}_s\left(\boldsymbol{T}_{n}\right)\right|+1 \mid\!\!\mid \boldsymbol{T}_n\right) \bigr\}\\
&=&\frac{\mathbb{E}\left(\left|\mathcal{L}_{s-1}\left(\boldsymbol{T}_n\right)\right|\right)+\mathbb{E}\left(\left|\mathcal{L}_s\left(\boldsymbol{T}_n\right)\right|\right)}{2 \left(n-1\right)}.   
\end{eqnarray*}
We now turn to the case $s=1$. The only change is that $\sum_{u\in \mathcal{L}_{0}\left(\boldsymbol{T}_{n}\right)}\operatorname{deg}_{\boldsymbol{T_n}}\left(u\right)=\left|\mathcal{L}_{1}\left(\boldsymbol{T}_{n}\right)\right|$ replaces $\left|\mathcal{L}_{0}\left(\boldsymbol{T}_{n}\right)\right|+\left|\mathcal{L}_{1}\left(\boldsymbol{T}_{n}\right)\right|$, here $\mathcal{L}_0(\bm{T_n})$ denotes the root node of tree $\bm{T_n}$.
Applying the same procedure as above, we obtain
\begin{equation*}
\mathbb{E}\left(\left|\mathcal{L}_1\left(\boldsymbol{T}_{n+1}\right)\right|\right)-\mathbb{E}\left(\left|\mathcal{L}_1\left(\boldsymbol{T}_n\right)\right|\right)=\frac{\mathbb{E}\left(\left|\mathcal{L}_1\left(\boldsymbol{T}_n\right)\right|\right)}{2 \left(n-1\right)}.
\end{equation*}
The lemma follows as desired.
\end{proof}

\begin{lemma}[Rate for $\mathbb{E}\left(\left|\mathcal{L}_s\left(\boldsymbol{T}_n\right)\right|\right)$]
\label{lemma layer size}
Let $\boldsymbol{T}_n \sim \mathrm{APA}(0, n)$ and let $\mathcal{L}_s(\boldsymbol{T}_n)$ denote the nodes at layer $s \in \mathbb{N}$. Then, for any $s\in \mathbb{N}$,
\begin{equation*}    \mathbb{E}\left(\left|\mathcal{L}_s\left(\boldsymbol{T}_n\right)\right|\right) \asymp  n^{1/2}\left(\log n\right)^{s-1}.
\end{equation*}
\end{lemma}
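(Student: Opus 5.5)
We will solve the linear recursion of Lemma~\ref{lemma difference equation} explicitly and then argue by induction on $s$. Write $a_s(n) := \mathbb{E}|\mathcal{L}_s(\bm{T}_n)|$ for $n\ge 2$; the initial tree $\bm{T}_2$ is a single edge, so $a_1(2)=1$ and $a_s(2)=0$ for $s\ge 2$. The recursions then read
\[
a_1(n+1) = \Bigl(1+\tfrac{1}{2(n-1)}\Bigr) a_1(n), \qquad a_s(n+1) = \Bigl(1+\tfrac{1}{2(n-1)}\Bigr) a_s(n) + \tfrac{a_{s-1}(n)}{2(n-1)} \quad (s\ge 2).
\]

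\textbf{Base case.} Let $P(n) := \prod_{m=2}^{n-1}\bigl(1+\frac{1}{2(m-1)}\bigr)$. Writing this as a ratio of Gamma functions, $P(n) = \Gamma(n-\tfrac12)/\bigl(\Gamma(\tfrac12)\Gamma(n-1)\bigr)$, and Stirling's formula gives $P(n) \asymp n^{1/2}$. The $s=1$ recursion gives $a_1(n) = a_1(2)P(n) = P(n) \asymp n^{1/2}$ directly.

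\textbf{Inductive step.} For $s\ge 2$, divide the recursion by $P(n+1)$ and set $b_s(n) := a_s(n)/P(n)$. Since $P(n+1) = P(n)\bigl(1+\frac{1}{2(n-1)}\bigr)$, this yields the telescoping relation
\[
b_s(n+1) - b_s(n) = \frac{a_{s-1}(n)}{2(n-1)P(n+1)} = \frac{b_{s-1}(n)}{2(n-1)+1}.
\]
Summing from $m=2$ to $n-1$ and using $b_s(2)=0$,
\[
b_s(n) = \sum_{m=2}^{n-1} \frac{b_{s-1}(m)}{2m-1}.
\]
By the induction hypothesis $b_{s-1}(m) \asymp (\log m)^{s-2}$ for $m\ge 2$. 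For $s=2$ this means $b_1\equiv 1$, and the sum is $\asymp \log n$. For $s\ge 3$, the hypothesis says $b_{s-1}(m) \asymp (\log m)^{s-2}$ with constants $c,C$. The upper bound follows because $\sum_{m\le n} (\log m)^{s-2}/m \le C'(\log n)^{s-1}$ by integral comparison. For the lower bound, restrict the sum to $m\in[\sqrt n, n-1]$, where $\log m \ge \tfrac12 \log n$, so each term is at least $c\,(\tfrac12\log n)^{s-2}/(2m)$. The harmonic sum over this range is $\asymp \log n$, giving $b_s(n) \gtrsim (\log n)^{s-1}$. Multiplying back by $P(n)\asymp n^{1/2}$ yields $a_s(n) \asymp n^{1/2}(\log n)^{s-1}$.

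\textbf{Expected difficulty.} None of these steps is a real obstacle. The one point needing care is that $a_s(n)=0$ for $n\le s$, so the two-sided bound $b_{s-1}(m) \asymp (\log m)^{s-2}$ fails for small $m$. This is why the lower bound uses only $m\ge\sqrt n$, where the hypothesis holds once $n$ is large enough. Small $n$ are finitely many cases and can be absorbed into the constants, with $\asymp$ understood as $n\to\infty$. The constants depend only on $s$.
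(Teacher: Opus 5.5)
Your proposal is correct and follows essentially the paper's route: you solve the $s=1$ recursion as a product (the paper's $m_n$, your $P(n)$) and apply Stirling, normalize by it to get $b_s(n+1)-b_s(n)=b_{s-1}(n)/(2n-1)$, and induct using $\sum_m (\log m)^{s-2}/(2m-1)\asymp(\log n)^{s-1}$; your restriction to $m\ge\sqrt n$ in the lower bound handles the vanishing of $b_{s-1}(m)$ for small $m$ more carefully than the paper does. One harmless slip: the product equals $\Gamma(n-\tfrac12)/\bigl(\Gamma(\tfrac32)\Gamma(n-1)\bigr)$, not $\Gamma(n-\tfrac12)/\bigl(\Gamma(\tfrac12)\Gamma(n-1)\bigr)$ (check $P(3)=3/2$), which does not affect $P(n)\asymp n^{1/2}$.
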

\begin{proof}
We use proof by induction on $s$. The first step is to show that $\mathbb{E}\left(\left|\mathcal{L}_1\left(\boldsymbol{T}_n\right)\right|\right)\asymp n^{1/2}$.
Denote $m_n:=\mathbb{E}\left(\left|\mathcal{L}_1\left(\boldsymbol{T}_n\right)\right|\right)$, by Lemma~\ref{lemma difference equation}, for $n\geq 2$
\begin{equation*}
    m_{n+1}-m_n=\mathbb{E}\left(\left|\mathcal{L}_1\left(\boldsymbol{T}_{n+1}\right)\right|\right)-\mathbb{E}\left(\left|\mathcal{L}_1\left(\boldsymbol{T}_n\right)\right|\right)=\frac{m_n}{2\left(n-1\right)}.
\end{equation*}
indicates that
\begin{equation}
\label{lemma 5 eqn 1}
   m_{n+1}=\frac{2n-1}{2n-2}m_{n}. 
\end{equation}
Putting it together with the boundary condition $m_2=1$, we have 
\begin{eqnarray*}
m_{n+1}&=&\frac{\left(2n-1\right)\left(2n-3\right)\cdots 3}{\left(2n-2\right)\left(2n-4\right)\cdots 2}\times m_2=\frac{\left(2n-1\right)!}{\left(\left(2n-2\right)\left(2n-4\right)\cdots 2\right)^2}=\frac{\left(2n-1\right)!}{2^{2n-2}\left(\left(n-1\right)!\right)^2}\\
& \asymp &\frac{\sqrt{2\pi}\sqrt{2n-1}\left(2n-1\right)^{2n-1}/e^{2n-1}}{2^{2n-2}\left(\sqrt{2\pi}\sqrt{n-1}\left(n-1\right)^{n-1}\right)^2/e^{2n-2}}\asymp \sqrt{n}.
\end{eqnarray*}
The asymptotic equivalent comes from the Stirling approximation.

Thus we have shown that $m_n := \mathbb{E}\left(\left|\mathcal{L}_1\left(\boldsymbol{T}_n\right)\right|\right) \asymp \sqrt{n}$, which in particular implies  that for $s=1$ we have $\mathbb{E}\left(\left|\mathcal{L}_{s}\left(\boldsymbol{T}_n\right)\right|\right)/m_n \asymp \left(\log n\right)^{s-1}$. For $s > 1$, assume as the induction hypothesis that $\mathbb{E}(|\mathcal{L}_{s-1}(\boldsymbol{T}_n)|)/m_n \asymp \bigl(\log n\bigr)^{s-2}$ for all sufficiently large $n$; we then proceed to establish the corresponding asymptotic equivalence for $s$.

By Lemma~\ref{lemma difference equation} we have the following equation for layer $s$
\begin{equation*}  \mathbb{E}\left(\left|\mathcal{L}_{s}\left(\boldsymbol{T}_{n+1}\right)\right|\right)=\mathbb{E}\left(\left|\mathcal{L}_s\left(\boldsymbol{T}_n\right)\right|\right)+\frac{\mathbb{E}\left(\left|\mathcal{L}_{s-1}\left(\boldsymbol{T}_n\right)\right|\right)+\mathbb{E}\left(\left|\mathcal{L}_s\left(\boldsymbol{T}_n\right)\right|\right)}{2 \left(n-1\right)} =  \frac{2n-1}{2 n-2}\mathbb{E}\left(\left|\mathcal{L}_s\left(\boldsymbol{T}_n\right)\right|\right)+\frac{\mathbb{E}\left(\left|\mathcal{L}_{s-1}\left(\boldsymbol{T}_n\right)\right|\right)}{2 n-2}. 
\end{equation*}
Divide both side with $m_{n+1}$ with the result \eqref{lemma 5 eqn 1}, we have 
\begin{align*}    \frac{\mathbb{E}\left(\left|\mathcal{L}_{s}\left(\boldsymbol{T}_{n+1}\right)\right|\right)}{m_{n+1}}
&=\frac{\mathbb{E}\left(\left|\mathcal{L}_{s}\left(\boldsymbol{T}_{n}\right)\right|\right)}{m_{n}}+\frac{\mathbb{E}\left(\left|\mathcal{L}_{s-1}\left(\boldsymbol{T}_n\right)\right|\right)}{\left(2n-2\right) m_{n+1}}
=\frac{\mathbb{E}\left(\left|\mathcal{L}_{s}\left(\boldsymbol{T}_{n}\right)\right|\right)}{m_{n}}+\frac{\mathbb{E}\left(\left|\mathcal{L}_{s-1}\left(\boldsymbol{T}_n\right)\right|\right)}{\left(2n-1\right) m_{n}} \\
&\asymp \frac{\mathbb{E}\left(\left|\mathcal{L}_{s}\left(\boldsymbol{T}_{n}\right)\right|\right)}{m_{n}}+ \frac{\left(\log n\right)^{s-2}}{2n-1} .
\end{align*}
Applying the above inequality repeatedly starting from $n=2$ and noting that $\frac{\mathbb{E}( | \mathcal{L}_s(\bm{T}_2)|)}{m_2} = 0$ since $s > 1$, we obtain
\begin{equation*}   \frac{\mathbb{E}\left(\left|\mathcal{L}_{s}\left(\boldsymbol{T}_{n}\right)\right|\right)}{m_{n}}\asymp \sum^{n-1}_{i=2}\frac{\left(\log i\right)^{s-2}}{2i-1} \asymp  \left(\log n\right)^{s-1}, 
\end{equation*}
where the last inequality follows from Lemma~\ref{lemma integration}. Therefore, $\mathbb{E}\left(\left|\mathcal{L}_{s}\left(\boldsymbol{T}_{n}\right)\right|\right)/ m_{n} \asymp \left(\log n\right)^{s-1}$. 
We have shown that the inductive hypothesis holds and the proof is thus complete. 

\end{proof}

\begin{lemma}
\label{lemma layer noise}
Let $\boldsymbol{G}_n \sim \mathrm{PF}(0, \theta ,\ell, \pi)$, and suppose Assumptions~\ref{assumption: bound} and \ref{assumption: sparsity} hold. Then for any fixed $C > 0$ and $s\in \mathbb{N}$, the following result holds:
\begin{equation}
\label{lemma layer noise eqn 1}
     \lim_{n \rightarrow \infty} \mathbb{P}\bigl\{\bigl|E_{\boldsymbol{R}_n}(V^{C,0}(\boldsymbol{G}_n), \mathcal{L}_s(\boldsymbol{F}_n))\bigr|=0\bigr\}=1,
\end{equation}
where $\mathcal{L}_s\left(\boldsymbol{F}_n\right)$ denotes all the nodes in layer $s$ of $\boldsymbol{T}^i$.
Furthermore, letting
\begin{align}
\label{lemma layer noise chain}
\nonumber
E^2_{\boldsymbol{R}_n}  \left(V^{C,0}_{\boldsymbol{G}_n}\bigl(\boldsymbol{T}^j\bigr), \mathcal{L}_s\bigl(\boldsymbol{T}^i\bigr)\right)&=\left\{\left(v_1, v_2, v_3\right)\bigg| \left(v_1, v_2\right)\in E_{\boldsymbol{R}_n}\left(V^{C,0}_{\boldsymbol{G}_n}\bigl(\boldsymbol{T}^j\bigr), V\left(\boldsymbol{G}_n\right)\right), \right. \\
& \qquad \qquad \qquad
\left. \left(v_2, v_3\right)\in E_{\boldsymbol{R}_n}\left(V\left(\boldsymbol{G}_n\right), \mathcal{L}_s\bigl(\boldsymbol{T}^i\bigr)\right), \right\}, 
\end{align}
we have
\begin{equation}
\label{lemma layer noise eqn 2}
\frac{\left|\bigcup_{\substack{i,j\in[K]\\ i\neq j}} E^2_{\boldsymbol{R}_n}\left(V^{C,0}_{\boldsymbol{G}_n}\left(\boldsymbol{T}^j\right), \mathcal{L}_s\left(\boldsymbol{T}^{i}\right)\right)\right|}{\left| \mathcal{L}_s\left(\boldsymbol{F}_n\right)\right|}= O_p\left(n^{-2\delta}\right).  
\end{equation}

And, letting
\begin{equation}
\label{lemma layer noise eqn 3}
V_1^C\left(\boldsymbol{G}_n\right):=\left\{ v\in V\left(\boldsymbol{G}_n\right) \bigg| \operatorname{dist}_{\bm{F}_n}\left(v, V^{C,0}\left(\boldsymbol{G}_n\right)\right)\le 1\right\},    
\end{equation}
we have
\begin{equation}
\label{lemma layer noise eqn 4}
\frac{\left|E_{\boldsymbol{R}_n}\left(V_1^C\left(\boldsymbol{G}_n\right), \mathcal{L}_s\left(\boldsymbol{F}_n\right)\right)\right|}{\left| \mathcal{L}_s\left(\boldsymbol{F}_n\right)\right|}= O_p\left(n^{-\delta}\right).  
\end{equation}
\end{lemma}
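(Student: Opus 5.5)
The plan for all three claims of Lemma~\ref{lemma layer noise} is the one used for Lemma~\ref{lemma layer noise general}. Fix $\varepsilon>0$ and set $L:=L(\varepsilon,C)$ from Lemma~\ref{lemma order--degree}. We work on the event $\Omega=\cap_{i}\Omega_i$ with $\Omega_i=\{V^{C,0}_{\bm{G}_n}(\bm{T}^i)\subset\pi_{1:L}(\bm{T}^i)\}$, which has probability at least $1-K\varepsilon-o(1)$. On $\Omega$ the random high-degree set is contained in the deterministic set $S:=\cup_i\pi_{1:L}(\bm{T}^i)$ of size $KL$. Since $S$ is deterministic and $\mathcal{L}_s(\bm{F}_n)$ is a function of $\bm{F}_n$ alone, both are independent of $\bm{R}_n$. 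Every count can therefore be bounded by conditioning on $\bm{F}_n$, taking expectations over $\bm{R}_n$ with $\theta\le C_0n^{-1/2-\delta}$, and applying Markov's inequality. The only model-specific input needed is Lemma~\ref{lemma layer size}, upgraded to a statement in probability: $|\mathcal{L}_s(\bm{T}^i)|\asymp_p n^{1/2}(\log n)^{s-1}$, or at least that $|\mathcal{L}_s|$ is not much smaller than its mean.

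\textbf{Claim \eqref{lemma layer noise eqn 1}.} On $\Omega$, the count $|E_{\bm{R}_n}(V^{C,0},\mathcal{L}_s)|$ is at most $|E_{\bm{R}_n}(S,\mathcal{L}_s(\bm{F}_n))|$, whose conditional mean is at most $KL\,\theta\,|\mathcal{L}_s(\bm{F}_n)|$. Taking expectations and using $\mathbb{E}|\mathcal{L}_s|\asymp n^{1/2}(\log n)^{s-1}$ from Lemma~\ref{lemma layer size} gives a bound of $O(n^{-\delta}(\log n)^{s-1})=o(1)$. Markov's inequality then shows the count is $0$ with probability $1-o(1)$. Since $\varepsilon$ is arbitrary, the limit equals $1$.

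\textbf{Claim \eqref{lemma layer noise eqn 4}.} Here $V_1^C$ is, on $\Omega$, contained in $S\cup N_{\bm{F}_n}(S)$, where $N_{\bm{F}_n}(S)$ is the set of $\bm{F}_n$-neighbors of $S$. The size of this set is at most $KL+\sum_{v\in S}\deg_{\bm{F}_n}(v)=O_p(n^{1/2})$ by Lemma~\ref{lemma deg bound}. Conditional on $\bm{F}_n$, this set is fixed and independent of $\bm{R}_n$, so $\mathbb{E}[\,|E_{\bm{R}_n}(\cdot,\mathcal{L}_s)|\mid\bm{F}_n]\le\theta\,|S\cup N(S)|\,|\mathcal{L}_s|$. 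Dividing by $|\mathcal{L}_s|$ gives $O_p(n^{1/2}\cdot n^{-1/2-\delta})=O_p(n^{-\delta})$, and conditional Markov yields the result. A division-free version avoids needing lower bounds on $|\mathcal{L}_s|$.

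\textbf{Claim \eqref{lemma layer noise eqn 2}.} Count noise paths $v_1\!-\!v_2\!-\!v_3$ with $v_1\in S$ and $v_3\in\mathcal{L}_s$. Conditional on $\bm{F}_n$, for distinct endpoints, the expected number is at most $KL\cdot n\cdot|\mathcal{L}_s|\cdot\theta^2\le KLC_0^2\,n^{1-1-2\delta}|\mathcal{L}_s|$. Degenerate cases ($v_2\in S\cup\mathcal{L}_s$) only decrease the count. Dividing by $|\mathcal{L}_s|$ and applying Markov gives $O_p(n^{-2\delta})$; the restriction $i\neq j$ is simply dropped as an upper bound.

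\textbf{Main obstacle.} The main subtlety is that the normalizations are by the random $|\mathcal{L}_s|$, while Lemma~\ref{lemma layer size} controls only expectations. The device above sidesteps this: the conditional bounds hold pointwise in $|\mathcal{L}_s|$, so only in \eqref{lemma layer noise eqn 1} does one need $|\mathcal{L}_s|$ to be $O_p(n^{1/2}\log^{s-1}n)$, which Markov on $\mathbb{E}|\mathcal{L}_s|$ supplies. One more care point is that the sets must be taken independent of $\bm{R}_n$, which is exactly why we replace $V^{C,0}(\bm{G}_n)$, a set depending on $\bm{R}_n$ through degrees, by $S$ on the event $\Omega$.
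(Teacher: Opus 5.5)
Your proposal is correct and follows essentially the same route as the paper. On the event from Lemma~\ref{lemma order--degree} you replace the degree-defined set by the deterministic set $S=\cup_i\pi_{1:L}(\bm{T}^i)$ (and its $\bm{F}_n$-neighbourhood, sized via Lemma~\ref{lemma deg bound}), condition on $\bm{F}_n$, and apply Markov's inequality; your direct use of $\mathbb{E}|\mathcal{L}_s(\bm{F}_n)|$ in \eqref{lemma layer noise eqn 1} simply replaces the paper's auxiliary event $\Omega_0$.

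One wording point in \eqref{lemma layer noise eqn 2} needs fixing. The restriction $i\neq j$ is exactly what guarantees $v_1\neq v_3$, so it may be relaxed to $v_1\neq v_3$ but not dropped outright. Closed walks $v_1\!-\!v_2\!-\!v_1$ with $v_1\in S\cap\mathcal{L}_s(\bm{F}_n)$ (e.g.\ $\pi_2(\bm{T}^i)$ when $s=1$) occur with probability $\theta$ rather than $\theta^2$. They would contribute a term of order $n\theta/|\mathcal{L}_1(\bm{F}_n)|$, which can be as large as $n^{-\delta}$.
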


\begin{proof}
We first show \eqref{lemma layer noise eqn 1}. We fix $\varepsilon > 0$. 
Since $V^{C,0}\left(\boldsymbol{G}_{n}\right)=\cup_{i=1}^K V^{C,0}_{\boldsymbol{G}_n}\left(\boldsymbol{T}^i\right)$, it follows that 
\begin{equation}
\label{lemma layer noise eqn 5}
\mathbb{P}\left(\left|E_{\boldsymbol{R}_n}\left(V^{C,0}\left(\boldsymbol{G}_n\right), \mathcal{L}_s\left(\boldsymbol{F}_n\right)\right)\right|=0\right)=\mathbb{P}\left(\left|E_{\boldsymbol{R}_n}\left(\cup_{i=1}^K V^{C,0}_{\boldsymbol{G}_n}\left(\boldsymbol{T}^i\right), \mathcal{L}_s\left(\boldsymbol{F}_n\right)\right)\right|=0\right). 
\end{equation}

By Lemma~\ref{lemma layer size} and Markov's inequality, there exists a universal constant $\tilde{C} > 0$ such that $\left|\mathcal{L}_s\left(\boldsymbol{F}_n\right)\right|\le \tilde{C} \varepsilon^{-1} n^{1/2}\left(\log n\right)^{s-1}$ holds with probability higher than $1-\varepsilon$.
Therefore, we denote $\Omega_0=\left\{\left|\mathcal{L}_s\left(\boldsymbol{F}_n\right)\right|\le \tilde{C} \varepsilon^{-1} n^{1/2}\left(\log n\right)^{s-1}\right\}$, $\Omega_i=\left\{ V^{C,0}_{\boldsymbol{G}_n}\left(\boldsymbol{T}^i\right)\subset\pi_{1:{L\left(\varepsilon, C\right)}}\left(\boldsymbol{T}^i\right)\right\}$ for $1\le i\le K$, and $\Omega:=\cap_{i=0}^K \Omega_i$, we now aim to bound the right-hand side of \eqref{lemma layer noise eqn 5}  on the event $\Omega$:
\begin{eqnarray}
\label{lemma layer noise eqn 6}
\nonumber
&&\mathbb{P}\left(\left|E_{\boldsymbol{R}_n}\left(\cup_{i=1}^K V^{C,0}_{\boldsymbol{G}_n}\left(\boldsymbol{T}^i\right),\mathcal{L}_s\left(\boldsymbol{F}_n\right)\right)\right|=0\right)\\
\nonumber
&=&
\mathbb{P}\left(\left\{\left|E_{\boldsymbol{R}_n}\left(\cup_{i=1}^K V^{C,0}_{\boldsymbol{G}_n}\left(\boldsymbol{T}^i\right),\mathcal{L}_s\left(\boldsymbol{F}_n\right)\right)\right|=0\right\}\cap \Omega\right)+
\mathbb{P}\left(\left\{\left|E_{\boldsymbol{R}_n}\left(V^{C,0}\left(\boldsymbol{G}_n\right),\mathcal{L}_s\left(\boldsymbol{F}_n\right)\right)\right|=0\right\}\cap \Omega^c\right)\\
\nonumber
&\geq& \mathbb{P}\left(\left\{\left|E_{\boldsymbol{R}_n}\left(\cup_{i=1}^K \pi_{1:L\left(\varepsilon,C\right)}\left(\boldsymbol{T}^i\right),\mathcal{L}_s\left(\boldsymbol{F}_n\right)\right)\right|=0\right\}\cap \Omega\right)\\
\nonumber
&\geq & \mathbb{P}\left(\left\{\left|E_{\boldsymbol{R}_n}\left(\cup_{i=1}^K \pi_{1:L\left(\varepsilon,C\right)}\left(\boldsymbol{T}^i\right),\mathcal{L}_s\left(\boldsymbol{F}_n\right)\right)\right|=0\right\}\cap \Omega_0 \right)-\mathbb{P}\left(\Omega^c\setminus\Omega_0^c\right).
\\
&\geq & \mathbb{P}\left(\left\{\left|E_{\boldsymbol{R}_n}\left(\cup_{i=1}^K \pi_{1:L\left(\varepsilon,C\right)}\left(\boldsymbol{T}^i\right),\mathcal{L}_s\left(\boldsymbol{F}_n\right)\right)\right|=0\right\}\cap \Omega_0 \right)-\mathbb{P}\left(\Omega^c\right).
\end{eqnarray}

We begin by bounding the first term $\mathbb{P}\left(\left\{\left|E_{\boldsymbol{R}_n}\left(\cup_{i=1}^K \pi_{1:L\left(\varepsilon,C\right)}\left(\boldsymbol{T}^i\right),\mathcal{L}_s\left(\boldsymbol{F}_n\right)\right)\right|=0\right\}\cap \Omega_0 \right)$ in \eqref{lemma layer noise eqn 6}. 
Note that the set $\cup_{i=1}^K \pi_{1:L\left(\varepsilon,C\right)}\left(\boldsymbol{T}^i)\right)$ and $\mathcal{L}_s\left(\boldsymbol{F}_n\right)$
is determined entirely by $\bm{F}_n$, which is independent of $\boldsymbol{R}_n$ by Definition~\ref{defin: PF}. Therefore, the number of random edges induced by this set in $\boldsymbol{R}_n$ is stochastic dominant by $\operatorname{Bin} \left( L\left(\varepsilon,C\right)K \left|\mathcal{L}_s\left(\boldsymbol{F}_n\right)\right|, \theta \right)$ (conditioning on $\bm{F}_n$),
since each pair of nodes in the selected set contributes an edge independently with probability $\theta$. Therefore on the event $\Omega_0$
\begin{eqnarray}
\label{lemma layer noise eqn 7}
\nonumber
&&\mathbb{P}\left(\left\{\left|E_{\boldsymbol{R}_n}\left(\cup_{i=1}^K \pi_{1:L\left(\varepsilon,C\right)}\left(\boldsymbol{T}^i\right),\mathcal{L}_s\left(\boldsymbol{F}_n\right)\right)\right|=0\right\}\cap \Omega_0 \right)\\
\nonumber
&=& 1- \mathbb{P}\left(\left\{\left|E_{\boldsymbol{R}_n}\left(\cup_{i=1}^K \pi_{1:L\left(\varepsilon,C\right)}\left(\boldsymbol{T}^i\right),\mathcal{L}_s\left(\boldsymbol{F}_n\right)\right)\right|\ge 1\right\}\cap \Omega_0\right)\\
\nonumber
&\geq & 1- \mathbb{E}\left(\left|E_{\boldsymbol{R}_n}\left(\cup_{i=1}^K \pi_{1:L\left(\varepsilon,C\right)}\left(\boldsymbol{T}^i\right),\mathcal{L}_s\left(\boldsymbol{F}_n\right)\right)\right| \, \bigg |\!\!\biggl|\, \Omega_0\right)\\
\nonumber
&\ge & 1- \tilde{C} \varepsilon^{-1}L\left(\varepsilon,C\right)Kn^{1/2}\left(\log n\right)^{s-1}\theta\\
&=&1- O\left(\left(\log n\right)^{s-1}n^{-\delta}\right),
\end{eqnarray}
where the last inequality comes from the definition of the event $\Omega_0$, and the last identity follows from Assumption~\ref{assumption: sparsity}.

Then we bound the second term $\mathbb{P}\left(\Omega^c\right)$. 
By Lemma~\ref{lemma order--degree} and Lemma~\ref{lemma layer size}, we have
\begin{eqnarray}
  \label{lemma layer noise eqn 8}
  \nonumber
\mathbb{P}\left(\Omega^c\right)&=&\mathbb{P}\left(\cup_{i=0}^K \Omega_i^c\right)\le \sum_{i=0}^K \mathbb{P}\left(\Omega_i^c\right)\\
\nonumber
&=& \sum_{i=1}^K \mathbb{P}\left(\left\{ V^{C,0}_{\boldsymbol{G}_n}\left(\boldsymbol{T}^i\right)\subset\pi_{1:{L\left(\varepsilon, C\right)}}\left(\boldsymbol{T}^i\right)\right\}^c\right)+\mathbb{P}\left(\left|\mathcal{L}_s\left(\boldsymbol{F}_n\right)\right|\le \tilde{C}\varepsilon^{-1}n^{1/2}\left(\log n\right)^{s-1}\right)\\
&\le &\left(K+1\right)\varepsilon + o(1).
\end{eqnarray}
Combining equations~\eqref{lemma layer noise eqn 5}–\eqref{lemma layer noise eqn 8}, we have shown that
\begin{equation*}  \mathbb{P}\left(\left|E_{\boldsymbol{R}_n}\left(V^{C,0}\left(\boldsymbol{G}_n\right),\mathcal{L}_s\left(\boldsymbol{F}_n\right)\right)\right|=0\right)\geq 1-\left(K+1\right)\varepsilon-O\left(\left(\log n\right)^{s-1}n^{-\delta}\right)-o\left(1\right).
\end{equation*}
Since $\varepsilon>0$ can be arbitrary small, we have 
\begin{equation*}
     \lim_{n\rightarrow \infty} \mathbb{P}\left(\left|E_{\boldsymbol{R}_n}\left(V^{C,0}\left(\boldsymbol{G}_n\right),\mathcal{L}_s\left(\boldsymbol{F}_n\right)\right)\right|=0\right)=1.
\end{equation*}

We now turn to the proof of \eqref{lemma layer noise eqn 2}.  
From the expression of 
$E^{2}_{\boldsymbol{R}_n}\left(
V^{C,0}_{\boldsymbol{G}_n}\left(\boldsymbol{T}^j\right),  \mathcal{L}_s\left(\boldsymbol{T}^{i}\right)\right)$ 
given in \eqref{lemma layer noise chain}, and using the fact that the two sets 
$V^{C,0}_{\boldsymbol{G}_n}\left(\boldsymbol{T}^j\right)$ and 
$\mathcal{L}_s\left(\boldsymbol{T}^{i}\right)$ are disjoint for $i\neq j$, we obtain
\begin{align*}    \left|E^{2}_{\boldsymbol{R}_n}\left(
V^{C,0}_{\boldsymbol{G}_n}\left(\boldsymbol{T}^j\right),  \mathcal{L}_s\left(\boldsymbol{T}^{i}\right)\right)\right|&=\sum_{v\in V\left(\boldsymbol{G}_n\right)}\left|E_{\boldsymbol{R}_n}\left(
V^{C,0}_{\boldsymbol{G}_n}\left(\boldsymbol{T}^j\right),  v\right)\right|\cdot \left|E_{\boldsymbol{R}_n}\left(
v,  \mathcal{L}_s\left(\boldsymbol{T}^{i}\right)\right)\right|\\
&=\sum_{v\in V\left(\boldsymbol{G}_n\right)}\sum_{u_1\in V^{C,0}_{\boldsymbol{G}_n}\left(\bm{T}^j\right)}\sum_{u_2\in \mathcal{L}_s\left(\boldsymbol{T}^{i}\right)}\left|E_{\boldsymbol{R}_n}\left(
u_1,  v\right)\right|\cdot \left|E_{\boldsymbol{R}_n}\left(
v,  u_2\right)\right|.
\end{align*}
For any three different vertices $v, u_1, u_2$, the product $\left|E_{\boldsymbol{R}_n}\left(v,u_1\right)\right|\cdot \left|E_{\boldsymbol{R}_n}\left(v,u_2\right)\right|\sim \operatorname{Bin}\left(1,\theta^2\right)$.

Next, consider the size of $V^{C,0}_{\boldsymbol{G}_n}\left(\bm{T}^j\right)$
on the event $\Omega$. By definition of $\Omega$, it holds that $V^{C,0}_{\boldsymbol{G}_n}\left(\boldsymbol{T}^j\right)
\subset
\pi_{1:L\left(\varepsilon,C\right)}\left(\boldsymbol{T}^j\right)$.
Hence,
\begin{equation*}
\mathbbm{1}_{\Omega}
\Bigl|
E_{\bm R_n}^{2}
\bigl(
V_{\bm G_n}^{C,0}\bigl(\bm T^j\bigr),
\mathcal L_s \bigl(\bm T^i\bigr)
\bigr)
\Bigr|
\le
\Bigl|
E_{\bm R_n}^{2}
\bigl(
\pi_{1:L(\varepsilon,C)}\bigl(\bm T^j\bigr),
\mathcal L_s\bigl(\bm T^i\bigr)
\bigr)
\Bigr|.
\end{equation*}
Conditioning only on $\bm{F}_n$, the sets
$\pi_{1:L(\varepsilon,C)}(\boldsymbol{T}^j)$ and
$\mathcal{L}_s(\boldsymbol{T}^i)$ are deterministic. Therefore,
using the preceding expression for
$\left|E^{2}_{\boldsymbol{R}_n}(\cdot,\cdot)\right|$, we obtain
\begin{eqnarray*}
&&
\mathbb{E}_{\bm R_n}\Biggl(
\frac{
\mathbbm{1}_{\Omega}
\Bigl|
E_{\bm R_n}^{2}
\bigl(
V_{\bm G_n}^{C,0}\bigl(\bm T^j\bigr),
\mathcal L_s\bigl(\bm T^i\bigr)
\bigr)
\Bigr|
}{
\bigl|\mathcal L_s\bigl(\bm F_n\bigr)\bigr|
}
\,\Big|\,\bm F_n
\Biggr)
\\
&\le&
\mathbb{E}_{\bm R_n}\Biggl(
\frac{
\Bigl|
E_{\bm R_n}^{2}
\bigl(
\pi_{1:L(\varepsilon,C)}\bigl(\bm T^j\bigr),
\mathcal L_s\bigl(\bm T^i\bigr)
\bigr)
\Bigr|
}{
\bigl|\mathcal L_s\bigl(\bm F_n\bigr)\bigr|
}
\,\Big|\,\bm F_n
\Biggr)
\le
nL(\varepsilon,C)\theta^2
\frac{
\bigl|\mathcal L_s\bigl(\bm T^i\bigr)\bigr|
}{
\bigl|\mathcal L_s\bigl(\bm F_n\bigr)\bigr|
}.
\end{eqnarray*}
Summing the preceding bounds over all pairs $i,j\in[K]$ with $i\neq j$
gives
\begin{equation*}
\mathbb{E}_{\bm R_n}\Biggl(
\frac{
\mathbbm{1}_{\Omega}
\Bigl|
\bigcup_{\substack{i,j\in[K]\\ i\neq j}}
E_{\bm R_n}^{2}
\bigl(
V_{\bm G_n}^{C,0}\bigl(\bm T^j\bigr),
\mathcal L_s\bigl(\bm T^i\bigr)
\bigr)
\Bigr|
}{
\bigl|\mathcal L_s\bigl(\bm F_n\bigr)\bigr|
}
\,\Big|\,\bm F_n
\Biggr)
\le
nK L(\varepsilon,C)\theta^2.
\end{equation*}
Applying Markov's inequality conditional on $\bm{F}_n$, we obtain that,
with probability at least $1-\varepsilon$,
\begin{equation*}
\mathbbm{1}_{\Omega}
\frac{
\left|
\bigcup_{\substack{i,j\in[K]\\i\neq j}}
E^{2}_{\boldsymbol{R}_n}
\left(
V^{C,0}_{\boldsymbol{G}_n}\left(\boldsymbol{T}^j\right),
\mathcal{L}_s\left(\boldsymbol{T}^{i}\right)
\right)
\right|
}
{\left|\mathcal{L}_s\left(\boldsymbol{F}_n\right)\right|}
\le
\frac{nK L(\varepsilon,C)\theta^2}{\varepsilon}
\le
C^0n^{-2\delta},
\end{equation*}
where the last inequality follows from Assumption~\ref{assumption: sparsity}. Finally,
\begin{equation*}
\mathbb{P}\Biggl(
\frac{
\Bigl|
\bigcup_{\substack{i,j\in[K]\\ i\neq j}}
E_{\bm R_n}^{2}
\bigl(
V_{\bm G_n}^{C,0}\bigl(\bm T^j\bigr),
\mathcal L_s\bigl(\bm T^i\bigr)
\bigr)
\Bigr|
}{
\bigl|\mathcal L_s\bigl(\bm F_n\bigr)\bigr|
}
>
C^0n^{-2\delta}
\Biggr)
\le
\mathbb{P}\bigl(\Omega^c\bigr)+\varepsilon
\le
(K+2)\varepsilon.
\end{equation*}
This completes the proof of \eqref{lemma layer noise eqn 2}.

Then we turn to the proof of \eqref{lemma layer noise eqn 4}.
To this end, we introduce enlarged sets $\tilde{V}_i$ such that the $V^{C,0}_1\left(\boldsymbol{G}_n\right)$ in \eqref{lemma layer noise eqn 3} is the subset of $\cup_{i=1}^K \tilde{V}_i$ on the event $\Omega$:
\begin{equation*}
  \tilde{V}_{i}:=\left\{v \in V\left(\boldsymbol{G}_n\right) \bigg| \operatorname{dist}_{ \boldsymbol{T}^i}\left(v, \pi_{1:L\left(\varepsilon,C\right)}\left(\boldsymbol{T}^i\right)\right)\le 1\right\}.  
\end{equation*}

Then we define the ``good event'' $\tilde{\Omega}\left(t_1,t_2\right):=\left(\cap_{i=1}^K \bar{\Omega}_i\left(t_1\right)\right) \cap\left(\cap_{i=1}^K \breve{\Omega}_i\left(t_2\right)\right) \cap\Omega$,
where
\begin{equation*}
   \bar{\Omega}_i\left(t_1\right):=\left\{\operatorname{deg}_{\boldsymbol{T}^i}\left(\pi_{1:L\left(\varepsilon,C\right)}\left(\boldsymbol{T}^i\right)\right)\le t_1\sqrt{n}\right\} 
\end{equation*}
and 
\begin{equation*}
\breve{\Omega}_i\left(t_2\right):=\left\{\frac{\left|E_{\boldsymbol{R}_n}\left(\tilde{V_i}, \mathcal{L}_s\left(\boldsymbol{F}_n\right)\right)\right|}{\left| \mathcal{L}_s\left(\boldsymbol{F}_n\right)\right|}\le t_2 n^{-\delta}\right\}.  
\end{equation*}
for any $t_1, t_2>0$.

We can observed that for fixed $t_1, t_2$, on the event $\tilde{\Omega}\left(t_1,t_2\right)$ 
\begin{equation*}   \frac{\left|E_{\boldsymbol{R}_n}\left(V_1^C\left(\boldsymbol{G}_n\right), \mathcal{L}_s\left(\boldsymbol{F}_n\right)\right)\right|}{\left| \mathcal{L}_s\left(\boldsymbol{F}_n\right)\right|} \le \frac{\left|E_{\boldsymbol{R}_n}\left(\cup^K_{i=1}\tilde{V}_i, \mathcal{L}_s\left(\boldsymbol{F}_n\right)\right)\right|}{\left| \mathcal{L}_s\left(\boldsymbol{F}_n\right)\right|}\le K t_2 n^{-\delta}.
\end{equation*}

Then we claim that there exist $t_1, t_2>0$ depending only on $\varepsilon, C$, such that
$\mathbb{P}\left(\tilde{\Omega}^c\right)$ is bounded by $\left(3K+1\right)\varepsilon$ and hence \eqref{lemma layer noise eqn 4} holds.

Since a bound for $\mathbb{P}\left(\Omega^c\right)$ has already been established in the proof of \eqref{lemma layer noise eqn 1}, we try to control $\mathbb{P}\left(\bar{\Omega}_i^c\left(t_1\right)\right)$ and $\mathbb{P}\left(\breve{\Omega}_i^c\left(t_2\right) \cap \bar{\Omega}_i\left(t_1\right)\right)$ separately in the following proof.

By Lemma~\ref{lemma deg bound}, there exists some constant $t_1:=t_1\left(\varepsilon, C\right)$ such that $\mathbb{P}\left(\bar{\Omega}^c_i\left(t_1\right)\right)\le \varepsilon$ for every $i \in [K]$.
The next step is to bound $\mathbb{P}\left(\breve{\Omega}^c_i\left(t_2\right)\cap \bar{\Omega}_i\left(t_1\right)\right)$, We do so by showing that, on event $\bar{\Omega}_i(t_1)$, $\breve{\Omega}_i(t_2)$ stands with high probability for a suitable choice of $t_2$.

Note that the sets $\left\{\tilde{V}_i, i \in \left[K\right]\right\}$ and $\mathcal{L}_s\left(\boldsymbol{F}_n\right)$
is determined entirely by $\bm{F}_n$, which is independent of $\boldsymbol{R}_n$ by Definition~\ref{defin: PF}. Therefore, the number of random edges induced by this set in $\boldsymbol{R}_n$, conditioning on $\boldsymbol{F}_n$ is stochastic dominated by $\operatorname{Bin} \left(\left|\tilde{V_i}\right| \left|\mathcal{L}_s\left(\boldsymbol{F}_n\right)\right|, \theta \right)$.
Combining the fact that, on the event $\bar{\Omega}_i$, $\left|\tilde{V}_i\right|\le t_1 L\left(\varepsilon, C\right)\sqrt{n}$, 
\begin{equation*}    \mathbb{E}_{\bm{R}_n}\left(\frac{\left|E_{\boldsymbol{R}_n}\left(\tilde{V_i}, \mathcal{L}_s\left(\boldsymbol{F}_n\right)\right)\right|}{\left|\mathcal{L}_s\left(\boldsymbol{F}_n\right)\right|} \right)
\le \left|\tilde{V_i}\right| \theta
\le t_1 L\left(\varepsilon, C\right)n^{1/2}\theta.
\end{equation*}
By Markov inequality and Assumption~\ref{assumption: sparsity}, for any $\varepsilon>0$, there exist and a universal constant $C^1>0$ such that the following holds with probability at least $1-\varepsilon$:
\begin{equation*}   \frac{\left|E_{\boldsymbol{R}_n}\left(\tilde{V_i}, \mathcal{L}_s\left(\boldsymbol{F}_n\right)\right)\right|}{\left| \mathcal{L}_s\left(\boldsymbol{F}_n\right)\right|}\le \varepsilon^{-1} t_1 L\left(\varepsilon, C\right)n^{1/2}\theta\le  C^1\varepsilon^{-1} t_1 L\left(\varepsilon, C\right)n^{-\delta}. 
\end{equation*}
Therefore, takes $t_2:=C^1\varepsilon^{-1} t_1L\left(\varepsilon, C\right)$ the bound $\mathbb{P}\left(\breve{\Omega}^c_i\left(t_2\right)\cap \bar{\Omega}_i\left(t_1\right)\right)\le \varepsilon$ is satisfied.

Set $\bar{\Omega}_i:= \bar{\Omega}_i\left(t_1\right)$, $\breve{\Omega}_i:= \breve{\Omega}_i\left(t_2\right)$, $\tilde{\Omega}:=\tilde{\Omega}\left(t_1,t_2\right)$.
Combining the preceding bounds with an union bound, we 
\begin{eqnarray*}
\mathbb{P}\left(\tilde{\Omega}^c\right)&=& \mathbb{P}\left(\left(\left(\cap_{i=1}^K \bar{\Omega}_i\right) \cap\left(\cap_{i=1}^K \breve{\Omega}_i\right) \cap\Omega\right)^c\right)=\mathbb{P}\left(\left(\left(\cap_{i=1}^K \bar{\Omega}_i\right) \cap\left(\cap_{i=1}^K \left(\breve{\Omega}_i\cup  \bar{\Omega}^c_i \right)\right) \cap\Omega\right)^c\right)\\
\nonumber
&\le & \sum_{i=1}^K \mathbb{P}\left(\bar{\Omega}^c_i\right)+\sum_{i=1}^K \mathbb{P}\left(\breve{\Omega}_i^c \cap \bar{\Omega}_i\right)+ \mathbb{P}\left(\Omega^c\right)\le K\varepsilon+K\varepsilon+\left(K+1\right)\varepsilon\le \left(3K+1\right)\varepsilon.  
\end{eqnarray*}
This completes the proof of \eqref{lemma layer noise eqn 4}.

\end{proof}

\subsection{Technical Lemmas}
\label{secsec: technical lemmas}

\begin{lemma}[Theorem 2 in \cite{rudas2007random}]
\label{lemma: degree 1 proportion}
For $\boldsymbol{T}_{n} \sim \mathrm{APA}(\alpha, n)$ with $\alpha > -1$, as $n \to \infty$,
$$
\frac{\left|\left\{v: \operatorname{deg}_{\bm{T}_n}\left(v\right)=1\right\}\right|}{n}\xrightarrow{\ P\ } p\left(1\right)>0,
$$
that is, the proportion of degree-$1$ vertices converges in probability to a positive constant $p(1)$ that depends on $\alpha$.
\end{lemma}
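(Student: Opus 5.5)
The plan is to prove the lemma directly with a one-dimensional stochastic-approximation argument for the leaf count. This avoids invoking the general branching-process machinery of \cite{rudas2007random}. Write $N_1(n) := |\{v : \operatorname{deg}_{\bm{T}_n}(v) = 1\}|$ and let $\mathcal{F}_n$ be the $\sigma$-field generated by $\bm{T}_1, \ldots, \bm{T}_n$.

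\textbf{Step 1: one-step transition.} When node $\pi_{n+1}$ arrives it is itself a new leaf. It destroys exactly one existing leaf precisely when its parent was a leaf. By the attachment rule~\eqref{defin: attachment rule 1}, and because every leaf has weight $1+\alpha$, this happens with probability
\[
\frac{(1+\alpha)N_1(n)}{(2+\alpha)n - 2 - \alpha}.
\]
Otherwise $N_1$ increases by one. Hence $N_1(n+1) - N_1(n) \in \{0, 1\}$, and
\[
\mathbb{E}\bigl[N_1(n+1) \,\big|\, \mathcal{F}_n\bigr] = N_1(n)\Bigl(1 - \frac{c}{n} + O(n^{-2})\Bigr) + 1, \qquad c := \frac{1+\alpha}{2+\alpha} > 0,
\]
where the $O(n^{-2})$ term is uniform because $N_1(n) \le n$.

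\textbf{Step 2: identify the limit.} The fixed point of the linearized drift $x \mapsto 1 - c x$ is
\[
p(1) = \frac{1}{1+c} = \frac{2+\alpha}{3+2\alpha}.
\]
This is strictly positive for every $\alpha > -1$, which gives positivity of the limit.

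\textbf{Step 3: second-moment recursion.} Set $D_n := N_1(n) - p(1)\, n$. Step 1 gives the decomposition
\[
D_{n+1} = D_n\Bigl(1 - \frac{c}{n}\Bigr) + \xi_{n+1} + O(1/n),
\]
where $\xi_{n+1}$ is a martingale difference bounded by $1$ in absolute value. The $O(1/n)$ remainder again uses $|N_1(n)| \le n$. Squaring, taking expectations, and using $\mathbb{E}[\xi_{n+1} \mid \mathcal{F}_n] = 0$ gives
\[
\mathbb{E} D_{n+1}^2 \le \Bigl(1 - \frac{2c}{n} + O(n^{-2})\Bigr)\mathbb{E} D_n^2 + O(1) + O\Bigl(\frac{\mathbb{E}|D_n|}{n}\Bigr).
\]
Bounding $\mathbb{E}|D_n| \le (\mathbb{E} D_n^2)^{1/2}$, a standard induction with hypothesis $\mathbb{E} D_n^2 \le A n$ closes for $A$ large. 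The contraction factor $1 - 2c/n$ beats the growth of the hypothesis, and the $\sqrt{An}/n$ cross term is lower order. This yields $\mathbb{E} D_n^2 = O(n)$.

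\textbf{Step 4: conclude.} By Chebyshev's inequality,
\[
\mathbb{P}\bigl(|N_1(n)/n - p(1)| > \eta\bigr) \le \frac{O(n)}{\eta^2 n^2} \to 0
\]
for every $\eta > 0$, which is the claimed convergence in probability.

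The main obstacle is the bookkeeping in Step 3. One must check that the error terms coming from the exact normalization $(2+\alpha)n - 2 - \alpha$ do not spoil the contraction. This needs care when $\alpha \in (-1, 0)$, where $c$ is small but still positive. If the direct induction becomes messy, the fallback is to apply the Robbins--Monro lemma to $x_n = N_1(n)/n$. Its recursion has the form $x_{n+1} = x_n + \frac{1}{n+1}\bigl(1 - (1+c)x_n\bigr) + \frac{\xi_{n+1}}{n+1} + O(n^{-2})$, with bounded martingale noise, which in fact gives almost sure convergence.
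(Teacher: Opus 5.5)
Your argument is correct, and it takes a genuinely different route from the paper. The paper does not prove this lemma; it imports it from Theorem~2 of \cite{rudas2007random}. That result embeds the tree in a continuous-time branching process and gives almost-sure convergence of the entire empirical degree distribution for general attachment weights.

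You instead exploit two features special to the leaf count. Every leaf carries the same weight $1+\alpha$, and the total weight of the tree is deterministic. So the leaf count has an exactly affine conditional drift, and a one-dimensional second-moment recursion suffices.

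What your route buys:
\begin{itemize}
\item a self-contained proof;
\item the explicit limit $p(1)=(2+\alpha)/(3+2\alpha)$, which correctly gives $2/3$ at $\alpha=0$ and $1/2$ in the uniform-attachment limit;
\item a rate, $N_1(n)/n-p(1)=O_p(n^{-1/2})$.
\end{itemize}

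What it gives up is generality (it handles degree $1$ only) and, as written, almost-sure convergence. The latter is cheap to recover, either from your Robbins--Monro remark or from Borel--Cantelli along $n_k=k^2$, using that $N_1$ changes by at most one per step.

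The quantitative bound also suits how the paper uses the lemma: for each tree of the RC-PF forest at a random size of order $n$. A tree's total weight depends only on its size. Hence its internal shape evolves, in its own clock, as an APA chain independent of the size process. A bound $\mathbb{P}(|N_1(m)/m-p(1)|>\eta)=O(1/m)$ therefore transfers immediately to random sizes.

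Two small corrections:
\begin{itemize}
\item The normalization when $\pi_{n+1}$ attaches to a tree with $n$ nodes is $2(n-1)+n\alpha=(2+\alpha)n-2$, not $(2+\alpha)n-2-\alpha$. This is harmless, since the discrepancy is absorbed into your uniform $O(n^{-2})$ term.
\item Your worry about small $c$ when $\alpha$ is near $-1$ is unnecessary. The factor $(1-c/n)^2$ is at most $1$, and the additive terms are $O(1)+O(\sqrt{A/n})$. So the induction $\mathbb{E}D_n^2\le An$ closes for $A$ large regardless of the size of $c>0$; the contraction is not actually needed to get $O(n)$.
\end{itemize}
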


\begin{lemma}[Proposition 5.5 in \cite{senizergues2021geometry}]
\label{lemma: T}
Let $q > 2+\alpha$. For $\boldsymbol{T}_{n}\sim \mathrm{APA}\left(\alpha, n\right)$, for $\alpha>-1$, when $n$ tends to infinity,
$$
n^{-\frac{1}{2+\alpha}}\left(\operatorname{deg}_{\boldsymbol{T}_n}\left(\pi_1\left(\boldsymbol{T}_n\right)\right), \operatorname{deg}_{\boldsymbol{T}_n}\left(\pi_2\left(\boldsymbol{T}_n\right)\right), \ldots, \operatorname{deg}_{\boldsymbol{T}_n}\left(\pi_n\left(\boldsymbol{T}_n\right)\right), 0,0, \ldots\right) \xrightarrow{a.s.}\left(Y_{1,\alpha}, Y_{2, \alpha}, Y_{3, \alpha}, \ldots\right),
$$
almost surely with respect to the $\ell_q$ metric where ($Y_{1, \alpha}, Y_{2, \alpha}, \ldots$) is a random sequence satisfying $\sum_{j=1}^{\infty} \mathbb{E} Y_{j,\alpha}^q<\infty$ and each random variable $Y_{j, \alpha}$ has a density with respect to the Lebesgue measure.
\end{lemma}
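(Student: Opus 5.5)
The plan is to prove the coordinatewise almost-sure convergence by a martingale argument, to obtain the convergence in $\ell_q$ from uniform moment bounds on the tails of the degree sequence, and to handle the existence of a density separately.

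\textbf{Step 1: a martingale for each fixed node.} Fix $j$ and write $d_n := \operatorname{deg}_{\bm{T}_n}(\pi_j(\bm{T}_n))$ for $n \ge j$. At time $n+1$ the total attachment weight is $W_n := 2(n-1)+n\alpha = (2+\alpha)n-2$. Therefore
\[
\mathbb{E}\bigl[d_{n+1}+\alpha \,\big|\, \bm{T}_n\bigr] = (d_n+\alpha)\Bigl(1+\tfrac{1}{W_n}\Bigr).
\]
I set $c_n := \prod_{k=j}^{n-1}\bigl(1+W_k^{-1}\bigr)^{-1}$. A Gamma-function/Stirling computation gives $c_n \sim \kappa_j\, n^{-1/(2+\alpha)}$. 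It follows that $M^{(j)}_n := c_n(d_n+\alpha)$ is a nonnegative martingale, so it converges almost surely. Running the same computation with rising factorials $(d_n+\alpha)^{(p)}$ shows that $M^{(j)}$ is bounded in $L^p$ for every $p$. Hence $n^{-1/(2+\alpha)} d_n \to Y_{j,\alpha}$ almost surely and in every $L^p$, which gives the coordinatewise statement.

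\textbf{Step 2: tail moment bound and $\ell_q$ convergence.} The same rising-factorial recursion, started at time $j$ with $d_j=1$, yields
\[
\mathbb{E}\bigl[d_n^{\,q}\bigr] \le C_q\,(n/j)^{q/(2+\alpha)}.
\]
Applying Doob's $L^q$ maximal inequality to $M^{(j)}$ then gives
\[
\mathbb{E}\Bigl[\sup_{n\ge j}\bigl(n^{-\frac{1}{2+\alpha}} d_n\bigr)^q\Bigr] \le C'_q\, j^{-q/(2+\alpha)}.
\]
Because $q>2+\alpha$, summing over $j>J$ gives
\[
\mathbb{E}\Bigl[\sup_n \sum_{j>J} \bigl(n^{-\frac{1}{2+\alpha}} d_n^{(j)}\bigr)^q\Bigr] \lesssim J^{1-q/(2+\alpha)} \to 0 .
\]
I then choose $J_m \uparrow \infty$ so that these bounds are summable in $m$, and use Markov's inequality with Borel--Cantelli. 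This shows that almost surely the $\ell_q$-tail beyond $J_m$ is uniformly small in $n$. Combined with coordinatewise convergence on the first $J_m$ coordinates, this gives almost-sure $\ell_q$ convergence. Fatou's lemma, applied to the bound $\mathbb{E}[Y_{j,\alpha}^q] \le C'_q j^{-q/(2+\alpha)}$, yields $\sum_j \mathbb{E} Y_{j,\alpha}^q < \infty$.

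\textbf{Step 3: density of $Y_{j,\alpha}$.} I expect this to be the main obstacle, since martingale convergence alone only gives a limit law and says nothing about absolute continuity. I would first try the P\'olya-urn structure. The pair (weight of $\pi_j$, total weight of all other nodes) evolves as a triangular urn, because each step adds weight $1$ to the chosen node and weight $1+\alpha$ for the new node. Its limit can be represented as $Y_{j,\alpha} = \beta_j \cdot Z_j$. Here $\beta_j$ is a Beta-distributed fraction with $\beta_j$ and $Z_j$ independent, and $Z_j$ is a positive limit. The Beta factor has a density, so the product has a density given a.s.\ positivity of $Z_j$; this can be checked by the standard continuous-time (Yule-process) embedding. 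If this representation proves awkward, the fallback is to compute $\mathbb{E}[Y_{j,\alpha}^p]$ explicitly from Step 1 for all real $p$. Identifying the resulting Mellin transform as that of a product of independent Gamma/Beta variables would then force absolute continuity.
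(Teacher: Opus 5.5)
The paper gives no proof of this lemma; it quotes it from S\'enizergues (2021, Prop.~5.5). Your Steps~1--2 are therefore a genuinely different, self-contained route, and they are correct. $c_n(d_n+\alpha)$ is a martingale. The identity
\[
\mathbb{E}\Bigl[\frac{\Gamma(d_{n+1}+\alpha+p)}{\Gamma(d_{n+1}+\alpha)}\,\Big|\,\bm{T}_n\Bigr]=\frac{\Gamma(d_n+\alpha+p)}{\Gamma(d_n+\alpha)}\Bigl(1+\frac{p}{W_n}\Bigr),\qquad p>-(1+\alpha),
\]
extends your rising-factorial computation to real $p$ and gives $\sup_n\mathbb{E}[(M^{(j)}_n)^q]\le C_q$ uniformly in $j$. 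Doob then gives $\mathbb{E}[\sup_n(n^{-1/(2+\alpha)}d^{(j)}_n)^q]\lesssim j^{-q/(2+\alpha)}$, and $q>2+\alpha$ makes these bounds summable. That yields a.s.\ $\ell_q$ convergence and $\sum_j\mathbb{E}Y_{j,\alpha}^q<\infty$. One cosmetic point: for $j=1$, start at $n=2$, since $W_1=\alpha$ can be $\le 0$; at $n=2$ the root and $\pi_2$ are exchangeable. As a bonus, your route gives explicit $n$-uniform tail bounds that the citation does not display.

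\textbf{The gap is Step~3.} In the urn you set up (weight of $\pi_j$ versus all other weight), the weight fraction of $\pi_j$ tends to $0$. So no Beta-distributed fraction arises there, and that urn exhibits no factor independent of the rest; the factorization $Y_{j,\alpha}=\beta_jZ_j$ is asserted, not derived.

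The fallback fails as stated too. For $j\ge2$ and $W_j=(2+\alpha)j-2$, the identity above gives
\[
\mathbb{E}\bigl[Y_{j,\alpha}^p\bigr]=\frac{\Gamma(1+\alpha+p)\,\Gamma\bigl(W_j/(2+\alpha)\bigr)}{\Gamma(1+\alpha)\,\Gamma\bigl((W_j+p)/(2+\alpha)\bigr)}.
\]
The denominator Gamma has slope $1/(2+\alpha)$ in $p$, which is of Mittag-Leffler type. For irrational $\alpha$, a pole/zero density count shows this is not the Mellin transform of any finite product of independent Gamma/Beta powers. Only for special $\alpha$ does it work: for $\alpha=0$, duplication gives $Y_{2,0}\overset{d}{=}2\sqrt{G}$ with $G\sim\mathrm{Gamma}(1/2,1)$.

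Either of the following repairs closes the gap. First note positivity: taking $p\in(-(1+\alpha),0)$ in the identity and applying Fatou gives $\mathbb{E}[Y_{j,\alpha}^p]<\infty$, so $Y_{j,\alpha}>0$ a.s.

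\emph{Repair (i).} Put the urn on the subtree rooted at $\pi_j$, not on the vertex. The subtree's total weight versus the rest is a classical P\'olya urn, since each draw adds $2+\alpha$ to the drawn side. Its fraction therefore converges a.s.\ to $\beta'\sim\mathrm{Beta}\bigl(\frac{1+\alpha}{2+\alpha},\frac{W_j-1-\alpha}{2+\alpha}\bigr)$. The subtree's internal evolution, indexed by its own step count, is a planted APA chain independent of the urn draws. Hence $Y_{j,\alpha}=(\beta')^{1/(2+\alpha)}Z'$ with $Z'>0$ independent of $\beta'$, and a Beta power times an independent positive variable has a density. For $j=1$, split along the edge $\{\pi_1,\pi_2\}$ instead.

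\emph{Repair (ii).} Avoid identifying the law. Take $p=it$ in the identity; dominated convergence applies since $|\Gamma(x+it)|\le\Gamma(x)$, so $\mathbb{E}[Y_{j,\alpha}^{it}]$ is given by the displayed formula. Stirling shows it decays like $e^{-\frac{\pi}{2}(1-\frac{1}{2+\alpha})|t|}$ up to polynomial factors. Thus $\log Y_{j,\alpha}$ has an integrable characteristic function, hence a density, and so does $Y_{j,\alpha}$.
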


\begin{lemma}
   \label{lemma deg bound}
For $\boldsymbol{T}_{n}\sim \mathrm{APA}\left(\alpha, n\right)$, for $\alpha>-1$, for any fixed $L\in \mathbb{N}$, 
$$
\operatorname{deg}_{\boldsymbol{T}_n}\left(\pi_{1:L}\left(\boldsymbol{T}_n\right)\right)=O_p\left(n^{\frac{1}{2+\alpha}}\right).
$$
\end{lemma}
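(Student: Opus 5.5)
The claim is that for fixed $L$, $\max_{j\le L}\operatorname{deg}_{\bm{T}_n}(\pi_j(\bm{T}_n)) = O_p(n^{\frac{1}{2+\alpha}})$. I would deduce it directly from Lemma~\ref{lemma: T}. That lemma states that the rescaled degree sequence
\[
n^{-\frac{1}{2+\alpha}}\bigl(\operatorname{deg}_{\bm{T}_n}(\pi_1(\bm{T}_n)),\operatorname{deg}_{\bm{T}_n}(\pi_2(\bm{T}_n)),\ldots\bigr)
\]
converges almost surely in $\ell_q$ (for some fixed $q>2+\alpha$) to $(Y_{1,\alpha},Y_{2,\alpha},\ldots)$.

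\textbf{Step 1.} Convergence in $\ell_q$ implies coordinatewise convergence, since $|x_j-y_j|\le\|x-y\|_q$. Hence, for each $j\in[L]$,
\[
n^{-\frac{1}{2+\alpha}}\operatorname{deg}_{\bm{T}_n}(\pi_j(\bm{T}_n))\to Y_{j,\alpha} \quad\text{almost surely.}
\]
Because $L$ is fixed, the maximum of finitely many coordinates is a continuous map. So
\[
n^{-\frac{1}{2+\alpha}}\operatorname{deg}_{\bm{T}_n}(\pi_{1:L}(\bm{T}_n))\to \max_{j\le L}Y_{j,\alpha}
\]
almost surely, and therefore also in distribution.

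\textbf{Step 2.} Each $Y_{j,\alpha}$ is almost surely finite, since $\mathbb{E}Y_{j,\alpha}^q<\infty$. So the limit $M:=\max_{j\le L}Y_{j,\alpha}$ is a finite random variable. A sequence converging in distribution to a finite limit is tight, and tightness is exactly $O_p(1)$. Concretely, given $\varepsilon>0$, choose $t$ with $\mathbb{P}(M\ge t)\le\varepsilon/2$, taking $t$ to be a continuity point of $M$. Then for all large $n$ the rescaled maximum exceeds $t$ with probability at most $\varepsilon$. For the finitely many small $n$ one enlarges $t$, which is trivial because the degree is at most $n$. This gives $\operatorname{deg}_{\bm{T}_n}(\pi_{1:L}(\bm{T}_n)) = O_p(n^{\frac{1}{2+\alpha}})$.

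There is an alternative route if one prefers to avoid almost-sure convergence. Since $\sum_j \mathbb{E}Y_{j,\alpha}^q<\infty$, one can bound $\mathbb{P}(M\ge t)\le \sum_{j\le L}\mathbb{E}Y_{j,\alpha}^q/t^q$ by Markov's inequality, as in the proof of Lemma~\ref{lemma order--degree}.

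The only point needing care is bookkeeping. Lemma~\ref{lemma: T} is stated for a single tree of size $n$. When the result is applied to $\bm{T}^i$ inside $\bm{G}_n$, the size is $n_i\le n$, so $n_i^{\frac{1}{2+\alpha}}\le n^{\frac{1}{2+\alpha}}$ and the bound transfers directly. There is no substantive obstacle, since the statement is a direct consequence of Lemma~\ref{lemma: T}.
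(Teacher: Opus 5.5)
Your proof is correct and follows essentially the same route as the paper: both derive tightness of the rescaled degrees of the first $L$ nodes from the convergence in Lemma~\ref{lemma: T}. The only difference is packaging. The paper bounds each coordinate separately via the Portmanteau lemma and takes a union bound over $j\in[L]$, whereas you first pass the maximum through the continuous mapping theorem and then use that convergence in distribution to a finite limit implies tightness.
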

\begin{proof}
Let ($Y_{1,\alpha}, Y_{2,\alpha}, \ldots$) is a random sequence from Lemma~\ref{lemma: T},
we first show that for any $j \in\left[L\right], \varepsilon>0$, there exist a constant $C:=C(\varepsilon,L)$ such that $\mathbb{P}\left(Y_{j, \alpha}< C\right)>1-\frac{\varepsilon}{L}$.
By Lemma~\ref{lemma: T}, each $Y_{j, \alpha}$ admits a density $q_j(\cdot)$ supported on $[0,\infty)$ with respect to the Lebesgue measure. Consequently, for any $\varepsilon > 0$, we can select $C_j:=C_j\left(\varepsilon\right)$ such that $\mathbb{P}\left(Y_{j, \alpha}\geq  C_j\right)\le \frac{\varepsilon}{L}$, for any $j \in \left[L\right]$. We may then take $C(\varepsilon,L)=\max_{1\le j \le L}C_j(\varepsilon)$. 

We now consider the distribution of $\operatorname{deg}_{\boldsymbol{T}_n}\left(\pi_{j}\left(\boldsymbol{T}_n\right)\right)$. By  Lemma~\ref{lemma: T} and Portmanteau lemma, we have
\begin{equation}
\label{lemma deg bound eqn 1}
    \limsup_{n\rightarrow \infty}\mathbb{P}\left(n^{-\frac{1}{2+\alpha}}\operatorname{deg}_{\boldsymbol{T}_n}\left(\pi_j\left(\boldsymbol{T}_n\right)\right)\geq C\right)\le \mathbb{P}\left(Y_{j, \alpha} \geq C\right)\le \mathbb{P}\left(Y_{j, \alpha}\geq  C_j\right)\le \frac{\varepsilon}{L}.
\end{equation}
Then we proceed to show that $\operatorname{deg}_{\boldsymbol{T}_n}\left(\pi_{1:L}\left(\boldsymbol{T}_n\right)\right)=O_p\left(n^{\frac{1}{2+\alpha}}\right)$:
\begin{eqnarray*}
\liminf_{n\rightarrow \infty}\mathbb{P}\left(\operatorname{deg}_{\boldsymbol{T}_n}\left(\pi_{1:L}\left(\boldsymbol{T}_n\right)\right)< Cn^{\frac{1}{2+\alpha}}\right) &=& \liminf_{n\rightarrow \infty}\mathbb{P}\left(\max_{1\le j\le L}\operatorname{deg}_{\boldsymbol{T}_n}\left(\pi_{j}\left(\boldsymbol{T}_n\right)\right)< Cn^{\frac{1}{2+\alpha}}\right)  \\
&=& 1-\limsup_{n\rightarrow \infty}\mathbb{P}\left(\max_{1\le j\le L}\operatorname{deg}_{\boldsymbol{T}_n}\left(\pi_{j}\left(\boldsymbol{T}_n\right)\right)\geq  Cn^{\frac{1}{2+\alpha}}\right)\\
&=& 1-\limsup_{n\rightarrow \infty}\mathbb{P}\left(\cup^L_{j=1}\left\{\operatorname{deg}_{\boldsymbol{T}_n}\left(\pi_{j}\left(\boldsymbol{T}_n\right)\right)\geq Cn^{\frac{1}{2+\alpha}}\right\}\right)\\
&\ge & 1-\sum_{j=1}^L \limsup_{n\rightarrow \infty}\mathbb{P}\left(\operatorname{deg}_{\boldsymbol{T}_n}\left(\pi_{j}\left(\boldsymbol{T}_n\right)\right)\geq  Cn^{\frac{1}{2+\alpha}}\right)\\
&\ge &1-\sum_{j=1}^L \frac{\varepsilon}{L}=1-\varepsilon,
\end{eqnarray*}
where the first equality follows directly from the definition of $\deg_{\boldsymbol{T}_n}\left(\pi_{1:L}(\boldsymbol{T}_n)\right)$, while the final inequality is a consequence of \eqref{lemma deg bound eqn 1}.
Therefore, for all sufficiently large $n$
\begin{equation*}
\mathbb{P}\left(\operatorname{deg}_{\boldsymbol{T}_n}\left(\pi_{1:L}\left(\boldsymbol{T}_n\right)\right)< Cn^{\frac{1}{2+\alpha}}\right)\geq 1-\varepsilon-o\left(1\right)\ge 1-2\varepsilon,    
\end{equation*}
and the Lemma follows as desired.
\end{proof}

\begin{lemma}
\label{lemma R}
Let $q > \left(2+\alpha\right) \vee \frac{1}{\delta}$. For $\boldsymbol{G}_n \sim \mathrm{PF}(\alpha, \theta, \ell, \pi)$ with decomposition in \eqref{decomposition}, under Assumption~\ref{assumption: bound}, \ref{assumption: sparsity}. For any $i \in \left[K\right]$, as $n \rightarrow \infty$,
$$
n_i^{-\frac{1}{2+\alpha}}\left(\operatorname{deg}_{\boldsymbol{G}_n}\bigl(\pi_1\bigl(\boldsymbol{T}^i\bigr)\bigr)-\operatorname{deg}_{\boldsymbol{T}^i}\bigl(\pi_1\bigl(\boldsymbol{T}^i\bigr)\bigr), \ldots, \operatorname{deg}_{\boldsymbol{G}_n}\bigl(\pi_{n_i}\bigl(\boldsymbol{T}^i\bigr)\bigr)-\operatorname{deg}_{\boldsymbol{T}^i}\bigl(\pi_{n_i}\bigl(\boldsymbol{T}^i\bigr)\bigr), 0, \ldots\right) \xrightarrow{d}\left(0, 0, \ldots\right),
$$
in distribution with respect to the $\ell_q$ metric.
\end{lemma}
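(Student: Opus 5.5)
The plan is to prove convergence in probability of the normalized noise-degree vector to $0$ in $\ell_q$, by a direct $q$-th moment computation, and then note that this implies convergence in distribution to the constant sequence $(0,0,\ldots)$. Fix $i\in[K]$ and write $D_u := \operatorname{deg}_{\bm{G}_n}(u)-\operatorname{deg}_{\bm{T}^i}(u)$ for $u\in V(\bm{T}^i)$. Since $\bm{G}_n=\bm{F}_n+\bm{R}_n$ with multi-edges collapsed, and $\bm{T}^i$ is a connected component of $\bm{F}_n$, every edge of $\bm{G}_n$ at $u$ that is not a tree edge is an $\bm{R}_n$-edge. Hence
\[
0\le D_u\le \operatorname{deg}_{\bm{R}_n}(u)\sim \operatorname{Bin}(n-1,\theta).
\]
It therefore suffices to show
\[
\mathbb{E}\sum_{u\in V(\bm{T}^i)} \bigl(n_i^{-\frac{1}{2+\alpha}}\operatorname{deg}_{\bm{R}_n}(u)\bigr)^q\to 0 .
\]
Markov's inequality then gives $\|n_i^{-1/(2+\alpha)}(D_u)_u\|_{\ell_q}\to 0$ in probability, and convergence in probability to a constant implies convergence in distribution.

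\textbf{Binomial moment bound.} I would use the standard bound, obtained for example from Rosenthal's inequality or by comparison with Poisson moments: for $X\sim\operatorname{Bin}(m,p)$ and $q\ge1$,
\[
\mathbb{E}X^q\le C_q\bigl((mp)^q+mp\bigr).
\]
With $\mu:=(n-1)\theta\le C_0 n^{\frac{1}{2+\alpha}-\delta}$ (Assumption~\ref{assumption: sparsity}), $n_i\ge n/H$ (Assumption~\ref{assumption: bound}), and at most $n$ summands, the expectation above is bounded by
\[
C_q H^{\frac{q}{2+\alpha}}\, n\cdot n^{-\frac{q}{2+\alpha}}\bigl(\mu^q+\mu\bigr)
\lesssim n^{1-q\delta}+n^{1+\frac{1}{2+\alpha}-\delta-\frac{q}{2+\alpha}} .
\]

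\textbf{Exponent check.} The first exponent is negative because $q>1/\delta$. The second is negative because $q>2+\alpha$ gives $\frac{q-1}{2+\alpha}>\frac{1+\alpha}{2+\alpha}$, so that
\[
1+\frac{1}{2+\alpha}-\frac{q}{2+\alpha}=1-\frac{q-1}{2+\alpha}<\frac{1}{2+\alpha}\le 1\quad\text{and in fact}\quad <\delta
\]
is not needed, since $1-\frac{q-1}{2+\alpha}<0$ whenever $q>3+\alpha$. Even in the range $2+\alpha<q\le 3+\alpha$, the exponent equals $1-\frac{q-1}{2+\alpha}-\delta<1-\frac{1+\alpha}{2+\alpha}-\delta=\frac{1}{2+\alpha}-\delta$, which I will need to double-check. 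A cleaner route for this range is to bound the $\mu$ term by $\mu^q$ whenever $\mu\ge1$. When $\mu<1$, the term $n\mu\, n^{-q/(2+\alpha)}\le n^{1-q/(2+\alpha)}\to0$ because $q>2+\alpha$. So in all cases the total is $o(1)$.

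\textbf{Main obstacle.} The only delicate step is the binomial $q$-th moment bound with constants uniform in $n$, together with the bookkeeping of the $\mu<1$ versus $\mu\ge1$ cases just described. For the moment bound I would cite Rosenthal's inequality applied to centered Bernoulli sums, combined with $\mathbb{E}X^q\le 2^{q-1}(\mathbb{E}|X-\mu|^q+\mu^q)$. Everything else is a direct consequence of $D_u\le\operatorname{deg}_{\bm{R}_n}(u)$ and Markov's inequality.
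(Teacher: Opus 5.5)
Your proposal is correct and follows the same route as the paper. Both arguments dominate the degree increments by $\operatorname{deg}_{\bm{R}_n}(u)\sim\operatorname{Bin}(n-1,\theta)$ and bound the expected $q$-th power of the $\ell_q$ norm by $n_i^{1-\frac{q}{2+\alpha}}\,\mathbb{E}\operatorname{Bin}(n-1,\theta)^q$. The paper then applies $\mathbb{E}X^q\le(2\theta n)^q+C_q$ (Lemma~\ref{lemma cq}); your Rosenthal-type bound reduces to this same form once you use $\mu\le\mu^q+1$, which is your $\mu\ge1$ / $\mu<1$ split.

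Your first attempt at the exponent check does not work, since $q>2+\alpha$ alone does not make $1-\frac{q-1}{2+\alpha}-\delta$ negative. The case split you give afterwards closes that step correctly. Your explicit Markov step, from $L^q$ convergence to convergence in probability and hence in distribution, fills in what the paper leaves implicit.
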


\begin{proof}
Indeed, we have
$$
\begin{aligned}
&\mathbb{E} \left\|n_i^{-\frac{1}{2+\alpha}}\left(\operatorname{deg}_{\boldsymbol{G}_n}\left(\pi_1\bigl(\boldsymbol{T}^i\bigr)\right)-\operatorname{deg}_{\boldsymbol{T}^i}\left(\pi_1\bigl(\boldsymbol{T}^i\bigr)\right), \ldots, \operatorname{deg}_{\boldsymbol{G}_n}\left(\pi_{n_i}\bigl(\boldsymbol{T}^i\bigr)\right)-\operatorname{deg}_{\boldsymbol{T}^i}\left(\pi_{n_i}\bigl(\boldsymbol{T}^i\bigr)\right), 0, \ldots\right) \right\|_q^q\\
&\le n_i^{-\frac{q}{2+\alpha}} \sum_{k=1}^{n_i} \mathbb{E}\operatorname{deg}_{\boldsymbol{R}_n}\left(\pi_1\bigl(\boldsymbol{T}^i\bigr)\right)^q\stackrel{(a)}{=} n_i^{1-\frac{q}{2+\alpha}} \mathbb{E}\left(\operatorname{Bin}(n-1, \theta)^q\right) \stackrel{(b)}{\leq} n_i^{1-\frac{q}{2+\alpha}}\left((2 \theta n)^q+C_q\right) \\
& \leq 2^q C^q_0 n_i^{1-\frac{q}{2+\alpha}} n^{\frac{q}{2+\alpha}-q \delta}+C_q n_i^{1-\frac{q}{2+\alpha}}\le 2^q C^q_0 H^{\frac{q}{2+\alpha}-1}n^{1-q \delta}+C_q H^{\frac{q}{2+\alpha}-1} n^{1-\frac{q}{2+\alpha}}
\end{aligned}
$$
where the equality (a) follows since $\operatorname{deg}_{\boldsymbol{R}_n}\left(\pi_j\bigl(\boldsymbol{T}^i\bigr)\right)$ is distributed as $\operatorname{Bin}(n-1, \theta)$ and where inequality (b) follows from Lemma~\ref{lemma cq}.

Since $q>\left(2+\alpha\right) \vee 1 / \delta$ by assumption, we have that, as $n \rightarrow \infty$,
$$
\mathbb{E}\left\|n_i^{-\frac{1}{2+\alpha}}\left(\operatorname{deg}_{\boldsymbol{G}_n}\left(\pi_1\bigl(\boldsymbol{T}^i\bigr)\right)-\operatorname{deg}_{\boldsymbol{T}^i}\left(\pi_1\bigl(\boldsymbol{T}^i\bigr)\right), \ldots, \operatorname{deg}_{\boldsymbol{G}_n}\left(\pi_{n_i}\bigl(\boldsymbol{T}^i\bigr)\right)-\operatorname{deg}_{\boldsymbol{T}^i}\left(\pi_{n_i}\bigl(\boldsymbol{T}^i\bigr)\right), 0, \ldots\right) \right\|^q_q \rightarrow 0
$$
and the Lemma follows as desired.
\end{proof}

\begin{lemma}
\label{lemma: G}
Let $q>\left(2+\alpha\right) \vee 1 / \delta$. For $\boldsymbol{G}_n \sim \mathrm{PF}(\alpha, \theta, \ell, \pi)$ with decomposition in \eqref{decomposition}, under  Assumption~\ref{assumption: bound} and \ref{assumption: sparsity}. For any $i \in \left[K\right]$, as $n \rightarrow \infty$,
$$
n_i^{-\frac{1}{2+\alpha}}\left(\operatorname{deg}_{\boldsymbol{G}_n}\left(\pi_1\bigl(\boldsymbol{T}^i\bigr)\right), \operatorname{deg}_{\boldsymbol{G}_n}\left(\pi_2\bigl(\boldsymbol{T}^i\bigr)\right), \ldots, \operatorname{deg}_{\boldsymbol{G}_n}\left(\pi_{n_i}\bigl(\boldsymbol{T}^i\bigr)\right), 0,0, \ldots\right) \xrightarrow{d}\left(Y_{1, \alpha}, Y_{2, \alpha}, Y_{3, \alpha}, \ldots\right),
$$
in distribution with respect to the $\ell_q$ metric where ($Y_{1, \alpha}, Y_{2, \alpha}, \ldots$) is a random sequence satisfying $\sum_{j=1}^{\infty} \mathbb{E} Y_{j, \alpha}^q<\infty$ and each random variable $Y_{j, \alpha}$ has a density with respect to the Lebesgue measure.
\end{lemma}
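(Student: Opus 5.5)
The plan is to combine Lemma~\ref{lemma: T}, which handles the pure tree, with Lemma~\ref{lemma R}, which shows the noise contribution is negligible, through a Slutsky-type argument in the separable metric space $\ell_q$.

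First, fix $i\in[K]$. Write $X_n$ for the rescaled tree-degree sequence $n_i^{-\frac{1}{2+\alpha}}(\operatorname{deg}_{\bm{T}^i}(\pi_1(\bm{T}^i)),\ldots,\operatorname{deg}_{\bm{T}^i}(\pi_{n_i}(\bm{T}^i)),0,\ldots)$. Write $Z_n$ for the rescaled difference sequence appearing in Lemma~\ref{lemma R}. Then the vector in the statement is exactly $X_n+Z_n$.

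Since $\bm{T}^i\sim\mathrm{APA}(\alpha,\pi^i)$ has $n_i$ nodes, and $n_i\ge n/H\to\infty$ by Assumption~\ref{assumption: bound}, Lemma~\ref{lemma: T} applies with $n$ replaced by $n_i$. It gives $X_n\to(Y_{1,\alpha},Y_{2,\alpha},\ldots)$ almost surely in $\ell_q$, because $q>2+\alpha$. This requires viewing the trees along the increasing sequence $n_i$; this is legitimate since only the law of $X_n$ matters, and a.s.\ convergence along any subsequence of sizes implies convergence in distribution. The limit's properties, $\sum_j\mathbb{E}Y_{j,\alpha}^q<\infty$ and the existence of a density, are inherited directly from Lemma~\ref{lemma: T}.

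Next, Lemma~\ref{lemma R}, using $q>1/\delta$ and Assumption~\ref{assumption: sparsity}, gives $Z_n\to 0$ in distribution in $\ell_q$. Convergence in distribution to a constant is convergence in probability, i.e.\ $\mathbb{P}(\|Z_n\|_q>\eta)\to0$ for every $\eta>0$. In fact the proof of Lemma~\ref{lemma R} bounds $\mathbb{E}\|Z_n\|_q^q\to0$, so Markov's inequality gives this directly.

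Finally, apply the converging-together lemma: if $X_n\Rightarrow Y$ and $d(X_n,X_n+Z_n)=\|Z_n\|_q\to0$ in probability, then $X_n+Z_n\Rightarrow Y$ in any metric space. Its proof is a short Portmanteau argument using bounded Lipschitz test functions $f$:
\[
|\mathbb{E}f(X_n+Z_n)-\mathbb{E}f(X_n)|\le \mathrm{Lip}(f)\,\eta+2\|f\|_\infty\mathbb{P}(\|Z_n\|_q>\eta).
\]

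The only delicate point is the first step. There one must check that the labeling by within-tree arrival order $\pi^i$, together with the $n_i$-dependent indexing, matches the setting of Lemma~\ref{lemma: T}; I expect this to be routine. No independence between $X_n$ and $Z_n$ is needed.
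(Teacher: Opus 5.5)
Your proposal is correct and follows the paper's proof, which simply applies Slutsky's lemma in $\ell_q$ to Lemma~\ref{lemma: T} (taken along tree size $n_i$) and Lemma~\ref{lemma R}. Your argument does not need $n_i$ to be increasing in $n$, which Assumption~\ref{assumption: bound} does not guarantee: almost-sure convergence of the tree sequence along sizes $m\to\infty$ already gives convergence along any sequence $n_i\ge n/H\to\infty$, so this point is harmless.
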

\begin{proof}
This lemma is a direct result of applying Slutsky Lemma with results with Lemma~\ref{lemma: T} and Lemma~\ref{lemma R}.
\end{proof}

\begin{lemma}[Lemma S8 in \cite{crane2024root}]
\label{lemma cq}
Let $X$ be a random variable with $\operatorname{Bin}(n, \theta)$ distribution. For any $q \geq 1, \theta \in[0,1]$ and any $n \in \mathbb{N}$, we have that
$$
\mathbb{E} X^q \leq(2 \theta n)^q+C_q,
$$
where $C_q>0$ is a constant that depends only on $q$.
\end{lemma}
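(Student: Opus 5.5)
The statement to prove is Lemma~\ref{lemma cq}: for $X\sim\operatorname{Bin}(n,\theta)$ and $q\ge 1$, $\mathbb{E}X^q\le(2\theta n)^q+C_q$. The plan is to split $\mathbb{E}X^q$ on the event $\{X\le 2\mu\}$, where $\mu:=\theta n$. This gives
\[
\mathbb{E}X^q \le (2\mu)^q + \mathbb{E}\bigl[X^q\mathbbm{1}\{X>2\mu\}\bigr].
\]
It then suffices to show that the tail contribution is bounded by a constant depending only on $q$, uniformly in $n$ and $\theta$.

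For the tail I would use the layer-cake formula:
\[
\mathbb{E}\bigl[X^q\mathbbm{1}\{X>2\mu\}\bigr]\le (2\mu)^q\,\mathbb{P}(X>2\mu)+\int_{2\mu}^\infty q x^{q-1}\mathbb{P}(X>x)\,dx .
\]
I would control $\mathbb{P}(X>x)$ for $x\ge 2\mu$ with the multiplicative Chernoff bound $\mathbb{P}(X\ge x)\le (e\mu/x)^x e^{-\mu}$. The point of the threshold $2\mu$ is that $x\ge 2\mu$ forces $e\mu/x\le e/2$. This ratio is not yet below $1$, so I would refine it by writing $\mathbb{P}(X\ge x)\le \exp\{-\mu h(x/\mu)\}$ with $h(t)=t\log t-t+1$. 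Since $h(t)\ge c\,t$ for all $t\ge 2$ (take $c=h(2)/2>0$, using that $h(t)/t$ is increasing), this yields $\mathbb{P}(X\ge x)\le e^{-cx}$ for every $x\ge 2\mu$, uniformly in $\mu$.

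Substituting this into both tail terms gives
\[
(2\mu)^q e^{-2c\mu}\le \sup_{y\ge 0} y^q e^{-cy}<\infty,
\qquad
\int_0^\infty q x^{q-1}e^{-cx}\,dx = q\,\Gamma(q)\,c^{-q},
\]
and both bounds depend only on $q$. Setting $C_q$ equal to their sum finishes the proof.

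The main obstacle is getting a tail bound that is uniform in $\mu$, including the case where $\mu$ is very small. That case is where the bound $h(t)\ge ct$ for $t\ge 2$ matters, since the exponent $\mu h(x/\mu)\ge cx$ stays proportional to $x$ regardless of how small $\mu$ is. For $\mu\to 0$ one can also check directly that $X$ is essentially Poisson with a small mean, so all of its moments stay bounded. Alternatively, one could cite the result from Crane and Xu (Lemma S8) as stated, but the Chernoff route above is self-contained.
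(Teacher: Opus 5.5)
Your argument is correct and follows essentially the same route as the paper: truncate at $2\theta n$, establish a tail bound $\mathbb{P}(X\ge x)\le e^{-cx}$ that holds uniformly for $x\ge 2\theta n$, and integrate it against $q x^{q-1}$. The paper gets that tail bound from Bernstein's inequality (with $c=1/8$) rather than the multiplicative Chernoff bound, and bounds $\int_0^{(2\theta n)^q}\mathbb{P}(X^q\ge t)\,dt\le(2\theta n)^q$ directly, which avoids your extra $(2\mu)^q\,\mathbb{P}(X>2\mu)$ term; in your version, also treat $\theta=0$ separately since $h(x/\mu)$ is undefined there (trivially $X=0$).
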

\begin{proof}
Write $X$ as a random variable with the $\operatorname{Bin}(n, \theta)$ distribution. Then,
\begin{equation}
\label{lemma cq eqn 1}
\mathbb{E} X^q  =\int_0^{\infty} \mathbb{P}\left(X^q \geq t\right) d t \leq(2 \theta n)^q+\int_{(2 \theta n)^q}^{\infty} \mathbb{P}\left(X^q \geq t\right) d t    
\end{equation}
We note that $\operatorname{Var} X \leq \theta n$. By Bernstein's inequality, we have that for all $t \geq(2 \theta n)^q$,
\begin{equation*}
\mathbb{P}\left(X^q \geq t\right) =\mathbb{P}\left(X-\theta n \geq t^{1 / q}-\theta n\right) \leq \exp \Biggl(-\frac{1}{2} \frac{\bigl(t^{1 / q}-\theta n\bigr)^2}{\left(t^{1 / q}-\theta n\right)+\theta n}\Biggr)  \leq \exp \left(-\frac{1}{8} t^{1 / q}\right) 
\end{equation*}
Therefore, we may bound the second term of \eqref{lemma cq eqn 1} as
\begin{equation*}
\int_{(2 \theta n)^q}^{\infty} \mathbb{P}\left(X^q \geq t\right) d t \leq \int_{(2 \theta n)^q}^{\infty} e^{-\frac{t^{1 / q}}{8}} d t \leq \int_0^{\infty} q s^{q-1} e^{-\frac{s}{8}} d s < \infty.  
\end{equation*}
The lemma thus follows by writing $C_q := \int_0^{\infty} q s^{q-1} e^{-\frac{s}{8}} d s$.
\end{proof}

\begin{lemma}
\label{lem:edge_count_inequality}
Let $\bm{g} = (V, E)$ be an arbitrary graph without self-loop and let $A, B \subset V(\bm{g})$. Then, we have that
\[
\sum_{v \in A} |E(v, B)| \leq 2 |E(A, B)|.
\]
\end{lemma}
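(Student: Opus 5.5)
The plan is a double-counting argument over the edge set. Write $E(A,B)$ for the set of edges of $\bm{g}$ with one endpoint in $A$ and the other in $B$. Here edges are undirected, so $(u,v)$ and $(v,u)$ are the same edge. For each $v \in A$, $|E(v,B)|$ counts the edges joining $v$ to some node of $B$, so
\[
\sum_{v \in A} |E(v,B)| = \sum_{e \in E(A,B)} \bigl|\{ v \in A \,:\, v \text{ is an endpoint of } e \text{ and the other endpoint of } e \text{ lies in } B\}\bigr|.
\]
I would first check this identity. Each pair $(v,e)$ with $v \in A$ and $e \in E(v,B)$ has $e \in E(A,B)$, because $e$ joins $v \in A$ to a node of $B$. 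Conversely, each $e \in E(A,B)$ is counted once for every endpoint of $e$ that lies in $A$ and whose opposite endpoint lies in $B$.

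The key step is to bound the summand for each fixed edge $e = \{x,y\}$. Because $\bm{g}$ has no self-loops, $x \neq y$, so $e$ has exactly two endpoints and the summand is at most $2$. The value $2$ occurs exactly when $x, y \in A \cap B$. Summing over $e \in E(A,B)$ then gives $\sum_{v\in A}|E(v,B)| \le 2|E(A,B)|$.

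No step here is a real obstacle. The one point needing care is the convention that $E_{\bm g}(V_1,V_2)$ is a set of undirected edges, so that an edge inside $A\cap B$ is counted once in $|E(A,B)|$; this is exactly where the factor $2$ comes from. The no-self-loop hypothesis is what rules out a loop $\{v,v\}$ being counted in a way that breaks the bound. The same argument applies verbatim to the auxiliary graph $\bm g'$ used in the proof of Theorem~\ref{thm: layer 1/2}, since that graph also has no self-loops.
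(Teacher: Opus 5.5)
Your double-counting argument is correct and is essentially the paper's proof. The paper splits along $I = A\cap B$, $A_1 = A\setminus B$, $B_1 = B\setminus A$ to obtain $\sum_{v\in A}|E(v,B)| = 2|E(I,I)| + |E(I,B_1)| + |E(A_1,B)|$, which is exactly your observation that an edge of $E(A,B)$ is counted twice when both endpoints lie in $A\cap B$ and once otherwise. (A minor aside: since $|E(v,B)|$ is the size of an edge set, a self-loop at $v\in A\cap B$ would contribute $1$ to each side, so the no-self-loop hypothesis is not actually what prevents the bound from failing.)
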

\begin{proof}
Define 
\[
I := A \cap B, \quad A_1 = A \cap B^c = A \cap I^c, \quad B_1 = B \cap A^c = B \cap I^c,
\]
and note that 
\[
E(A, B) = E(I, B) \sqcup E(A_1, B) = E(I, I) \sqcup E(I, B_1) \sqcup E(A_1, B).
\]

Then, we have that
\begin{align*}
\sum_{v \in A} |E(v, B)| &= \sum_{v \in I} |E(v, B)| + \sum_{v \in A_1} |E(v, B)| \\
&= \sum_{v \in I} |E(v, I)| + \sum_{v \in I} |E(v, B_1)| + \sum_{v \in A_1} |E(v, B)| \\
&= 2 |E(I, I)| + |E(I, B_1)| + |E(A_1, B)| \leq 2 |E(A, B)|.
\end{align*}
The lemma thus follows as desired.
\end{proof}

\begin{lemma}
\label{lemma integration}
For fixed integers $s \in \mathbb{N}$, 
\begin{equation*}
    \sum_{i=1}^{n-1} \frac{\left(\log i\right)^{s-1}}{2i-1} \asymp (\log n)^s.
\end{equation*}
\end{lemma}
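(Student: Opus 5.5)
We compare the sum with the integral $\int_{1}^{n} \frac{(\log x)^{s-1}}{2x}\,dx = \frac{(\log n)^s}{2s}$ and prove matching upper and lower bounds of this order, with constants depending only on $s$.

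\textbf{Reduction.} The summand $f(x) := \frac{(\log x)^{s-1}}{2x-1}$ vanishes at $x=1$ when $s \geq 2$; when $s=1$ the $i=1$ term equals $1$, which is $O(\log n)$. So it suffices to work with $i \geq 2$. For $i \geq 2$ we have $2i-1 \in [i, 2i]$. Hence each summand lies between $\frac{(\log i)^{s-1}}{2i}$ and $\frac{(\log i)^{s-1}}{i}$, and the claim reduces to showing
\[
\sum_{i=2}^{n-1} \frac{(\log i)^{s-1}}{i} \asymp (\log n)^s.
\]

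\textbf{Upper bound.} The function $g(x) = \frac{(\log x)^{s-1}}{x}$ has derivative $\frac{(\log x)^{s-2}\,(s-1-\log x)}{x^2}$. It is therefore increasing on $[1, e^{s-1}]$ and decreasing afterwards, with maximum value $g(e^{s-1}) = \bigl((s-1)/e\bigr)^{s-1}$, a constant depending only on $s$. A unimodal function satisfies
\[
\sum_{i=2}^{n-1} g(i) \leq \int_1^n g(x)\,dx + 2\max g.
\]
This is seen by splitting the sum at the mode and comparing each monotone piece with its integral, which costs at most one extra term of size at most $\max g$ on each side. Since $\int_1^n g = \frac{(\log n)^s}{s}$, the upper bound is $\frac{(\log n)^s}{s} + O_s(1)$, and this is $\lesssim (\log n)^s$ for $n \geq 3$.

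\textbf{Lower bound.} Restrict to $i \geq i_0 := \lceil e^{s-1} \rceil + 1$, where $g$ is decreasing. Then $g(i) \geq \int_i^{i+1} g$, so
\[
\sum_{i=i_0}^{n-1} g(i) \geq \int_{i_0}^{n} g = \frac{(\log n)^s - (\log i_0)^s}{s}.
\]
This is at least $\frac{(\log n)^s}{2s}$ once $n$ is large relative to $i_0$, a threshold depending only on $s$. For the finitely many smaller $n \geq 3$, both sides are positive constants, so the ratio is bounded and the $\asymp$ statement is unaffected.

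The only delicate point is the non-monotonicity of $g$ near the origin for $s \geq 2$. It is resolved by the unimodality argument above, and all resulting constants depend only on $s$.
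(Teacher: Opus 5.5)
Your proof is correct. It rests on the same comparison as the paper's, with $\int \frac{(\log t)^{s-1}}{t}\,dt = \frac{(\log t)^s}{s}$, but it handles the non-monotonicity in a different way.

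The paper never forms the non-monotone function $g(x) = (\log x)^{s-1}/x$. For $i \ge 2$ it bounds only the increasing numerator by a unit-interval integral: $(\log i)^{s-1} \le \int_i^{i+1}(\log t)^{s-1}\,dt$ from above, and $(\log i)^{s-1} \ge \int_{i-1}^{i}(\log t)^{s-1}\,dt$ from below. It then absorbs the denominator with the pointwise facts $\frac{1}{2i-1} \le \frac{1}{t}$ for $t \in [i,i+1]$ and $\frac{1}{2i-1} \ge \frac{1}{3t}$ for $t \in [i-1,i]$. The unit-interval integrals then concatenate directly into $\int_2^n \frac{(\log t)^{s-1}}{t}\,dt$ from above and $\frac{1}{3}\int_1^{n-1} \frac{(\log t)^{s-1}}{t}\,dt$ from below.

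You replace $2i-1$ by $i$ or $2i$ first, which produces the unimodal integrand $g$. That is why you need the mode-splitting estimate, with its additive $2\max g$ slack, for the upper bound, and the restriction to the decreasing tail $i \ge i_0(s)$ for the lower bound.

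Both routes give constants depending only on $s$. The paper's is shorter and needs no shape analysis of the summand. Yours is a general-purpose argument that would work for any nonnegative unimodal summand.
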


\begin{proof}
It suffices to consider the sum from $i=2$ to $n-1$, the possible $i=1$ term is either zero or a fixed constant and is irrelevant for the asymptotic order.
We first show that $\sum_{i=2}^{n-1} \frac{\left(\log i\right)^{s-1}}{2i-1} \le \frac{(\log n)^s}{s}$, begin from the left hand side:
\begin{eqnarray*}
\sum_{i=2}^{n-1} \frac{\left(\log i\right)^{s-1}}{2i-1} &\le  &  \sum_{i=2}^{n-1} \left(\frac{1}{2i-1} \int^{i+1}_i \left(\log t\right)^{s-1}dt \right)\le \sum_{i=2}^{n-1} \int^{i+1}_i \frac{ \left(\log t\right)^{s-1}}{t}dt= \int^n_2 \frac{ \left(\log t\right)^{s-1}}{t}dt\\
&=& \int^{\log n}_{\log 2}  y^{s-1}d y=\frac{\left(\log n\right)^{s}}{s}-\frac{\left(\log 2\right)^{s}}{s} \leq \frac{(\log n)^s}{s}.
\end{eqnarray*}
In the first identity of the second line, we use a change of variables $y = \log t$. 

Then we show that $\sum_{i=2}^{n-1} \frac{\left(\log i\right)^{s-1}}{2i-1} \ge \frac{(\log n)^s}{4s}$, begin from the left hand side:
\begin{eqnarray*}
\sum_{i=2}^{n-1} \frac{\left(\log i\right)^{s-1}}{2i-1} &\ge & \sum_{i=2}^{n-1} \left(\frac{1}{2i-1} \int^{i}_{i-1} \left(\log t\right)^{s-1}dt \right)\ge \sum_{i=2}^{n-1} \int^{i}_{i-1} \frac{ \left(\log t\right)^{s-1}}{3t}dt = \int^{n-1}_1 \frac{ \left(\log t\right)^{s-1}}{3t}dt\\
&=& \int^{\log \left(n-1\right)}_{\log 1} \frac{1}{3} y^{s-1}d y=\frac{\left(\log \left(n-1\right)\right)^{s}}{3s}-\frac{\left(\log 1\right)^{s}}{3s} \geq \frac{(\log n)^s}{4s}.
\end{eqnarray*}
In the second identity of the second line, we use a change of variables $y = \log t$. The last inequality stands for all sufficiently large $n$.
The Lemma thus follows as desired. 
\end{proof}

\end{document}